\documentclass[superscriptaddress, amsmath,amssymb,
aps, pra, letterpaper, tightenlines, reprint, notitlepages]{revtex4-2}

\usepackage{graphicx}
\usepackage{floatrow}

\usepackage{dcolumn}
\usepackage{bm}

\usepackage{amsfonts}
\usepackage{dsfont}
\usepackage{bm}
\usepackage{bbold}
\usepackage{xcolor}
\usepackage{lipsum,babel}
\usepackage[normalem]{ulem}

\usepackage[utf8]{inputenc}
\usepackage{amsmath}
\usepackage{physics}
\usepackage{hyperref}
\usepackage{amssymb}
\usepackage{braket}
\usepackage{mathtools}
\usepackage{color}
\usepackage{amsthm}
\usepackage{algorithm}
\usepackage{algorithmic}

\usepackage{lipsum}

\DeclareMathOperator*{\E}{\mathbb{E}}

\DeclareMathOperator*{\Prob}{Pr}

\DeclareMathOperator*{\Per}{\rm{Per}}

\DeclareMathOperator*{\poly}{\rm{poly}}
\DeclareMathOperator*{\supp}{\rm{supp}}
\DeclareMathOperator*{\Haar}{\rm{Haar}}

\theoremstyle{plain}
\newtheorem{theorem}{Theorem}
\newtheorem{corollary}{Corollary}

\newtheorem{definition}{Definition}
\newtheorem{lemma}{Lemma}
\newtheorem{conjecture}{Conjecture}
\newtheorem{remark}{Remark}

\theoremstyle{definition}

\begin{document}

\title{Threshold and Parity BosonSampling in the Linear-Mode Regime}

\author{Byeongseon Go}
\email{gbs1997@snu.ac.kr}
\affiliation{NextQuantum Center, Department of Physics and Astronomy, Seoul National University, Seoul 08826, Republic of Korea}
\author{Hyunseok Jeong}
\email{h.jeong37@gmail.com}
\affiliation{NextQuantum Center, Department of Physics and Astronomy, Seoul National University, Seoul 08826, Republic of Korea}
\author{Changhun Oh}
\email{changhun0218@gmail.com}
\affiliation{Department of Physics, Korea Advanced Institute of Science and Technology, Daejeon 34141, Republic of Korea}

\begin{abstract}
BosonSampling is among the most prominent candidates for demonstrating quantum advantage.
However, while the hardness of BosonSampling relies on photon-number-resolving detection, many experimentally relevant settings and applications instead use binary readout based on threshold or parity measurements, whose computational complexity has not yet been rigorously characterized.
In this work, we investigate the computational complexity of BosonSampling with threshold and parity measurements in the linear-mode regime, where the number of modes scales linearly with the number of photons and is most relevant to current experiments.
In particular, we establish average-case \#P-hardness of estimating typical output probabilities in threshold and parity BosonSampling, a crucial ingredient in proving the classical hardness of the corresponding sampling problems.
The resulting imprecision bounds match those obtained in prior hardness results for standard photon-number-resolving BosonSampling in the linear-mode regime.
The key technical ingredient is a Fourier-coefficient extraction method, induced by coherent beam-splitter rotations, that extracts hidden hard components within coarse-grained output probabilities.
These results indicate that the hard output-probability structure of photon-number-resolving BosonSampling can persist under natural binary coarse-grainings, even in collision-dominant regimes.


\end{abstract}

\maketitle

\section{Introduction}
BosonSampling was introduced as a restricted photonic model whose output distribution is expected to be hard to sample classically.
In standard BosonSampling, $n$ indistinguishable photons are injected into an $m$-mode random linear-optical circuit and measured using photon-number-resolving (PNR) detectors~\cite{aaronson2011computational}.
Because of its experimental accessibility and strong complexity-theoretic evidence for classical intractability, BosonSampling has become a prominent platform for photonic quantum advantage~\cite{oh2025recent}.
Since its original proposal, it has driven substantial theoretical and experimental progress, including refined complexity-theoretic foundations~\cite{bouland2022noise, bouland2026complexity, bouland2025exponential, aaronson2016bosonsampling, go2025quantum}, variants based on Gaussian input states~\cite{lund2014boson, hamilton2017gaussian, kruse2019detailed, deshpande2022quantum, grier2022complexity, go2025sufficient}, and large-scale photonic experiments reporting quantum-advantage demonstrations~\cite{zhong2020quantum, zhong2021phase, madsen2022quantum, deng2023gaussian, liu2026gaussian}.

However, full photon-number resolution is experimentally demanding, especially in large-scale photonic experiments.
As a result, many experimentally relevant BosonSampling implementations and applications rely on coarser measurement information than ideal PNR detection, as illustrated in Fig.~\ref{fig: schematic}.
Threshold measurement, which records only whether each output mode is occupied, is the most common example: it is the natural readout of on--off photodetectors and also underlies click-counting and pseudo-PNR schemes used in recent large-scale BosonSampling experiments~\cite{quesada2018gaussian, bressanini2024gaussian, dellios2025validation, bulmer2022threshold, zhong2020quantum, zhong2021phase, madsen2022quantum, deng2023gaussian, liu2026gaussian}.
Parity measurement, which records whether the occupation number in each mode is even or odd, provides another fundamental binary coarse-graining of PNR outcomes.
It arises naturally in atomic BosonSampling settings~\cite{young2024atomic} and superconducting circuit-QED platforms~\cite{wang2020efficient}, and has also attracted interest in photonic learning and generative modeling settings~\cite{bradler2021certain, kolarovszki2026generative, kurkin2026universality}.

Yet, despite strong evidence for the classical intractability of BosonSampling under PNR detection~\cite{aaronson2011computational, bouland2022noise, bouland2026complexity, bouland2025exponential}, it remains unclear whether such binary coarse-grainings preserve that intractability.
This question is especially nontrivial in the \textit{saturated} regime, where the number of modes is comparable to the number of photons, a regime of direct relevance to current large-scale experiments~\cite{bouland2026complexity, mhiri2026boson}.
In the dilute regime, where the number of modes is much larger than the number of photons, these binary measurements nearly coincide with PNR detection, due to the dominance of the collision-free outcomes~\cite{aaronson2011computational, arkhipov2012bosonic, qi2020regimes, go2026computational}.
In the saturated regime, by contrast, collisions occur with high probability, and a single threshold or parity bit string coarse-grains over exponentially many PNR outcomes.
Therefore, given the experimental relevance and extensive use of binary readouts in BosonSampling, it is crucial to understand how such binary coarse-graining affects its computational complexity in the saturated regime.

\begin{figure*}[t]
\includegraphics[width=0.8\linewidth]{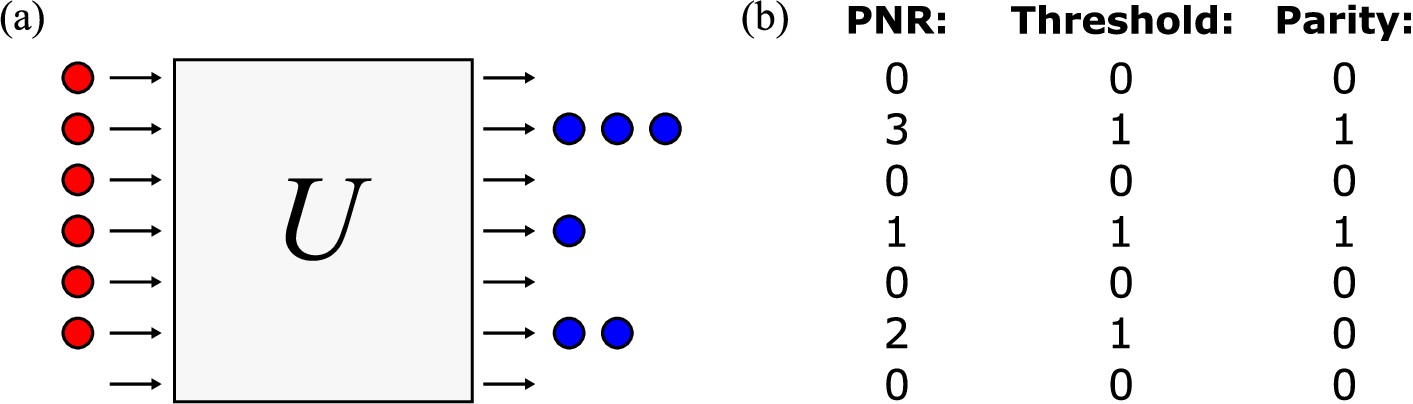}
\caption{(a) Illustration of the basic BosonSampling setup, in which $n$ single photons (red dots) are injected into an $m$-mode linear-optical circuit $U$ and measured at the output.
The blue dots illustrate one possible output occupation pattern, where multiple dots in the same output mode indicate a collision.
(b) The same output occupation pattern represented in three different measurement bases.
PNR detection records the full occupation vector, threshold measurement records whether each output mode is occupied, and parity measurement records each occupation number modulo $2$.
}
\label{fig: schematic}
\end{figure*}

In this work, we investigate the computational complexity of BosonSampling with threshold and parity measurements in the linear-mode regime $m=\alpha n$, focusing on fixed $\alpha>2$.
We prove average-case hardness of output-probability estimation for threshold and parity BosonSampling: estimating the typical binary output probability of a Haar-random linear-optical circuit is \#P-hard up to an explicit additive imprecision.
The resulting imprecision scale matches the best known bound for standard PNR BosonSampling in the saturated regime~\cite{bouland2026complexity}.
Moreover, using a reduction based on Stockmeyer's algorithm, we further show that improving this imprecision level to inverse-polynomial scale would yield the usual approximate-sampling hardness consequences for threshold and parity BosonSampling.
Finally, under an additional anti-concentration assumption for a Fourier coefficient, we present a route toward substantially improving the imprecision bound for both models.

Standard BosonSampling reductions exploit output probabilities that directly isolate a hard permanent associated with a small unitary submatrix.
Binary coarse-graining, however, destroys this direct isolation by combining many PNR outcomes; in the parity setting, the coarse-grained probability can moreover depend on the full input-output block.
To overcome these obstacles, we develop a novel worst-to-average-case reduction based on \textit{Fourier-coefficient extraction}.
The core idea is to apply coherent beam-splitter rotations to carefully chosen output-mode pairs.
These rotations render the corresponding output probability a finite Fourier series in the rotation angle, while preserving the Haar-random circuit distribution. 
A suitably chosen Fourier coefficient then acts as a \textit{row filter}, with a distinct role in each measurement setting.
For threshold measurements, it filters onto contributions with single occupancy of designated output modes, ensuring that the rows encoding the worst-case instance appear exactly once in every surviving permanent.
For parity measurements, the original coarse-grained probability generally depends on the full input-output block, whereas the highest-order Fourier coefficient filters out all configurations containing photons outside the selected rotated modes and thereby localizes the dependence to a small submatrix of the unitary.

This filtering allows a worst-case \#P-hard instance to be embedded into the selected Fourier coefficient by perturbing a Haar-random unitary submatrix.
We recover the encoded quantity from average-case estimates of the coarse-grained output probabilities by combining the robust Fourier-coefficient extraction procedure developed in this work with the polynomial-interpolation framework used in earlier BosonSampling reductions~\cite{bouland2022noise,bouland2026complexity,bouland2025exponential}.
This yields average-case hardness for both threshold and parity readout in the linear-mode regime.
More broadly, we expect that the techniques developed here may also apply to other natural coarse-grained measurement schemes in BosonSampling.


Our paper is organized as follows.
In Sec.~\ref{section: notations}, we define the BosonSampling setup, measurement models, and outcome ensembles used throughout this work.
In Sec.~\ref{section: main results}, we state the average-case probability-estimation problems and hardness results for threshold and parity BosonSampling and explain their common worst-to-average-case workflow and the order of the two proofs.
In Sec.~\ref{section: shared framework}, we develop the technical framework shared by the two proofs.
In Sec.~\ref{Section: threshold BS}, we prove the threshold BosonSampling hardness result.
In Sec.~\ref{Section: parity BS}, we prove the parity BosonSampling hardness result.
In Sec.~\ref{section: stockmeyer}, we generalize Stockmeyer’s reduction to the settings of threshold and parity BosonSampling.
In Sec.~\ref{section: alternative approach}, we present an alternative approach that can improve the imprecision bound under a Fourier coefficient anti-concentration conjecture.
Finally, in Sec.~\ref{section: conclusion}, we conclude the paper and outline several directions for future work.

\section{Setup and notation}\label{section: notations}

This section summarizes the basic notation used throughout the paper.
Let $n$ denote the number of photons and $m \geq n$ the number of modes, and write their relation $m=\alpha n$.

Let $\Haar(m)$ denote the Haar measure on the unitary group $\mathrm{U}(m)$.
Since there is a correspondence between an $m$-mode linear-optical circuit and a unitary matrix $U \in \mathrm{U}(m)$, we use the terms ``linear-optical circuit" and ``unitary matrix" interchangeably to denote $U$.

We first recall the PNR outcome set, for $n$ photons over $m$ modes. 
Explicitly, define
\begin{align}
    \mathcal{S}_{m,n}
    :=
    \left\{
    S=(s_1,\ldots,s_m)
    \;|\;
    s_j\in\mathbb{N}_{\geq 0},\;
    \sum_{j=1}^{m}s_j=n
    \right\}   ,
    \label{eq:main-Smn}
\end{align}
so that each $s_j$ denotes the occupation number of the $j$th mode. 
For $S\in\mathcal{S}_{m,n}$, let $\nu(S)$ denote the length-$n$ multiset of output-mode indices associated with $S$: the index $j\in[m]$ appears exactly $s_j$ times in $\nu(S)$.

For standard BosonSampling, we take the input state to have one photon in each of the first $n$ input modes and vacuum in the remaining modes. 
After propagation through an $m$-mode linear-optical circuit $U\in\mathrm{U}(m)$ and measurement with PNR detectors, the probability of observing an $n$-photon outcome $S\in\mathcal{S}_{m,n}$ is given by~\cite{aaronson2011computational}
\begin{align}
    p_S(U)
    =
    \frac{1}{\prod_{j=1}^{m}s_j!}
    \left|
    \Per\!\left(U_{\nu(S),[n]}\right)
    \right|^2  ,
    \label{eq:main-pnr-probability}
\end{align}
where $U_{A,B}$ denotes the submatrix of $U$ obtained by selecting the rows indexed by $A$ and the columns indexed by $B$; if $A$ is a multiset, rows are repeated according to their multiplicities.
By unitary invariance on the $n$-photon symmetric subspace, equivalently by Schur's lemma, the Haar average of $p_S(U)$ is uniform over $\mathcal{S}_{m,n}$:
\begin{align}
    \E_{U\sim\Haar(m)}[p_S(U)]
    =
    \frac{1}{|\mathcal{S}_{m,n}|}
    =
    \frac{1}{\binom{m+n-1}{n}}   ,
    \label{eq:main-pnr-average}
\end{align}
as also argued in~\cite{bouland2026complexity}.

\subsection{Threshold measurement}

Consider the threshold BosonSampling (TBS) setup, which follows the standard BosonSampling setup except that the output measurement is replaced by threshold measurement. 
In threshold measurement, each output mode is recorded as either $0$ (vacuum) or $1$ (at least a single photon), as illustrated in Fig.~\ref{fig: schematic}.
Throughout, we refer to an occupied mode, corresponding to outcome~1, as a \textit{click}.
For an occupation vector $S$, let $\supp(S):=\{j\in[m]:s_j>0\}$, and for a bit string $x$, let $\supp(x):=\{j\in[m]:x_j=1\}$.
The set of admissible threshold outcomes is then written as
\begin{align}
    \mathcal{T}_{m,n}
    :=
    \left\{
    x\in\{0,1\}^{m}
    \,\middle|\,
    1\leq |x|\leq n
    \right\},
    \label{eq:main-Tmn}
\end{align}
where $|x|$ is the Hamming weight of $x$, i.e., the number of clicked modes.
For $x\in\mathcal{T}_{m,n}$ and $U \in \mathrm{U}(m)$, the output probability of TBS is given by the coarse-grained form:
\begin{align}
    T_x(U)
    =
    \sum_{\substack{S\in\mathcal{S}_{m,n}\\ \supp(S)=\supp(x)}}
    p_S(U).
    \label{eq:main-threshold-probability}
\end{align}
In particular, every permanent contributing to Eq.~\eqref{eq:main-threshold-probability} uses only rows indexed by $\supp(x)$, so $T_x(U)$ depends only on the submatrix $U_{\supp(x),[n]}$.
For a fixed support of size $|x|$, the number of positive occupation vectors summing to $n$ is 
\begin{align}
    \binom{n-1}{|x|-1}  .
\end{align}

Throughout this work, we establish average-case hardness with respect to outcomes that are typical under Haar-random BosonSampling experiments, as also considered in~\cite{bouland2026complexity}.
However, while Ref.~\cite{bouland2026complexity} used the uniform distribution over possible PNR outcomes $\mathcal{S}_{m,n}$ to capture typical PNR outcomes, the analogous uniform distribution over threshold outcomes $\mathcal{T}_{m,n}$ does not naturally capture typical TBS outcomes.
This is because it generally differs from the marginal distribution induced by applying a threshold measurement to the output of a Haar-random unitary.

Accordingly, in our average-case formulation, we take the outcome distribution to be the \textit{Haar-averaged} output probability.
By counting possible $S \in \mathcal{S}_{m,n}$ satisfying $\supp(S)=\supp(x)$, the Haar-averaged probability is given by
\begin{align}
    T_x^{({\rm avg})}
    :=
    \E_{U\sim\Haar(m)}[T_x(U)]
    =
    \frac{
    \binom{n-1}{|x|-1}
    }{
    \binom{m+n-1}{n}
    }.
    \label{eq:main-threshold-average}
\end{align}

We then introduce the outcome ensemble underlying the average-case hardness over threshold outcomes.

\begin{definition}[Random threshold-outcome ensemble]\label{def: random threshold outcome ensemble}
    We define $\mathcal{D}_{m,n}$ to be the distribution on the set of threshold outcomes $\mathcal{T}_{m,n}$ in Eq.~\eqref{eq:main-Tmn}, with probability mass function given by $T_x^{(\mathrm{avg})}$ in Eq.~\eqref{eq:main-threshold-average} for each $x \in \mathcal{T}_{m,n}$.
\end{definition}

In fact, $T_x^{({\rm avg})}$ is the marginal probability of $x$ in experiments with Haar-random circuits, obtained by first drawing $U\sim\Haar(m)$ and then sampling the corresponding outcome.
Thus, $\mathcal{D}_{m,n}$ provides the natural typical-outcome distribution for Haar-random TBS experiments.
Also, by definition, sampling $x \sim \mathcal{D}_{m,n}$ can be done efficiently by first drawing $S\in\mathcal{S}_{m,n}$ uniformly at random and then outputting the unique binary string $x\in\{0,1\}^m$ satisfying $\supp(x)=\supp(S)$.

\subsection{Parity measurement}

Next, consider the parity BosonSampling (PBS) setup, which follows the standard BosonSampling setup but with a parity measurement.
Throughout, we label odd parity by $1$ and even parity by $0$, as illustrated in Fig.~\ref{fig: schematic}.
The admissible parity outcomes are
\begin{align}
    \mathcal{P}_{m,n}
    :=
    \left\{
    x\in\{0,1\}^{m}
    \,\middle|\,
    |x|\leq n,
    \ |x|\equiv n \ ({\rm mod} \ 2)
    \right\}  ,
    \label{eq:main-Pmn}
\end{align}
where $|x|$ here denotes the number of output modes with odd occupation. 
Note that the congruence condition enforces conservation of total photon-number parity.

For $x\in\mathcal{P}_{m,n}$ and linear-optical circuit $U \in \mathrm{U}(m)$, the output probability of PBS is given by 
\begin{align}
    P_x(U)
    =
    \sum_{\substack{S\in\mathcal{S}_{m,n}\\ S\ {\rm mod}\ 2=x}}
    p_S(U).
    \label{eq:main-parity-probability}
\end{align}
Unlike in the threshold setting, the parity constraint does not restrict the occupied modes to $\supp(x)$: modes outside $\supp(x)$ may contain a positive even number of photons, so $P_x(U)$ generally depends on the full $m\times n$ input-output block $U_{[m],[n]}$.
Also, the number of possible $S \in \mathcal{S}_{m,n}$ satisfying $S\ {\rm mod}\ 2=x$ in the summation is
\begin{align}\label{eq: number of S}
    \binom{m + \frac{n-|x|}{2} - 1}{\frac{n-|x|}{2}}   .
\end{align}
By Eq.~\eqref{eq:main-pnr-average} and Eq.~\eqref{eq: number of S}, the Haar-averaged probability for PBS is then given by
\begin{align}
    P_x^{({\rm avg})}
    :=
    \E_{U\sim\Haar(m)}[P_x(U)]
    =
    \frac{
    \binom{m + \frac{n-|x|}{2} - 1}{\frac{n-|x|}{2}}
    }{
    \binom{m+n-1}{n}
    }.
    \label{eq:main-parity-average}
\end{align}

We also define the analogous outcome ensemble for average-case parity outcomes.

\begin{definition}[Random parity-outcome ensemble]\label{def: random parity outcome ensemble}
    We define $\mathcal{Y}_{m,n}$ to be the distribution on the set of parity outcomes $\mathcal{P}_{m,n}$ in Eq.~\eqref{eq:main-Pmn}, with probability mass function given by $P_x^{(\mathrm{avg})}$ in Eq.~\eqref{eq:main-parity-average} for each $x \in \mathcal{P}_{m,n}$.
\end{definition}

Similarly, $P_x^{({\rm avg})}$ is the marginal probability of $x$ in PBS experiments with Haar-random circuits, so $\mathcal{Y}_{m,n}$ provides the natural typical-outcome distribution for PBS experiments.
Also, sampling $x \sim \mathcal{Y}_{m,n}$ is equivalent to first sampling $S$ according to $1/|\mathcal{S}_{m,n}|$ and then outputting $x = S\ {\rm mod}\ 2$, which can be done in classical polynomial time.

\section{Main results and proof architecture}\label{section: main results}

\subsection{Average-case threshold-probability estimation}\label{subsec: threshold main result}

We follow the standard approach in BosonSampling hardness results~\cite{aaronson2011computational, hamilton2017gaussian, kruse2019detailed, deshpande2018dynamical, bouland2019complexity, bouland2022noise, grier2022complexity, bouland2026complexity, bouland2025exponential, go2026computational} and formulate an average-case estimation problem for threshold output probabilities, which we call average-case threshold-output probability estimation (ATE).
We focus on the saturated regime $m=\Theta(n)$, where photon collisions occur with high probability and which is of direct relevance to current experiments~\cite{bouland2026complexity, mhiri2026boson}.

\begin{definition}[Average-case Threshold-probability Estimation: ATE]\label{def: thresholdBS}
    The {\rm ATE} problem is defined as follows.
    Given a pair $(x,U)$ drawn from $x\sim\mathcal{D}_{m,n}$ and $U\sim\Haar(m)$ independently, the task is to output an additive estimate $\hat T$ of $T_x(U)$ such that
    \begin{align}
        \Pr_{\substack{x\sim\mathcal{D}_{m,n} \\ U\sim\Haar(m)}}
        \left[
        |\hat T-T_x(U)|>\varepsilon T_x^{(\mathrm{avg})}
        \right]
        <\delta .
    \end{align}
\end{definition}

Our main hardness result for TBS is the following.

\begin{theorem}[Average-case hardness for TBS]\label{thm: threshold average case main}
    Let $m=\alpha n$ for a constant $\alpha>2$.
    Then the {\rm ATE} problem in Definition~\ref{def: thresholdBS} is $\#\mathrm{P}$-hard under a $\mathrm{BPP}^{\mathrm{NP}}$ reduction with additive imprecision $\varepsilon=\exp(-3n\log n-O(n))$ and failure probability $\delta=\poly(n)^{-1}$.
\end{theorem}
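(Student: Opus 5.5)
We plan to reduce from worst-case permanent computation: computing $|\Per(A)|^2$ for an arbitrary $k\times k$ complex matrix is $\#\mathrm{P}$-hard, and we embed a worst-case $A$ into a threshold output probability of a Haar-like circuit. The reduction has three stages. First comes an outcome-structure step that converts the distributional oracle into one that works for a fixed outcome pattern. Next is a Fourier-coefficient extraction step, using beam-splitter rotations, which isolates a permanent in which the worst-case rows appear with single occupancy. Last is the polynomial-interpolation step in the style of Refs.~\cite{bouland2022noise,bouland2026complexity}, which recovers the worst-case value from noisy average-case estimates.

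\textbf{Stage 1 (outcome structure).} Draw $x\sim\mathcal{D}_{m,n}$ by drawing a uniform $S\in\mathcal{S}_{m,n}$. For $m=\alpha n$ with $\alpha>2$, a constant fraction of modes are singly occupied with probability $1-o(1)$; in fact there are $\Omega(n)$ such modes. By permutation invariance of the Haar measure we relabel modes so that a designated set $R$ of $2k$ output modes (with $k$ the size of the hard instance) is clicked, with $k$ pairs $(r_i,r_i')$ where $r_i\in\supp(x)$ and $r'_i\notin\supp(x)$ or similar. We lose only a polynomial factor in $\delta$ through this conditioning.

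\textbf{Stage 2 (rotation and filtering).} Apply a beam splitter of angle $\theta$ to each designated pair of output modes. Haar invariance keeps $U(\theta)=B(\theta)U$ Haar distributed, so the oracle is still valid on average for each fixed $\theta$. Each row of $U(\theta)$ in the pair is linear in $(\cos\theta,\sin\theta)$, so $T_x(U(\theta))$ is a trigonometric polynomial of degree $\le 2n$ in $\theta$. We then identify a specific Fourier coefficient $c_\star(U)$, the top harmonic in the designated variables. This coefficient receives contributions only from occupation vectors $S$ in which each designated mode has occupancy exactly one. Consequently $c_\star$ is a sum of terms of the form $\Per(\cdot)\overline{\Per(\cdot)}$, in which the designated rows appear once each. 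Because the coefficient is multilinear in those rows, we can then replace the $k\times k$ block of $U$ by $(1-t)X+tA$, where $X$ is the Haar block, using the Gaussian approximation of small Haar submatrices together with total-variation closeness. This makes $c_\star$ a polynomial $q(t)$ of degree $O(n)$ whose value at $t=1$ determines $|\Per(A)|^2$ times a known positive factor. To obtain that factor, we would normalize so that the remaining rows contribute a nonvanishing, computable weight; alternatively we would choose $k=n$ minus a small overhead so that the contribution factorizes.

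\textbf{Stage 3 (robust extraction and interpolation).} We estimate $c_\star$ from $O(n)$ oracle queries at equally spaced angles via the discrete Fourier transform. Additive error $\varepsilon T^{(\mathrm{avg})}_x$ per query becomes $O(n)\varepsilon T^{(\mathrm{avg})}_x$ on the coefficient, and with probability $1-\poly(n)\delta$ all queries succeed, by a union bound over query points that are each marginally Haar. We then evaluate at many small $t$ and use robust Berlekamp–Welch / Paturi-type extrapolation to reach $t=1$. This extrapolation incurs the usual $\exp(O(n\log n))$ blowup. Comparing $T^{(\mathrm{avg})}_x\approx e^{-\Theta(n\log n)}$-type scales with the worst-case permanent scale, we need total accuracy $2^{-\poly}$ relative precision. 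Collecting the extrapolation factor, the $n!$-type normalizations, and the Fourier extraction loss should yield $\varepsilon=\exp(-3n\log n-O(n))$, matching Ref.~\cite{bouland2026complexity}. The $\mathrm{BPP}^{\mathrm{NP}}$ part enters only through the standard approximate-counting amplification used there.

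The main obstacle we expect is Stage 2. We must show that the chosen Fourier coefficient exactly filters to single occupancy of the designated modes and remains a low-degree polynomial in $t$ with a recoverable, nonzero leading dependence on $|\Per(A)|^2$. Cross terms between the designated rows and the multiply occupied remaining rows could spoil this. To handle them, we would first pick the designated pairs as one clicked mode and one unclicked mode. The unclicked mode must then carry zero photons, which forces the top harmonic to correspond to exactly one photon in the clicked partner, and this makes the filtering argument transparent.
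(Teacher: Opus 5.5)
Your proposal has two genuine gaps. The first is the Stage~2 filter, which is inverted, and this is the crux of the reduction. Rotate a pair $(r_i,r_i')$ with $r_i$ clicked and $r_i'$ unclicked. The unclicked mode carries no photons, but the clicked partner can carry any $s_{r_i}\ge 1$ photons. Its rotated row then appears $s_{r_i}$ times in the permanent and contributes trigonometric degree $2s_{r_i}$. The top harmonic therefore keeps only patterns in which the rotated clicked modes carry the \emph{largest} possible number of photons. Since every clicked mode needs at least one photon, this is equivalent to forcing the \emph{unrotated} clicked modes to be singly occupied.

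For typical $x$ one has $|x|\approx \alpha n/(\alpha+1)<n$. So in every surviving term your $k$ designated rotated modes must jointly absorb $n-|x|+k\gg k$ photons, and the hard rows appear with multiplicity rather than once. Your claim holds only in the collision-free case $|x|=n$. The missing idea is to put the hard rows on \emph{unrotated} clicked modes:
\begin{itemize}
\item split $\supp(x)=Q\sqcup L$ with $|Q|=2|x|-n$ and $|L|=n-|x|$, which is possible because typically $|x|>n/2$;
\item pair $L$ with $n-|x|$ unclicked modes $E$;
\item take the coefficient at frequency $2N$, $N=2n-2|x|$, which forces $s_i=1$ for every $i\in Q$.
\end{itemize}
Relatedly, in Stage~1 the fact that a uniform $S$ has many singly occupied modes is irrelevant, since $x$ does not reveal occupations. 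What you need is click-count concentration.

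The second gap is the ``known positive factor,'' which your construction does not produce. Suppose you replace only a $k\times k$ block by $(1-t)X+tA$ and keep the other rows Haar-random. Then the coefficient at $t=1$ mixes $A$ with random entries through the Laplace expansion along the hard rows. Even if you zero the hard rows outside the hard columns so that each permanent factorizes, you get $|\Per(A)|^2$ times a \emph{random} residual Fourier coefficient. That factor is neither computable nor guaranteed non-negligible; handling it is exactly the anti-concentration-conditional route of the paper's appendix. The resulting polynomial also has degree $2k$, not $O(n)$, so your $3n\log n$ bookkeeping does not follow from your construction.

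The paper avoids this as follows. With the partial rotation, the whole rotated probability depends only on the $n\times n$ block $U_{W,[n]}$, where $W=Q\sqcup L\sqcup E$. The paper interpolates this entire block to a designed $A_\star$: $\kappa X_0$ on rows $H\subseteq Q$, $\kappa$ times all-ones on $Q\setminus H$ and $2\kappa$ times all-ones on $L$ (on the complementary columns), and zeros on $E$. Then every surviving permanent equals $\kappa^n\Per(X_0)(n-k_0)!$, and the occupation sum is the explicit $G_{n,k}\ge 2^{-(n-k)}$.

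This $n\times n$ perturbation is where $\alpha>2$ enters, a hypothesis that plays no real role in your argument. It supplies the margin $m-2n=\Omega(n)$ in the Haar-block density $\propto\det(I-Z^\dagger Z)^{m-2n}$ used for shift-and-scale stability and conditional completion. It also produces the exponent. Interpolation with $\Delta^{-1}=O(\alpha n^2)$ costs about $4n\log n$, $\kappa^{-2n}$ with $\kappa=n^{-1/2}$ costs $n\log n$, and dividing by $((n-k_0)!)^2$ gains back $2n\log n$, for a net $3n\log n$.

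Two smaller points. The NP oracle enters through robust Berlekamp--Welch and robust Fourier extraction, not approximate counting. Use a binary hard instance so that $\Per(X_0)^2$ is an integer and rounding recovers it exactly.
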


The proof of Theorem~\ref{thm: threshold average case main} is given in Sec.~\ref{Section: threshold BS}.
While this average-case estimation result does not directly imply approximate-sampling hardness at inverse-polynomial total-variation distance, Sec.~\ref{section: stockmeyer} shows that such a sampling-hardness consequence follows if the ATE hardness can be strengthened to inverse-polynomial imprecision and failure probability.

Note that the obtained imprecision bound is comparable to the best known bound for BosonSampling in the saturated regime~\cite{bouland2026complexity}, while remaining weaker than the bound achieved in the dilute regime~\cite{bouland2025exponential}.
In Sec.~\ref{section: alternative approach}, we establish under a threshold Fourier-coefficient anti-concentration conjecture that the imprecision bound for TBS can be improved to $\exp(-O(n^\lambda))$ for any fixed $0<\lambda<1$, together with an analogous PBS result under the corresponding parity conjecture.

\subsection{Average-case parity-probability estimation}\label{subsec: parity main result}

We analogously formulate the corresponding average-case estimation problem for parity output probabilities, which we call average-case parity-output probability estimation (APE).

\begin{definition}[Average-case Parity-probability Estimation: APE]\label{def: parityBS}
    The {\rm APE} problem is defined as follows.
    Given a pair $(x,U)$ drawn from $x\sim\mathcal{Y}_{m,n}$ and $U\sim\Haar(m)$ independently, the task is to output an additive estimate $\hat P$ of $P_x(U)$ such that
    \begin{align}
        \Pr_{\substack{x\sim\mathcal{Y}_{m,n} \\ U\sim\Haar(m)}}
        \left[
        |\hat P-P_x(U)|>\varepsilon P_x^{(\mathrm{avg})}
        \right]
        <\delta .
    \end{align}
\end{definition}

We then establish the corresponding hardness result for PBS.

\begin{theorem}[Average-case hardness for PBS]\label{thm: parity average case main}
    Let $m=\alpha n$ for a constant $\alpha>2$.
    The {\rm APE} problem in Definition~\ref{def: parityBS} is $\#\mathrm{P}$-hard under a $\mathrm{BPP}^{\mathrm{NP}}$ reduction with additive imprecision $\varepsilon=\exp(-3n\log n-O(n))$ and failure probability $\delta=\poly(n)^{-1}$.
\end{theorem}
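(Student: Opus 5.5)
The proof follows the same worst-to-average-case workflow as the threshold case (Theorem~\ref{thm: threshold average case main}). The only new ingredient is the parity-specific row filter, which handles the fact that $P_x(U)$ depends on the whole block $U_{[m],[n]}$.

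\emph{Step 1: typical outcomes.} Draw $x\sim\mathcal{Y}_{m,n}$ and, using Eq.~\eqref{eq:main-parity-average}, show that with probability at least $1-\poly(n)^{-1}$ the support $\supp(x)$ has linear size and contains $k=\Theta(n)$ disjoint output-mode pairs $(a_i,b_i)$ with designated parities. Since $\mathcal{Y}_{m,n}$ is permutation invariant, we may fix these pairs by relabeling modes.

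\emph{Step 2: beam-splitter rotations and Fourier structure.} Apply a coherent beam splitter $R(\theta)$ to each pair, i.e., replace $U$ by $V(\theta)=R(\theta)U$. Because the Haar measure is invariant, $V(\theta)$ is Haar distributed for every fixed $\theta$. Since each rotated row is $\cos\theta\,u_{a}+\sin\theta\,u_{b}$, $P_x(V(\theta))$ is a trigonometric polynomial in $\theta$ of degree at most $2n$. Expand each contributing permanent multilinearly in the rotated rows and track the power of $e^{i\theta}$. The key lemma is that the highest-order Fourier coefficient receives contributions only from occupation patterns whose $n$ photons all lie in the rotated modes. This kills every configuration with photons outside the selected modes, including the even-occupancy configurations outside $\supp(x)$. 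It also localizes the coefficient to a polynomial in the entries of the small block $U_{A,[n]}$, where $A$ is the set of rotated modes. It is then a sum of squared permanents with combinatorial weights, and among these one can isolate a term $c\,|\Per(U_{A',[n]})|^2$ for a designated row set $A'$, up to a nonzero computable factor.

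\emph{Step 3: embedding and interpolation.} Following the framework of~\cite{bouland2022noise,bouland2026complexity}, set $U_{A,[n]}(t)=(1-t)G+tX$. Here $G$ is the Haar submatrix, well approximated by a Gaussian when $m=\alpha n$ with $\alpha>2$ (this is where the constraint on $\alpha$ enters). $X$ is a worst-case \#P-hard instance, placed so that the filtered coefficient at $t=1$ equals a known nonzero multiple of $|\Per(X)|^2$. Complete each perturbed block to a unitary and bound the total-variation distance to Haar by $O(t\,\poly(n))$ for small $t$. The filtered coefficient is then a polynomial in $t$ of degree $O(n)$.

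\emph{Step 4: robust recovery.} Query the APE oracle at $O(n)$ rotation angles and $O(n)$ small values of $t$. Recover the top Fourier coefficient by a discrete Fourier transform over equally spaced angles; the robust extraction loses at most $\exp(O(n))$ in error. Then apply robust Berlekamp--Welch/Paturi-type extrapolation to $t=1$, losing $\exp(O(n\log n))$. Combining this with the normalization $P_x^{(\mathrm{avg})}$ and the bound $|\Per|^2\le n!$ gives the tolerance $\varepsilon=\exp(-3n\log n-O(n))$. The $\mathrm{BPP}^{\mathrm{NP}}$ part comes from the usual Markov-plus-union argument over queries, with $\delta=\poly(n)^{-1}$.

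The main obstacle is Step 2: proving rigorously that the top Fourier coefficient of the parity probability has exactly the localized form and that the hard permanent enters with a nonzero, efficiently computable coefficient. Parity mixes exponentially many collision patterns, so possible cancellations among them must be excluded. I would first attempt this by bookkeeping each permanent as a polynomial in creation operators $(\cos\theta\,a^\dagger+\sin\theta\,b^\dagger)$ and computing the extremal $e^{2in\theta}$ component exactly.
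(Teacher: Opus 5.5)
\emph{Gap 1: the encoding step.} Your skeleton (typical outcomes, paired rotations, localization of the top coefficient to the rotated rows, completion, extraction, interpolation) matches the paper. But the step carrying the parity-specific content is deferred rather than proved, and the claim you put in its place is not right.

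For a generic block the top coefficient is not a sum of squared permanents. It is $\sum_S \frac{1}{\prod_i s_i!}\Per((V_R^+U)_{\nu(S),[n]})\,\Per((V_R^-U)_{\nu(S),[n]})^*$ over all parity-consistent $S$ supported on the rotated set. The Fourier filter only forces all $n$ photons into the rotated modes; it does not single out one pattern $S$ or one term $c|\Per(U_{A',[n]})|^2$.

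What is missing is a worst-case block at $t=1$ for which \emph{every} surviving term takes the same positive value. The paper builds it as follows.
\begin{itemize}
\item It rotates every odd mode $o_j\in\supp(x)$ against an even mode $e_j\notin\supp(x)$. If any odd mode were left unrotated, $c_{2n}\equiv 0$, since each odd mode carries at least one photon.
\item It takes $X=\kappa\,(X_0\oplus 1_{(k-k_0)\times(n-k_0)})$ with a small hard instance $X_0$ of size $k_0=o(n)$.
\item It places $2X$ on the odd rows and $0$ on the even rows, so $V_R^\pm$ produces $X$ on odd rows and $\pm iX$ on even rows.
\end{itemize}
An extra photon in any hard pair gives more than $k_0$ rows supported on $k_0$ columns, so that permanent vanishes; this forces $s_{o_j}=1$ and $s_{e_j}=0$. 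All other rows are all-ones, so each surviving permanent equals $(\pm i)^{\sum_j 2b_j}\kappa^n\Per(X_0)(n-k_0)!$, where $2b_j$ are the even occupations of the non-hard even modes. The phases cancel when the $+$ permanent is multiplied by the conjugate of the $-$ permanent. The sum is therefore $\kappa^{2n}((n-k_0)!)^2Q_{n,k,k_0}\Per(X_0)^2$, with $Q_{n,k,k_0}=[z^{n-k_0}](\sinh z\cosh z)^{k-k_0}\ge 2^{-(n-k)/2}$.

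Both the $e^{-O(n)}$ lower bound on $Q_{n,k,k_0}$ and the padding factor $((n-k_0)!)^2=e^{2n\log n-O(n)}$ are used in the error budget. ``A known nonzero multiple of $|\Per(X)|^2$'' for a full-size $X$ does not work: such an $X$ is rectangular, so its permanent is undefined, and without these two bounds you do not get the stated imprecision.

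\emph{Gap 2: two supporting claims.}
\begin{itemize}
\item A $2k\times n$ block of an $\alpha n$-mode Haar unitary with $k=\Theta(n)$ is not close to Gaussian; that needs blocks of size $o(\sqrt m)$. The total-variation bound has to come from the exact truncated-Haar density $\propto\det(I-Z^\dagger Z)^{m-n-2k}$, together with a tail bound keeping $\|Z\|_{\mathrm{op}}$ away from $1$. The condition $\alpha>2$ enters through the margin $m-n-2k\approx\frac{(\alpha-2)(\alpha+1)}{\alpha+2}\,n$ at the typical value $k\approx\frac{\alpha}{\alpha+2}n$, not through a Gaussian approximation.
\item The constant $3$ relies on tolerating a \emph{constant} failure probability per rotation angle. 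The paper's Fourier extraction uses an NP oracle to find a degree-$2n$ trigonometric polynomial agreeing with a majority of the angles; this is the source of its $e^{O(n)}$ loss. A plain DFT loses nothing but needs every angle correct. Constant per-angle failure permits $\Delta^{-1}=O(n^2)$, and with $\kappa=k^{-1/2}$ the total loss is $e^{2n\log\Delta^{-1}}\kappa^{-2n}/((n-k_0)!)^2=e^{3n\log n+O(n)}$. A union bound over the $\Theta(n)$ angles would cost an extra factor $n$ in $\Delta^{-1}$ and raise the exponent to about $5n\log n$.
\end{itemize}
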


The proof of Theorem~\ref{thm: parity average case main} is provided in Sec.~\ref{Section: parity BS}.
As in the TBS setting, extending the \#P-hardness guarantee in Theorem~\ref{thm: parity average case main} to inverse-polynomial imprecision $\varepsilon=\poly(n)^{-1}$ would yield the standard approximate-sampling hardness result for PBS; see Sec.~\ref{section: stockmeyer} for details.
Also, the same comparison between the saturated- and dilute-mode imprecision bounds carries over to PBS, and Sec.~\ref{section: alternative approach} presents an alternative route to improving the imprecision bound under an additional assumption.


\subsection{Common worst-to-average-case reduction framework}\label{subsec: common workflow}

At a high level, the proofs of Theorems~\ref{thm: threshold average case main} and~\ref{thm: parity average case main} follow the same worst-to-average-case reduction procedure as the existing reductions in Refs.~\cite{bouland2022noise,deshpande2022quantum,bouland2026complexity,bouland2025exponential}: oracle access to the corresponding average-case estimation problem enables the computation of a worst-case \#P-hard quantity.
Existing worst-to-average-case reductions for standard BosonSampling rely on three main ingredients:
(i) low-degree polynomial dependence of the output probabilities on a perturbation parameter;
(ii) a simple output-probability expression, typically a single permanent term, that depends only on a small submatrix and can encode a worst-case \#P-hard quantity; and
(iii) distributional stability of small Haar-random submatrices under the perturbation.
Concretely, these reductions perturb a small Haar-random submatrix toward a worst-case matrix and recover the encoded permanent from average-case probability estimates through polynomial interpolation~\cite{bouland2022noise,deshpande2022quantum,bouland2026complexity,bouland2025exponential}.


However, a direct extension of the standard argument encounters the following substantial obstacles.
First, in both models, a binary-output probability is a coarse-grained sum over many PNR outcomes, including collision events, so the permanent encoding the hard instance is not isolated as a single contribution.
In the threshold setting, $T_x(U)$ sums over all positive occupation-number compositions supported on $\supp(x)$.
The reduction must therefore filter this sum to retain only configurations in which each designated clicked mode contains exactly one photon, thereby enabling the encoding of a single worst-case permanent.
For PBS, there is an additional obstacle: the parity constraint permits modes outside $\supp(x)$ to contain a positive even number of photons, so the coarse-grained probability generally depends on all entries of the relevant $m\times n$ input-output block of $U$.
This global dependence prevents the direct application of perturbative stability bounds that control only selected Haar-random submatrices.

To overcome these obstacles, we introduce Fourier coefficients induced by coherent beam-splitter rotations.
A suitably chosen top Fourier coefficient serves as a row filter, but its role differs between the two models.
For TBS, the top Fourier coefficient enforces exactly one photon in each designated unrotated clicked mode.
This single-occupation constraint ensures that the hard rows appear exactly once and enables us to encode a single worst-case permanent within the Fourier coefficient, rather than inside a sum over many occupation patterns.
For PBS, the original coarse-grained probability generally has global row dependence, and the top Fourier coefficient forces all photons into the selected rotated modes.
This localizes the coefficient to a selected submatrix of $U$, thereby enabling the small-block perturbation required by the Haar-submatrix stability argument.

After these model-specific filtering steps, both reductions encode a worst-case permanent into the relevant Fourier coefficient for typical outcomes.
The corresponding worst-case matrix is then embedded into a selected Haar-random submatrix through the shared perturbation-stability framework.
Finally, we recover the encoded value from average-case probability estimates through Fourier-coefficient extraction followed by polynomial interpolation.

We develop the technical ingredients common to both hardness proofs in Sec.~\ref{section: shared framework}, including paired beam-splitter rotations, polynomial interpolation, Fourier-coefficient extraction, and perturbation-stability bounds.
We then treat the two models separately.
In Sec.~\ref{Section: threshold BS}, we prove the threshold result, showing how to choose the rotated modes so that a single worst-case permanent is encoded into the top Fourier coefficient.
In Sec.~\ref{Section: parity BS}, we prove the parity result, showing how the top Fourier coefficient localizes the dependence to a small submatrix of the unitary and how a single worst-case permanent can be encoded into this coefficient.

\section{Shared technical ingredients}\label{section: shared framework}

In this section, we argue the common technical ingredients underlying the threshold and parity hardness proofs.
We begin by defining the paired beam-splitter rotations that induce the Fourier expansion used in the reductions.
We next formulate the robust polynomial-interpolation used in previous hardness results~\cite{bouland2022noise, deshpande2022quantum, bouland2026complexity, bouland2025exponential}, and Fourier-coefficient extraction procedures that constitute a new ingredient of our approach.
Finally, we outline the perturbation-stability bounds and conditional-completion arguments that allow these procedures to be applied to Haar-random unitaries.

\subsection{Paired beam-splitter rotations}\label{subsec: paired beam-splitter rotations}

For $\theta\in[0,2\pi)$, define the real beam-splitter rotation
\begin{align}
B(\theta):=
\begin{pmatrix}
\cos\theta & \sin\theta\\
-\sin\theta & \cos\theta
\end{pmatrix}.
\end{align}
By Euler's formula, $B(\theta)$ above can alternatively be expressed as
\begin{align}
B(\theta)=e^{i\theta}B^+ + e^{-i\theta}B^-
,\quad 
B^\pm := \frac12
\begin{pmatrix}
1 & \mp i\\
\pm i & 1
\end{pmatrix}  .
\end{align}

Let $R\subseteq[m]$ be an active set of size $2s$, together with a fixed ordered pairing
\begin{align}
    R=\{r_1^{(1)},r_1^{(2)}\}\sqcup\cdots\sqcup\{r_s^{(1)},r_s^{(2)}\}.
\end{align}
The ordering within each pair is part of the data defining the rotation.

Given this paired active set $R$, we define $V_R(\theta)\in\mathrm{U}(m)$ as the unitary that applies $B(\theta)$ to each ordered pair $(r_j^{(1)},r_j^{(2)})$ and acts as the identity on $R^c=[m]\setminus R$.

To present a more concrete construction of $V_R(\theta)$, let
\begin{align}\label{eq:VRpm-active}
V_s^\pm := \bigoplus_{j=1}^s B^\pm \in \mathbb{C}^{2s \times 2s}   .
\end{align}
Let $P_R$ be the permutation matrix that orders the modes as
$(r_1^{(1)},r_1^{(2)},r_2^{(1)},r_2^{(2)},\dots,r_s^{(1)},r_s^{(2)})$
followed by the modes in $R^c := [m] \setminus R$.
We extend $V_s^{\pm}$ to $m\times m$ matrices supported on $R$ by
\begin{align}\label{eq:Vpm-full}
\begin{split}
V_R^\pm := P_R^\top\Big(V_s^\pm\oplus 0_{m-2s}\Big)P_R, 
\end{split}
\end{align}
and define the projector matrix onto the unaffected modes $R^c$
\begin{align}
\Pi_{R^c} := P_R^\top\Big(0_{2s}\oplus I_{m-2s}\Big)P_R.
\end{align}
Then we obtain the expression for $V_R(\theta)$ by
\begin{align}\label{eq:V-decomp-full}
V_{R}(\theta):=\Pi_{R^c} + e^{i\theta} V_R^+ + e^{-i\theta}V_R^-   .
\end{align}

For later use, write $U_R(\theta):=V_R(\theta)U$.
For a fixed $S \in \mathcal{S}_{m,n}$, let $n_R(S):=\sum_{i\in R}s_i$ be the number of photons detected in the output modes indexed by $R$.
Let $U_j$ denote the $j$th row of $U$. 
Then, by Eq.~\eqref{eq:V-decomp-full}, we have
\begin{align}\label{eq: rows of rotated circuit}
(U_R(\theta))_j 
=
\begin{cases}
e^{i\theta}(V_R^+U)_j + e^{-i\theta}(V_R^-U)_j \!\! \! & j\in R,\\
U_j & j\notin R.
\end{cases}
\end{align}
Accordingly, the submatrix $U_R(\theta)_{\nu(S),[n]}$ contains exactly $n_R(S)$ $\theta$-dependent rows, counted with multiplicity, while all other rows are independent of $\theta$.

A key property we use is the \textit{row multilinearity} of the permanent.
If $M_{j\leftarrow r}$ denotes the matrix obtained from $M$ by replacing its $j$th row by the row vector $r$, then for any row vectors $r_1,r_2$ and scalars $a,b$,
\begin{align}
    \Per\!\left(M_{j\leftarrow ar_1+br_2}\right)
    =
    a\,\Per\!\left(M_{j\leftarrow r_1}\right)
    +
    b\,\Per\!\left(M_{j\leftarrow r_2}\right).
\end{align}
Applying this identity to the $\theta$-dependent $n_R(S)$ rows of $U_R(\theta)_{\nu(S),[n]}$, counted with multiplicity, shows that $\Per(U_R(\theta)_{\nu(S),[n]})$ has degree at most $n_R(S)$ in $e^{\pm i\theta}$.
Consequently, its squared modulus is a trigonometric polynomial of degree at most $2n_R(S)$.
The threshold and parity band-limit statements follow by combining this observation with the corresponding constraints on the contributing PNR outcomes.

\subsection{Robust polynomial interpolation}\label{subsec: robust polynomial interpolation}

For polynomial interpolation, we use the robust Berlekamp--Welch algorithm, which has been employed repeatedly in recent hardness proofs for BosonSampling~\cite{bouland2022noise, deshpande2022quantum, bouland2026complexity, bouland2025exponential}.

\begin{theorem}[Robust Berlekamp--Welch algorithm~\cite{bouland2022noise}]
\label{thm:probabilistic-robust-BW-extrapolation}
Let $p(t)\in\mathbb C[t]$ be a polynomial of degree at most $d$.
Let $0<\Delta<1$, and let $\{t_j\}_{j = 1}^{100d^2}$ be equally-spaced points in $[0,\Delta]$. 
Then there exists a ${\rm P^{NP}}$ algorithm that, given data points $\{(t_j, y_j)\}_{j = 1}^{100d^2}$ satisfying 
\begin{align}
\Prob\left[
| y_j -p(t_j)|> \epsilon
\right]<\frac14  
\label{eq:BW-input-failure}
\end{align}
for every $j$, outputs an estimate $\widehat p(1)$ such that 
\begin{align}
\Prob\left[
|\widehat p(1)-p(1)|
>
\epsilon\exp\left(d\log\Delta^{-1}+O(d)\right)
\right]
<
\frac13.
\label{eq:BW-output-failure}
\end{align}
\end{theorem}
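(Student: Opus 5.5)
The proof rests on two ingredients: a worst-case extrapolation bound for low-degree polynomials (Remez/Paturi type), and the standard Berlekamp--Welch decoding run on a random subsample of the data, with the $\mathrm{NP}$ oracle used to find a low-degree polynomial consistent with most points.

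\textbf{Step 1 (decoding).} Sample $k=O(d)$ of the $100d^2$ points uniformly at random. By Eq.~\eqref{eq:BW-input-failure} and Markov's inequality, with probability at least $2/3$ fewer than a $1/3$ fraction of the sampled points are corrupted by more than $\epsilon$; choosing the constants appropriately makes this true with high enough probability. On the sampled points, use the $\mathrm{NP}$ oracle together with binary search over rational approximations to find a polynomial $q$ of degree at most $d$ with $|q(t_j)-y_j|\le\epsilon$ on at least a $2/3$ fraction of the points. Feasibility of this search is a linear-programming or semi-algebraic condition on the coefficients. The true $p$ witnesses that such a $q$ exists, and a $\mathrm{P}^{\mathrm{NP}}$ machine can locate one to polynomial bit precision.

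\textbf{Step 2 (closeness on many points).} Both $p$ and $q$ are $\epsilon$-close to the $y_j$ on more than a $1/3$ fraction each. Any two sets each of size more than $1/3$ of the sample, among $k\gg d$ points, intersect in more than $d+1$ points. So $r=p-q$ has degree at most $d$ and satisfies $|r(t_j)|\le 2\epsilon$ on at least $d+1$ points of $[0,\Delta]$ drawn from a grid of spacing $\Delta/(100d^2)$.

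\textbf{Step 3 (extrapolation).} Bound $|r(1)|$ by Lagrange interpolation through $d+1$ of these good points $\tau_0<\dots<\tau_d$:
\begin{align}
|r(1)|\le 2\epsilon\sum_{i}\prod_{l\ne i}\frac{|1-\tau_l|}{|\tau_i-\tau_l|}.
\end{align}
The numerators are at most $1$. The good points are distinct grid points, so $|\tau_i-\tau_l|\ge |i-l|\,\Delta/(100d^2)$, and the denominator product is at least $(\Delta/100d^2)^d\, i!\,(d-i)!$. This gives $|r(1)|\le 2\epsilon\,(200d^2/\Delta)^d/d!$.

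\textbf{Main obstacle.} The bound from Step 3 is $\epsilon\exp(d\log\Delta^{-1}+O(d\log d))$, which carries a spurious $d\log d$ term. To remove it I would choose the good points well spread out rather than clustered. Since the good points form a constant fraction of a dense grid, one can pick $d+1$ of them near Chebyshev-like positions in $[0,\Delta]$. Alternatively, one can apply the Paturi/Remez inequality: a degree-$d$ polynomial bounded on a set of measure comparable to $\Delta$, obtained by $O(d^2)$-grid density together with Markov brothers' inequality, satisfies $|r(1)|\le \|r\|_{[0,\Delta]}\,T_d(2/\Delta-1)\le \|r\|\,(4/\Delta)^d$. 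The $d^2$ grid density is exactly what the Markov inequality $|r'|\le 2d^2\|r\|/\Delta$ requires to pass from bounds on the grid to bounds on the whole interval. Combining these gives the claimed failure probability of $1/3$ and error $\epsilon\exp(d\log\Delta^{-1}+O(d))$.
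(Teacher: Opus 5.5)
The paper does not prove this statement; it imports it from \cite{bouland2022noise}. So your proposal has to stand on its own, and it has a genuine gap in Step~1. From the hypothesis $\Pr[|y_j-p(t_j)|>\epsilon]<1/4$, Markov's inequality gives only $\Pr[\text{bad fraction}\ge\rho]<1/(4\rho)$. For $\rho=1/3$ this is $3/4$, not $1/3$. To push the failure below $1/3$ you need $\rho\ge 3/4$, so you learn only that more than a quarter of the points are good.

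Neither subsampling nor ``choosing the constants'' helps. The expected bad fraction of a random subsample is unchanged, Markov is scale-free, and the numbers $1/4$ and $1/3$ are fixed by the statement. You also cannot switch to a Chernoff bound, because the failure events may be arbitrarily correlated; in the paper's own use, all $t_j$ share one Haar block $A$ and one outcome $x$.

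Concretely, suppose that with probability $0.49$ the data replace a uniformly chosen parity class of indices by $p(t_j)+M$ with $M$ huge. Every point then fails with probability $0.245<1/4$. Yet on an event of probability $0.49>1/3$ exactly half of the points are good. Your intersection argument needs both agreement sets to exceed $2/3$ of the sample, so it certifies nothing there. (Step~2 as written says $1/3$, and two $1/3$-sets need not meet at all.)

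More generally, under a per-point bound of $1/4$, Markov cannot certify a good majority with a constant margin with probability better than about $1/2$. So no Markov-plus-unique-decoding argument reaches output failure $<1/3$. Your scheme does work if the per-point failure is below $1/8$. Then with probability $>2/3$ more than $5/8$ of the points are good, and two $5/8$-agreement sets share more than a quarter of the points. This is exactly the calibration the paper uses in its own Fourier-extraction theorem, where $p_{\mathrm{fail}}<1/8$ gives output failure $<1/4$. With the constants exactly as stated, you would need either a stronger hypothesis or a genuinely different decoding idea.

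The $d\log d$ loss in Step~3 is real, but the remedies you suggest are not the right ones. Chebyshev-like placement is not under your control, since the good set is arbitrary and can avoid any prescribed positions. The Remez-plus-Markov-brothers route does not close as sketched. Thickening each good grid point by half a step costs an additive $\|r\|_{[0,\Delta]}/100$ by Markov's inequality. Remez then multiplies this by $T_d(O(1))=e^{\Theta(d)}$, leaving the vacuous $\|r\|_{[0,\Delta]}\le e^{\Theta(d)}(2\epsilon+\|r\|_{[0,\Delta]}/100)$. In any case, after subsampling to $O(d)$ points the $d^2$ density is gone.

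The loss comes solely from using the fine spacing $\Delta/(100d^2)$ as the node separation. Instead, run the decoding on a fixed, equally spaced sub-grid of $k=O(d)$ points with spacing $\Theta(\Delta/d)$. Markov applies to any fixed subset, and $k\ge 4(d+1)$ guarantees the needed overlap. Any $d+1$ distinct nodes then satisfy $|\tau_i-\tau_l|\ge c|i-l|\Delta/d$. Your own Lagrange estimate then gives $|r(1)|\le 2\epsilon\,(d/(c\Delta))^d\,2^d/d!\le 2\epsilon\,(2e/(c\Delta))^d=\epsilon\exp(d\log\Delta^{-1}+O(d))$. This is the interval analogue of the root-of-unity spacing lemma in the paper's proof of its Fourier-extraction theorem, and of its finite-difference bound with node spacing $\Delta/d$.
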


\subsection{Robust Fourier-coefficient extraction}
\label{subsec: coeff extract}

To estimate a Fourier coefficient from noisy probability estimates, we develop a new robust Fourier extraction procedure for trigonometric polynomials.

\begin{theorem}[Robust Fourier coefficient extraction algorithm]
\label{thm:probabilistic-robust-fourier-extraction}
Fix any failure probability $p_{\mathrm{fail}}<1/8$. 
Let $d>1$, and let $f(\theta) = \sum_{\ell=-d}^{d}c_\ell e^{i\ell\theta}$ be a trigonometric polynomial of degree at most $d$.
Let $L := C_F d$ for $C_F:=\left\lceil\max\left\{6,3/(1-8p_{\mathrm{fail}})\right\}\right\rceil$, and set $\theta_j:= {2\pi j}/L$ for $j=0,\ldots,L-1$.
Then, there exists a ${\rm P^{NP}}$ algorithm that, given data points $\{(\theta_j, y_j)\}_{j=0}^{L-1}$ satisfying
\begin{align}
\Prob\left[
|y_j-f(\theta_j)|>\epsilon
\right]\le p_{\mathrm{fail}}   
\label{eq:fourier-input-failure-prob}
\end{align}
for every $j$, outputs an estimate $\widehat c_{d}$ of the top Fourier coefficient $c_{d}$ of $f$ such that 
\begin{align}
\Prob\left[
|\widehat c_{d}-c_{d}|
>
2\epsilon\left(\frac{eC_F}{2}\right)^{2d}
\right]
<
\frac14 .
\label{eq:fourier-output-failure-prob}
\end{align}
\end{theorem}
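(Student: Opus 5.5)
The plan is to reduce the statement to robust decoding of an ordinary polynomial evaluated at roots of unity. We then combine a Markov bound on the number of corrupted samples, an $\mathrm{NP}$-oracle search for a consistent low-degree fit, and a Lagrange-interpolation estimate for the leading coefficient.

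\emph{Reduction to a polynomial.} Set $z=e^{i\theta}$ and $g(z):=z^{d}f(\theta)=\sum_{k=0}^{2d}c_{k-d}z^{k}$. This is a polynomial of degree at most $2d$, and its leading coefficient is exactly $c_d$. The sample points become the $L$-th roots of unity $\omega_j=e^{2\pi i j/L}$. Since $|\omega_j|=1$, the rescaled data $\tilde y_j:=\omega_j^{d}y_j$ satisfy $|\tilde y_j-g(\omega_j)|=|y_j-f(\theta_j)|$. Hence the task is to estimate the top coefficient of $g$ from noisy values at $L=C_Fd$ equally spaced points on the unit circle.

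\emph{Counting bad points.} Call $j$ bad if $|y_j-f(\theta_j)|>\epsilon$. The expected number of bad points is at most $p_{\mathrm{fail}}L$. By Markov's inequality, taking $k:=\lfloor 4p_{\mathrm{fail}}L\rfloor+1$ gives $\Pr[\#\mathrm{bad}\ge k]\le p_{\mathrm{fail}}L/k<1/4$; the boundary case $p_{\mathrm{fail}}L<1/4$ is trivial. On the complementary event there are at most $k-1\le 4p_{\mathrm{fail}}L$ bad points. The choice $C_F\ge 3/(1-8p_{\mathrm{fail}})$ then gives
\begin{align}
L-2(k-1)\ \ge\ L(1-8p_{\mathrm{fail}})\ \ge\ 3d\ \ge\ 2d+1 .
\end{align}

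\emph{The algorithm.} Using the $\mathrm{NP}$ oracle, search for a polynomial $q$ of degree at most $2d$ with $|q(\omega_j)-\tilde y_j|\le\epsilon$ on at least $L-(k-1)$ indices. Coefficients are discretized to polynomial precision, and the search fixes them bit by bit with oracle queries, as in the robust Berlekamp--Welch algorithm of Theorem~\ref{thm:probabilistic-robust-BW-extrapolation}; the discretization slack is absorbed into the constant. On the good event such a $q$ exists, because $g$ itself qualifies. Output the leading coefficient of $q$ as $\widehat c_d$.

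\emph{Correctness.} The sets on which $q$ fits and on which $g$ is good intersect in at least $L-2(k-1)\ge 2d+1$ indices. On this intersection $h:=q-g$ satisfies $|h(\omega_j)|\le 2\epsilon$. Pick any $2d+1$ of these points $z_0,\dots,z_{2d}$. By Lagrange interpolation, the leading coefficient of $h$ is
\begin{align}
\sum_{i}\frac{h(z_i)}{\prod_{j\ne i}(z_i-z_j)} .
\end{align}
Therefore $|\widehat c_d-c_d|\le 2\epsilon\sum_i\prod_{j\ne i}|z_i-z_j|^{-1}$.

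\emph{Main obstacle: the Lagrange denominators.} The key step is a lower bound on $\prod_{j\ne i}|z_i-z_j|$ valid for an \emph{arbitrary} subset of $2d+1$ distinct $L$-th roots of unity. Fix $z_i$. The other points sit at distinct circular offsets $t\in\{1,\dots,\lfloor L/2\rfloor\}$, with at most two points per offset, and each contributes $|z_i-z_j|=2\sin(\pi t/L)\ge 4t/L$. The product is therefore minimized by packing the points as close to $z_i$ as possible, which gives
\begin{align}
\prod_{j\ne i}|z_i-z_j|\ \ge\ \Big(\prod_{t=1}^{d}\frac{4t}{L}\Big)^{2}=\Big(\frac{4^{d}d!}{L^{d}}\Big)^{2} .
\end{align}
Using $d!\ge (d/e)^d$, this is at least $\big(4d/(eL)\big)^{2d}=\big(4/(eC_F)\big)^{2d}$. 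Summing over the $2d+1$ indices $i$ gives
\begin{align}
|\widehat c_d-c_d|\ \le\ 2\epsilon(2d+1)\Big(\frac{eC_F}{4}\Big)^{2d}\ \le\ 2\epsilon\Big(\frac{eC_F}{2}\Big)^{2d},
\end{align}
since $2d+1\le 4^{d}$ for $d\ge 1$. This bound holds except with probability less than $1/4$, which is the claim in Eq.~\eqref{eq:fourier-output-failure-prob}. Only this packing argument needs care; the remaining steps are routine.
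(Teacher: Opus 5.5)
Your proposal is correct and follows essentially the same route as the paper: the substitution $z^{d}f(\theta)$, a Markov bound on the number of corrupted samples, an NP-oracle search for a degree-$2d$ fit whose inlier set overlaps the true inliers in at least $2d+1$ indices, and a Lagrange leading-coefficient bound at $L$-th roots of unity via circular-distance counting (the paper's Lemma~\ref{lem:roots-conditioning}, applied with $d\to 2d$). The differences are cosmetic. You state the Markov threshold as a count, $k-1\le 4p_{\mathrm{fail}}L$, where the paper uses the fraction $\rho_L=(L-2d-1)/(2L)$. Your packing bound $\varrho_r\ge\lceil r/2\rceil$ is also slightly sharper than the paper's $\varrho_r\ge r/2$, which leaves extra slack to absorb your discretization remark.
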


\begin{proof}
See Appendix~\ref{appendix: section: robust Fourier extraction}. 
\end{proof}

\subsection{Perturbation-stability and unitary completion}\label{subsec: shared perturbation framework}

To carry out the worst-to-average-case reductions, we must query the corresponding estimation oracle on instances that remain close to Haar-random while still containing a small imprint of the worst-case block.
In each model-specific proof, let $A$ denote the selected submatrix of a Haar-random unitary and let $A_\star$ denote the corresponding block encoding the worst-case quantity.
We interpolate between them along the affine line
\begin{align}
    A(t)=(1-t)A+tA_\star .
\end{align}
The reductions use only small values of $t$ for oracle queries, so the distribution of $A(t)$ must remain close to the original Haar-submatrix distribution, as is likewise required in related worst-to-average-case reductions~\cite{aaronson2011computational, bouland2019complexity, bouland2022noise, bouland2026complexity, movassagh2023hardness, bouland2025exponential, kondo2022quantum, krovi2022average, go2026computational}.

We first analyze this stability under perturbation at the level of the selected block.
Appendix~\ref{appendix: section: translation invariance} derives the density of rectangular Haar-random submatrices and proves the general shift-and-scale stability result used in both model-specific proofs.
While the shift-and-scale stability of square Haar-random submatrices has already been established in~\cite{bouland2026complexity} and the TBS specialization uses a square $n\times n$ block, the submatrix arising in the PBS setting generally need not be square.
We therefore independently derive the corresponding result for rectangular Haar-random submatrices, which covers both model-specific proofs and thereby makes the present analysis self-contained.

Next, since both the ATE and APE problems take a full unitary matrix as input, rather than only the selected block, we lift the block-level closeness to a distributional closeness statement over full unitaries.
Appendix~\ref{appendix: section: conditional-haar-completion} constructs the required randomized conditional-completion procedure and shows that applying the same completion procedure does not increase the total variation distance between the input block distributions, up to an exponentially small contribution from the boundary event.
The threshold and parity sections state the corresponding dimension-dependent perturbation-stability and completion lemmas used in their respective reductions.

\section{Threshold BosonSampling}\label{Section: threshold BS}

\subsection{Proof overview}\label{subsec: threshold proof overview}

We first give a proof overview for Theorem~\ref{thm: threshold average case main}, building on the common technical framework established in Sec.~\ref{section: shared framework}.
The proof implements the worst-to-average-case reduction outlined in Sec.~\ref{subsec: common workflow}.
Specifically, coherent beam-splitter rotations induce a Fourier expansion, a suitably chosen Fourier coefficient provides the filtering structure needed to encode a worst-case permanent, and the encoded value is recovered by perturbing a Haar-random submatrix toward the worst-case block, followed by robust Fourier-coefficient extraction and polynomial interpolation.
The overall construction is illustrated in Fig.~\ref{fig: outlineTBS}.

\begin{figure*}[t]
\includegraphics[width=0.9\linewidth]{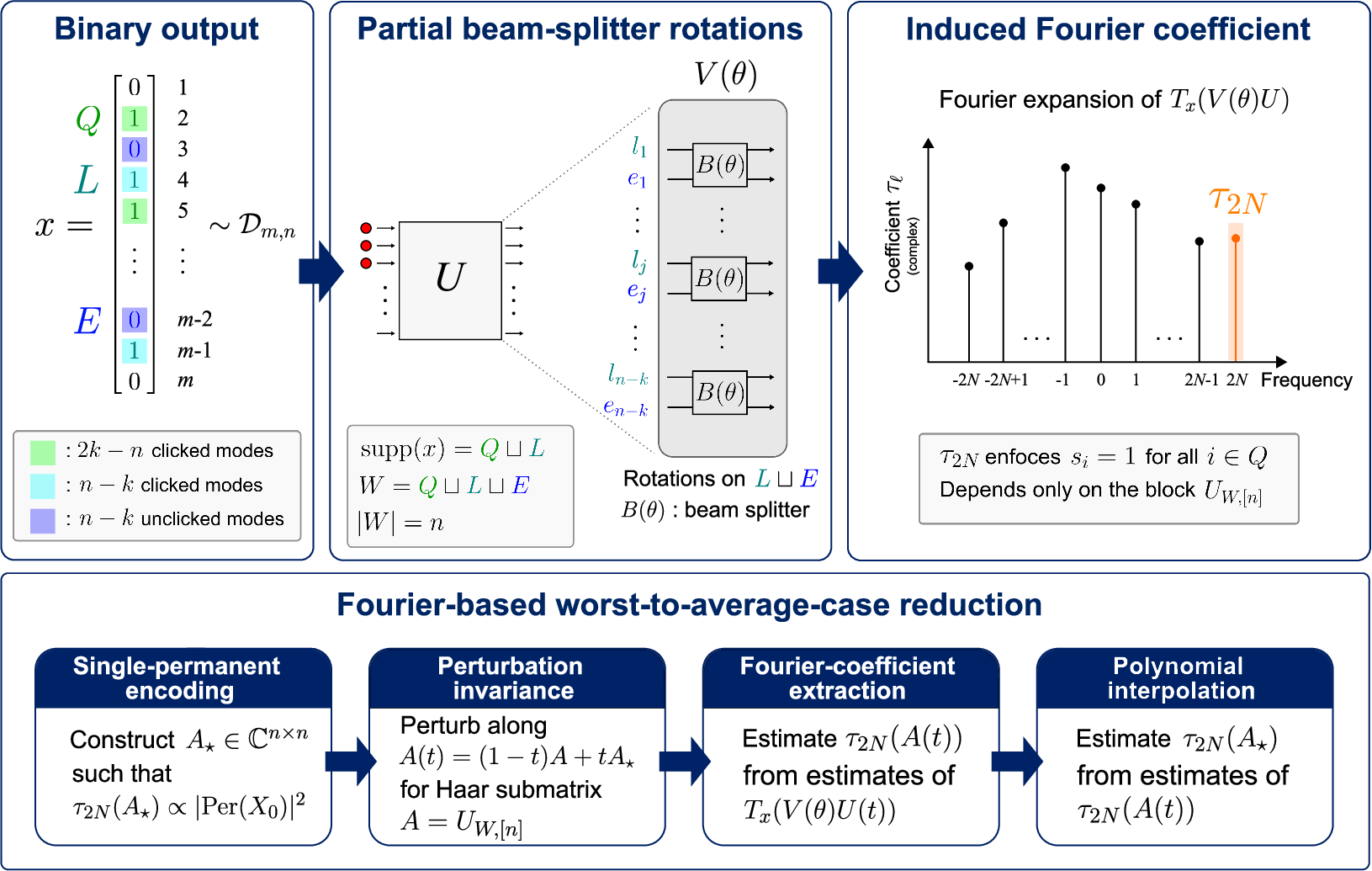}
\caption{Proof overview of Theorem~\ref{thm: threshold average case main}.
Starting from a sampled threshold outcome $x\sim\mathcal D_{m,n}$ with $k=|x|$ clicks, we decompose the clicked set as $\supp(x)=Q\sqcup L$, where $|Q|=2k-n$ and $|L|=n-k$, and choose an unclicked set $E\subseteq[m]\setminus\supp(x)$ with $|E|=n-k$.
Only the modes in $L$ are paired with the unclicked modes in $E$ and rotated by beam splitters $B(\theta)$, while the modes in $Q$ are left unrotated.
The resulting threshold probability is a Fourier series in $\theta$ of degree at most $2N$, where $N=2n-2k$.
Its top Fourier coefficient $\tau_{2N}$ filters out all terms except those in which every unrotated clicked mode $i\in Q$ contains exactly one photon; the remaining photons occupy the clicked modes in $L$, whose rotated rows depend on the paired block $L\sqcup E$.
This single-occupation condition enables the single-permanent encoding.
This partial-rotation construction keeps the relevant square Haar submatrix of size $n\times n$.
The lower row sketches the worst-to-average-case reduction we establish:
construct a worst-case instance $A_\star\in\mathbb{C}^{n\times n}$ such that the corresponding Fourier coefficient $\tau_{2N}$ is proportional to $\Per(X_0)^2$ (Lemma~\ref{lem:tbs-block-permanent-encoding});
use shift-and-scale stability and conditional unitary completion to query instances $A(t)=(1-t)A+tA_\star$ whose distributions remain close to the Haar-submatrix distribution (Lemmas~\ref{lem:tbs-TV} and~\ref{lem:tbs-completion});
extract the Fourier coefficient from estimates of the rotated threshold probabilities (Theorem~\ref{thm:probabilistic-robust-fourier-extraction});
and finally recover the worst-case quantity by polynomial interpolation (Theorem~\ref{thm:probabilistic-robust-BW-extrapolation}).
}
\label{fig: outlineTBS}
\end{figure*}

The threshold-specific feature is that we use a \textit{partial} beam-splitter rotation scheme: for $|x|$ clicked modes, we leave $2|x|-n$ clicked modes unrotated and rotate only the remaining $n-|x|$ clicked modes, each paired with an unclicked mode. 
This choice is designed so that the union of the unrotated clicked modes and all rotated pairs contains exactly $n$ modes, leaving a square $n\times n$ block. When this block is sampled from a Haar-random linear-optical circuit in the worst-to-average-case reduction, the corresponding perturbation margin is $m-2n=\Omega(n)$ for every fixed $\alpha>2$.
The highest-order Fourier coefficient then selects only those contributions in which each of the $2|x|-n$ unrotated clicked modes contains exactly one photon.
This single-occupation constraint is essential for encoding a single worst-case permanent into the corresponding Fourier coefficient.



In the following, we present a more detailed proof outline for the worst-to-average-case reduction underlying Theorem~\ref{thm: threshold average case main}.
The reduction starts from an arbitrary $k_0\times k_0$ binary matrix $X_0$ and shows that oracle access to ATE in Definition~\ref{def: thresholdBS} enables the exact computation of $\Per(X_0)^2$.
Since the exact computation of the permanent of a binary matrix is \#P-hard in the worst case~\cite{valiant1979complexity}, this proves the desired hardness result.

We first establish a typical-outcome property for the threshold ensemble. Specifically, for $x\sim\mathcal D_{m,n}$, the number of clicked modes concentrates near $\alpha n/(\alpha+1)$, as shown in Lemma~\ref{lemma: most threshold outcomes have many clicks}. For fixed $\alpha>2$, a typical outcome therefore has more than $n/2$ clicks. Writing $k=|x|$ and $O=\supp(x)$, we decompose the clicked set as $O=Q\sqcup L$, where $|Q|=2k-n$ and $|L|=n-k$. We then choose a set $E\subseteq[m]\setminus O$ of $n-k$ unclicked modes and fix a pairing between the modes in $L$ and those in $E$. Denoting the union of these paired modes by $R=L\sqcup E$, we further define $W:=Q\sqcup L\sqcup E$. By construction, $|W|=(2k-n)+2(n-k)=n$, so the rows indexed by $W$, together with the first $n$ input columns, define a square $n\times n$ block.

For these fixed mode sets, consider an arbitrary linear-optical circuit $U$. We apply identical coherent beam-splitter rotations to the paired modes in $R=L\sqcup E$, while leaving the modes in $Q$ unchanged, and consider the resulting threshold probability $T_x(V_R(\theta)U)$ as a function of the rotation angle $\theta$. Because every PNR configuration contributing to the threshold outcome $x$ occupies only the clicked modes $Q\sqcup L$, while the rotations mix the rows in $L$ only with their paired rows in $E$, the entire rotated probability depends only on the square block $U_{W,[n]}$.
Moreover, $T_x(V_R(\theta)U)$ is a finite Fourier series of degree at most $2N$, where $N=2n-2k$, because at least one photon must occupy each mode in $Q$, so at most $N$ photons can occupy the rotated clicked modes in $L$. 
The crucial role of the highest-order Fourier coefficient $\tau_{2N}$ is then to filter the coarse-grained sum: it retains only those occupation patterns in which every unrotated clicked mode in $Q$ contains exactly one photon.
This single-occupation constraint is the key property that allows the worst-case permanent to be encoded in the next step.

We now use this single-occupation constraint to encode the worst-case instance $X_0$. Choose a subset $H\subseteq Q$ of $k_0$ modes to carry the rows of $X_0$. Since every mode in $Q$ is singly occupied in each term contributing to $\tau_{2N}$, each row indexed by $H$ appears exactly once in every surviving permanent. We then construct an $n\times n$ block $A_\star$ on the row set $W$ that embeds $X_0$ on the rows indexed by $H$, while its other entries are chosen so that every surviving permanent reduces, up to row permutations, to the permanent of $\kappa\left(X_0\oplus \mathbf{1}_{(n-k_0)\times(n-k_0)}\right)$. Consequently, every surviving permanent equals $\kappa^n\Per(X_0)(n-k_0)!$, independently of the remaining occupation pattern. Summing over these patterns therefore contributes only the explicitly known and efficiently computable factor $G_{n,k}$, so the resulting top (highest-order) Fourier coefficient is proportional to $\Per(X_0)^2$ with known prefactor $\kappa^{2n}((n-k_0)!)^2G_{n,k}$ (Lemma~\ref{lem:tbs-block-permanent-encoding}). The condition $k_0<|Q|$ required by this construction holds for the chosen $k_0=o(n)$ and for typical $x\sim\mathcal D_{m,n}$.



Having encoded the worst-case instance into the block $A_\star$, we now connect this construction to average-case Haar-random instances.
We draw a Haar-random unitary $U \sim \Haar(m)$ and set $A=U_{W,[n]}$ for the selected square block.
We then consider the affine path $A(t)=(1-t)A+tA_\star$.
For sufficiently small $t$, Lemma~\ref{lem:tbs-TV} shows that the distribution of $A(t)$ remains close in total variation distance to the original Haar-submatrix distribution.
Lemma~\ref{lem:tbs-completion} then lifts this block-level closeness to full unitaries by a randomized completion procedure, producing a distribution on unitaries that remains close to $\Haar(m)$.
We denote a unitary sampled from this completed distribution by $U(t)$.

For each interpolation point $t$, we query the ATE oracle on the rotated circuits $V_R(\theta)U(t)$ at several Fourier grid points $\theta$.
Since the distribution of $U(t)$ is close to Haar and left multiplication by $V_R(\theta)$ preserves Haar measure, the oracle's average-case guarantee applies to these queries up to the total-variation distance error.
The robust Fourier-coefficient extraction algorithm of Theorem~\ref{thm:probabilistic-robust-fourier-extraction} then combines these probability estimates to output an additive estimate of $\tau_{2N}(x,U(t),R)$ with high probability.

Because the top Fourier coefficient depends only on the prescribed block $A(t)$, its value is independent of the particular unitary completion. Denoting this coefficient by $p(t)$, we obtain a polynomial in $t$ of degree at most $2n$. Repeating the Fourier-extraction step for many values of $t$ gives noisy evaluations of $p(t)$ near $t=0$. We then use the robust Berlekamp--Welch interpolation algorithm of Theorem~\ref{thm:probabilistic-robust-BW-extrapolation} to extrapolate to $t=1$, where $A(1)=A_\star$ and hence $p(1)$ is proportional to $\Per(X_0)^2$.
After dividing by the known factor $\kappa^{2n}((n-k_0)!)^2G_{n,k}$, the final additive error is chosen to be below $1/2$. Since $\Per(X_0)^2$ is an integer, rounding recovers it exactly, and standard repetition boosts the success probability.

The remainder of this section proceeds as follows.
In Sec.~\ref{subsec: most threshold outcomes}, we show that most threshold outcomes have the number of clicks concentrated near $\alpha n/(\alpha+1)$.
In Sec.~\ref{subsec: threshold fourier}, we introduce our partial beam-splitter rotation scheme for TBS, define the induced threshold Fourier coefficients, establish their band limitation, and derive the top-coefficient single-occupation filter.
In Sec.~\ref{subsec: threshold encoding single permanent}, we encode a worst-case permanent into this coefficient.
In Sec.~\ref{subsec: threshold translation inv}, we specialize the shared shift-and-scale stability and unitary-completion framework to the threshold setting.
In Sec.~\ref{subsec: threshold coeff extract}, we explain how the shared extraction procedures are applied.
Finally, in Sec.~\ref{subsec: proof of threshold average}, we complete the proof of the TBS hardness theorem.

\subsection{Click-count concentration}\label{subsec: most threshold outcomes}

We first observe that, for $x\sim\mathcal{D}_{m,n}$, the number of clicks $|x|$ concentrates near $\alpha n/(\alpha+1)$.

\begin{lemma}[Click-count concentration]\label{lemma: most threshold outcomes have many clicks}
Let $\alpha=m/n \geq 1$, and fix $\eta\in(0,1/(\alpha + 1))$.
Let $\mathcal{D}_{m,n}$ be the distribution in Definition~\ref{def: random threshold outcome ensemble}.
Then there exist constants $C_{\alpha,\eta},c_{\alpha,\eta}>0$ such that
\begin{align}
\Pr_{x\sim\mathcal{D}_{m,n}}\left[
\left| |x|-\frac{\alpha}{\alpha+1}n\right|\ge \eta n
\right]
\le C_{\alpha,\eta}e^{-c_{\alpha,\eta}n}.
\end{align}
\end{lemma}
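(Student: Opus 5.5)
The plan is to compute the exact law of the click count $K:=|x|$ under $\mathcal{D}_{m,n}$ and apply a direct large-deviation estimate to the resulting binomial ratio. Summing $T_x^{(\mathrm{avg})}$ from Eq.~\eqref{eq:main-threshold-average} over the $\binom{m}{k}$ strings of weight $k$ gives
\begin{align}
    \Pr_{x\sim\mathcal{D}_{m,n}}[|x|=k]
    =
    \frac{\binom{m}{k}\binom{n-1}{k-1}}{\binom{m+n-1}{n}},
    \qquad 1\le k\le n .
\end{align}
Equivalently, $K$ is the number of occupied urns when a composition of $n$ into $m$ nonnegative parts is drawn uniformly at random, which is consistent with the sampling description following Definition~\ref{def: random threshold outcome ensemble}. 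The bound then reduces to a deterministic estimate on this explicit probability mass function.

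Write $k=\kappa n$ with $\kappa\in(0,1]$. Stirling's formula in the form $\log\binom{a}{b}=aH(b/a)+O(\log a)$, with $H$ the natural-log binary entropy, gives
\begin{align}
    \frac1n\log\Pr[K=k]
    =
    \phi(\kappa)+O\!\left(\frac{\log n}{n}\right),
\end{align}
where
\begin{align}
    \phi(\kappa)
    :=
    \alpha H(\kappa/\alpha)+H(\kappa)-(\alpha+1)H\!\left(\tfrac{1}{\alpha+1}\right),
\end{align}
and the error term is uniform in $k$. The rate function $\phi$ is strictly concave on $(0,1)$, because $H$ is. Its derivative is
\begin{align}
    \phi'(\kappa)=\log\frac{\alpha-\kappa}{\kappa}+\log\frac{1-\kappa}{\kappa},
\end{align}
which vanishes exactly when $(\alpha-\kappa)(1-\kappa)=\kappa^2$, that is, at $\kappa^\ast=\alpha/(\alpha+1)$. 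At this point a direct substitution gives $\phi(\kappa^\ast)=0$, as it must, since the probabilities sum to one and there are only polynomially many values of $k$.

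The step I expect to need the most care is making this uniform at the boundary of the range. The Stirling error has to be controlled for all $k\in[1,n]$, including $k$ near $1$ and near $n$, where the entropy approximation is crude. The clean way around this is to use the elementary two-sided bound
\begin{align}
    \frac{e^{aH(b/a)}}{a+1}\le\binom{a}{b}\le e^{aH(b/a)},
\end{align}
which holds for all $0\le b\le a$ with $H(0)=H(1)=0$. This makes the $O(\log n)$ error uniform in $k$, and $\phi$ extends continuously to $[0,1]$.

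With uniformity in hand, strict concavity and continuity give, for any fixed $\eta\in(0,1/(\alpha+1))$,
\begin{align}
    c'_{\alpha,\eta}:=-\max\{\phi(\kappa):\kappa\in[0,1],\ |\kappa-\kappa^\ast|\ge\eta\}>0 .
\end{align}
The set over which the maximum is taken is nonempty and avoids $\kappa^\ast$; the condition $\eta<1/(\alpha+1)=1-\kappa^\ast$ guarantees that $\kappa^\ast+\eta<1$. Summing over the at most $n$ admissible values of $k$ then yields
\begin{align}
    \Pr\bigl[|K-\kappa^\ast n|\ge\eta n\bigr]
    \le
    n\,(\text{poly}\ n)\,e^{-c'_{\alpha,\eta}n}.
\end{align}
The polynomial prefactor is absorbed by taking $c_{\alpha,\eta}=c'_{\alpha,\eta}/2$ and choosing $C_{\alpha,\eta}$ large enough to cover small $n$. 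This gives the claimed bound $C_{\alpha,\eta}e^{-c_{\alpha,\eta}n}$ for every fixed $\alpha\ge1$ and $\eta\in(0,1/(\alpha+1))$.
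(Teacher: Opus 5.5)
Your proof is correct, but it follows a different route from the paper's.

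The paper never passes to entropy asymptotics. It works with the unnormalized weights $B_k=\binom{m}{k}\binom{n-1}{k-1}$ and their exact consecutive ratio $B_{k+1}/B_k=\frac{(m-k)(n-k)}{k(k+1)}\le g_\alpha(k/n)$. Here $g_\alpha(\gamma)=(\alpha-\gamma)(1-\gamma)/\gamma^2$ is strictly decreasing and equals $1$ exactly at $\gamma_\star=\alpha/(\alpha+1)$. This is the same stationarity equation $(\alpha-\kappa)(1-\kappa)=\kappa^2$ that you get from $\phi'$. Above $(\gamma_\star+\eta/2)n$ the weights therefore decay geometrically with ratio $q_{\alpha,\eta}=g_\alpha(\gamma_\star+\eta/2)<1$. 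Hence the normalized mass above $(\gamma_\star+\eta)n$ is at most $q_{\alpha,\eta}^{\eta n/2-1}/(1-q_{\alpha,\eta})$. The lower tail is handled symmetrically with the backward ratio $B_{k-1}/B_k$.

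Your route buys the sharp exponent. The two-sided binomial bounds give $\Pr[K=k]=\exp(n\phi(k/n)+O(\log n))$ uniformly in $1\le k\le n$, which correctly disposes of your boundary worry. By concavity, your constant is explicit, $c'_{\alpha,\eta}=-\max\{\phi(\kappa^\ast-\eta),\phi(\kappa^\ast+\eta)\}$, and this is the true large-deviation rate.

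The ratio method avoids Stirling and entropy estimates entirely. It incurs no polynomial prefactor, so there is no need to halve the exponent, and it yields explicit constants $q_{\alpha,\eta},r_{\alpha,\eta}$. It is also the template the paper reuses for the odd-parity concentration lemma and for the moderate-deviation lemmas in the leading-coefficient appendix.
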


\begin{proof}
    See Appendix~\ref{appendix: section: click count concentration}.
\end{proof}

In what follows, we fix $\alpha>2$ and choose $\eta>0$ small enough so that
\begin{align}
    \frac{\alpha}{\alpha+1}-\eta>\frac12 .
\label{eq:tbs-click-more-than-half}
\end{align}
Then Lemma~\ref{lemma: most threshold outcomes have many clicks} implies that, with probability at least $1-\exp(-\Theta(n))$, a sampled threshold outcome $x\sim\mathcal{D}_{m,n}$ satisfies $|x|>n/2$.
Writing $k=|x|$, this ensures that both $2k-n$ and $n-k$ are nonnegative, which is the parameter range required for the partial-rotation construction below.

\subsection{Partial rotations, threshold Fourier coefficients, and band limitation}\label{subsec: threshold fourier}

Let $x\in\mathcal{T}_{m,n}$ be a fixed threshold outcome with $k:=|x|>n/2$, and let $O:=\supp(x)$ be the set of clicked modes.
We choose a decomposition
\begin{align}\label{eq: def of O Q L}
    O=Q\sqcup L,
    \qquad
    |Q|= 2k - n,
    \qquad
    |L|= n - k.
\end{align}
For typical outcomes satisfying the concentration event of Lemma~\ref{lemma: most threshold outcomes have many clicks}, both $|Q|=2k-n$ and $|L|=n-k$ are $\Theta(n)$.
Modes in $Q$ will be left unrotated, while the modes in $L$ will be paired with unclicked modes.
We choose an arbitrary set
\begin{align}
    E\subseteq [m]\setminus O,
    \qquad
    |E|= n-k,
\end{align}
and fix a pairing
\begin{align}\label{eq: def of R for TBS}
    R:=L\sqcup E
    =\{l_1,e_1\}\sqcup\cdots\sqcup\{l_{n-k},e_{n-k}\},
\end{align}
for $l_j\in L$ and $e_j\in E$.
Finally, we define the mode-index set
\begin{align}\label{eq: def of W}
    W:=Q\sqcup L\sqcup E  ,
\end{align}
whose size is given by $|W| = (2k-n)+2(n-k) = n$.
Since every $S$ contributing to $T_x(U)$ satisfies $\supp(S)=O=Q\sqcup L$, and $V_R(\theta)$ mixes the rows indexed by $L$ only with the rows indexed by $E$, the full rotated probability $T_x(V_R(\theta)U)$ depends only on the $n\times n$ submatrix $U_{W,[n]}$ for every $\theta$.
Thus, this selected-block dependence is already present before taking any particular Fourier coefficient.

We now apply the beam-splitter rotation $V_R(\theta)$ defined in Eq.~\eqref{eq:V-decomp-full}, with the rotated pairs given by $(l_j,e_j)$ as defined in Eq.~\eqref{eq: def of R for TBS}.
This gives rise to the following Fourier coefficient for TBS.

\begin{definition}[Threshold Fourier coefficient]\label{def: threshold fourier coeff}
Given $x\in\mathcal{T}_{m,n}$, $U\in\mathrm{U}(m)$, and the paired set $R \subseteq [m]$ as above, define
\begin{align}
    \tau_\ell(x,U,R)
    :=
    \frac{1}{2\pi}\int_{0}^{2\pi}T_x(V_R(\theta)U)e^{-i\ell\theta}d\theta,
    \qquad \ell\in\mathbb{Z}.
\label{eq:tbs-fourier-def}
\end{align}
\end{definition}

\subsubsection{Band limitation}

We first show that the rotated threshold probability is band-limited: its Fourier support lies in $\{-2N,\ldots,2N\}$, where $N:=n-|Q|=2n-2k$.

\begin{lemma}\label{lem:tbs-bandlimit}
Let $x\in\mathcal{T}_{m,n}$ be a threshold outcome with $k:=|x|>n/2$, let $U\in\mathrm{U}(m)$, and let $O=\supp(x)$.
Let $Q$, $L$, $E$, $R$, and $W$ denote the mode-index sets defined in Eqs.~\eqref{eq: def of O Q L}--\eqref{eq: def of W}, and define $N:=n-|Q|=2n-2k$.
Then $T_x(V_R(\theta)U)$ is a trigonometric polynomial in $\theta$ of degree at most $2N$.
Equivalently,
\begin{align}
T_x(V_R(\theta)U)
=
\sum_{\ell=-2N}^{2N}\tau_\ell(x,U,R)e^{i\ell\theta},
\label{eq:tbs-finite-fourier}
\end{align}
where the coefficients are given in Definition~\ref{def: threshold fourier coeff}.
\end{lemma}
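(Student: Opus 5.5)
The plan is to expand $T_x(V_R(\theta)U)$ via Eq.~\eqref{eq:main-threshold-probability} as a finite sum over occupation vectors $S\in\mathcal{S}_{m,n}$ with $\supp(S)=O=Q\sqcup L$. Then we bound the trigonometric degree of each summand $p_S(V_R(\theta)U)$ separately. Since a finite sum of trigonometric polynomials of degree at most $2N$ is again of degree at most $2N$, this suffices. Moreover, once the function is known to be a trigonometric polynomial of degree at most $2N$, its expansion coefficients coincide with the integrals in Definition~\ref{def: threshold fourier coeff} by orthogonality of $\{e^{i\ell\theta}\}$ on $[0,2\pi)$. This gives Eq.~\eqref{eq:tbs-finite-fourier}, with $\tau_\ell=0$ for $|\ell|>2N$.

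For a fixed contributing $S$, we first count the $\theta$-dependent rows. By Eq.~\eqref{eq: rows of rotated circuit}, the rows of $U_R(\theta)=V_R(\theta)U$ indexed by $j\notin R$ are independent of $\theta$. The rows with $j\in R$ are of the form $e^{i\theta}a_j+e^{-i\theta}b_j$. Because $\supp(S)\subseteq Q\sqcup L$ and $S$ has no photons in $E$, the number of $\theta$-dependent rows in $U_R(\theta)_{\nu(S),[n]}$, counted with multiplicity, is
\begin{align}
n_R(S)=\sum_{i\in L}s_i = n-\sum_{i\in Q}s_i .
\end{align}
Since $\supp(S)=O\supseteq Q$, every $i\in Q$ has $s_i\ge 1$, so $\sum_{i\in Q}s_i\ge |Q|$. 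Hence $n_R(S)\le n-|Q|=N=2n-2k$.

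Next we apply the row multilinearity of the permanent, as in Sec.~\ref{subsec: paired beam-splitter rotations}, to each of the $n_R(S)$ rotated rows. This expands $\Per(U_R(\theta)_{\nu(S),[n]})$ as a sum of terms $e^{i(a-b)\theta}$ with $a+b=n_R(S)$, so it is a Laurent polynomial in $e^{i\theta}$ supported on exponents in $[-n_R(S),n_R(S)]$. Since $\theta$ is real, the modulus squared is the product of this Laurent polynomial with its complex conjugate. The conjugate has exponents in the same symmetric range, so the product has exponents in $[-2n_R(S),2n_R(S)]\subseteq[-2N,2N]$. Multiplying by the $\theta$-independent factor $1/\prod_j s_j!$ preserves this. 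Summing over $S$ completes the argument.

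There is no real obstacle here. The only point needing care is that $n_R(S)$ counts only photons in $L$, and not in $E$. This relies on the threshold constraint $\supp(S)=O$, which forbids occupation of the unclicked partner modes $E$, combined with the lower bound $s_i\ge1$ on $Q$.
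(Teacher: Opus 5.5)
Your proposal is correct and follows essentially the same argument as the paper: you expand over $S$ with $\supp(S)=O$, observe that only rows indexed by $L$ depend on $\theta$ (none in $E$, since those modes are unclicked), use $s_i\ge 1$ on $Q$ to get $n_L(S)\le N$, and then apply row multilinearity together with orthogonality of the Fourier basis. Your remark that the conjugate factor has exponents in the same symmetric range just spells out the paper's step that the squared magnitude has degree at most $2n_L(S)$.
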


\begin{proof}
Write $U_R(\theta):=V_R(\theta)U$.
By the definition of threshold output probability,
\begin{align}
T_x(U_R(\theta))
=
\sum_{\substack{S\in\mathcal{S}_{m,n}\\ \supp(S)=O}}
\frac{1}{\prod_i s_i!}
\left|\Per\left(U_R(\theta)_{\nu(S),[n]}\right)\right|^2.
\end{align}
For a fixed $S$ with $\supp(S)=O$, define $n_L(S):=\sum_{i\in L}s_i$ as the number of photons occupying the output modes indexed by $L$.
Since every mode in $Q$ belongs to $\supp(S)$, each such mode contains at least one photon.
Therefore, we have
\begin{align}
    n_L(S)\le n-|Q|=N.
\label{eq:tbs-nL-upper}
\end{align}

By construction, rows indexed by $Q$ are not affected by $V_R(\theta)$, whereas rows indexed by $j\in L$ have the form
\begin{align}\label{eq: row of rotated circuit TBS}
(U_R(\theta))_j
= e^{i\theta}\,(V_R^+U)_j + e^{-i\theta}\,(V_R^-U)_j.
\end{align}
Thus, by multilinearity of the permanent, $\Per(U_R(\theta)_{\nu(S),[n]})$ is a trigonometric polynomial in $\theta$ of degree at most $n_L(S)$, and its squared magnitude has degree at most $2n_L(S)$.
By Eq.~\eqref{eq:tbs-nL-upper}, $T_x(U_R(\theta))$ is therefore a trigonometric polynomial in $\theta$ of degree at most $2N$.
Equivalently, there exist complex coefficients $a_{-2N},\ldots,a_{2N}\in\mathbb{C}$ such that
\begin{align}
T_x(U_R(\theta))
=
\sum_{\ell=-2N}^{2N}a_\ell e^{i\ell\theta},
\label{eq:tbs-finite-fourier-a}
\end{align}
for all $\theta\in[0,2\pi)$.
Here, the coefficients $a_\ell$ coincide with the Fourier coefficients $\tau_\ell(x,U,R)$ defined in Eq.~\eqref{eq:tbs-fourier-def}, as follows immediately from the orthogonality of the Fourier basis.
\end{proof}

\subsubsection{Top Fourier coefficient and single-occupation filter}

We focus on the highest-frequency coefficient
\begin{align}
\tau_{2N}(x,U,R)
=
\frac{1}{2\pi}\int_0^{2\pi}
T_x(V_R(\theta)U)e^{-i2N\theta}\,d\theta,
\label{eq:tbs-top-fourier-def}
\end{align}
which we also refer to as the top Fourier coefficient.
By Lemma~\ref{lem:tbs-bandlimit}, this coefficient saturates the bound $n_L(S)\le N$ and thereby forces single occupation on every unrotated clicked mode in $Q$.

\begin{lemma}\label{lem:tbs-rowfilter}
Let $x\in\mathcal{T}_{m,n}$ be a threshold outcome with $k:=|x|>n/2$, let $U\in\mathrm{U}(m)$, and let $O=\supp(x)$.
Let $Q$, $L$, $E$, $R$, and $W$ denote the mode-index sets defined in Eqs.~\eqref{eq: def of O Q L}--\eqref{eq: def of W}, and define $N:=n-|Q|=2n-2k$.
Then the top Fourier coefficient satisfies
\begin{align}
\tau_{2N}(x,U,R)
&=
\!\!\!\!\!\!\sum_{\substack{S\in\mathcal{S}_{m,n}\\ \supp(S)=O\\ s_i=1\;\forall i\in Q}}
\!\!\!\!\frac{1}{\prod_i s_i!}
\Per\left(\left((\Pi_Q+V_R^+)U\right)_{\nu(S),[n]}\right)
\nonumber \\
&\qquad\qquad\quad\times \Per\left(\left((\Pi_Q+V_R^-)U\right)_{\nu(S),[n]}\right)^*  ,
\label{eq:tbs-top-fourier-final}
\end{align}
where $\Pi_Q$ denotes the projector matrix onto the modes in $Q$.
\end{lemma}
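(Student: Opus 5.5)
We expand each term of the threshold probability in powers of $e^{i\theta}$ and read off the coefficient of $e^{i2N\theta}$. Write $U_R(\theta)=V_R(\theta)U$. The expansion from the proof of Lemma~\ref{lem:tbs-bandlimit} gives
\begin{align*}
T_x(U_R(\theta))=\sum_{\substack{S\in\mathcal{S}_{m,n}\\ \supp(S)=O}}\frac{1}{\prod_i s_i!}\Per\!\left(U_R(\theta)_{\nu(S),[n]}\right)\Per\!\left(U_R(\theta)_{\nu(S),[n]}\right)^* .
\end{align*}
For each contributing $S$, the rows of $U_R(\theta)_{\nu(S),[n]}$ indexed by $Q$ are $\theta$-independent and equal $(\Pi_Q U)_j$. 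The rows indexed by $L$, counted with multiplicity, equal $e^{i\theta}(V_R^+U)_j+e^{-i\theta}(V_R^-U)_j$ by Eq.~\eqref{eq: row of rotated circuit TBS}. No row indexed by $E$ appears, since $\supp(S)=O=Q\sqcup L$.

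First expand the holomorphic factor. Row multilinearity, applied to each of the $n_L(S)$ rotated rows, writes $\Per(U_R(\theta)_{\nu(S),[n]})$ as a sum over choices of $\pm$ in each rotated row. The choice ``all $+$'' is the unique term carrying $e^{in_L(S)\theta}$, and its coefficient is $\Per(((\Pi_Q+V_R^+)U)_{\nu(S),[n]})$. This uses the fact that on rows $j\in L$ we have $(\Pi_Q+V_R^+)U$ equal to $V_R^+U$, while on rows $j\in Q$ it equals $U_j$, because $V_R^\pm$ is supported on $R$ and $\Pi_Q$ on $Q$. So this permanent has top frequency $n_L(S)$ with that explicit coefficient, and all its other frequencies lie in $\{-n_L(S),\dots,n_L(S)-1\}$.

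Next take the complex conjugate. Since $\theta$ is real, conjugating maps $e^{-i\theta}$ to $e^{i\theta}$, so the conjugated permanent has top frequency $n_L(S)$ with coefficient $\Per(((\Pi_Q+V_R^-)U)_{\nu(S),[n]})^*$. This is the all-``$-$'' choice, and it is the only source of $e^{in_L(S)\theta}$ in the conjugate. Multiplying the two factors, the coefficient of $e^{i2n_L(S)\theta}$ is exactly the product of these two top coefficients, and every other frequency in the product is strictly smaller.

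Now impose the frequency $2N$. The bound $n_L(S)\le N$ from Eq.~\eqref{eq:tbs-nL-upper} means a term contributes to frequency $2N$ only if $n_L(S)=N$, and then only through its top-coefficient product. The constraint $n_L(S)=N=n-|Q|$ holds exactly when $\sum_{i\in Q}s_i=|Q|$. Combined with $s_i\ge1$ for $i\in Q$, this is equivalent to $s_i=1$ for all $i\in Q$. Conversely, every $S$ with $\supp(S)=O$ and $s_i=1$ on $Q$ has $n_L(S)=N$ and contributes its product term.

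Finally, Lemma~\ref{lem:tbs-bandlimit} identifies the expansion coefficient with $\tau_{2N}(x,U,R)$ through uniqueness of finite Fourier expansions, which gives Eq.~\eqref{eq:tbs-top-fourier-final}.

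The main thing to get right is the conjugation step: one must check that the top coefficient of the conjugated permanent comes from $V_R^-$ and not $V_R^+$. One must also confirm that no lower-frequency cross terms can reach $2N$; this follows because both factors are bounded in frequency by $n_L(S)$.
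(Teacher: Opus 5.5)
Your proof is correct and follows essentially the same route as the paper. Both use $n_L(S)\le N$ to keep only patterns with $n_L(S)=N$, which is equivalent to $s_i=1$ for all $i\in Q$, and both read off the $e^{i2N\theta}$ coefficient of $|\Per|^2$ as the all-$V_R^+$ permanent times the conjugate of the all-$V_R^-$ permanent; the only difference, computing the top coefficient for general $S$ before applying the frequency filter, is immaterial.
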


The use of the top Fourier coefficient $\tau_{2N}(x,U,R)$ is crucial: lower-frequency coefficients also receive contributions from occupation patterns satisfying $n_L(S)<N$, for which the modes in $Q$ need not all be singly occupied.
Hence, the primary TBS-specific consequence of the top coefficient is the single-occupation condition $s_i=1$ for every $i\in Q$.
In the encoding below, we choose a subset $H\subseteq Q$ to carry the hard matrix $X_0$; this condition ensures that every row indexed by $H$ appears exactly once in each contributing permanent.


\begin{proof}[Proof of Lemma~\ref{lem:tbs-rowfilter}]
As in the proof of Lemma~\ref{lem:tbs-bandlimit}, write $U_R(\theta):=V_R(\theta)U$ and let $n_L(S):=\sum_{i\in L}s_i$.
By Eq.~\eqref{eq:tbs-nL-upper}, every $S$ contributing to the threshold probability satisfies $n_L(S)\le N$.
We now extract the coefficient of $e^{i2N\theta}$, which filters out all terms except those with $n_L(S)=N$.
Since $S$ has total photon number $n$ and every mode in $Q$ is occupied at least once, the equality $n_L(S)=N=n-|Q|$ is equivalent to
\begin{align}
    s_i=1
    \qquad
    \forall i\in Q  ,
\label{eq:tbs-Q-forced-one}
\end{align}
and all terms violating Eq.~\eqref{eq:tbs-Q-forced-one} vanish in the top coefficient.

For $S \in \mathcal{S}_{m,n}$ satisfying Eq.~\eqref{eq:tbs-Q-forced-one}, the largest exponent $e^{iN\theta}$ in $\Per(U_R(\theta)_{\nu(S),[n]})$ is obtained by choosing the $e^{i\theta}$ contribution from every row indexed by $L$ in Eq.~\eqref{eq: row of rotated circuit TBS}, while retaining the unchanged rows indexed by $Q$.
Therefore, the coefficient of $e^{iN\theta}$ is $\Per\left(\left((\Pi_Q+V_R^+)U\right)_{\nu(S),[n]}\right)$.
Similarly, the coefficient of the smallest exponent $e^{-iN\theta}$ is $\Per\left(\left((\Pi_Q+V_R^-)U\right)_{\nu(S),[n]}\right)$.
Therefore, the coefficient of $e^{i2N\theta}$ in the squared magnitude is the product of the former coefficient and the complex conjugate of the latter.
Integrating against $e^{-i2N\theta}$, as in Eq.~\eqref{eq:tbs-top-fourier-def}, yields Eq.~\eqref{eq:tbs-top-fourier-final}.
\end{proof}

Since the expression in Eq.~\eqref{eq:tbs-top-fourier-final} depends only on the rows indexed by $W=Q\sqcup L\sqcup E$, or equivalently, only on the $n\times n$ submatrix $U_{W,[n]}$, we next extend the definition of the top Fourier coefficient from $n\times n$ unitary submatrices to arbitrary $n\times n$ matrix inputs.
This extension permits its evaluation on the encoded block used in the worst-case construction.
Concretely, for an arbitrary matrix $A\in\mathbb{C}^{n\times n}$, let $\widetilde A\in\mathbb{C}^{m\times n}$ be any completion satisfying $\widetilde A_{W,[n]}=A$.
Since $\Pi_Q+V_R^\pm$ is supported only on the modes indexed by $W$, whenever $A=U_{W,[n]}$ we have
\begin{align}
\left((\Pi_Q+V_R^\pm)U\right)_{[m],[n]}
=
(\Pi_Q+V_R^\pm)\widetilde A.
\label{eq:tbs-block-support-identity}
\end{align}
This identity also shows that the following quantity is independent of the chosen completion:
\begin{align}
\tau_{2N}^{\mathrm{blk}}(x,A,R)
&:=
\!\!\!\!\!\!\sum_{\substack{S\in\mathcal{S}_{m,n}\\ \supp(S)=O\\ s_i=1\;\forall i\in Q}}
\!\!\!\!\frac{1}{\prod_i s_i!}
\Per\left(\left((\Pi_Q+V_R^+)\widetilde A\right)_{\nu(S),[n]}\right)
\nonumber \\
&\qquad\qquad\quad \times \Per\left(\left((\Pi_Q+V_R^-)\widetilde A\right)_{\nu(S),[n]}\right)^* .
\label{eq:tbs-block-top-coefficient}
\end{align}
Accordingly, whenever $A=U_{W,[n]}$ for a unitary $U$, Lemma~\ref{lem:tbs-rowfilter} gives
\begin{align}
    \tau_{2N}^{\mathrm{blk}}(x,U_{W,[n]},R)=\tau_{2N}(x,U,R).
\label{eq:tbs-block-equals-fourier}
\end{align}

\subsection{Encoding a single permanent into the threshold Fourier coefficient}\label{subsec: threshold encoding single permanent}

We now encode the worst-case quantity $\Per(X_0)^2$ into the block-extended coefficient in Eq.~\eqref{eq:tbs-block-top-coefficient}.
The single-occupation condition in Lemma~\ref{lem:tbs-rowfilter} is the key threshold-specific property used in the encoding.
We choose a subset $H\subseteq Q$ to carry the hard matrix $X_0$; since every mode in $Q$ is singly occupied in each term contributing to the top Fourier coefficient, every row indexed by $H$ appears exactly once in each contributing permanent.
This enables the block-diagonal permanent collapse used below.
The block extension then allows the corresponding selected $n\times n$ block to be replaced by a worst-case matrix $A_\star$ while keeping the coefficient well-defined.
The hard instance is a binary matrix $X_0\in\{0,1\}^{k_0\times k_0}$ with $k_0=o(n)$, whose squared permanent is \#P-hard to compute exactly in the worst case~\cite{valiant1979complexity}.

\begin{lemma}[Single-permanent encoding for TBS]\label{lem:tbs-block-permanent-encoding}
Let $x\in\mathcal{T}_{m,n}$ with $k=|x|>n/2$, let $O = \supp(x)$, and let $Q,L,E,R$, and $W$ denote the mode-index sets defined in Eqs.~\eqref{eq: def of O Q L}--\eqref{eq: def of W}.
Let $\tau^{\mathrm{blk}}_{2N}$ be the block extension of the top Fourier coefficient defined in Eq.~\eqref{eq:tbs-block-top-coefficient}.
Given a matrix $X_0\in\{0,1\}^{k_0\times k_0}$ with $k_0 < |Q| = 2k-n$, there exists an $n\times n$ matrix $A_\star\in\mathbb{C}^{n\times n}$ that can be constructed in polynomial time and satisfies
\begin{align}
\tau_{2N}^{\mathrm{blk}}(x,A_\star,R)
=
\kappa^{2n}\left((n-k_0)!\right)^2G_{n,k}\Per(X_0)^2,
\label{eq:tbs-encoding-main}
\end{align}
where $\kappa>0$ is a free normalization parameter to be chosen later, and the factor
\begin{align}
G_{n,k}
:=
\sum_{\substack{r_1,\ldots,r_{n-k}\ge 1\\ r_1+\cdots+r_{n-k}=2(n-k)}}
\prod_{j=1}^{n-k}\frac{1}{r_j!} ,
\label{eq:tbs-Gs-def}
\end{align}
which can be computed in polynomial time.
\end{lemma}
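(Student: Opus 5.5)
We construct $A_\star$ explicitly, row by row on the index set $W=Q\sqcup L\sqcup E$. We then evaluate every permanent appearing in the block coefficient~\eqref{eq:tbs-block-top-coefficient} by a block-diagonal collapse. The key observation is that only three kinds of rows enter those permanents:
\begin{itemize}
\item the rows of $\widetilde A$ indexed by $Q$, unchanged, because $\Pi_Q$ acts as the identity there;
\item the rows $\bigl(V_R^\pm\widetilde A\bigr)_{l_j}$ for $l_j\in L$;
\item nothing from $E$, because no contributing $S$ occupies $E$.
\end{itemize}
Since $B^\pm$ has first row $\tfrac12(1,\mp i)$, we have $\bigl(V_R^\pm \widetilde A\bigr)_{l_j}=\tfrac12\bigl(A_{l_j}\mp i A_{e_j}\bigr)$. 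So if we set the $E$-rows of $A_\star$ to zero, both rotated rows collapse to $\tfrac12 A_{l_j}$. Moreover, the $+$ and $-$ permanents in~\eqref{eq:tbs-block-top-coefficient} then coincide.

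Order the input columns as the first $k_0$ columns followed by the last $n-k_0$, and fix $H\subseteq Q$ with $|H|=k_0$; this is possible since $k_0<|Q|$. Define $A_\star$ by the following rows:
\begin{itemize}
\item for $h\in H$, the row is $\kappa\,(X_0)_{h}\oplus 0_{n-k_0}$, where $(X_0)_h$ denotes the row of $X_0$ assigned to $h$;
\item for $q\in Q\setminus H$, the row is $\kappa\,(0_{k_0},\mathbf 1_{n-k_0})$;
\item for $l\in L$, the row is $2\kappa\,(0_{k_0},\mathbf 1_{n-k_0})$;
\item for $e\in E$, the row is $0$.
\end{itemize}
This construction is clearly polynomial-time.

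Now take any $S$ in the sum of~\eqref{eq:tbs-block-top-coefficient}. Such an $S$ has $s_i=1$ for $i\in Q$, and occupations $r_j:=s_{l_j}\ge1$ on $L$ with $\sum_j r_j=n-|Q|=2(n-k)$. The matrix $\bigl((\Pi_Q+V_R^\pm)\widetilde A_\star\bigr)_{\nu(S),[n]}$ then consists of two groups of rows:
\begin{itemize}
\item the $k_0$ rows $\kappa\,(X_0)_h\oplus0$, each appearing exactly once; this is where single occupancy is used;
\item $(|Q|-k_0)+\sum_j r_j=n-k_0$ rows all equal to $\kappa(0,\mathbf 1)$; here the factor $2$ placed on the $L$-rows cancels the $\tfrac12$ from $B^\pm$.
\end{itemize}
This matrix is therefore a row permutation of $\kappa\bigl(X_0\oplus \mathbf 1_{(n-k_0)\times(n-k_0)}\bigr)$. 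Its permanent is $\kappa^n\Per(X_0)(n-k_0)!$, using that the permanent is invariant under row permutations and multiplicative over block-diagonal matrices.

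Both permanents in the product are equal and real. The weight of each term is $1/\prod_i s_i!=\prod_j 1/r_j!$. Summing over all admissible $(r_1,\dots,r_{n-k})$, which are exactly the compositions in~\eqref{eq:tbs-Gs-def}, gives~\eqref{eq:tbs-encoding-main}.

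It remains to show that $G_{n,k}$ is polynomial-time computable. We use dynamic programming: let $g(j,t)$ be the sum of $\prod 1/r!$ over compositions of $t$ into $j$ positive parts. It satisfies $g(j,t)=\sum_{r\ge1} g(j-1,t-r)/r!$, with exact rational arithmetic. This needs $O(n^3)$ operations on numbers of polynomial bit length. Equivalently, $G_{n,k}$ is the coefficient $[z^{2(n-k)}](e^z-1)^{n-k}$, which also admits a closed form via inclusion--exclusion.

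The main point to get right is the exact bookkeeping of the rotated $L$-rows. One must check three things:
\begin{itemize}
\item the top coefficient picks the $e^{i\theta}$ component $V_R^+$ on every $L$-row occurrence in the first factor and $V_R^-$ in the conjugated factor, which is already given by Lemma~\ref{lem:tbs-rowfilter};
\item zeroing the $E$-rows makes these components independent of the sign and the phase;
\item no contributing $S$ places photons on $E$ or outside $O$, so no other rows ever enter.
\end{itemize}
Once these are verified, the collapse of every surviving permanent to the same value is immediate.
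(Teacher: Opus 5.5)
Your proposal is correct and follows the paper's proof essentially step for step. You use the same $\widetilde A_\star$ ($\kappa X_0$ on $H$, $\kappa\mathbf{1}$ on $Q\setminus H$, $2\kappa\mathbf{1}$ on $L$, zero rows on $E$), the same observation that $E$-rows never enter while the factor $2$ cancels the $\tfrac12$ from $B^\pm$, and the same block-diagonal collapse to $\kappa^n\Per(X_0)(n-k_0)!$ weighted by $\prod_j 1/r_j!$. The only cosmetic difference is that you compute $G_{n,k}$ by dynamic programming, whereas the paper uses the closed form $[z^{2(n-k)}](e^z-1)^{n-k}$ evaluated by inclusion--exclusion, which you also mention.
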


\begin{figure*}[t]
\includegraphics[width=0.85\linewidth]{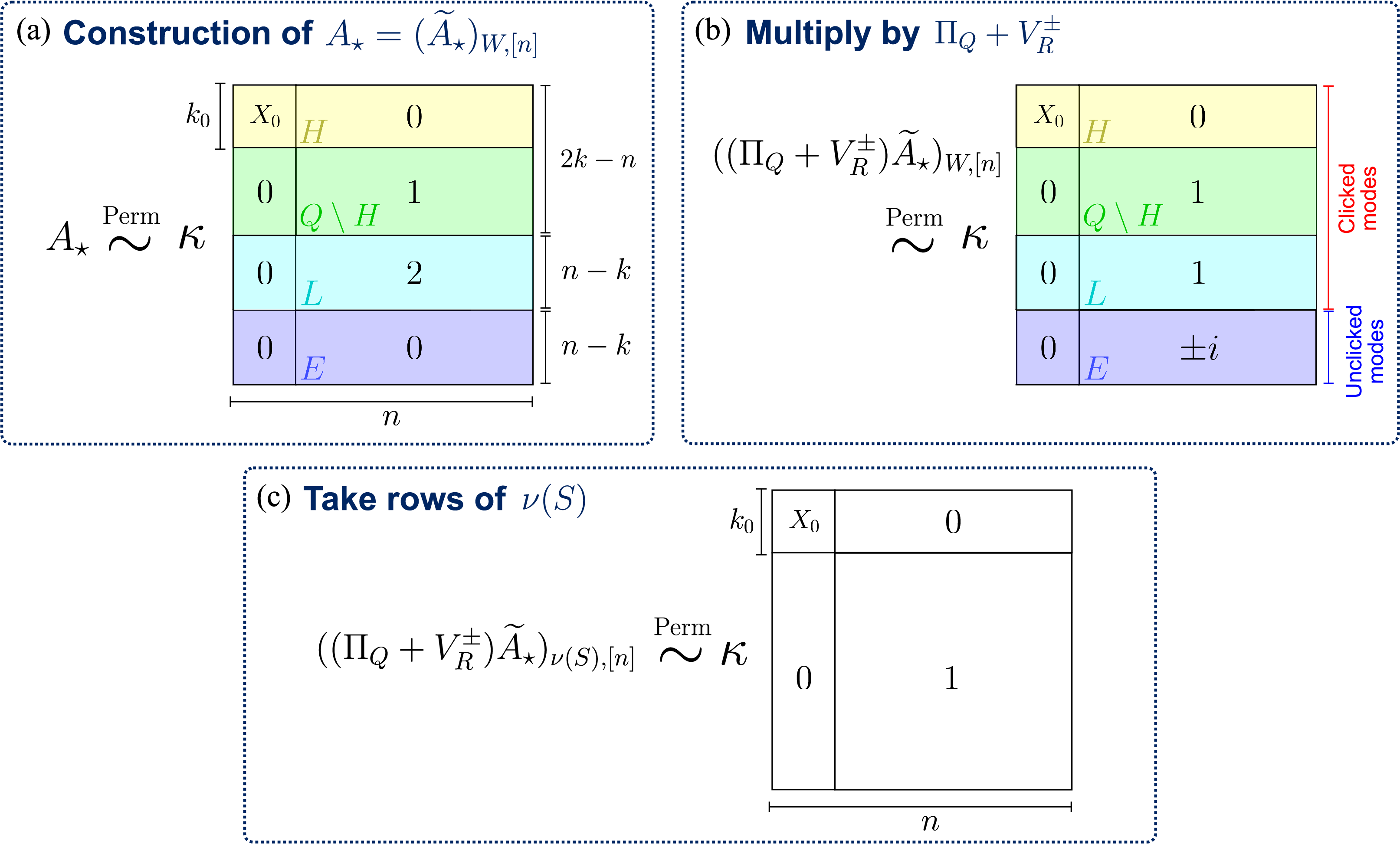}
\caption{
Illustration of the matrix constructions used in the proof of Lemma~\ref{lem:tbs-block-permanent-encoding}.
(a) Schematic of the matrix $A_\star=(\widetilde A_\star)_{W,[n]}$ in Eq.~\eqref{eq:tbs-Astar-tilde},
constructed from a worst-case matrix $X_0\in\{0,1\}^{k_0\times k_0}$ with $k_0<|Q|=2k-n$
by adjoining all-zero blocks, an all-ones block, and an all-twos block, together with a normalization factor $\kappa>0$.
The notation $\overset{\mathrm{Perm}}{\sim}$ denotes equivalence up to row permutations.
The row indices corresponding to $H$, $Q\setminus H$, $L$, and $E$ are indicated by different colors.
(b) Schematic of the matrices $((\Pi_Q+V_R^\pm)\widetilde A_\star )_{W,[n]}$ in Eq.~\eqref{eq:tbs-Astar-transformed}.
(c) Schematic of the matrix $((\Pi_Q+V_R^\pm)\widetilde A_\star )_{\nu(S),[n]}$ in Eq.~\eqref{eq:tbs-Astar-permanent-collapse}, obtained by imposing the constraint $\supp(S) = \supp(x)$ and $s_i=1\;\forall i\in Q$ to (b).
}
\label{fig: encoding TBS}
\end{figure*}

\begin{proof}

We now construct the matrix $A_\star$; an illustration of the construction is provided in Fig.~\ref{fig: encoding TBS}.

We choose an arbitrary subset $H\subseteq Q$ of size $|H|=k_0$ to encode the hard instance $X_0$ (see Fig.~\ref{fig: encoding TBS}(a)).
We define the $m\times n$ completion $\widetilde A_\star$ by specifying the following row blocks, setting all unspecified entries to zero:
\begin{align}\label{eq:tbs-Astar-tilde}
\begin{split}
    &(\widetilde A_\star)_{H,\{1,\ldots,k_0\}}=\kappa X_0,
    \qquad \\
    &(\widetilde A_\star)_{Q\setminus H,\{k_0+1,\ldots,n\}}=\kappa 1_{(2k-n-k_0)\times(n-k_0)},
    \qquad \\
    &(\widetilde A_\star)_{L,\{k_0+1,\ldots,n\}}=2\kappa 1_{(n-k)\times(n-k_0)} , 
\end{split}
\end{align}
where $1_{a \times b}$ denotes the $a \times b$ all-one matrix. 
This construction specifies $A_\star=(\widetilde A_\star)_{W,[n]}$.

We next apply the matrix $\Pi_Q+V_R^\pm$ to $\widetilde A_\star$.
The rows indexed by $Q$ remain unchanged because of the projector $\Pi_Q$, whereas the rows indexed by $R=L\sqcup E$ are transformed according to $V_R^\pm$ in Eq.~\eqref{eq:Vpm-full}.
Thus, we obtain
\begin{align}\label{eq:tbs-Astar-transformed}
\begin{split}
&\left((\Pi_Q+V_R^\pm)\widetilde A_\star\right)_{H,\{1,\ldots,k_0\}}
=\kappa X_0,
\quad \\
&\left((\Pi_Q+V_R^\pm)\widetilde A_\star\right)_{Q\setminus H,\{k_0+1,\ldots,n\}}
=\kappa 1_{(2k-n-k_0)\times(n-k_0)},
\quad \\
&\left((\Pi_Q+V_R^\pm)\widetilde A_\star\right)_{L,\{k_0+1,\ldots,n\}}
=\kappa 1_{(n-k)\times(n-k_0)}
\quad \\
&\left((\Pi_Q+V_R^\pm)\widetilde A_\star\right)_{E,\{k_0+1,\ldots,n\}}
= \pm i \kappa 1_{(n-k)\times(n-k_0)}  ,
\end{split}
\end{align}
where all unspecified entries are set to zero.

Now consider any $S$ contributing to Eq.~\eqref{eq:tbs-block-top-coefficient}; such an $S$ satisfies $\supp(S)=O$ and $s_i=1$ for all $i\in Q$.
For the modes $l_j \in L$, write
\begin{align}\label{eq: def of r_j}
    s_{l_j}=r_j, \qquad j=1,\ldots,n-k.
\end{align}
Because the clicked modes in $\supp(S)=O=Q\sqcup L$ have at least one photon, we have $r_j\ge 1$ for every $j$.
Moreover, because the total number of photons outside $Q$ is $N= 2n - 2k$, the occupations satisfy
\begin{align}
    r_1+\cdots+r_{n-k}=2n-2k.
\label{eq:tbs-rj-constraint}
\end{align}
For such an $S$, the rows indexed by $H$ appear exactly once and are supported only on the columns indexed by $\{1,\ldots,k_0\}$.
All other rows appearing in the permanent, namely rows from $Q\setminus H$ and repeated rows from $L$, are supported only on the columns indexed by $\{k_0+1,\ldots,n\}$ (note that $E$ rows do not contribute to the permanent).
The number of these rows is
\begin{align}
    |Q\setminus H|+\sum_{j=1}^{n-k}r_j
    =(2k - n -k_0)+2n-2k
    =n-k_0.
\end{align}
Hence, up to row permutations, the relevant matrix is the block diagonal $\kappa (X_0 \oplus 1_{(n-k_0) \times (n-k_0)} )$, and thus
\begin{align}
\Per\left(\left((\Pi_Q+V_R^\pm)\widetilde A_\star\right)_{\nu(S),[n]}\right)
=
\kappa^n\Per(X_0)(n-k_0)!   .
\label{eq:tbs-Astar-permanent-collapse}
\end{align}

We now evaluate the photon-number factorial factor in Eq.~\eqref{eq:tbs-block-top-coefficient}.
Since all modes in $Q$ contain exactly one photon, they contribute a factor of $1$.
Hence, the factorial factor depends only on the occupation numbers $r_j$ in Eq.~\eqref{eq: def of r_j}, and is given by
\begin{align}
    \prod_i s_i! = \prod_{j=1}^{n-k}r_j!.
\end{align}
Summing over all $r_1,\ldots,r_{n-k}$ satisfying Eq.~\eqref{eq:tbs-rj-constraint} gives exactly the factor $G_{n,k}$ in Eq.~\eqref{eq:tbs-Gs-def}, yielding Eq.~\eqref{eq:tbs-encoding-main}.

Lastly, one can rewrite $G_{n,k}$ by
\begin{align}
G_{n,k}
&=
[z^{2(n-k)}](e^z-1)^{n-k} \\
&=
\frac{1}{(2(n-k))!}
\sum_{j=0}^{n-k}
(-1)^{n-k-j}
\binom{n-k}{j}
j^{2(n-k)},
\end{align}
which can be evaluated exactly in polynomial time.
\end{proof}

The next lemma shows that the multiplicative factor $G_{n,k}$ in Lemma~\ref{lem:tbs-block-permanent-encoding} is lower bounded by $\exp(-O(n))$.

\begin{lemma}\label{lem:tbs-Gs-lower-bound}
The multiplicative factor $G_{n,k}$ defined in Eq.~\eqref{eq:tbs-Gs-def} of Lemma~\ref{lem:tbs-block-permanent-encoding} is lower bounded by
\begin{align}
    G_{n,k}\ge 2^{-n+k}=e^{-O(n)}.
\end{align}
\end{lemma}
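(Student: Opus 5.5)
Since every summand in the definition of $G_{n,k}$ in Eq.~\eqref{eq:tbs-Gs-def} is nonnegative, it suffices to exhibit a single admissible composition whose contribution already meets the bound. Write $s:=n-k\geq 0$. The constraint is $r_1,\ldots,r_s\ge 1$ with $r_1+\cdots+r_s=2s$. The natural choice is the uniform composition $r_j=2$ for every $j$. It satisfies both conditions: each $r_j\ge1$, and the sum is $2s$. Its contribution is $\prod_{j=1}^{s}1/2!=2^{-s}=2^{-n+k}$, and dropping all other (nonnegative) terms gives $G_{n,k}\ge 2^{-n+k}$.

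The remaining point is the asymptotic form. Since $k\ge 1$ (indeed $k>n/2$ in our setting), we have $n-k\le n$. Hence $2^{-n+k}\ge 2^{-n}=e^{-n\log 2}=e^{-O(n)}$.

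The degenerate case $s=0$ is consistent. The sum then consists of the single empty composition, giving $G_{n,k}=1=2^{0}$.

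There is no real obstacle here; the only thing to check is that the chosen composition is admissible. One could obtain a sharper constant from the closed form $[z^{2s}](e^z-1)^s$ via a saddle-point estimate, but this is unnecessary, since only an $e^{-O(n)}$ lower bound is needed to control the division by $G_{n,k}$ in the final error analysis.
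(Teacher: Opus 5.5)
Your proof is correct and is essentially the paper's own argument: both restrict the nonnegative sum defining $G_{n,k}$ to the single admissible composition $r_1=\cdots=r_{n-k}=2$, which contributes $2^{-(n-k)}$. Your extra remarks on the degenerate case $n=k$ and on $2^{-n+k}\ge 2^{-n}=e^{-O(n)}$ are valid additions.
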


\begin{proof}
Since all summands in $G_{n,k}$ are nonnegative, we may lower bound $G_{n,k}$ by restricting the summation to the single occupation pattern
\begin{align}
    r_1=r_2=\cdots=r_{n-k}=2.
\end{align}
This pattern satisfies $r_1+\cdots+r_{n-k}= 2n - 2k$, and contributes
\begin{align}
    \prod_{j=1}^{n-k}\frac{1}{2!}=2^{-n+k}.
\end{align}
Therefore, $G_{n,k}\ge 2^{-n+k}$.
\end{proof}

\subsection{Shift-and-scale stability and conditional unitary completion}\label{subsec: threshold translation inv}

In the proof of Theorem~\ref{thm: threshold average case main}, we interpolate between the hard block $A_\star$ constructed above and a Haar-random unitary submatrix.
Let $A:=U_{W,[n]}$ for $U\sim\Haar(m)$, and set $A(t):=(1-t)A+tA_\star$.
We denote by $\mu$ the distribution of $A$, and by $\mu_t$ the distribution of $A(t)$.
The following lemma establishes the required shift-and-scale stability bound for square submatrices of Haar-random unitaries.


\begin{lemma}[Shift-and-scale stability for TBS]\label{lem:tbs-TV}
Let $\mu$ be the distribution of $A=U_{W,[n]}$ for $U\sim\Haar(m)$ and $|W| = n$, and let $\mu_t$ be the distribution of $A(t)=(1-t)A+tA_\star$ for $A_\star$ constructed in Lemma~\ref{lem:tbs-block-permanent-encoding}.
Suppose that $m-2n=\Omega(n)$.
There exist constants $c_0,c_1,c_2>0$ such that, whenever
\begin{align}
    0\le t\le \frac{c_0}{mn(1+\kappa\sqrt n)},
\label{eq:tbs-t-bound}
\end{align}
the total variation distance between the distributions $\mu$ and $\mu_t$ is bounded by
\begin{align}
    \|\mu_t-\mu\|_{\mathrm{TVD}}
    \le
    c_1mn(1+\kappa\sqrt n)t+e^{-c_2(m-2n)}.
\label{eq:tbs-TV-bound}
\end{align}
\end{lemma}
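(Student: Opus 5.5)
We will prove Lemma~\ref{lem:tbs-TV} by writing down the density of the $n\times n$ Haar block and bounding the change of measure directly. For $m\ge 2n$, the distribution $\mu$ of $A=U_{W,[n]}$ has, with respect to Lebesgue measure on $\mathbb{C}^{n\times n}$, the density
\begin{align}
    f(A)=C_{m,n}\det(I-A^\dagger A)^{m-2n}\,\mathbb{1}[\|A\|_{\rm op}<1].
\end{align}
This is the rectangular-block density derived in the appendix, specialized to a square block. Write $A(t)=(1-t)A+tA_\star$, so that $A=(A(t)-tA_\star)/(1-t)$. Then $\mu_t$ is the pushforward of $\mu$ under an affine map with Jacobian $(1-t)^{-2n^2}$, and its density is $f_t(B)=(1-t)^{-2n^2}f\big((B-tA_\star)/(1-t)\big)$. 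We bound
\begin{align}
    \|\mu_t-\mu\|_{\rm TVD}=\int\big(f-f_t\big)_+ .
\end{align}

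First we remove the boundary. With $\beta:=m-2n$, we restrict to the good event $\mathcal{G}=\{\|A\|_{\rm op}\le 1-\gamma\}$, where $\gamma=\Theta(1)$ is a small constant. Using the density, or known bounds on the top singular value of truncated Haar unitaries, we show $\Pr_\mu[\mathcal{G}^c]\le e^{-c_2\beta}$. The reason is that the factor $\det(I-A^\dagger A)^{\beta}$ exponentially suppresses singular values near $1$ once $\beta=\Omega(n)$. This tail term accounts for the additive $e^{-c_2(m-2n)}$ in Eq.~\eqref{eq:tbs-TV-bound}.

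On $\mathcal{G}$ we compare $f_t(B)$ with $f(B)$ through the log-ratio
\begin{align}
    -2n^2\log(1-t)+\beta\big[\log\det(I-A'^\dagger A')-\log\det(I-B^\dagger B)\big],\qquad A'=\frac{B-tA_\star}{1-t}.
\end{align}
The first term is $O(n^2t)$. For the second, note that $\|A_\star\|_{\rm op}\le\|A_\star\|_F=O(\kappa n)$; with the block sizes of Lemma~\ref{lem:tbs-block-permanent-encoding}, a finer estimate should give $O(\kappa\sqrt n)$ times a dimension factor. It follows that $\|A'-B\|_{\rm op}=O\big(t(1+\kappa\sqrt n)\big)$ up to polynomial factors. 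Since $\|B\|\le 1-\gamma/2$, the map $\log\det(I-X^\dagger X)$ is Lipschitz with constant $O(n/\gamma)$ in operator norm. The second term is therefore $O\big(\beta n(1+\kappa\sqrt n)t\big)=O\big(mn(1+\kappa\sqrt n)t\big)$. Under the constraint~\eqref{eq:tbs-t-bound} this total is at most a small constant, so $|1-f_t/f|\le c_1 mn(1+\kappa\sqrt n)t$ on $\mathcal{G}$. Integrating gives the linear term in Eq.~\eqref{eq:tbs-TV-bound}. The support mismatch, meaning points where $f_t>0$ but $\|B\|$ is near $1$, is absorbed into the boundary term via the same tail estimate applied to $\mu_t$.

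The main obstacle is the Lipschitz and operator-norm bookkeeping. We need the perturbation $A'-B$ to be small in a norm matched to the $O(n)$-term sum $\sum_i\log(1-\sigma_i^2)$, and we must keep the dependence on $\kappa$ at $1+\kappa\sqrt n$ rather than $1+\kappa n$. The first thing we would try is the general shift-and-scale result from the appendix, applied with $\|A_\star\|$ estimated directly. If that result is already stated in the form $c\,m\,n\,(1+\|A_\star\|)t$ plus a boundary term, the lemma reduces to bounding $\|A_\star\|_{\rm op}\lesssim \kappa\sqrt n$ for the explicit block construction, and to checking that $m-2n=\Omega(n)$ supplies the exponential tail.
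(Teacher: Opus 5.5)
Your overall structure matches the paper's proof: the density $\propto\det(I-A^\dagger A)^{m-2n}$, the Jacobian factor $(1-t)^{-2n^2}$, a log-likelihood bound on an interior event $\{\|A\|_{\mathrm{op}}\le 1-\gamma\}$, and exponential tails for that event under $\mu$ and $\mu_t$. However, the step you flag as the main obstacle does not go through as you propose, and the reduction you suggest for it relies on a false estimate.

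An operator-norm Lipschitz constant $O(n/\gamma)$ for $\log\det(I-X^\dagger X)$, together with $\|A'-B\|_{\mathrm{op}}\le 2t(1+\|A_\star\|_{\mathrm{op}})$, gives a log-ratio of order $(m-2n)\,n\,(1+\|A_\star\|_{\mathrm{op}})\,t$. You would therefore need $\|A_\star\|_{\mathrm{op}}\lesssim\kappa\sqrt n$, and this is false. The rows of $A_\star$ indexed by $Q\setminus H$ and $L$ form the rank-one block $u\mathbf{1}^{\top}$ on the last $n-k_0$ columns, where $u$ has $k-k_0=\Theta(n)$ entries equal to $\kappa$ or $2\kappa$. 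Hence $\|A_\star\|_{\mathrm{op}}\ge\|u\|\,\|\mathbf{1}\|=\Theta(\kappa n)$, and no finer bookkeeping of $A_\star$ can improve this.

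Your route therefore only gives a bound of order $mn(1+\kappa n)t$. Worse, on the stated range $t\le c_0/(mn(1+\kappa\sqrt n))$ your log-ratio bound is $O\big((1+\kappa n)/(1+\kappa\sqrt n)\big)$. For the later choice $\kappa=n^{-1/2}$ this grows like $\sqrt n$, so the linearization $|e^x-1|\le 2|x|$ is not even available.

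The missing idea is that the $\sqrt n$ saving comes from how the derivative is paired, not from $A_\star$ being small. Set $W_s=(B-sA_\star)/(1-s)$ and $M_s=I-W_s^\dagger W_s$, which is bounded below by a constant multiple of $I$ on the interior event. Then
$\frac{d}{ds}\log\det M_s=-\mathrm{Tr}\big(M_s^{-1}(\dot W_s^\dagger W_s+W_s^\dagger\dot W_s)\big)$.
Bound each trace term by Cauchy--Schwarz in the Frobenius norm, $\|M_s^{-1}\|_{\mathrm{op}}\|\dot W_s\|_{\mathrm F}\|W_s\|_{\mathrm F}$, using $\|W_s\|_{\mathrm F}\le\sqrt n$ and $\|\dot W_s\|_{\mathrm F}=\|B-A_\star\|_{\mathrm F}/(1-s)^2\le 4(\sqrt n+\|A_\star\|_{\mathrm F})$. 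Since $\|A_\star\|_{\mathrm F}=O(\kappa n)$, the log-ratio is $O\big(n^2t+(m-2n)\sqrt n(\sqrt n+\kappa n)t\big)=O\big(mn(1+\kappa\sqrt n)t\big)$, as claimed.

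This is exactly the paper's route. Its general rectangular lemma carries the factor $pq+(m-p-q)\sqrt q(\sqrt q+\|H\|_{\mathrm F})$, not $mn(1+\|H\|)$ as you guessed. The TBS lemma follows by substituting $p=q=n$, $H=A_\star$ and $\|A_\star\|_{\mathrm F}\le C\kappa n$. The interior condition $t\le\zeta/(4(1+\|A_\star\|_{\mathrm{op}}))$ is then implied by the stated $t$-range, because $\|A_\star\|_{\mathrm{op}}\le\|A_\star\|_{\mathrm F}=O(\kappa n)$.

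The Frobenius pairing wins because $A_\star$ is dominated by a rank-one block, so $\sqrt n\,\|A_\star\|_{\mathrm F}\approx\sqrt n\,\|A_\star\|_{\mathrm{op}}$. That is a factor $\sqrt n$ smaller than the $n\|A_\star\|_{\mathrm{op}}$ produced by your bound $|\mathrm{Tr}\,Y|\le n\|Y\|_{\mathrm{op}}$.
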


\begin{proof}

See Appendix~\ref{appendix: section: translation invariance}.

\end{proof}

The block-level perturbation must also be lifted to the full unitary group, because an instance of the ATE problem in Definition~\ref{def: thresholdBS} is specified by a complete circuit unitary.
To this end, we introduce a conditional unitary-completion procedure that extends the perturbed submatrix to a full unitary while preserving the corresponding total-variation-distance guarantee.

\begin{lemma}[Conditional unitary completion for TBS]\label{lem:tbs-completion}
Let $\mu$ be the distribution of the block $A=U_{W,[n]}$ under $U\sim\Haar(m)$, and let $\mu_t$ be the distribution of the block $A(t) = (1-t)A + tA_\star$.
Suppose that $m-2n=\Omega(n)$ and $t$ satisfy Eq.~\eqref{eq:tbs-t-bound}.
There exists a randomized polynomial-time procedure $\mathcal{C}$ that, with probability at least $1-\exp(-\Omega(m-2n))$, completes the input matrix to a unitary containing it as the prescribed submatrix.
Moreover, the corresponding pushforward measure $\mathcal H_t:=\mathcal C_{\#}\mu_t$ satisfies
\begin{align}
    \|\mathcal H_t-\Haar(m)\|_{\mathrm{TVD}}
    \le
    \|\mu_t-\mu\|_{\mathrm{TVD}}+e^{-\Theta(m-2n)}.
\label{eq:tbs-completion-bound}
\end{align}
\end{lemma}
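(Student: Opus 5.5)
The plan is to build $\mathcal C$ from the conditional structure of Haar measure. Write $U\sim\Haar(m)$ with rows permuted so that $W$ comes first, and look only at the first $n$ columns $U_{[m],[n]}$. This is an $m\times n$ isometry, uniformly distributed on the Stiefel manifold. Given its top $n\times n$ block $A$ with $\|A\|<1$, the remaining $(m-n)\times n$ block $B$ must satisfy $B^\dagger B=I-A^\dagger A$. Conditionally on $A$, $B$ is distributed as $Y(I-A^\dagger A)^{1/2}$, where $Y$ is a Haar-random $(m-n)\times n$ isometry.

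I would check this claim by a polar-decomposition and uniqueness-of-invariant-measure argument: the conditional law of $B$ given $A$ is invariant under left multiplication by $\mathrm U(m-n)$, and $B$ is determined by $Y$ up to that action.

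The remaining $m-n$ columns are then completed by choosing a Haar-random orthonormal basis of the orthogonal complement of the column span. Concretely, Gram–Schmidt is applied to Gaussian vectors projected onto the complement, and the result is multiplied by an independent Haar unitary on $\mathrm U(m-n)$. Right-multiplication invariance of Haar measure shows that this reproduces the conditional law of the full $U$ given $U_{[m],[n]}$. Finally the row permutation is undone. Every step is a polynomial-time randomized computation: a matrix square root, Gaussian sampling and QR.

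The procedure $\mathcal C$ is defined to output this completion whenever $\|M\|_{\mathrm{op}}\le 1$ for the input block $M$. Otherwise it fails and outputs a fixed unitary, say a Haar sample, since completion is then impossible.

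By construction, $\mathcal C_\#\mu=\Haar(m)$ exactly. The only boundary issue is the event $\|A\|=1$, which has $\mu$-measure zero when $m\ge 2n$. For the perturbed input, note that a fixed Markov kernel cannot increase total variation distance, so
\[\|\mathcal H_t-\Haar(m)\|_{\mathrm{TVD}}=\|\mathcal C_\#\mu_t-\mathcal C_\#\mu\|_{\mathrm{TVD}}\le\|\mu_t-\mu\|_{\mathrm{TVD}},\]
and the failure branch is accounted for inside this bound because it is part of the same kernel.

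Two quantitative facts remain.

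\textbf{Failure probability.} I need $\Pr_{\mu_t}[\|A(t)\|>1]\le e^{-\Omega(m-2n)}$. For this I would bound $\|A(t)\|\le(1-t)\|A\|+t\|A_\star\|$. The construction in Lemma~\ref{lem:tbs-block-permanent-encoding} gives $\|A_\star\|\le O(\kappa n)$, and the admissible range of $t$ in Eq.~\eqref{eq:tbs-t-bound} makes $t\|A_\star\|=O(1/m)$. So it suffices to show $\|A\|\le 1-\Omega(1)$ except with probability $e^{-\Omega(m-2n)}$.

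\textbf{The obstacle.} I expect this last estimate, the top singular-value tail for an $n\times n$ truncation of Haar measure, to be the main difficulty. I would try first to use the Jacobi/MANOVA description: the singular values squared of $A$ follow a Jacobi ensemble whose largest eigenvalue stays bounded away from $1$ with large-deviation tail $e^{-\Theta(m-2n)}$. An alternative route is to realize $A=G(G^\dagger G+H^\dagger H)^{-1/2}$ with Gaussian $G$ ($n\times n$) and $H$ ($(m-n)\times n$), then apply Gaussian concentration for $\|G\|$ and $\sigma_{\min}(H)$. The latter is $\ge \sqrt{m-n}-\sqrt n-\epsilon\sqrt m$ with exponential probability, which yields the gap using $m-2n=\Omega(n)$.

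The same tail estimate is presumably what produces the $e^{-c_2(m-2n)}$ term in Lemma~\ref{lem:tbs-TV}, so I would reuse the rectangular-density analysis of Appendix~\ref{appendix: section: translation invariance}. Any residual mass of $\mu_t$ outside the contraction ball is charged to the additive $e^{-\Theta(m-2n)}$ term in Eq.~\eqref{eq:tbs-completion-bound}.
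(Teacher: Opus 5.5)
Your proposal is correct and takes essentially the paper's route. The shared ingredients are: the conditional completion $Y(I-A^\dagger A)^{1/2}$ with Stiefel-uniform $Y$, a Haar-random basis of the orthogonal complement, the data-processing inequality for the TVD, and the Gaussian representation $G(G^\dagger G+H^\dagger H)^{-1/2}$ with singular-value concentration for the boundary tail (Lemma~\ref{lem:boundary-tail-appD}). The one difference is that you complete on the whole contraction ball rather than only on the interior event $\mathcal E_\zeta$, which gives $\mathcal C_\#\mu=\Haar(m)$ exactly; the exponential term is then needed only for the completion-failure probability, not the TVD bound, so your closing remark about charging residual mass to it is unnecessary, since that mass already sits inside $\|\mu_t-\mu\|_{\mathrm{TVD}}$.
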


\begin{proof}

See Appendix~\ref{appendix: section: conditional-haar-completion}.

\end{proof}

\subsection{Application of the extraction procedures}\label{subsec: threshold coeff extract}

We briefly describe how the two extraction procedures established in Theorems~\ref{thm:probabilistic-robust-BW-extrapolation} and~\ref{thm:probabilistic-robust-fourier-extraction}, developed in Secs.~\ref{subsec: robust polynomial interpolation} and~\ref{subsec: coeff extract}, respectively, are combined to recover the encoded worst-case quantity from average-case estimates of threshold probabilities.
Fix a Haar-random block $A=U_{W,[n]}$ with $U\sim\Haar(m)$, and let $A_\star$ be the block constructed in Lemma~\ref{lem:tbs-block-permanent-encoding}.
For each sufficiently small $t\in[0,\Delta]$, where $\Delta$ is specified below, define $A(t)=(1-t)A+tA_\star$, and let $U(t)$ be a unitary sampled according to the completion procedure of Lemma~\ref{lem:tbs-completion} applied to $A(t)$.
From estimates of $T_x(V_R(\theta)U(t))$, we first estimate the top Fourier coefficient $\tau_{2N}(x,U(t),R)=\tau_{2N}^{\mathrm{blk}}(x,A(t),R)$ using Theorem~\ref{thm:probabilistic-robust-fourier-extraction}.
Here, the trigonometric degree is at most $2N\le 2n$, by Lemma~\ref{lem:tbs-bandlimit}.
By Eq.~\eqref{eq:tbs-block-top-coefficient}, the quantity $p(t):=\tau_{2N}^{\mathrm{blk}}(x,A(t),R)$ is a polynomial in $t$ of degree at most $2n$, and Lemma~\ref{lem:tbs-block-permanent-encoding} shows that its value at $t=1$ is proportional to $\Per(X_0)^2$.
The robust Berlekamp--Welch algorithm in Theorem~\ref{thm:probabilistic-robust-BW-extrapolation} is then applied to the resulting estimates over $t\in[0,\Delta]$ to recover $p(1)=\tau_{2N}^{\mathrm{blk}}(x,A_\star,R)$.
Thus, the application proceeds as $A(t)\to U(t)\to T_x(V_R(\theta)U(t))\to\tau_{2N}(x,U(t),R)\to p(t)\to p(1)$, with the error and failure bounds supplied in the proof below.

\subsection{Proof of Theorem~\ref{thm: threshold average case main}}\label{subsec: proof of threshold average}

We are now ready to prove Theorem~\ref{thm: threshold average case main}, which establishes the average-case hardness of TBS.
More explicitly, we prove the following quantitative statement.

\begin{theorem}\label{thm: threshold average case quantitative}
Let $m=\alpha n$ for a fixed constant $\alpha>2$.
Then {\rm ATE} problem in Definition~\ref{def: thresholdBS} is $\#\mathrm{P}$-hard under a $\rm{BPP^{NP}}$ reduction, for additive imprecision
\begin{align}
\varepsilon \leq \exp(-3n\log n -  n\log \alpha - O(n)),
\end{align}
and failure probability $\delta<1/8$.
\end{theorem}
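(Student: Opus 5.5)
The plan is to assemble the pieces already established, following the chain $A(t)\to U(t)\to T_x(V_R(\theta)U(t))\to\tau_{2N}\to p(t)\to p(1)$ described in Sec.~\ref{subsec: threshold coeff extract}. Given a worst-case $X_0\in\{0,1\}^{k_0\times k_0}$ with $k_0=o(n)$, we sample $x\sim\mathcal D_{m,n}$ and $U\sim\Haar(m)$. By Lemma~\ref{lemma: most threshold outcomes have many clicks} together with Eq.~\eqref{eq:tbs-click-more-than-half}, with probability $1-e^{-\Theta(n)}$ we have $k=|x|>n/2$ and $|Q|=2k-n=\Theta(n)>k_0$. If this fails, we simply resample.

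With such an $x$, we fix $Q,L,E,R,W$ and build $A_\star$ as in Lemma~\ref{lem:tbs-block-permanent-encoding}. We choose $\kappa=\Theta(1/\sqrt n)$, so that $\kappa\sqrt n=O(1)$. Lemma~\ref{lem:tbs-TV} then allows $t\le\Delta:=c/(mn)$ with TVD $O(mn\Delta)+e^{-\Omega(n)}$. By Lemma~\ref{lem:tbs-completion}, the completed $U(t)$ is TVD-close to $\Haar(m)$, and left multiplication by $V_R(\theta)$ preserves Haar measure. Since $x$ is independent of $U$, each query $(x,V_R(\theta)U(t))$ is distributed close to the ATE input distribution.

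To keep the per-query failure probability $\le p_{\rm fail}<1/8$ as needed in Theorem~\ref{thm:probabilistic-robust-fourier-extraction}, we use the oracle's $\delta<1/8$ minus a margin. We pick $c$ small enough to absorb the TVD, and we condition on $x$ via Markov: for a constant fraction of typical $x$, the conditional failure is bounded by a constant. We then boost by repeating the reduction with fresh $x$, since each successful run yields the same integer $\Per(X_0)^2$.

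Theorem~\ref{thm:probabilistic-robust-fourier-extraction} with degree $d=2N\le 2n$ gives $\tau_{2N}(x,U(t),R)=p(t)$ to additive error $\epsilon_1=2\varepsilon T_x^{(\mathrm{avg})}(eC_F/2)^{4N}=\varepsilon T_x^{(\mathrm{avg})}e^{O(n)}$, failing with probability $<1/4$. These are exactly the inputs Theorem~\ref{thm:probabilistic-robust-BW-extrapolation} needs, with $d=2n$ since $p(t)$ has degree $\le 2n$ by Eq.~\eqref{eq:tbs-block-top-coefficient}. We evaluate at $100d^2$ points in $[0,\Delta]$ and get $p(1)$ within $\epsilon_1\exp(2n\log(mn)+O(n))=\epsilon_1\exp(4n\log n+2n\log\alpha+O(n))$. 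Dividing by $\kappa^{2n}((n-k_0)!)^2G_{n,k}$ contributes $\kappa^{-2n}=n^{n}e^{O(n)}$, $((n-k_0)!)^{-2}\ge e^{-2n\log n+o(n\log n)}$, and $G_{n,k}^{-1}\le 2^n$ by Lemma~\ref{lem:tbs-Gs-lower-bound}. The factorial term gains back $2n\log n$; with $k_0=o(n)$ it is $(n-k_0)!\approx n!/n^{k_0}$. I would choose $k_0\le n/\log n$ to keep the loss at $e^{O(n)}$.

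The key step is the exponent bookkeeping. We also use $T_x^{(\mathrm{avg})}\le 1$, or more sharply $T_x^{(\mathrm{avg})}\le \binom{n-1}{k-1}/\binom{m+n-1}{n}=e^{-\Theta(n)}$, which only helps. The total multiplier is then $\exp(4n\log n+n\log n-2n\log n+2n\log\alpha+O(n))=\exp(3n\log n+2n\log\alpha+O(n))$. Demanding final error $<1/2$ yields $\varepsilon\le\exp(-3n\log n-n\log\alpha-O(n))$, where I expect the $\alpha$ coefficient to come out as stated once the $mn$ versus $n^2$ factor in $\Delta$ is tracked carefully. For example, $\log\Delta^{-1}=\log(\alpha n^2)+O(1)$ with degree $d=2N$, and the $\alpha$-dependence in $T_x^{(\mathrm{avg})}$ may cancel part of it. Rounding then gives $\Per(X_0)^2$ exactly.

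I expect the main obstacle to be precisely this constant accounting, along with the failure-probability bookkeeping. The latter means checking that the combination of the oracle's $\delta$, the TVD from Lemmas~\ref{lem:tbs-TV}--\ref{lem:tbs-completion}, the averaging over $x$, and the $1/4$ and $1/3$ failures of the two extraction steps yields a constant success probability amplifiable in $\mathrm{BPP^{NP}}$. The $\mathrm{P^{NP}}$ subroutines sit inside a randomized outer loop, which is consistent with the claimed reduction class.
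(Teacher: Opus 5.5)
Your plan is the paper's proof: the same chain $A(t)\to U(t)\to T_x(V_R(\theta)U(t))\to\tau_{2N}\to p(t)\to p(1)$, the same choices $\kappa=1/\sqrt n$ and $\Delta=\Theta(1/(mn))$, and the same exponent bookkeeping. Three remarks follow.

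\emph{The $\alpha$ coefficient.} The term you were unsure about comes from Lemma~\ref{lem:typical-threshold-Haar-weight-upper}, $T_x^{(\mathrm{avg})}\le\exp(-n\log\alpha+O(n))$. This cancels one of the two $n\log\alpha$ terms produced by $\log\Delta^{-1}=\log(\alpha n^2)+O(1)$. For fixed $\alpha$ this is cosmetic anyway, since $n\log\alpha=O(n)$, so your $2n\log\alpha$ already suffices.

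\emph{The failure-probability step.} Drop the Markov conditioning on $x$; as written it is the one step that would not work. To feed Theorem~\ref{thm:probabilistic-robust-fourier-extraction} for a fixed $x$, you would need that $x$'s conditional per-query failure to be below $1/8$. With $\delta$ close to $1/8$, Markov only guarantees a small fraction of such $x$. The per-run success probability is then a small constant, and majority voting cannot amplify it, since wrong runs cannot be recognized.

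The conditioning is also unnecessary. Both extraction theorems only need per-point failure bounds over all the randomness. They apply Markov to the number of bad points, so correlations through the shared $x$ and $U$ are harmless. Your first move, absorbing the TVD into the margin $1/8-\delta$, is exactly what the paper does. It bounds each query's joint failure over $x$, $U$ and the completion by $\delta+\tfrac{1/8-\delta}{2}+e^{-\Theta(n)}<1/8$. It then gets Fourier-extraction failure $<1/4$, Berlekamp--Welch failure $<1/3$, and overall success at least $2/3$ before majority-vote amplification.

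\emph{The factorial bound.} Its direction should read $((n-k_0)!)^{-2}\le e^{-2n\log n+O(n)}$. Your restriction $k_0\le n/\log n$ guarantees this, as does the paper's $k_0=O(n^\lambda)$.
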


\begin{proof}
Let $\mathcal O$ be an oracle for ATE such that, on input $x\in\mathcal{T}_{m,n}$ and $U\in\mathrm{U}(m)$, $\mathcal O$ outputs an estimate satisfying
\begin{align}
\Pr_{x,U}\left[
|\mathcal O(x,U)-T_x(U)|>\varepsilon T_x^{(\mathrm{avg})}
\right]
<\delta,
\label{eq:tbs-oracle}
\end{align}
where the probability is taken over $x\sim\mathcal{D}_{m,n}$ in Definition~\ref{def: random threshold outcome ensemble} and $U\sim\Haar(m)$.
We prove that oracle access to $\mathcal{O}$ suffices to compute $\Per(X_0)^2$ exactly for any binary matrix $X_0\in\{0,1\}^{k_0\times k_0}$.

Given $X_0\in\{0,1\}^{k_0\times k_0}$, choose a constant $0<\lambda<1$ and set $ n=\left\lceil k_0^{1/\lambda} \right\rceil$, so that $k_0=o(n)$.
Let $m=\alpha n $ for a fixed $\alpha>2$.
We sample $x\sim\mathcal{D}_{m,n}$ and set $k=|x|$.
By Lemma~\ref{lemma: most threshold outcomes have many clicks}, and by choosing $\eta>0$ small enough so that Eq.~\eqref{eq:tbs-click-more-than-half} holds, the event
\begin{align}
\left|k-\frac{\alpha}{\alpha+1}n\right|\le \eta n
\label{eq:tbs-typical-k}
\end{align}
occurs with probability at least $1-\exp(-\Theta(n))$.
Conditioned on this event, we have $k>n/2$, and both $2k-n=\Theta(n)$ and $n-k=\Theta(n)$ hold.
In particular, for all sufficiently large $n$ we have $2k - n > k_0$, as required for the construction in Lemma~\ref{lem:tbs-block-permanent-encoding}.

Conditioned on the event in Eq.~\eqref{eq:tbs-typical-k}, let $O=\supp(x)$, and choose the mode-index sets $Q,L,E,R$, and $W$ according to Eqs.~\eqref{eq: def of O Q L}--\eqref{eq: def of W}.
Given $X_0$, we construct $A_\star$ as in Lemma~\ref{lem:tbs-block-permanent-encoding}, while keeping $\kappa$ unspecified for now.
We then sample $U\sim\Haar(m)$, set $A:=U_{W,[n]}$, define $A(t):=(1-t)A+tA_\star$, and consider the polynomial
\begin{align}
    p(t):=\tau_{2N}^{\mathrm{blk}}(x,A(t),R).
\label{eq:tbs-proof-p-def}
\end{align}
By Eq.~\eqref{eq:tbs-block-top-coefficient}, $p(t)$ is a polynomial in $t$ of degree at most $2n$.
Furthermore, Lemma~\ref{lem:tbs-block-permanent-encoding} implies
\begin{align}
    p(1)
    =
    \kappa^{2n}\left((n-k_0)!\right)^2G_{n,k}\Per(X_0)^2.
\label{eq:tbs-proof-p1}
\end{align}

We next explain how to estimate $p(t)$ for a fixed $t\in[0,\Delta]$.
Given $A(t)$, we sample a unitary $U(t)\in\mathrm{U}(m)$ satisfying
\begin{align}\label{eq:tbs-completion}
    U(t)_{W,[n]}=A(t)
\end{align}
by the completion procedure in Lemma~\ref{lem:tbs-completion}, which succeeds with probability at least $1 - \exp(-\Omega(n))$.
Let $\mathcal H_t$ denote the distribution of the completed unitary.

Choose $\Delta:={c_\delta}/{mn(1+\kappa\sqrt n)}$ for a sufficiently small constant $c_\delta>0$.
Since $m-2n=(\alpha-2)n=\Omega(n)$, Lemma~\ref{lem:tbs-TV} and Lemma~\ref{lem:tbs-completion} imply that, for every $t\in[0,\Delta]$,
\begin{align}
\|\mathcal H_t-\Haar(m)\|_{\mathrm{TVD}}
\le
\frac{1/8-\delta}{2}+e^{-\Theta(n)}.
\label{eq:tbs-Ht-close}
\end{align}
Combining all the preceding arguments, for all sufficiently large $n$, the failure probability for estimating $T_x(U(t))$ with additive imprecision at most $\varepsilon T_x^{(\mathrm{avg})}$ is bounded by
\begin{align}
&\delta+e^{-\Theta(n)}+e^{-\Omega(n)} +\frac{1/8-\delta}{2}+e^{-\Theta(n)}
<\frac18.
\end{align}
Here, $\delta$ is the oracle failure probability in Eq.~\eqref{eq:tbs-oracle}, the first exponentially small term is the failure probability of the typical event in Eq.~\eqref{eq:tbs-typical-k}, and the second is the failure probability of the completion procedure associated with Eq.~\eqref{eq:tbs-completion}.
The last two terms are respectively the block-to-Haar total variation contribution and the boundary contribution in Eq.~\eqref{eq:tbs-Ht-close}.
Therefore, for all sufficiently large $n$,
\begin{align}
\Pr\left[
|\mathcal O(x,U(t))-T_x(U(t))|>\varepsilon T_x^{(\mathrm{avg})}
\right]
<\frac18    .
\label{eq:tbs-oracle-fixed-t}
\end{align}
The same bound also holds with $U(t)$ replaced by $V_R(\theta)U(t)$ for any fixed $\theta$, since left multiplication by $V_R(\theta)$ preserves Haar measure and does not increase total variation distance from Haar.

We now apply the robust Fourier coefficient extraction algorithm.
Since $T_x(V_R(\theta)U(t))$ is a trigonometric polynomial with degree at most $2N\le 2n$, Theorem~\ref{thm:probabilistic-robust-fourier-extraction} yields an estimate $\widehat p(t)$ satisfying
\begin{align}
\Pr\left[
|\widehat p(t)-p(t)|>
\varepsilon T_x^{(\mathrm{avg})}e^{O(n)}
\right]
<\frac14.
\label{eq:tbs-fourier-output}
\end{align}

We repeat this procedure at the interpolation points $t_j\in[0,\Delta]$ specified in Theorem~\ref{thm:probabilistic-robust-BW-extrapolation}.
The robust Berlekamp--Welch algorithm then yields an estimate $\widehat p(1)$ satisfying
\begin{align}
\Pr\left[
|\widehat p(1)-p(1)|>
\varepsilon T_x^{(\mathrm{avg})}
\exp(2n\log\Delta^{-1}+O(n))
\right]
<\frac13.
\label{eq:tbs-BW-output}
\end{align}
Using Eq.~\eqref{eq:tbs-proof-p1}, we divide the obtained estimate $\widehat p(1)$ by the known factor $\kappa^{2n}((n-k_0)!)^2G_{n,k}$ to obtain an estimate of $\Per(X_0)^2$ within additive imprecision
\begin{align}
    \varepsilon'
    :=
    \frac{\varepsilon T_x^{(\mathrm{avg})}\exp(2n\log\Delta^{-1}+O(n))}
    {\kappa^{2n}\left((n-k_0)!\right)^2G_{n,k}}.
\label{eq:tbs-epsilon-prime}
\end{align}

It remains only to bound the above quantity.
We now choose $\kappa = 1/\sqrt n$, so that $\kappa^{-2n}=n^n=\exp(n\log n)$.
Also, by definition, we have $\Delta^{-1}=O(\alpha n^2)$, and hence
\begin{align}
    \exp(2n\log\Delta^{-1})
    \le
    \exp(4n\log n+2n\log\alpha+O(n)).
\label{eq:tbs-interp-overhead}
\end{align}
Since $k_0= O(n^{\lambda})$ with $\lambda < 1$, Stirling's formula gives
\begin{align}
    \left((n-k_0)!\right)^2
    =
    \exp(2n\log n-O(n)).
\label{eq:tbs-factorial-bound}
\end{align}
By Lemma~\ref{lem:tbs-Gs-lower-bound}, we have $G_{n,k}^{-1}=e^{O(n)}$.
Finally, for fixed $\alpha>2$, Lemma~\ref{lem:typical-threshold-Haar-weight-upper} in Appendix~\ref{subsec: threshold Haar upper} gives
\begin{align}
    T_x^{(\mathrm{avg})}
    \le
    \exp(-n\log\alpha+O(n)).
\label{eq:tbs-TH-bound-proof}
\end{align}
Substituting Eq.~\eqref{eq:tbs-interp-overhead}--Eq.~\eqref{eq:tbs-TH-bound-proof} into Eq.~\eqref{eq:tbs-epsilon-prime}, we obtain
\begin{align}
\varepsilon'
&\le
\varepsilon
\exp(3n\log n+n\log\alpha+O(n)).
\label{eq:tbs-epsilon-prime-final}
\end{align}

Therefore, if
\begin{align}
    \varepsilon
    \le
    \exp(-3n\log n- n\log\alpha-O(n)),
\label{eq:tbs-epsilon-final}
\end{align}
then the constant hidden in $O(n)$ can be chosen so that $\varepsilon'<1/2$.
Rounding the resulting estimate to the nearest integer therefore yields $\Per(X_0)^2$ exactly.
Moreover, the overall procedure succeeds with probability at least $2/3$, and repetition and majority voting can arbitrarily boost this success probability.
All steps of the reduction are polynomial-time procedures with access to an NP oracle.
Therefore, oracle access to ATE with $\varepsilon$ satisfying Eq.~\eqref{eq:tbs-epsilon-final} and failure probability $\delta<1/8$ allows the exact computation of $\Per(X_0)^2$ for arbitrary $X_0\in\{0,1\}^{k_0\times k_0}$, where the overall procedure is in $\rm{BPP^{NP}}$.

\end{proof}

\section{Parity BosonSampling}\label{Section: parity BS}

We now prove Theorem~\ref{thm: parity average case main}.
The proof follows the common worst-to-average-case reduction described in Sec.~\ref{subsec: common workflow}, while retaining the parity-specific beam-splitter rotations and permanent-encoding construction, illustrated in Fig.~\ref{fig: outline}.

\subsection{Proof overview}\label{subsec: proof sketch for PBS}

\begin{figure*}[t]
\includegraphics[width=0.9\linewidth]{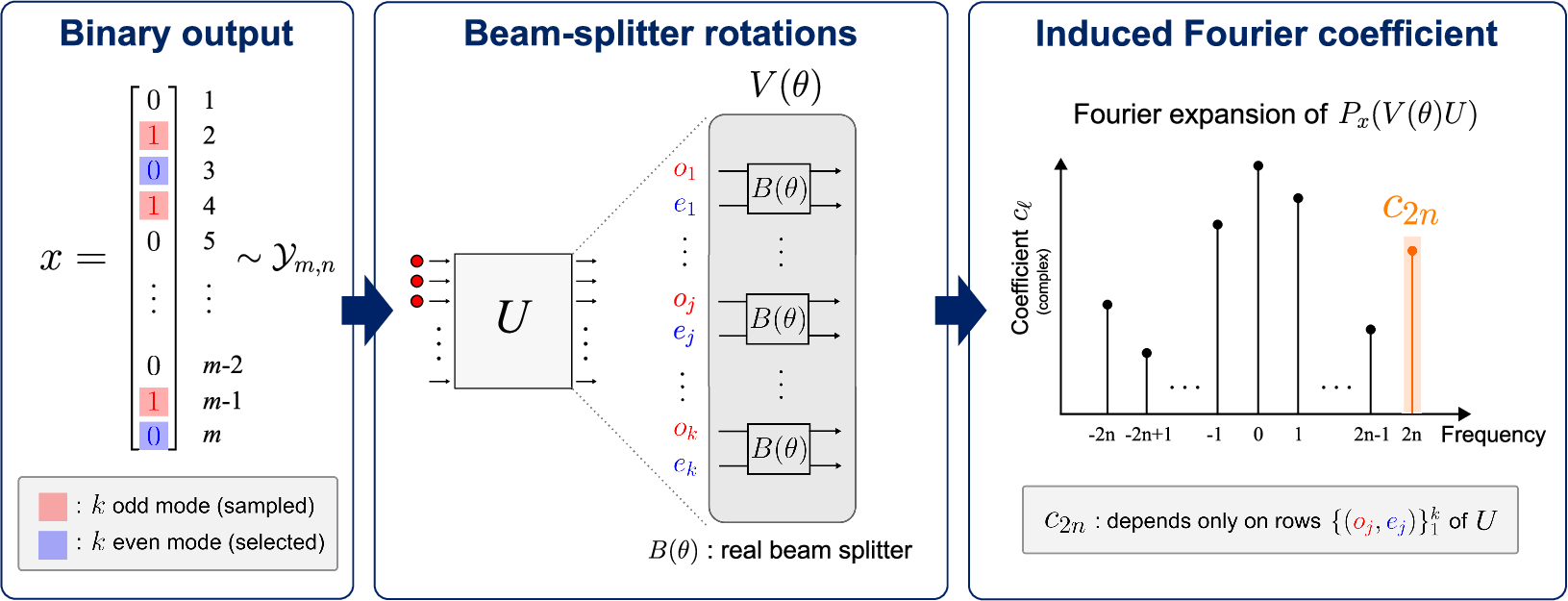}
\caption{Parity-specific Fourier-filtering construction used in the proof of Theorem~\ref{thm: parity average case main}.
Starting from a sampled parity outcome $x\sim \mathcal Y_{m,n}$ with $|x| = k$ odd parities, we choose $k$ even-parity modes and apply paired beam-splitter rotations $B(\theta)$ to all odd-even pairs.
The resulting probability becomes a Fourier series in $\theta$ of degree at most $2n$.
Its top Fourier coefficient, denoted by $c_{2n}$, filters out all contributions except those in which all $n$ photons occupy the rotated modes, and therefore depends only on the corresponding $2k\times n$ submatrix $U_{R,[n]}$.
The shared worst-to-average-case reduction procedure is described in Secs.~\ref{subsec: common workflow} and~\ref{section: shared framework}.
}
\label{fig: outline}
\end{figure*}

We provide an overview of the worst-to-average-case reduction proving Theorem~\ref{thm: parity average case main}.
As in the TBS case, the reduction starts from an arbitrary $k_0\times k_0$ binary matrix $X_0$ and shows that oracle access to APE in Definition~\ref{def: parityBS} enables the exact computation of $\Per(X_0)^2$.
Since exact computation of the permanent of a binary matrix is \#P-hard in the worst case~\cite{valiant1979complexity}, this establishes the desired hardness result.

We first establish a typical-outcome property for the parity ensemble.
For $x\sim\mathcal Y_{m,n}$ with $m=\alpha n$, Lemma~\ref{lemma: most parity outcomes have many odd parities} shows that the number of odd-parity modes concentrates near $\alpha n/(\alpha+2)$.
Writing $k=|x|$ and $O=\supp(x)$, a typical outcome therefore has $k=\Theta(n)$ odd-parity modes.
We then choose a set $E\subseteq[m]\setminus O$ of $k$ even-parity modes and fix a pairing between the modes in $O$ and those in $E$.
Denoting the union of these paired modes by $R:=O\sqcup E$, we have $|R|=2k$.
For fixed $\alpha>2$, the same typical-outcome event also guarantees the margin $m-n-2k=\Omega(n)$ required later for the Haar-submatrix perturbation argument.

For these fixed mode sets, consider an arbitrary linear-optical circuit $U$.
We apply identical coherent beam-splitter rotations to all paired modes in $R$ and examine the resulting parity probability $P_x(V_R(\theta)U)$ as a function of the rotation angle $\theta$.
This probability is a finite Fourier series of degree at most $2n$.
Unlike in the threshold setting, however, the original parity probability is not determined by $U_{R,[n]}$: modes outside $R$ have even parity and may still contain a positive even number of photons, so $P_x(U)$ generally depends on the full $m\times n$ input-output block.
The crucial role of the highest-order Fourier coefficient $c_{2n}$ is to remove this global dependence.
To contribute at frequency $2n$, all $n$ photons must occupy the rotated modes in $R$.
Consequently, $c_{2n}$ depends only on the rectangular block $U_{R,[n]}$ (Lemma~\ref{lem:rowfilter}), providing the localized block into which the worst-case instance can be encoded.

In Lemma~\ref{lem:large-block-permanent-encoding}, we use this localization to encode the worst-case instance $X_0$.
Choose $k_0$ of the odd-parity modes to carry the rows of $X_0$.
We construct a $2k\times n$ block $A_\star$ that embeds $X_0$ on these selected odd-mode rows, while its other entries are chosen so that every nonvanishing contribution to the top Fourier coefficient contains each hard row exactly once.
For every surviving occupation pattern, the corresponding permanent reduces, up to known phase factors, to $\kappa^n\Per(X_0)(n-k_0)!$, with these phases canceling between the two permanent factors in the top Fourier coefficient.
Lastly, summing over the remaining occupation patterns contributes only the explicitly known and efficiently computable factor $Q_{n,k,k_0}$.
Thus, the resulting top Fourier coefficient is proportional to $\Per(X_0)^2$ with known prefactor $\kappa^{2n}((n-k_0)!)^2Q_{n,k,k_0}$ (Lemma~\ref{lem:large-block-permanent-encoding}).
The condition $k_0<k$ required by the construction holds for the chosen $k_0=o(n)$ and for typical $x\sim\mathcal Y_{m,n}$.

The remainder of the reduction then follows the common workflow in Sec.~\ref{subsec: common workflow} and invokes the technical results established in Sec.~\ref{section: shared framework}.
Lemma~\ref{lem:sparse-TV} bounds the distributional change induced by the affine block perturbation, and Lemma~\ref{lem:completion} lifts this perturbation to a full unitary through conditional completion.
Finally, Theorem~\ref{thm:probabilistic-robust-fourier-extraction} extracts the relevant Fourier coefficient from APE estimates, and Theorem~\ref{thm:probabilistic-robust-BW-extrapolation} extrapolates this coefficient to the endpoint encoding the worst-case quantity.

The rest of this section proceeds as follows.
In Sec.~\ref{subsec: most parity outcomes}, we show that most parity outcomes have the number of odd parities concentrated near $\alpha n/(\alpha+2)$.
In Sec.~\ref{subsec: fourier}, we introduce the parity Fourier coefficients, establish their band limitation, and derive the top-coefficient row filter.
In Sec.~\ref{subsec: encoding single permanent}, we show how to encode the worst-case \#P-hard quantity into the parity Fourier coefficient for most parity outcomes.
In Sec.~\ref{subsec: translation inv}, we specialize the shift-and-scale stability and unitary-completion framework to the parity setting.
Finally, in Sec.~\ref{subsec: proof of parity average}, we complete the proof of Theorem~\ref{thm: parity average case main}.


\subsection{Odd-parity count concentration}\label{subsec: most parity outcomes}

We first show that, for $x\sim\mathcal{Y}_{m,n}$, the number of odd-parity modes $|x|$ concentrates near $\alpha n/(\alpha+2)$.

\begin{lemma}[Odd-parity concentration]\label{lemma: most parity outcomes have many odd parities}
    Let $\alpha = m/n \geq 1$, and fix $\eta \in (0,\min\left\{ \alpha/(\alpha + 2), 2/(\alpha + 2) \right\})$.
    Let $\mathcal{Y}_{m,n}$ be the distribution in Definition~\ref{def: random parity outcome ensemble}.
    Then, for all sufficiently large $n$, there exist constants $C_{\alpha,\eta},c_{\alpha,\eta}>0$ such that
    \begin{align}
        \Pr_{x\sim \mathcal{Y}_{m,n}}
        \left[ \left| |x| - \frac{\alpha}{\alpha+2}n \right| \geq \eta n\right]
        \le C_{\alpha,\eta}e^{-c_{\alpha,\eta}n}
    \end{align}
\end{lemma}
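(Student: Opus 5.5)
The plan is to write the law of $|x|$ under $\mathcal{Y}_{m,n}$ explicitly and apply a Laplace/Stirling estimate. There are $\binom{m}{k}$ parity strings of weight $k$, and each has probability $P_x^{(\mathrm{avg})}$ from Eq.~\eqref{eq:main-parity-average}. With $j:=(n-k)/2$, this gives
\begin{align}
\Pr_{x\sim\mathcal{Y}_{m,n}}[|x|=k]=\frac{\binom{m}{k}\binom{m+j-1}{j}}{\binom{m+n-1}{n}},\qquad k\equiv n\ (\mathrm{mod}\ 2),\ 0\le k\le n .
\end{align}
This has a combinatorial reading. A uniform $S\in\mathcal{S}_{m,n}$ is encoded by choosing the $k$ odd modes and then distributing $j$ pairs of photons freely over $m$ modes. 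Since there are at most $n+1$ admissible values of $k$, it suffices to show the numerator is exponentially smaller than its maximum whenever $|k-\alpha n/(\alpha+2)|\ge\eta n$. The maximum of the numerator is at least $\binom{m+n-1}{n}/(n+1)$, because the probabilities sum to one.

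The first step is to take logarithms with Stirling's formula. Write $k=\kappa n$, $j=(1-\kappa)n/2$, and $m=\alpha n$. Then $\tfrac1n\log$ of the numerator equals $\phi(\kappa)+O(\tfrac{\log n}{n})$, uniformly in $\kappa\in[0,1]$, where
\begin{align}
\phi(\kappa)=\alpha H_2(\kappa/\alpha)+\Big(\alpha+\tfrac{1-\kappa}{2}\Big)H_2\!\Big(\tfrac{(1-\kappa)/2}{\alpha+(1-\kappa)/2}\Big),
\end{align}
and $H_2$ is the binary entropy in nats. The function $\phi$ is strictly concave on $[0,1]$, since each term is a perspective of a concave entropy function composed with an affine map.

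Next I locate the maximizer by setting the derivative to zero. This gives
\begin{align}
\frac{\alpha-\kappa}{\kappa}=\Big(\frac{\alpha+(1-\kappa)/2}{(1-\kappa)/2}\Big)^{1/2}.
\end{align}
Substituting $\kappa=\alpha/(\alpha+2)$ gives $(1-\kappa)/2=1/(\alpha+2)$. The left side is then $\alpha+1$, and the right side is $\big((\alpha^2+2\alpha+1)\big)^{1/2}=\alpha+1$, so this $\kappa$ is the unique interior critical point. Consequently the maximum of $\phi$ over $[0,1]$ is $\phi(\alpha/(\alpha+2))$.

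Strict concavity then gives a gap $g_{\alpha,\eta}>0$ with $\phi(\kappa)\le\phi^\ast-g_{\alpha,\eta}$ whenever $|\kappa-\alpha/(\alpha+2)|\ge\eta$. The restriction on $\eta$ guarantees this window lies inside $(0,1)$, so the excluded set is nonempty on both sides and compact. Since the maximum of the numerator is $e^{n\phi^\ast+O(\log n)}$ and the denominator is at least that maximum divided by $n+1$, each excluded $k$ has probability at most $e^{-g n+O(\log n)}$. Summing over at most $n+1$ such $k$ gives the bound $C_{\alpha,\eta}e^{-c_{\alpha,\eta}n}$ for sufficiently large $n$, with any $c<g$.

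The main obstacle is making the Stirling approximation uniform near the endpoints $k\approx 0$ and $j\approx 0$. I would handle this with the crude bounds $\binom{a}{b}\le e^{aH_2(b/a)}$ and $\binom{a}{b}\ge e^{aH_2(b/a)}/(a+1)$, which hold for all $0\le b\le a$. These make the $O(\log n/n)$ error uniform with no case split.
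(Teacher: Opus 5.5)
Your argument is correct, but it takes a different route from the paper's proof.

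\textbf{The paper's route.} The paper never approximates the binomials. Writing $k=n-2t$ and $B_t=\binom{m}{n-2t}\binom{m+t-1}{t}$, it bounds the consecutive ratio $B_{t+1}/B_t\le f_\alpha(t/n)$, where $f_\alpha(\beta)=\frac{(1-2\beta)^2(\alpha+\beta)}{\beta(\alpha-1+2\beta)^2}$. From the identity $f_\alpha(\beta)-1=-\alpha((\alpha+2)\beta-1)/(\beta(\alpha-1+2\beta)^2)$, the ratio is below $1$ exactly when $\beta>1/(\alpha+2)$; symmetrically, $B_{t-1}/B_t\le f_\alpha(t/n)^{-1}<1$ below that point. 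It then iterates a uniform bound $q_{\alpha,\eta}<1$, valid from $t\ge(\tfrac{1}{\alpha+2}+\tfrac{\eta}{4})n$ on, to obtain a geometric tail. This is a discrete log-concavity argument. It is elementary, gives explicit constants, and is the same mechanism the appendix reuses for the $\sqrt{n\log n}$ moderate-deviation lemmas.

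\textbf{Your route.} You work with the continuous rate function $\phi$. Strict concavity together with the critical-point equation explains why the center is $\alpha/(\alpha+2)$. It also yields the sharp exponent $g_{\alpha,\eta}=\phi^\ast-\max\{\phi(\kappa^\ast-\eta),\phi(\kappa^\ast+\eta)\}$, since by concavity the worst excluded point is a window endpoint. The paper's $c_{\alpha,\eta}$ is only a cruder lower bound on this rate. Your uniform entropy bounds on binomials also remove the endpoint case analysis. The cost is relying on Stirling/entropy estimates and a continuous optimization instead of a short ratio computation.

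\textbf{One slip to fix.} You state that the maximum of the numerator is at least $\binom{m+n-1}{n}/(n+1)$. That is true but points the wrong way. What you need is a lower bound on the denominator:
\[
\binom{m+n-1}{n}\ \ge\ \max_k(\text{numerator})\ \ge\ e^{n\phi^\ast-O(\log n)}.
\]
The last inequality uses an admissible $k\equiv n\pmod 2$ within distance $2$ of $\alpha n/(\alpha+2)$, together with continuity of $\phi$ there.

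Alternatively, you can avoid rounding altogether. One checks directly that $\phi(\alpha/(\alpha+2))=(\alpha+1)\log(\alpha+1)-\alpha\log\alpha$. This is exactly the entropy exponent of $\binom{m+n-1}{n}$, so the needed denominator lower bound follows from the entropy lower bound on $\binom{m+n-1}{n}$ itself.
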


\begin{proof}
    See Appendix~\ref{appendix: section: odd parity concentration}.
\end{proof}

Lemma~\ref{lemma: most parity outcomes have many odd parities} implies that, for $x \sim \mathcal{Y}_{m,n}$, the event $|x| = \Theta(n)$ occurs with probability at least $1 - \exp(-\Theta(n))$.
Moreover, for any fixed $\alpha > 2$, choosing a sufficiently small $\eta>0$ yields
\begin{align}
m-n-2|x|
&\ge
\left(
\alpha-1-\frac{2\alpha}{\alpha+2}-2\eta
\right)n=
\Omega(n),
\label{eq:parity-margin-rigorous}
\end{align}
again with probability at least $1 - \exp(-\Theta(n))$.
These two properties are the typical-outcome inputs exploited in the proof of Theorem~\ref{thm: parity average case main}.

\subsection{Parity Fourier coefficients and band limitation}\label{subsec: fourier}

Using the paired beam-splitter rotations defined in Sec.~\ref{subsec: paired beam-splitter rotations}, we now define the Fourier coefficients of the parity probability as a function of the rotation angle.
Let $R\subseteq[m]$ be a paired active set as defined in Sec.~\ref{subsec: paired beam-splitter rotations}.

\begin{definition}[Fourier coefficient]\label{def: fourier coeff}
Given $x\in\mathcal{P}_{m,n}$, $U\in\mathrm{U}(m)$, and a paired active set $R$ as above, define
\begin{align}\label{eq: def of fourier-coeff}
c_\ell(x,U,R)
:=
\frac{1}{2\pi}\int_0^{2\pi}
P_x(V_R(\theta)U)e^{-i\ell\theta}\,d\theta,
\quad \ell\in\mathbb{Z}.
\end{align}
\end{definition}

We next show that this function of $\theta$ is band-limited, such that its Fourier support lies in $\{-2n,\ldots,2n\}$.

\begin{lemma}\label{lem:bandlimit}
For any fixed $x\in\mathcal{P}_{m,n}$, $U\in\mathrm{U}(m)$, and paired active set $R$, the parity output probability $P_x(V_R(\theta)U)$ is a trigonometric polynomial in $\theta$ of degree at most $2n$.
Equivalently,
\begin{align}\label{eq:bandlimit-expansion}
    P_x(V_R(\theta)U)
    =
    \sum_{\ell=-2n}^{2n} c_\ell(x,U,R)e^{i\ell\theta},
\end{align}
where the coefficients are given in Definition~\ref{def: fourier coeff}.
\end{lemma}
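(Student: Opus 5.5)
The plan mirrors the proof of Lemma~\ref{lem:tbs-bandlimit}, now for the parity coarse-graining. Write $U_R(\theta):=V_R(\theta)U$ and expand the parity probability via Eq.~\eqref{eq:main-parity-probability}:
\begin{align}
P_x(U_R(\theta))
=
\sum_{\substack{S\in\mathcal{S}_{m,n}\\ S\ {\rm mod}\ 2=x}}
\frac{1}{\prod_i s_i!}
\left|\Per\left(U_R(\theta)_{\nu(S),[n]}\right)\right|^2 .
\end{align}
This is a finite sum over the occupation vectors counted in Eq.~\eqref{eq: number of S}. It therefore suffices to show that each summand is a trigonometric polynomial of degree at most $2n$, since finite sums preserve this property.

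Fix $S$ and let $n_R(S)=\sum_{i\in R}s_i$. By Eq.~\eqref{eq: rows of rotated circuit}, every row of $U_R(\theta)_{\nu(S),[n]}$ indexed by a mode in $R$ has the form $e^{i\theta}(V_R^+U)_j+e^{-i\theta}(V_R^-U)_j$, and all other rows are independent of $\theta$. Row multilinearity of the permanent, applied to the $n_R(S)$ rotated rows counted with multiplicity, expands $\Per(U_R(\theta)_{\nu(S),[n]})$ into a sum over choices of $\pm$ for each such row. This gives a Laurent polynomial in $e^{i\theta}$ with exponents in $\{-n_R(S),\ldots,n_R(S)\}$. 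Its complex conjugate has the same exponent range, because $\theta$ is real. Hence the squared modulus, being the product of the two, has degree at most $2n_R(S)$. The key observation is the trivial bound $n_R(S)\le n$: the matrix has only $n$ rows, so at most $n$ photons can sit in $R$. This bound holds regardless of the parity constraint, which is why the bound is $2n$ rather than something depending on $x$.

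The final step is to identify the coefficients. Since $P_x(U_R(\theta))=\sum_{\ell=-2n}^{2n}a_\ell e^{i\ell\theta}$ for some $a_\ell\in\mathbb{C}$, orthogonality of $\{e^{i\ell\theta}\}$ on $[0,2\pi)$ shows that $a_\ell=c_\ell(x,U,R)$ as given in Definition~\ref{def: fourier coeff}. The same orthogonality shows that $c_\ell=0$ for $|\ell|>2n$, which yields Eq.~\eqref{eq:bandlimit-expansion}.

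There is no real obstacle here. The only point needing care is that, unlike the threshold case, modes outside $R$ may carry photons; these contribute only $\theta$-independent rows, so they do not affect the degree count. No argument about the structure of the surviving contributions is required at this stage; that is deferred to the top-coefficient filter of Lemma~\ref{lem:rowfilter}.
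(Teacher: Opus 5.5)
Your proposal is correct and follows essentially the same route as the paper: bound each summand's degree by $2n_R(S)$ via row multilinearity of the permanent applied to the rotated rows, then use $n_R(S)\le n$. Your explicit orthogonality step identifying the coefficients with $c_\ell(x,U,R)$ mirrors what the paper does in the threshold analogue, Lemma~\ref{lem:tbs-bandlimit}.
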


\begin{proof}
Write $U_R(\theta):=V_R(\theta)U$ throughout the proof.
For a fixed $S \in \mathcal{S}_{m,n}$, let $n_R(S):=\sum_{i\in R}s_i$ be the number of photons detected in the output modes indexed by $R$.
By Eq.~\eqref{eq: rows of rotated circuit}, the submatrix $U_R(\theta)_{\nu(S),[n]}$ contains exactly $n_R(S)$ $\theta$-dependent rows, counted with multiplicity, while all other rows are independent of $\theta$.
By the row multilinearity of the permanent established in Sec.~\ref{subsec: paired beam-splitter rotations}, $\Per(U_R(\theta)_{\nu(S),[n]})$ has degree at most $n_R(S)$ in $e^{\pm i\theta}$.
Consequently, its squared modulus is a trigonometric polynomial of degree at most $2n_R(S)$.
Summing over $S$ as in Eq.~\eqref{eq:main-parity-probability} preserves this bound, and since $n_R(S)\le n$, the parity probability $P_x(U_R(\theta))$ is a trigonometric polynomial in $\theta$ of degree at most $2n$.

\end{proof}

Now fix $\alpha=m/n>2$ and a parity string $x\in\mathcal{P}_{m,n}$ with weight $|x|=k$.
In the reduction we will restrict to typical outcomes for which $k=\Theta(n)$ and $m-n-2k=\Omega(n)$.
By Lemma~\ref{lemma: most parity outcomes have many odd parities}, these conditions hold with probability at least $1-\exp(-\Theta(n))$ when $x\sim\mathcal{Y}_{m,n}$.

For the fixed $x$, let $O:=\supp(x)$ be the set of odd-parity modes; thus $|O|=k$.
Choose an arbitrary set $E\subseteq[m]\setminus O$ with $|E|=k$, and fix a perfect matching between $O$ and $E$:
\begin{align}\label{eq: def of R}
    R:=O\sqcup E
    =
    \{o_1,e_1\}\sqcup\cdots\sqcup\{o_k,e_k\},
\end{align}
where $o_j\in O$ and $e_j\in E$.
We regard this pairing as part of the data defining $R$.

Applying the beam-splitter rotation $V_R(\theta)$ to these paired output modes gives the Fourier coefficients $c_\ell(x,U,R)$ from Definition~\ref{def: fourier coeff}.
We focus on the highest-frequency coefficient
\begin{align}\label{eq:top}
c_{2n}(x,U,R)
=
\frac{1}{2\pi}\int_0^{2\pi}
P_x(V_R(\theta)U)e^{-i2n\theta}\,d\theta ,
\end{align}
which we also refer to as the top Fourier coefficient.

Importantly, while $P_x(U)$ itself depends on the entire $m \times n$ block of $U$ (first $n$ columns), the top coefficient depends only on a $2k \times n$ submatrix of $U$, as shown in the following lemma.

\begin{lemma}\label{lem:rowfilter}
Let $x \in \mathcal{P}_{m,n}$, $U \in {\rm U}(m)$, and $R = O \sqcup E$ for $O = \supp(x)$ and any choice of $E \subseteq [m]\setminus O$ with $|E| = |O|$.
Then only configurations satisfying $n_R(S):=\sum_{i\in R}s_i=n$, equivalently $\supp(S)\subseteq R$, contribute to the top Fourier coefficient, and
\begin{align}\label{eq:c2n-final}
&c_{2n}(x,U,R) \nonumber \\
&=
\!\!\!\!\!\sum_{\substack{ S \in \mathcal{S}_{m,n} \\ S\;{\rm mod}\; 2 = x \\ n_R(S)=n }}
\!\!\!\!\!\frac{1}{\prod_i s_i!}\,
\Per\!\big((V_R^+U)_{\nu(S),[n]}\big)
\Per\!\big((V_R^-U)_{\nu(S),[n]}\big)^*.
\end{align}
Moreover, since $V_R^\pm$ has support only on rows indexed by $R$, the coefficient $c_{2n}(x,U,R)$ depends only on the $2k\times n$ submatrix $U_{R,[n]}$ of $U$.
\end{lemma}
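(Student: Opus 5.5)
The plan mirrors the proof of Lemma~\ref{lem:tbs-rowfilter}, but with the threshold single-occupation argument replaced by a total-photon-count argument. Write $U_R(\theta):=V_R(\theta)U$ and expand
\begin{align}
P_x(U_R(\theta))=\sum_{\substack{S\in\mathcal S_{m,n}\\ S\ \mathrm{mod}\ 2=x}}\frac{1}{\prod_i s_i!}\left|\Per\!\left(U_R(\theta)_{\nu(S),[n]}\right)\right|^2 .
\end{align}

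\textbf{Step 1: only $n_R(S)=n$ survives.} As in the proof of Lemma~\ref{lem:bandlimit}, Eq.~\eqref{eq: rows of rotated circuit} and row multilinearity show that each permanent is a Laurent polynomial in $e^{i\theta}$ with exponents in $[-n_R(S),n_R(S)]$. Hence the $S$-th summand is a trigonometric polynomial of degree at most $2n_R(S)$. Taking the coefficient of $e^{i2n\theta}$, which is legitimate by Definition~\ref{def: fourier coeff} and orthogonality, kills every summand with $n_R(S)<n$. Since $\sum_i s_i=n$, the condition $n_R(S)=n$ is equivalent to $\supp(S)\subseteq R$.

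\textbf{Step 2: identify the extreme coefficients.} Fix $S$ with $n_R(S)=n$. Every row of $U_R(\theta)_{\nu(S),[n]}$, counted with multiplicity, equals $e^{i\theta}(V_R^+U)_j+e^{-i\theta}(V_R^-U)_j$. Expanding multilinearly over all $n$ rows, the coefficient of $e^{in\theta}$ comes only from choosing the $+$ part in every row. So this coefficient is $\Per((V_R^+U)_{\nu(S),[n]})$, and likewise the coefficient of $e^{-in\theta}$ is $\Per((V_R^-U)_{\nu(S),[n]})$.

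\textbf{Step 3: the top coefficient of the square.} Write $\Per(U_R(\theta)_{\nu(S),[n]})=\sum_{a=-n}^{n}\pi_a e^{ia\theta}$. Then
\begin{align}
\left|\sum_a\pi_a e^{ia\theta}\right|^2=\sum_{a,b}\pi_a\pi_b^*e^{i(a-b)\theta}.
\end{align}
Frequency $2n$ requires $a=n$ and $b=-n$, so its coefficient is $\pi_n\pi_{-n}^*$. Inserting Step 2 and summing over the surviving $S$ with weights $1/\prod_i s_i!$ gives Eq.~\eqref{eq:c2n-final}.

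\textbf{Step 4: locality.} By Eq.~\eqref{eq:Vpm-full}, $V_R^\pm=P_R^\top(V_k^\pm\oplus 0)P_R$. Hence $(V_R^\pm U)_j$ vanishes for $j\notin R$, and for $j\in R$ it is a linear combination of the rows $U_{j'}$ with $j'$ in the pair containing $j$. Every surviving $S$ has $\nu(S)\subseteq R$, so each permanent in Eq.~\eqref{eq:c2n-final} involves only entries of $U_{R,[n]}$. The index set of surviving $S$, namely $S\ \mathrm{mod}\ 2=x$ with $\supp(S)\subseteq R$, does not depend on $U$. Therefore $c_{2n}(x,U,R)$ is a function of $U_{R,[n]}$ alone.

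There is no real obstacle here. The only point needing care is Step 3: the frequency-$2n$ term must pair the top coefficient of one permanent with the conjugate of the bottom coefficient of the same permanent, not take the product of both top coefficients. This is also why $V_R^-$ appears, rather than $V_R^+$, inside the conjugated factor.
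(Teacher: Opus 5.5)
Your proposal is correct and follows the paper's proof essentially step for step. Both use degree counting to force $n_R(S)=n$, then the multilinear expansion to identify the $e^{\pm in\theta}$ coefficients as $\Per((V_R^\pm U)_{\nu(S),[n]})$, and then the pairing $\pi_n\pi_{-n}^*$ at frequency $2n$; your Step 4 only spells out the locality claim that the paper treats as immediate from the support of $V_R^\pm$.
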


This isolation property plays a central role in the hardness proof, particularly in the PBS setting, because it allows us to apply perturbation-stability results for Haar-random submatrices when embedding the worst-case matrix as a small perturbation, as in related constructions~\cite{bouland2022noise,bouland2026complexity,bouland2025exponential,deshpande2022quantum,grier2022complexity,go2026computational}.

\begin{proof}[Proof of Lemma~\ref{lem:rowfilter}]
As in the proof of Lemma~\ref{lem:bandlimit}, we denote $U_R(\theta) = V_{R}(\theta)U$ throughout this proof.
By definition, the top coefficient can be expressed as
\begin{align}\label{eq:c2n-sum-int}
&c_{2n}(x,U,R) \nonumber  \\
&= \!\!\!\!\!\!\sum_{\substack{ S \in \mathcal{S}_{m,n} \\ S\;\text{mod 2} =  x   } }
\!\!\!\!\frac{1}{\prod_i s_i!}\cdot
\frac{1}{2\pi}\int_{0}^{2\pi}
\Big|\Per\big(U_R(\theta)_{\nu(S),[n]}\big)\Big|^2\,e^{-i2n\theta}\,d\theta.
\end{align}

For a given $S \in \mathcal{S}_{m,n}$, let $n_R(S):=\sum_{i\in R} s_i$ be the number of photons occupying the output modes $R$.
As argued in the proof of Lemma~\ref{lem:bandlimit}, Eq.~\eqref{eq: rows of rotated circuit} and multilinearity of permanent implies that $\big|\Per(U_{R}(\theta)_{\nu(S),[n]})\big|^2$ is a trigonometric polynomial in $\theta$ of degree at most $2n_R(S)$.
Therefore, the integrand in Eq.~\eqref{eq:c2n-sum-int} contains no $e^{i2n\theta}$ term unless $2n_R(S)\ge 2n$, which, by constraint $n_R(S) \leq n$, forces $n_R(S)=n$.
Accordingly, for every $S$ contributing to $c_{2n}(x,U,R)$, all photons must occupy the modes confined in $R$, and for such $S$, every row of $U_{R}(\theta)_{\nu(S),[n]}$ has the form:
\begin{align}
(U_R(\theta)_{\nu(S),[n]})_j
&= e^{i\theta}\,((V_R^+U)_{\nu(S),[n]})_j  \nonumber \\
&+ e^{-i\theta}\,((V_R^-U)_{\nu(S),[n]})_j.
\end{align}
By multilinearity over the $n$ rows, expanding $\Per(U_{R}(\theta)_{\nu(S),[n]})$ yields a sum of $2^n$ terms, one for each choice of whether to pick the $e^{i\theta}$ or $e^{-i\theta}$ part in each row.
Here, the largest exponent $e^{in\theta}$ occurs when we pick the $e^{i\theta}$ part from every row, and the corresponding coefficient is exactly $\Per\big((V_R^+U)_{\nu(S),[n]}\big)$.
Similarly, the smallest exponent $e^{-in\theta}$ occurs when we pick the $e^{-i\theta}$ part from every row, and its coefficient is $\Per\big((V_R^-U)_{\nu(S),[n]}\big)$.
Therefore, the coefficient of $e^{i2n\theta}$ in $\big|\Per(U_R(\theta)_{\nu(S),[n]})\big|^2$ is given by
\begin{align}
\Per\big((V_R^+U)_{\nu(S),[n]}\big)\cdot {\Per\big((V_R^-U)_{\nu(S),[n]}\big)}^*  .
\end{align}
Since the integration over $\theta$ in Eq.~\eqref{eq:c2n-sum-int} extracts exactly the coefficient of $e^{i2n\theta}$ by the orthogonality, this yields Eq.~\eqref{eq:c2n-final}.

\end{proof}

\subsection{Encoding a single permanent into the parity Fourier coefficient}\label{subsec: encoding single permanent}
\label{subsubsec:large-block-encoding}

Similarly to the threshold case, we generalize the definition of the top Fourier coefficient to take a rectangular matrix $A\in\mathbb{C}^{2k \times n}$ as an input, instead of a unitary matrix.
To this end, we use a convention $\widetilde{A} \in \mathbb{C}^{m \times n}$ to denote any matrix completion of $A$ that satisfies $\widetilde{A}_{R,[n]} = A $.
This allows us to define the block extension of the top Fourier coefficient in Lemma~\ref{lem:rowfilter} by
\begin{align}
&c^{\mathrm{blk}}_{2n}(x,A,R) \nonumber \\
&:=
\!\!\!\!\!\sum_{\substack{S\in \mathcal{S}_{m,n}\\ S \bmod 2=x}}
\!\!\!\!\frac{1}{\prod_i s_i!}
\Per\left((V_R^+\widetilde{A})_{\nu(S),[n]}\right)
\Per\left((V_R^-\widetilde{A})_{\nu(S),[n]}\right)^*
.
\label{eq:block-top-coefficient}
\end{align}
Accordingly, by Lemma~\ref{lem:rowfilter}, whenever $A=U_{R,[n]}$ for a unitary $U$, we have
\begin{align}
c^{\mathrm{blk}}_{2n}(x,U_{R,[n]},R)=c_{2n}(x,U,R).
\label{eq:block-top-equals-fourier}
\end{align}

We now describe how to encode a worst-case \#P-hard quantity into this top Fourier coefficient.
Specifically, let $X_0\in\{0,1\}^{k_0\times k_0}$ be an arbitrary worst-case instance, with $k_0=o(n)$ in the reduction.
In the following lemma, we construct a $2k \times n$ worst-case matrix $A_\star \in \mathbb{C}^{2k \times n}$ such that $c^{\mathrm{blk}}_{2n}(x,A_\star,R)$ encodes $\Per(X_0)^2$ up to a known multiplicative factor, so that estimating $c^{\mathrm{blk}}_{2n}(x,A_\star,R)$ immediately yields an estimate of $\Per(X_0)^2$.

\begin{lemma}[Single-permanent encoding]
\label{lem:large-block-permanent-encoding}
Fix $\alpha = m/n > 2$.
Let $x \in \mathcal{P}_{m,n}$ with $|x| = k$, and let $R = O \sqcup E$ for $O = \supp(x)$ and any choice of $E \subseteq [m]\setminus O$ with $|E| = |O|$.
Let $c^{\mathrm{blk}}_{2n}$ be the block extension of the top Fourier coefficient in Eq.~\eqref{eq:block-top-coefficient}.
Given a matrix $X_0\in\{0,1\}^{k_0\times k_0}$ with $k_0 < k$, there exists a $2k \times n$ matrix $A_\star \in \mathbb{C}^{2k \times n}$ that can be constructed in polynomial time and satisfies
\begin{align}
c^{\mathrm{blk}}_{2n}(x,A_\star,R)
=
\kappa^{2n}
\left((n-k_0)!\right)^2
Q_{n,k,k_0}
\Per(X_0)^2,
\label{eq:large-block-encoding}
\end{align}
where $\kappa>0$ is a free normalization parameter one can arbitrarily choose, and the factor
\begin{align}
Q_{n,k,k_0}
:=
\!\!\!\!\!\!\!\!\!\!\sum_{\substack{a_1,\ldots,a_{k-k_0}\ge 0\\ b_1,\ldots,b_{k-k_0}\ge 0\\
\sum_{j=1}^{k-k_0}\left((2a_j+1)+2b_j\right)=n-k_0}}
\!\!\!\!\!\!\!\!\!\prod_{j=1}^{k-k_0}
\frac{1}{(2a_j+1)!(2b_j)!} ,
\label{eq:Q-nkk-definition}
\end{align}
which can be computed in polynomial time.
\end{lemma}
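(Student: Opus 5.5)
We take $A_\star$ to be a rescaled copy of the threshold construction in Lemma~\ref{lem:tbs-block-permanent-encoding}, with every $E$-row set to zero, and then use the explicit form of $B^\pm$ to read off the rows of $V_R^\pm\widetilde A_\star$. Order the pairs of Eq.~\eqref{eq: def of R} so that the first $k_0$ odd modes $H:=\{o_1,\ldots,o_{k_0}\}$ carry $X_0$; this requires $k_0<k$. Define the completion $\widetilde A_\star\in\mathbb{C}^{m\times n}$ by
\begin{align}
(\widetilde A_\star)_{H,\{1,\ldots,k_0\}}=2\kappa X_0,\qquad (\widetilde A_\star)_{O\setminus H,\{k_0+1,\ldots,n\}}=2\kappa\, 1_{(k-k_0)\times(n-k_0)},
\end{align}
with all other entries zero, in particular all rows indexed by $E$. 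Set $A_\star:=(\widetilde A_\star)_{R,[n]}$. This is clearly polynomial-time constructible.

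\textbf{Rows after $V_R^\pm$.} For a pair $(o,e)$ with $A_e=0$, the definition of $B^\pm$ gives
\begin{align}
(V_R^\pm\widetilde A)_o=\tfrac12 A_o,\qquad (V_R^\pm\widetilde A)_e=\pm\tfrac{i}{2}A_o .
\end{align}
So in $V_R^\pm\widetilde A_\star$ both rows of a hard pair equal $\kappa$ times the corresponding row of $X_0$, up to the phase $(\pm i)$ on the $E$-row. Likewise, both rows of a non-hard pair equal $\kappa\,1_{1\times(n-k_0)}$ on columns $k_0+1,\ldots,n$, again up to $(\pm i)$ on the $E$-row. Only configurations with $n_R(S)=n$ survive (Lemma~\ref{lem:rowfilter}), so we consider $S$ supported on $R$ with odd occupation on $O$ and even occupation on $E$.

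\textbf{The key counting step.} Columns $1,\ldots,k_0$ are supported only on rows of the hard pairs, and those rows vanish on the remaining columns. Hence the permanent vanishes unless the total occupation of the $k_0$ hard pairs is exactly $k_0$. Each $o_j\in H$ must have odd, hence $\ge1$, occupation. Therefore every hard pair has $s_{o_j}=1$ and $s_{e_j}=0$, and each row of $X_0$ appears exactly once. This is the step I expect to require the most care, since it replaces the single-occupation filter used in the threshold case.

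For such $S$, the permanent is block diagonal up to row permutations: the $X_0$ block together with an $(n-k_0)\times(n-k_0)$ all-ones block. It therefore equals
\begin{align}
(\pm i)^{N_E}\kappa^n\Per(X_0)(n-k_0)!,
\end{align}
where $N_E$ is the number of photons in $E$. In $\Per(V_R^+\cdot)\Per(V_R^-\cdot)^*$ the phases combine to
\begin{align}
i^{N_E}\,\overline{(-i)^{N_E}}=i^{2N_E}=1,
\end{align}
because $N_E$ is even. Thus every surviving term equals $\kappa^{2n}((n-k_0)!)^2\Per(X_0)^2$, multiplied by $1/\prod_i s_i!$.

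\textbf{Summing over configurations and computability.} Parametrize each non-hard pair by $s_{o_j}=2a_j+1$ and $s_{e_j}=2b_j$ with $a_j,b_j\ge0$. The hard pairs contribute factorial weight $1$, and the photon-number constraint becomes $\sum_j\big((2a_j+1)+2b_j\big)=n-k_0$. Summing the weights $1/\prod_i s_i!$ over these configurations gives exactly $Q_{n,k,k_0}$ in Eq.~\eqref{eq:Q-nkk-definition}, which yields Eq.~\eqref{eq:large-block-encoding}.

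Finally, $Q_{n,k,k_0}$ is computable in polynomial time. It equals
\begin{align}
Q_{n,k,k_0}=[z^{n-k_0}]\big(\sinh z\,\cosh z\big)^{k-k_0},
\end{align}
which can be evaluated exactly by a dynamic program over the pairs with total photon count at most $n$, using rational arithmetic.
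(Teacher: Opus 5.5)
Your proposal is correct and follows the paper's proof essentially step for step. It uses the same $A_\star$ (rows $2X$ on $O$, zero rows on $E$) and the same vanishing argument, which forces $s_{o_j}=1$ and $s_{e_j}=0$ on the hard pairs. The phase cancellation via evenness of the $E$-occupation and the identity $Q_{n,k,k_0}=[z^{n-k_0}](\sinh z\cosh z)^{k-k_0}$ are also as in the paper. The only difference is that the paper evaluates $Q_{n,k,k_0}$ through an explicit $O(n)$-term alternating sum rather than a dynamic program.

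One small citation point: for the block extension, the restriction to $\supp(S)\subseteq R$ follows directly from $V_R^\pm$ annihilating every row outside $R$. It does not follow from Lemma~\ref{lem:rowfilter}, which is stated for unitary inputs.
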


\begin{figure*}[t]
\includegraphics[width=0.85\linewidth]{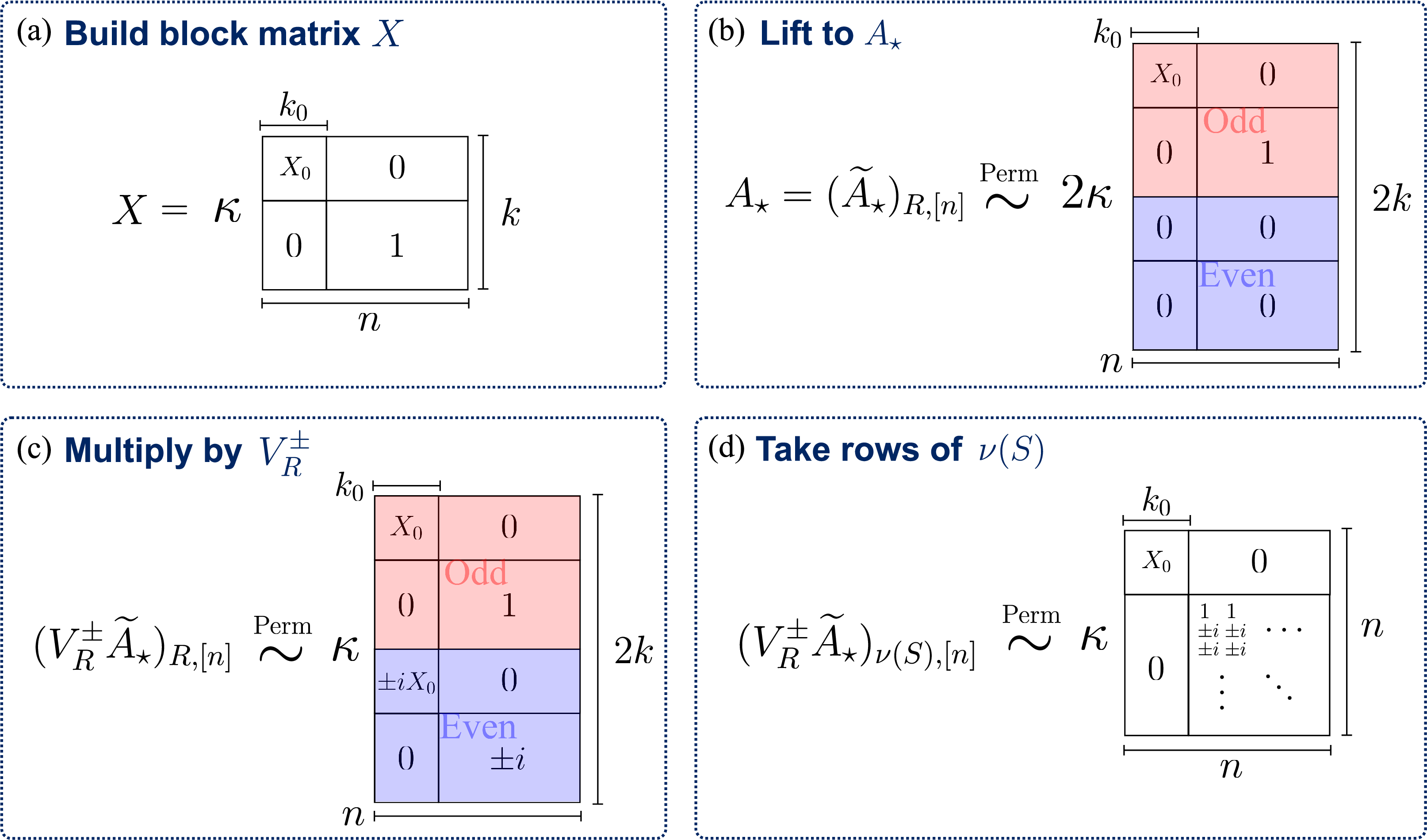}
\caption{
Illustration of the matrix constructions in the proof of Lemma~\ref{lem:large-block-permanent-encoding}.
(a) Schematic of the block matrix $X$ in Eq.~\eqref{eq:large-block-wj}, constructed from a worst-case matrix $X_0\in\{0,1\}^{k_0\times k_0}$ (where $k_0 < k$) by adjoining all-zero blocks and an all-ones block, together with a normalization factor $\kappa>0$.
(b) Schematic of the matrix $A_\star=(\widetilde A_\star)_{R,[n]}$ in Eq.~\eqref{eq:large-block-Astar}, obtained by embedding $X$ into the rows indexed by $R = O \sqcup E$ and the first $n$ columns of a larger matrix $\widetilde A_\star$.
In the figure, the labels ``odd” and ``even” inside the blocks indicate the rows indexed by $O$ and $E$, respectively, corresponding to the modes where odd and even parities are measured.
The notation $\overset{\mathrm{Perm}}{\sim}$ denotes equivalence up to row permutations.
(c) Schematic of the matrices $(V_R^\pm \widetilde{A}_\star)_{R,[n]}$ in Eq.~\eqref{eq:large-block-Wpm}.
(d) Schematic of the matrix $(V_R^\pm\widetilde A_\star)_{\nu(S),[n]}$ used in Eq.~\eqref{eq:large-block-proof-start}, obtained from (c) after imposing the constraint $S \bmod 2 = x$.
}
\label{fig: encoding PBS}
\end{figure*}

\begin{proof}

We construct the matrix $A_\star$ below; an illustration of the construction is provided in Fig.~\ref{fig: encoding PBS}.

Given a matrix $X_0\in\{0,1\}^{k_0\times k_0}$, define the $k \times n$ matrix
\begin{align}
X :=
\kappa
\begin{pmatrix}
X_0 & 0_{k_0\times(n-k_0)}\\
0_{(k-k_0)\times k_0} & 1_{(k-k_0)\times(n-k_0)}
\end{pmatrix}  ,
\label{eq:large-block-wj}
\end{align}
where $\kappa$ is a free parameter one can arbitrarily choose, and $0_{a\times b}$ and $1_{a \times b}$ denote the $a \times b$ all-zero and all-one matrices, respectively.
Provided that $R = O \sqcup E$ for odd and even modes $O$ and $E$, one can construct a matrix $A_\star \in \mathbb{C}^{2k \times n}$ whose $m \times n$ completion $\widetilde{A}_\star$ satisfies
\begin{align}
(\widetilde{A}_\star)_{O, [n]} = 2X   ,
\quad
(\widetilde{A}_\star)_{E, [n]} = 0  .
\label{eq:large-block-Astar}
\end{align}
By Eq.~\eqref{eq:Vpm-full}, this also gives
\begin{align}
(V_R^\pm \widetilde{A}_\star)_{O,[n]} = X,
\quad
(V_R^\pm \widetilde{A}_\star)_{E, [n]}= \pm i X  .
\label{eq:large-block-Wpm}
\end{align}

Based on this construction, consider the corresponding top Fourier coefficient:
\begin{align}
c^{\mathrm{blk}}_{2n}(x,A_\star,R)
&=\!\!\!\!\!\!\sum_{\substack{S\in \mathcal{S}_{m,n}\\ S \bmod 2=x }}
\!\!\!\!\!\frac{1}{\prod_i s_i!}
\Per\left((V_R^+\widetilde{A}_\star)_{\nu(S),[n]}\right) \nonumber \\
& \qquad\qquad\qquad\times \Per\left((V_R^-\widetilde{A}_\star)_{\nu(S),[n]}\right)^*
.
\label{eq:large-block-proof-start}
\end{align}
Recall that $o_j \in O$ and $e_j \in E$ denote the odd and even mode indices, respectively.
By the block-diagonal structure of $X$ in Eq.~\eqref{eq:large-block-wj}, for each $j=1,\ldots,k_0$, the rows of $V_R^\pm \widetilde{A}_\star$ indexed by $o_j$ and $e_j$ are supported on the first $k_0$ columns.
The parity constraint forces each hard odd mode $o_j$ to contain at least one photon.
If any additional photon occupied one of the hard pairs, then the corresponding matrix would contain more than $k_0$ rows supported only on the first $k_0$ columns, so its permanent would vanish.
Moreover, any occupation outside $R$ gives a zero row in $V_R^\pm\widetilde A_\star$ and therefore also gives no contribution.
Thus, every contributing occupation pattern must satisfy
\begin{align}
s_{o_j}=1,
\quad
s_{e_j}=0,
\quad
\label{eq:hard-pairs-forced-large-block}
\end{align}
for all $j=1,\ldots,k_0$.
This implies that all remaining $n-k_0$ photons must occupy the modes outside these pairs.
More concretely, one can write
\begin{align}
s_{o_{k_0+j}}=2a_j+1,
\qquad
s_{e_{k_0+j}}=2b_j,
\label{eq:sink-occupations-large-block}
\end{align}
for $j=1,\ldots,k-k_0$, where $a_j,b_j\ge 0$ and
\begin{align}
\sum_{j=1}^{k-k_0}\left((2a_j+1)+2b_j\right)=n-k_0.
\end{align}

For such $S$, the matrix $(V_R^+\widetilde{A}_\star)_{\nu(S),[n]}$ (and likewise $(V_R^-\widetilde{A}_\star)_{\nu(S),[n]}$) admits a block-diagonal form whose diagonal blocks are $\kappa X_0$ and a phase-shifted version of $\kappa 1_{(n-k_0)\times(n-k_0)}$ (by the factor $\pm i $ in Eq.~\eqref{eq:large-block-Wpm}).
More concretely, one can write
\begin{align}
\Per\left( (V_R^+\widetilde{A}_\star)_{\nu(S),[n]} \right)
=
i^{\sum_{j=1}^{k-k_0}2b_j}
\kappa^n
\Per(X_0)
(n-k_0)!   ,
\label{eq:large-block-plus-permanent}
\end{align}
and similarly,
\begin{align}
\Per\left((V_R^-\widetilde{A}_\star)_{\nu(S),[n]} \right)
=
(-i)^{\sum_{j=1}^{k-k_0}2b_j}
\kappa^n
\Per(X_0)
(n-k_0)!.
\label{eq:large-block-minus-permanent}
\end{align}
Since $\sum_j2b_j$ is always even, we obtain
\begin{align}
\Per\left((V_R^+\widetilde{A}_\star)_{\nu(S),[n]}\right)
\Per\left((V_R^-\widetilde{A}_\star)_{\nu(S),[n]}\right)^*
\nonumber \\
=
\kappa^{2n}\Per(X_0)^2\left((n-k_0)!\right)^2.
\end{align}

Also, for $S \in \mathcal{S}_{m,n}$ satisfying Eq.~\eqref{eq:hard-pairs-forced-large-block} and Eq.~\eqref{eq:sink-occupations-large-block}, the photon-number factorial factor is given by
\begin{align}
\prod_{i=1}^{m} s_i ! = \prod_{j=1}^{k-k_0}(2a_j+1)!(2b_j)!.
\end{align}
Summing over all possible $S$ finally yields Eq.~\eqref{eq:large-block-encoding}.

Lastly, one can rewrite $Q_{n,k,k_0}$ by
\begin{align}
Q_{n,k,k_0}
&=
[z^{n-k_0}](\sinh z\cosh z)^{k-k_0} \\
&=
\frac{1}{4^{k-k_0} (n-k_0)!}
\sum_{j=0}^{k-k_0}
(-1)^j
\binom{k-k_0}{j}
\nonumber \\
&\qquad\qquad\qquad\qquad\qquad\times\left(2(k-k_0-2j)\right)^{n-k_0} ,
\end{align}
where $[z^N]f(z)$ denotes the coefficient of $z^N$ in the power-series expansion of $f(z)$.
Since this expression contains only $O(n)$ terms, $Q_{n,k,k_0}$ can be evaluated exactly in polynomial time.
This completes the proof.

\end{proof}

We additionally show that the multiplicative factor $Q_{n,k,k_0}$ introduced in Lemma~\ref{lem:large-block-permanent-encoding} is lower-bounded by $\exp(-O(n))$ whenever $k_0 = o(n)$ and $k = \Omega(n)$, where the latter holds for typical parity outcomes over $x \sim \mathcal{Y}_{m,n}$ by Lemma~\ref{lemma: most parity outcomes have many odd parities}.

\begin{lemma}[Lower bound on the normalization factor]
\label{lem:Q-nkk-lower-bound}
Suppose $k_0 = o(n)$ and
\begin{align}\label{eq: k lwer bound}
k\geq \left( \frac{\alpha}{\alpha+2}  - \eta \right) n  ,
\end{align}
for any $\alpha > 2$ and $0 < \eta \leq 1/6$.
Then, the multiplicative factor $Q_{n,k,k_0}$ in Lemma~\ref{lem:large-block-permanent-encoding} is lower bounded by
\begin{align}
Q_{n,k,k_0}
\ge
\exp\left[
-\left(
\frac{1}{\alpha+2}+\frac{\eta}{2}
\right)(\log 2)n
\right].
\label{eq:Q-nkk-lower-bound-explicit}
\end{align}
\end{lemma}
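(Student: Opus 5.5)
The plan is to copy the argument of Lemma~\ref{lem:tbs-Gs-lower-bound}. Every summand in Eq.~\eqref{eq:Q-nkk-definition} is nonnegative, so $Q_{n,k,k_0}$ is at least the contribution of any one admissible choice of $(a_j,b_j)$. I will pick a single explicit pattern whose factorial weight is a power of $1/2$ and then bound the exponent using Eq.~\eqref{eq: k lwer bound}.

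\textbf{Choice of pattern.} Set $K:=k-k_0$. The constraint is $\sum_{j=1}^{K}\big((2a_j+1)+2b_j\big)=n-k_0$, which is equivalent to
\begin{align}
\sum_{j=1}^{K}(a_j+b_j)=\frac{n-k}{2}.
\end{align}
The right-hand side is a nonnegative integer, because $x\in\mathcal{P}_{m,n}$ gives $|x|=k\equiv n \pmod 2$ and $k\le n$. I take all $a_j=0$ and choose $b_j\in\{0,1\}$, with exactly $(n-k)/2$ of the $b_j$ equal to $1$. Each factor with $a_j=0,b_j=0$ contributes $1/(1!\,0!)=1$. Each factor with $a_j=0,b_j=1$ contributes $1/(1!\,2!)=1/2$. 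This single pattern therefore gives
\begin{align}
Q_{n,k,k_0}\ \ge\ 2^{-(n-k)/2}.
\end{align}
It is worth noting why I do not put the extra photons into the odd modes instead. Using $a_j=1$ would give a factor $1/3!$ per pair, i.e.\ $6^{-(n-k)/2}$, which is too weak for the stated constant $\log 2$. So the even partner modes must absorb the excess photons.

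\textbf{Feasibility.} The pattern needs $(n-k)/2\le K=k-k_0$. Since $\alpha>2$ we have $\alpha/(\alpha+2)>1/2$, and with $\eta\le 1/6$ this gives $k\ge (1/2-1/6)n=n/3$. Hence $(n-k)/2\le n/3$. Moreover, because $\alpha/(\alpha+2)>1/2$ strictly, $k-k_0\ge n/3+\Omega(n)-o(n)$, which exceeds $(n-k)/2$ for all sufficiently large $n$. This works because $k_0=o(n)$. This is the only step needing care: I expect the main obstacle to be the bookkeeping of this inequality, together with the parity/integrality check on $(n-k)/2$.

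\textbf{Final bound.} Eq.~\eqref{eq: k lwer bound} gives
\begin{align}
n-k\le \left(\frac{2}{\alpha+2}+\eta\right)n,
\end{align}
so $(n-k)/2\le \left(\frac{1}{\alpha+2}+\frac{\eta}{2}\right)n$. Substituting this into $Q_{n,k,k_0}\ge 2^{-(n-k)/2}=\exp\!\big(-\tfrac{n-k}{2}\log 2\big)$ yields Eq.~\eqref{eq:Q-nkk-lower-bound-explicit}.
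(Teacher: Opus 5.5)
Your proposal is correct and follows the paper's proof: the same single pattern ($a_j=0$ for all $j$, with $(n-k)/2$ of the $b_j$ equal to $1$ and the rest $0$) gives $Q_{n,k,k_0}\ge 2^{-(n-k)/2}$. The feasibility check $(n-k)/2\le k-k_0$ is the same inequality in both; the paper phrases it as $\eta\le 1/6<\tfrac{2}{3}\tfrac{\alpha-1}{\alpha+2}$, while you phrase it via $k\ge n/3+\Omega(n)$, and your explicit integrality remark on $(n-k)/2$ is a small point the paper leaves implicit.
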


\begin{proof}
    See Appendix~\ref{appendix: section: single permanent into coefficient - PBS}.
\end{proof}

\subsection{Shift-and-scale stability and conditional unitary completion}\label{subsec: translation inv}

In the parity setting, let $A=U_{R,[n]}$ for $U\sim\Haar(m)$, and let $A_\star$ be the block constructed in Lemma~\ref{lem:large-block-permanent-encoding}.
Also, let $\mu$ denote the distribution of $A=U_{R,[n]}$, and let $\mu_t$ denote the distribution of the affine map $A(t) = (1-t)A+tA_\star$.
The following lemma establishes the required shift-and-scale stability bound for this PBS setting, where the relevant objects are rectangular submatrices of Haar-random unitaries.

\begin{lemma}[Shift-and-scale stability for PBS]
\label{lem:sparse-TV}
Let $\mu$ be the distribution of $A = U_{R,[n]}$ for $U \sim \Haar(m)$ and $|R| = 2k$, and let $\mu_t$ be the distribution of $A(t) = (1-t)A + tA_\star$ for $A_\star$ constructed in Lemma~\ref{lem:large-block-permanent-encoding}.
Suppose that $m - n - 2k = \Omega(n)$.
There exist constants $c_0, c_1, c_2 > 0$ such that, whenever
\begin{align}\label{eq: bound for t}
  0 \leq t \leq \frac{c_0}{mn(1 + \kappa \sqrt{k} )} ,
\end{align}
the total variation distance between the distributions $\mu$ and $\mu_t$ is bounded by
\begin{align}
  \|\mu_t-\mu\|_{\mathrm{TVD}}
  \le
  c_1 mn(1 + \kappa \sqrt{k} ) t + e^{-c_2(m-n-2k)}.
\end{align}
\end{lemma}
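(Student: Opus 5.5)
We will prove Lemma~\ref{lem:sparse-TV} as a special case of the general rectangular shift-and-scale stability result of Appendix~\ref{appendix: section: translation invariance}. Write $p:=2k$ for the number of rows and $n$ for the number of columns. Since $m\ge p+n$ in the regime $m-n-2k=\Omega(n)$, the $p\times n$ truncation $A=U_{R,[n]}$ of a Haar unitary has the known density
\begin{align}
f(A)\propto \det(I_n-A^\dagger A)^{m-p-n}\,\mathbb{1}[\|A\|_{\mathrm{op}}<1]
\end{align}
with respect to Lebesgue measure on $\mathbb{C}^{p\times n}$.

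\textbf{Step 1: change of variables.} The affine map $A\mapsto A(t)=(1-t)A+tA_\star$ has constant Jacobian $(1-t)^{-2pn}$. The density of $\mu_t$ at a point $B$ is therefore
\begin{align}
f_t(B)=(1-t)^{-2pn}\,f\big((B-tA_\star)/(1-t)\big).
\end{align}
The total variation distance can then be written as
\begin{align}
\|\mu_t-\mu\|_{\mathrm{TVD}}\le \E_{B\sim\mu}\big|1-f_t(B)/f(B)\big| + \Pr_{\mu}[\text{bad}].
\end{align}
Here ``bad'' is the event that $B$ lies near the boundary $\|B\|_{\mathrm{op}}\approx 1$. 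On that event the ratio is not controlled, for example because the preimage may leave the unit ball.

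\textbf{Step 2: boundary event.} We show that $\Pr_\mu[\|A\|_{\mathrm{op}}>1-\gamma]\le e^{-c_2(m-n-2k)}$ for a suitable constant $\gamma$. We will try a Gaussian comparison: $A$ is close to $G/\sqrt m$ with $G$ i.i.d.\ complex Gaussian, so $\|A\|_{\mathrm{op}}\lesssim(\sqrt p+\sqrt n)/\sqrt m$. Alternatively, the density factor $\det(I-A^\dagger A)^{m-p-n}$ directly suppresses singular values near $1$ with exponent $m-p-n=\Omega(n)$. Either route gives the exponential term.

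\textbf{Step 3: ratio on the good event.} Off the boundary event, $\log f$ is smooth. We bound the log-ratio
\begin{align}
2pn\log\frac{1}{1-t} + (m-p-n)\Big[\log\det(I-A'^\dagger A')-\log\det(I-B^\dagger B)\Big],
\end{align}
where $A'=(B-tA_\star)/(1-t)$. First, $\|A'-B\|_{\mathrm{op}}\le t(\|B\|_{\mathrm{op}}+\|A_\star\|_{\mathrm{op}})/(1-t)$. For the construction of Lemma~\ref{lem:large-block-permanent-encoding}, $\|A_\star\|_{\mathrm{op}}\le 2\kappa\|X_0\oplus 1\|_{\mathrm{op}}$ is of order $\kappa\sqrt{n}$ up to the all-ones block; we expect a more careful Frobenius/operator estimate to give the $\kappa\sqrt{k}$ dependence stated. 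Second, the derivative of $\log\det(I-A^\dagger A)$ is $O(n/\gamma)$ times the perturbation norm. Together these give a log-ratio of $O(mn(1+\kappa\sqrt k)t)$. Under the hypothesis Eq.~\eqref{eq: bound for t} this is at most a small constant, so $|1-e^{x}|\le 2|x|$ yields the linear term $c_1mn(1+\kappa\sqrt k)t$.

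\textbf{Main obstacle.} The delicate part is getting the exact prefactor $\kappa\sqrt k$ for the operator norm of the $A_\star$-perturbation. Equally delicate is ensuring that points whose preimage falls outside the support are absorbed into the boundary term. We will handle the latter by showing that $\|A'\|_{\mathrm{op}}<1$ whenever $\|B\|_{\mathrm{op}}\le 1-\gamma$, given $t\ll\gamma/(1+\kappa\sqrt k)$, which follows from the assumed bound on $t$.
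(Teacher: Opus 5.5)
Your outline is the paper's. The paper obtains this lemma by specializing Lemma~\ref{lem:general-translation-appD} to $p=2k$, $q=n$, $H=A_\star$. That proof uses the same density $\propto\det(I-A^\dagger A)^{m-n-2k}$ and the same affine change of variables. It splits off a boundary event and controls the log-ratio by integrating $\frac{d}{ds}\log\det(I-W_s^\dagger W_s)$ along $W_s=(B-sA_\star)/(1-s)$. Your split $\|\mu_t-\mu\|_{\mathrm{TVD}}\le\E_\mu[|1-f_t/f|\,\mathbf{1}_{\mathrm{good}}]+\mu(\mathrm{bad})$ is also valid, because mass conservation gives $\mu_t(\mathrm{bad})\le\mu(\mathrm{bad})+\int_{\mathrm{good}}|f_t-f|$. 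However, both quantitative steps fail as you propose to carry them out.

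\textbf{The $\kappa\sqrt k$ factor.} It cannot come from a sharper estimate of $A_\star$. The all-ones block in $(\widetilde A_\star)_{O,[n]}=2\kappa(X_0\oplus 1_{(k-k_0)\times(n-k_0)})$ is rank one, so $\|A_\star\|_{\mathrm{op}}\ge2\kappa\sqrt{(k-k_0)(n-k_0)}=\Theta(\kappa\sqrt{nk})$. Your Step~3 bounds the derivative of $\log\det$ by $O(n)$ times the operator norm of the perturbation, and therefore yields only $mn(1+\kappa\sqrt{nk})t$. The resulting $\sqrt n$ loss in $\Delta^{-1}$ would raise the $3n\log n$ of Theorem~\ref{thm:parity-no-fourier-AC} to $4n\log n$, whatever $\kappa$ is chosen. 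The missing step is to bound the trace by Hilbert--Schmidt Cauchy--Schwarz instead of operator norms. With $M_s=I-W_s^\dagger W_s$ and $\|M_s^{-1}\|_{\mathrm{op}}\le C_\zeta$ on the interior event,
\[
\Big|\tfrac{d}{ds}\log\det M_s\Big|\le 2\|M_s^{-1}\|_{\mathrm{op}}\,\big\|\tfrac{d}{ds}W_s\big\|_{\mathrm F}\,\|W_s\|_{\mathrm F}\le 8C_\zeta\sqrt n\,\big(\sqrt n+\|A_\star\|_{\mathrm F}\big).
\]
This uses $\|W_s\|_{\mathrm F}\le\sqrt n$ and $\|\tfrac{d}{ds}W_s\|_{\mathrm F}\le4(\|B\|_{\mathrm F}+\|A_\star\|_{\mathrm F})$. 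Combined with $\|A_\star\|_{\mathrm F}\le2\kappa\sqrt{nk}$, it gives $(m-n-2k)\sqrt n(\sqrt n+2\kappa\sqrt{nk})\le 2mn(1+\kappa\sqrt k)$. So $\kappa\sqrt k$ is really $\|A_\star\|_{\mathrm F}/\sqrt n$, not a norm of $A_\star$. Your condition that the preimage stay inside the ball, $t(1+\|A_\star\|_{\mathrm{op}})\lesssim\gamma$, still follows from Eq.~\eqref{eq: bound for t}, since $mn(1+\kappa\sqrt k)\ge1+\kappa\sqrt{nk}$.

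\textbf{The boundary tail.} The comparison $A\approx G/\sqrt m$ predicts $\|A\|_{\mathrm{op}}\approx(\sqrt{2k}+\sqrt n)/\sqrt m$. This exceeds $1$ inside the hypothesis $m-n-2k=\Omega(n)$: $2k=n$, $m=3n$ gives $2/\sqrt3$, and the typical parity weight at $\alpha=3$ gives about $1.21$. Replacing $(X^\dagger X)^{-1/2}$ by $m^{-1/2}I$ discards exactly the normalization that keeps the norm below one when $2k,n=\Theta(m)$.

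The density route is not direct either. On $\{\|A\|_{\mathrm{op}}>1-\gamma\}$ the weight is at most $(2\gamma)^{m-n-2k}$. But for $k=\Theta(n)$ the normalizing constant relative to the volume of the ball is $e^{\Theta(n^2)}$. For constant $\gamma$ you would therefore need an eigenvalue-level large-deviation bound on the top Jacobi eigenvalue.

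The paper instead writes $A=X_1(X_1^\dagger X_1+X_2^\dagger X_2)^{-1/2}$ with independent complex Gaussian $X_1\in\mathbb C^{2k\times n}$ and $X_2\in\mathbb C^{(m-2k)\times n}$. Then $\|A\|_{\mathrm{op}}^2=\sup_{w\neq0}\|X_1w\|^2/(\|X_1w\|^2+\|X_2w\|^2)\le a^2/(a^2+b^2)$, with $a=\sqrt{2k}+\sqrt n+u$ and $b=\sqrt{m-2k}-\sqrt n-u$. Taking $u=(\sqrt{m-2k}-\sqrt n)/2$ gives a constant margin, because $2k\le2n$ and $m-n-2k\ge cn$. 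The failure probability is $e^{-Cu^2}\le e^{-c_2(m-n-2k)}$.
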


\begin{proof}
See Appendix~\ref{appendix: section: translation invariance}.
\end{proof}

Here, the odd-parity concentration from Lemma~\ref{lemma: most parity outcomes have many odd parities} applies.
For $x\sim\mathcal{Y}_{m,n}$ and fixed $\alpha>2$, Eq.~\eqref{eq:parity-margin-rigorous} shows that the required margin $m-n-2k=\Omega(n)$ holds with probability at least $1-\exp(-\Theta(n))$.

We next lift the block-level perturbation to a full unitary matrix, since the APE problem in Definition~\ref{def: parityBS} takes a complete circuit unitary as input.

\begin{lemma}[Conditional unitary completion for PBS]
\label{lem:completion}
Let $\mu$ be the distribution of the block $A=U_{R,[n]}$ under $U\sim\Haar(m)$, and let $\mu_t$ be the distribution of the block $A(t) = (1-t)A + tA_\star$.
Suppose that $m - n - 2k = \Omega(n)$ and $t$ satisfy Eq.~\eqref{eq: bound for t}.
There exists a randomized polynomial-time procedure $\mathcal{C}$ that, with probability at least $1 - \exp({-\Omega(m-n-2k)})$, completes the input matrix to a unitary containing it as the prescribed submatrix.
Moreover, the corresponding pushforward measure $\mathcal H_t:=\mathcal C_{\#}\mu_t$ satisfies
\begin{align}
  \|\mathcal{H}_t- \Haar(m) \|_{\mathrm{TVD}} \le \|\mu_t-\mu\|_{\mathrm{TVD}} + e^{-\Theta(m-n-2k)}.
\end{align}
\end{lemma}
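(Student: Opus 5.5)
The plan is to build $\mathcal C$ directly from the conditional structure of Haar measure given a block of rows, and then use the data-processing inequality for total variation distance. Write $p:=2k$ for the number of selected rows, so the block is $A\in\mathbb C^{p\times n}$, sitting in rows $R$ and the first $n$ columns of $U$. For $U\sim\Haar(m)$, the top-left structure is governed by the following fact. Conditional on $U_{R,[n]}=A$, the remaining $n$ columns' lower part $U_{R^c,[n]}$ has the form $B\,(I_n-A^\dagger A)^{1/2}$, where $B$ is an $(m-p)\times n$ matrix with orthonormal columns and Haar-distributed orientation (uniform on the Stiefel manifold). Moreover, given the full first $n$ columns, the last $m-n$ columns form a Haar-random orthonormal basis of their orthogonal complement.

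The procedure $\mathcal C$ on input $A'$ is then defined as follows.

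1. Check that $\|A'\|_{\mathrm{op}}\le 1$. If this fails, output an arbitrary fixed unitary (the boundary event).
2. Sample $B$ uniformly from the Stiefel manifold via QR of a Gaussian matrix.
3. Set the column block $C=\binom{A'}{B(I-A'^\dagger A')^{1/2}}$, placed so that $A'$ occupies rows $R$. This has orthonormal columns, since $A'^\dagger A'+(I-A'^\dagger A')=I$.
4. Complete $C$ to a unitary by appending a Haar-random orthonormal basis of $C^\perp$, again via Gaussian QR.

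All of these steps run in polynomial time.

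The first thing to prove is that $\mathcal C_{\#}\mu=\Haar(m)$ exactly. This follows from the conditional description above: the block, the Stiefel factor, and the complement basis together reproduce the Haar measure. I would verify it via left-invariance under $\mathrm U(m-p)$ acting on rows $R^c$ and right-invariance under $\mathrm U(m-n)$ acting on the last columns. Equivalently, it follows from the polar-type decomposition of the first $n$ columns of $U$, whose $R^c$ part is $B$ times a positive square root determined by $A$.

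Then data processing gives
\[
\|\mathcal C_{\#}\mu_t-\Haar(m)\|_{\mathrm{TVD}}=\|\mathcal C_{\#}\mu_t-\mathcal C_{\#}\mu\|_{\mathrm{TVD}}\le\|\mu_t-\mu\|_{\mathrm{TVD}}.
\]
The only correction comes from inputs where $\mathcal C$ fails, namely where $\|A(t)\|_{\mathrm{op}}>1$.

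The main obstacle is bounding the probability of this failure event by $e^{-\Omega(m-n-2k)}$. Since $\|A_\star\|_{\mathrm{op}}\le 2\kappa\sqrt{kn}$ and $t$ obeys Eq.~\eqref{eq: bound for t}, we have
\[
\|A(t)\|_{\mathrm{op}}\le(1-t)\|A\|_{\mathrm{op}}+t\|A_\star\|_{\mathrm{op}}\le 1-t+O(1/m)\cdot\text{small}.
\]
So it suffices that $\|A\|_{\mathrm{op}}\le 1-\gamma$ for some $\gamma\gg t\|A_\star\|_{\mathrm{op}}$, except with probability $e^{-\Omega(m-n-2k)}$. For this I would use the Jacobi/MANOVA description of the singular values of a $p\times n$ truncation of a Haar unitary. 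When $m-n-p=\Omega(n)$, the largest singular value concentrates strictly below $1$, and large-deviation bounds for Jacobi ensembles, or the density factor $\det(I-A^\dagger A)^{m-n-p}$ from Appendix~\ref{appendix: section: translation invariance}, give the exponential tail. The same density-based bound underlies Lemma~\ref{lem:sparse-TV}, so I would reuse its boundary estimate.

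If a sharper constant is not available, I would simply absorb $\Pr_{\mu_t}[\|A(t)\|_{\mathrm{op}}>1]\le\|\mu_t-\mu\|_{\mathrm{TVD}}+\Pr_\mu[\|A\|_{\mathrm{op}}>1-\gamma]$ into the stated bound. This yields both the success-probability claim and the TVD inequality.
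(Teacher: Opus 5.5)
Your proposal is correct and is essentially the paper's proof. It uses the same completion $Z\mapsto\bigl(Z;\,W(I-Z^{\dagger}Z)^{1/2}\bigr)$ with $W$ uniform on the Stiefel manifold followed by a Haar-random basis of the orthogonal complement, exactness $\mathcal C_{\#}\mu=\Haar(m)$ via left/right invariance, data processing for the TVD, and a coupling through the deterministic map $A\mapsto(1-t)A+tA_\star$ plus the boundary-tail bound. The paper proves that bound via the Gaussian representation $Z=X_1(X^{\dagger}X)^{-1/2}$ with standard singular-value bounds rather than Jacobi-ensemble large deviations. Your cutoff at $\|Z\|_{\mathrm{op}}\le 1$, instead of the paper's $1-\zeta$, even removes the additive boundary term from the TVD bound, since $\mu$ is supported on $\|A\|_{\mathrm{op}}<1$.

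You should drop the fallback in your last paragraph. Bounding the failure probability by $\|\mu_t-\mu\|_{\mathrm{TVD}}+\Pr_\mu[\cdot]$ gives only a constant-level failure probability at the $t$-scale used in the reduction. The $1-e^{-\Omega(m-n-2k)}$ success guarantee genuinely requires the coupling estimate $\Pr_{\mu_t}[\|A(t)\|_{\mathrm{op}}>1]\le\Pr_\mu[\|A\|_{\mathrm{op}}>1-\gamma]$.
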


\begin{proof}
See Appendix~\ref{appendix: section: conditional-haar-completion}.
\end{proof}

\subsection{Proof of Theorem~\ref{thm: parity average case main}}\label{subsec: proof of parity average}

We finally prove Theorem~\ref{thm: parity average case main}.
To this end, we present the following quantitative statement.

\begin{theorem}
\label{thm:parity-no-fourier-AC}
Let $m=\alpha n$ for a fixed constant $\alpha>2$.
Then {\rm APE} problem in Definition~\ref{def: parityBS} is $\#\mathrm{P}$-hard under a ${\rm BPP^{NP}}$ reduction, for additive imprecision
\begin{align}
\varepsilon \leq \exp(-3n\log n - 1.01 n\log \alpha - O(n))  ,
\end{align}
and failure probability $\delta < 1/8$.
\end{theorem}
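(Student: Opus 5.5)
The plan mirrors the proof of Theorem~\ref{thm: threshold average case quantitative}, with the parity-specific ingredients substituted in. Let $\mathcal O$ be an APE oracle with imprecision $\varepsilon$ and failure probability $\delta<1/8$, and fix an arbitrary binary $X_0\in\{0,1\}^{k_0\times k_0}$. We set $n=\lceil k_0^{1/\lambda}\rceil$ for a fixed $0<\lambda<1$, so that $k_0=o(n)$, take $m=\alpha n$, and sample $x\sim\mathcal Y_{m,n}$ with $k=|x|$.

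\textbf{Typical outcome and encoding.} By Lemma~\ref{lemma: most parity outcomes have many odd parities}, with $\eta$ small (and $\eta\le 1/6$), we have $|k-\alpha n/(\alpha+2)|\le\eta n$ with probability $1-e^{-\Theta(n)}$. On this event:
\begin{itemize}
\item $k=\Theta(n)>k_0$;
\item the margin $m-n-2k=\Omega(n)$ holds by Eq.~\eqref{eq:parity-margin-rigorous};
\item the hypothesis of Lemma~\ref{lem:Q-nkk-lower-bound} holds.
\end{itemize}
We choose $E$ and the pairing $R=O\sqcup E$ as in Eq.~\eqref{eq: def of R} and build $A_\star\in\mathbb C^{2k\times n}$ by Lemma~\ref{lem:large-block-permanent-encoding}. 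We then draw $U\sim\Haar(m)$, set $A=U_{R,[n]}$ and $A(t)=(1-t)A+tA_\star$, and define $p(t):=c^{\mathrm{blk}}_{2n}(x,A(t),R)$. By Eq.~\eqref{eq:block-top-coefficient}, $p(t)$ is a polynomial in $t$ of degree at most $2n$, since each of the two permanents is multilinear of degree $n$ in the entries. Its endpoint value is $p(1)=\kappa^{2n}((n-k_0)!)^2Q_{n,k,k_0}\Per(X_0)^2$.

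\textbf{Estimating $p(t)$ and extrapolating.} We set $\Delta=c_\delta/(mn(1+\kappa\sqrt k))$ with $c_\delta$ small. By Lemma~\ref{lem:sparse-TV} and Lemma~\ref{lem:completion}, the completed unitary $U(t)$ satisfies $\|\mathcal H_t-\Haar(m)\|_{\mathrm{TVD}}\le (1/8-\delta)/2+e^{-\Theta(n)}$ for all $t\in[0,\Delta]$. Left multiplication by $V_R(\theta)$ preserves Haar measure, so each query $\mathcal O(x,V_R(\theta)U(t))$ errs by more than $\varepsilon P_x^{(\mathrm{avg})}$ with probability less than $1/8$. This bound sums four contributions: $\delta$, the atypical-$k$ probability, the completion failure probability, and the TVD terms.

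Since $P_x(V_R(\theta)U(t))$ has trigonometric degree at most $2n$ (Lemma~\ref{lem:bandlimit}), Theorem~\ref{thm:probabilistic-robust-fourier-extraction} gives $\widehat p(t)$ with error $\varepsilon P_x^{(\mathrm{avg})}e^{O(n)}$ with probability at least $3/4$. By Lemma~\ref{lem:rowfilter} and Eq.~\eqref{eq:block-top-equals-fourier}, this value depends only on $A(t)$ and not on the completion. Theorem~\ref{thm:probabilistic-robust-BW-extrapolation} applied on $[0,\Delta]$ then gives $\widehat p(1)$ with error $\varepsilon P_x^{(\mathrm{avg})}\exp(2n\log\Delta^{-1}+O(n))$.

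\textbf{Error budget.} We choose $\kappa=1/\sqrt n$. Then $\kappa\sqrt k\le1$, so $\Delta^{-1}=O(\alpha n^2)$ and $\exp(2n\log\Delta^{-1})\le\exp(4n\log n+2n\log\alpha+O(n))$. The remaining factors are:
\begin{itemize}
\item $\kappa^{-2n}=e^{n\log n}$;
\item $((n-k_0)!)^{-2}=e^{-2n\log n+O(n)}$, by Stirling's formula since $k_0=O(n^\lambda)$;
\item $Q_{n,k,k_0}^{-1}=e^{O(n)}$, by Lemma~\ref{lem:Q-nkk-lower-bound}.
\end{itemize}
Dividing by the known prefactor, the error in $\Per(X_0)^2$ is therefore
\begin{align}
\varepsilon' \le \varepsilon P_x^{(\mathrm{avg})}\exp\bigl(3n\log n+2n\log\alpha+O(n)\bigr).
\end{align}

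The remaining ingredient, and the one I expect to be the main technical point, is an upper bound on the typical Haar weight $P_x^{(\mathrm{avg})}$ in Eq.~\eqref{eq:main-parity-average}, the analogue of Lemma~\ref{lem:typical-threshold-Haar-weight-upper}. Set $j=(n-k)/2\approx n/(\alpha+2)$. We estimate the numerator $\binom{m+j-1}{j}\le (e(m+j)/j)^j=\exp(j\log\alpha+O(n))$ and the denominator $\binom{m+n-1}{n}\ge (m/n)^n=\alpha^n$. On the typical event this gives $P_x^{(\mathrm{avg})}\le\exp(-(1-\tfrac{1}{\alpha+2}-\eta)n\log\alpha+O(n))$.

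Combining, $\varepsilon'\le\varepsilon\exp(3n\log n+c_\alpha n\log\alpha+O(n))$ with an explicit constant $c_\alpha$. For fixed $\alpha$, every term linear in $n$ with an $\alpha$-dependent coefficient can be absorbed into $O(n)$. So the stated threshold $\varepsilon\le\exp(-3n\log n-1.01n\log\alpha-O(n))$ forces $\varepsilon'<1/2$ once the hidden constant is chosen suitably; I would track the $\log\alpha$ coefficient carefully but rely on this absorption if needed. Rounding then recovers $\Per(X_0)^2$ exactly with probability at least $2/3$, repetition with majority voting amplifies the success probability, and every step lies in $\mathrm{BPP}^{\mathrm{NP}}$.
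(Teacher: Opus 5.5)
Your proposal is correct and follows the paper's proof essentially step for step: the typical-event conditioning, the encoding via Lemma~\ref{lem:large-block-permanent-encoding}, the completion, the Fourier extraction followed by Berlekamp--Welch, and the same error budget. The differences are cosmetic: you take $\kappa=1/\sqrt n$ instead of $1/\sqrt k$, and you bound $P_x^{(\mathrm{avg})}$ directly via $\binom{m+j-1}{j}\le(e(m+j)/j)^j$ and $\binom{m+n-1}{n}\ge\alpha^n$ rather than invoking Lemma~\ref{lem:typical-parity-Haar-weight-upper}, which gives a coefficient $1+\frac{1}{\alpha+2}+\eta$ on $n\log\alpha$ that is absorbed into $O(n)$ for fixed $\alpha$ (indeed uniformly in $\alpha$ once $\eta\le 0.01$, since $\log\alpha/(\alpha+2)$ is bounded), so the stated $1.01$ is recovered.
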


\begin{proof}
Let $\mathcal O$ be an oracle for APE such that, on input $x\in \mathcal{P}_{m,n}$ and $U\in \mathrm{U}(m)$, it outputs an estimate $\mathcal{O}(x,U)$ of $P_x(U)$ satisfying
\begin{align}
\Prob_{x,U}
\left[
|\mathcal O(x,U)-P_x(U)|>\varepsilon P_x^{(\mathrm{avg})}
\right]
<\delta,
\label{eq:oracle-no-AC}
\end{align}
where the probability is taken over $x\sim \mathcal{Y}_{m,n}$ in Definition~\ref{def: random parity outcome ensemble} and $U\sim\Haar(m)$.
We follow the proof of Theorem~\ref{thm: threshold average case quantitative} and explain only the parity-specific ingredients.


Given $X_0\in\{0,1\}^{k_0\times k_0}$, choose any fixed $0<\lambda<1$ and set $n=\left\lceil k_0^{1/\lambda}\right\rceil$ and $m=\alpha n$.
We sample $x\sim\mathcal{Y}_{m,n}$ and set $k=|x|$.
By Lemma~\ref{lemma: most parity outcomes have many odd parities}, we may choose $0<\eta\leq\min\{0.01,1/(\alpha+2)\}$ sufficiently small so that
\begin{align}
\left|k-\frac{\alpha}{\alpha+2}n\right|
\leq
\eta n
\label{eq:typical-k-no-AC}
\end{align}
with probability at least $1-\exp(-\Theta(n))$, and this event implies
\begin{align}
k=\Theta(n),
\qquad
k>k_0,
\qquad
m-n-2k=\Omega(n).
\label{eq:margin-no-AC}
\end{align}
Conditioned on this event, let $O=\supp(x)$, choose $E\subseteq[m]\setminus O$ with $|E|=k$, set $R=O\sqcup E$, and fix a pairing between $O$ and $E$.
Construct $A_\star$ as in Lemma~\ref{lem:large-block-permanent-encoding}, sample $U\sim\Haar(m)$, set $A:=U_{R,[n]}$, and define $A(t):=(1-t)A+tA_\star$.
Then
\begin{align}
p(t):=c_{2n}^{\mathrm{blk}}(x,A(t),R)
\label{eq:p-polynomial-proof-no-AC}
\end{align}
is a polynomial of degree at most $2n$, and
\begin{align}
p(1)
=
\kappa^{2n}\left((n-k_0)!\right)^2Q_{n,k,k_0}\Per(X_0)^2
\label{eq:p1-normalized-no-AC}
\end{align}
by Lemma~\ref{lem:large-block-permanent-encoding}.
These are the parity counterparts of Eqs.~\eqref{eq:tbs-proof-p-def} and~\eqref{eq:tbs-proof-p1} in the threshold proof.

We now apply the same completion and extraction argument as in the proof of Theorem~\ref{thm: threshold average case quantitative}, replacing the square threshold block and margin by the rectangular parity block $U_{R,[n]}$ and the margin in Eq.~\eqref{eq:margin-no-AC}.
For each fixed $t$, Lemma~\ref{lem:completion} produces a unitary $U(t)$ containing $A(t)$ as its $2k \times n$ submatrix, except with exponentially small failure probability.
Choose
\begin{align}
\Delta
:=
\frac{c_\delta}{mn(1+\kappa\sqrt{k})}
\end{align}
for a sufficiently small constant $c_\delta>0$.
Lemmas~\ref{lem:sparse-TV} and~\ref{lem:completion}, together with the oracle guarantee in Eq.~\eqref{eq:oracle-no-AC}, imply that the rotated probabilities $P_x(V_R(\theta)U(t))$ can be estimated with additive imprecision $\varepsilon P_x^{(\mathrm{avg})}$ and failure probability below $1/8$ for every fixed $t\in[0,\Delta]$ and $\theta$.
Applying Theorem~\ref{thm:probabilistic-robust-fourier-extraction} followed by Theorem~\ref{thm:probabilistic-robust-BW-extrapolation} therefore gives an estimate $\widehat p(1)$ satisfying
\begin{align}
\Prob\left[
|\widehat p(1)-p(1)|
>
\varepsilon P_x^{(\mathrm{avg})}
\exp(2n\log\Delta^{-1}+O(n))
\right]
<
\frac13.
\label{eq:BW-final-prob-main}
\end{align}
Thus, after dividing by the known factor in Eq.~\eqref{eq:p1-normalized-no-AC}, the additive imprecision in the estimate of $\Per(X_0)^2$ is
\begin{align}
\varepsilon'
=
\frac{
\varepsilon P_x^{(\mathrm{avg})}
\exp(2n\log\Delta^{-1}+O(n))
}{
\kappa^{2n}\left((n-k_0)!\right)^2Q_{n,k,k_0}
}.
\label{eq: def of varepsilon'}
\end{align}

It remains to bound the parity-specific factors in Eq.~\eqref{eq: def of varepsilon'}.
Choose $\kappa=1/\sqrt{k}$.
Since $k=\Theta(n)$ on the event in Eq.~\eqref{eq:typical-k-no-AC},
\begin{align}
\Delta^{-1}=O(\alpha n^2),
\qquad
\kappa^{-2n}=\exp(n\log n+O(n)).
\end{align}
Moreover, Lemma~\ref{lem:Q-nkk-lower-bound} gives $Q_{n,k,k_0}^{-1}=e^{O(n)}$, while Stirling's formula gives
\begin{align}
\left((n-k_0)!\right)^2
=
\exp(2n\log n-O(n)).
\end{align}
Finally, Lemma~\ref{lem:typical-parity-Haar-weight-upper} in Appendix~\ref{appendix: section: upper bound of avg parity} gives
\begin{align}
P_x^{(\mathrm{avg})}
\le
\exp(-(1-\eta)n\log\alpha+O(n)).
\end{align}
Consequently,
\begin{align}
\varepsilon'
&\leq
\varepsilon
\exp\left(3n\log n+(1+\eta)n\log\alpha+O(n)\right)
\nonumber\\
&\leq
\varepsilon
\exp\left(3n\log n+1.01n\log\alpha+O(n)\right).
\end{align}

Therefore, if
\begin{align}
\varepsilon
\leq
\exp(-3n\log n-1.01n\log\alpha-O(n)),
\label{eq:epsilon bound final}
\end{align}
then the constant hidden in $O(n)$ can be chosen so that $\varepsilon'<1/2$, and rounding recovers $\Per(X_0)^2$ exactly.
The same repetition and majority-voting argument as in the threshold proof boosts the success probability, and every step is polynomial time with access to an NP oracle.
Hence the overall reduction lies in ${\rm BPP^{NP}}$.
\end{proof}

\section{Generalized Stockmeyer's reduction}\label{section: stockmeyer}

We have established average-case hardness of output-probability estimation for TBS and PBS in the regime $m=\Theta(n)$. 
A natural next question is whether these results can be translated into classical hardness results for the corresponding approximate-sampling tasks.

The standard route from probability-estimation hardness to approximate-sampling hardness is a reduction based on Stockmeyer’s approximate-counting algorithm~\cite{stockmeyer1985approximation}, as exploited in prior works~\cite{aaronson2011computational, bouland2019complexity, movassagh2023hardness, bouland2022noise, bouland2026complexity, bouland2025exponential, go2026computational}.
We therefore generalize this reduction to our TBS and PBS settings.
We first state the reduction for TBS: if there exists an approximate sampler for TBS within a suitably bounded total variation distance, then ATE in Definition~\ref{def: thresholdBS} can be solved by a $\rm{BPP^{NP}}$ procedure.

\begin{theorem}[Generalized Stockmeyer reduction]\label{thm: revised stockmeyers}
Suppose there exists a classical sampler whose output distribution $\widetilde{T}_x(U)$ satisfies, for every input unitary $U$, 
\begin{align}
    \frac{1}{2}
    \sum_{x\in\mathcal{T}_{m,n}}
    \left|
    T_x(U)-\widetilde{T}_x(U)
    \right|
    \leq
    \xi   .
\end{align}
Then for any polynomial $f(n)=\poly(n)$ and any failure probability $\delta>0$, there exists a $\mathrm{BPP}^{\mathrm{NP}}$ procedure that solves {\rm ATE} in Definition~\ref{def: thresholdBS} with failure probability at most $\delta$ and additive imprecision level
\begin{align}\label{eq: epsilon in Stockmeyer}
    \varepsilon
    =
    \frac{1}{\delta}
    \left(
    4\xi+\frac{1}{f(n)}
    \right).
\end{align}
\end{theorem}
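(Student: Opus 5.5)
The proof follows the standard Stockmeyer argument of Aaronson--Arkhipov and Bouland et al. The new ingredient is that the error is normalized by the Haar-averaged outcome weight $T_x^{(\mathrm{avg})}$ rather than by a uniform weight. On input $(x,U)$, we use Stockmeyer's approximate counting algorithm, which runs in $\mathrm{BPP}^{\mathrm{NP}}$. It is applied to the classical sampler's random-string acceptance count, that is, the fraction of random seeds $r$ for which the sampler on input $U$ outputs $x$. This returns an estimate $\hat T$ with
\begin{align}
|\hat T-\widetilde T_x(U)|\le \frac{1}{f(n)}\,\widetilde T_x(U)
\end{align}
with failure probability exponentially small, for any polynomial $f$. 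It then remains to bound $|\widetilde T_x(U)-T_x(U)|$ for typical $(x,U)$. Note that Stockmeyer's multiplicative error is relative to $\widetilde T_x(U)$, so we will also need to control $\widetilde T_x(U)$ against $T_x^{(\mathrm{avg})}$.

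The key step is a Markov inequality under the distribution $x\sim\mathcal D_{m,n}$. Because the total-variation bound holds for every $U$, we compute
\begin{align}
\E_{x\sim\mathcal D_{m,n}}\!\left[\frac{|T_x(U)-\widetilde T_x(U)|}{T_x^{(\mathrm{avg})}}\right]
=\sum_{x}|T_x(U)-\widetilde T_x(U)|\le 2\xi .
\end{align}
The cancellation of the weight is exactly why $\mathcal D_{m,n}$ has mass function $T_x^{(\mathrm{avg})}$. Likewise, $\E_x[T_x(U)/T_x^{(\mathrm{avg})}]=\sum_x T_x(U)=1$ and $\E_x[\widetilde T_x(U)/T_x^{(\mathrm{avg})}]=1$. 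Taking the expectation over $U\sim\Haar(m)$ as well, Markov's inequality gives
\begin{align}
\Pr_{x,U}\!\left[|T_x(U)-\widetilde T_x(U)|>\tfrac{4\xi}{\delta}T_x^{(\mathrm{avg})}\right]\le \tfrac{\delta}{2},
\qquad
\Pr_{x,U}\!\left[\widetilde T_x(U)>\tfrac{2}{\delta}T_x^{(\mathrm{avg})}\right]\le\tfrac{\delta}{2}.
\end{align}

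Combining the two bounds, outside an event of probability at most $\delta$ (plus the negligible Stockmeyer failure probability, which we absorb by a slight adjustment of constants or by running Stockmeyer to precision $1/(2f)$), the triangle inequality yields
\begin{align}
|\hat T-T_x(U)|\le \frac{\widetilde T_x(U)}{f(n)}+|\widetilde T_x(U)-T_x(U)|\le \frac{1}{\delta}\!\left(4\xi+\frac{2}{f(n)}\right)T_x^{(\mathrm{avg})}.
\end{align}
Replacing $f$ by $2f$ gives exactly the imprecision stated in Eq.~\eqref{eq: epsilon in Stockmeyer}.

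The main obstacle is bookkeeping of the failure budget so that the constants come out as $4\xi$ and $1/f$ with total failure at most $\delta$. This likely requires splitting $\delta$ unevenly, for instance assigning $\delta/2$ to the total-variation Markov step with threshold $4\xi/\delta$, and assigning the remainder to the $\widetilde T$ bound and the Stockmeyer failure. A secondary check is that Stockmeyer applies, since $\widetilde T_x(U)$ is the acceptance probability of a polynomial-time classical sampler with a fixed output. This holds by the definition of a classical sampler.
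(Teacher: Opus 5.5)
Your proposal is correct and follows the paper's proof essentially step for step: two Markov inequalities under $x\sim\mathcal D_{m,n}$ (using that $\mathcal D_{m,n}$ has mass $T_x^{(\mathrm{avg})}$, so the weights cancel), Stockmeyer run at relative accuracy $1/(2f(n))$, and a union bound with thresholds $2\xi k_1$ and $k_2/(2f(n))$ at $k_1=k_2=2/\delta$. That choice is exactly your even $\delta/2+\delta/2$ split, so no uneven allocation is needed; the paper does not account for Stockmeyer's own exponentially small failure probability, so you are right to flag it, though absorbing it within a budget of exactly $\delta$ requires a little slack (for example, using slightly larger $k_1,k_2$), not the precision change to $1/(2f)$.
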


We then state the corresponding corollary for the PBS setting and the APE problem.

\begin{corollary}[Stockmeyer reduction for PBS]\label{cor: companion stockmeyer PBS}
Suppose there exists a classical sampler whose output distribution $\widetilde{P}_x(U)$ satisfies, for every input unitary $U$,
\begin{align}
    \frac{1}{2}
    \sum_{x\in\mathcal{P}_{m,n}}
    \left|
    P_x(U)-\widetilde{P}_x(U)
    \right|
    \leq
    \xi   .
\end{align}
Then for any polynomial $f(n)=\poly(n)$ and any failure probability $\delta>0$, there exists a $\mathrm{BPP}^{\mathrm{NP}}$ procedure that solves {\rm APE} in Definition~\ref{def: parityBS} with failure probability at most $\delta$ and additive imprecision level $\varepsilon$ in Eq.~\eqref{eq: epsilon in Stockmeyer}.
\end{corollary}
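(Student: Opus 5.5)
The plan is to reuse the proof of Theorem~\ref{thm: revised stockmeyers} essentially verbatim, since that argument never uses any TBS-specific structure. It only needs four ingredients: (i) a sampler that is $\xi$-close in total variation to the true output distribution; (ii) an efficiently samplable typical-outcome ensemble whose mass function equals the Haar-averaged output probability; (iii) the fact that output distributions are insensitive to relabelings that preserve Haar measure, so a hidden outcome can be planted; and (iv) Stockmeyer approximate counting in $\mathrm{BPP}^{\mathrm{NP}}$. I would verify each ingredient for PBS with $\mathcal{T}_{m,n},T_x,\mathcal{D}_{m,n},T_x^{(\mathrm{avg})}$ replaced by $\mathcal{P}_{m,n},P_x,\mathcal{Y}_{m,n},P_x^{(\mathrm{avg})}$.

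\textbf{Approximate counting.} Given an APE instance $(x,U)$ with $x\sim\mathcal{Y}_{m,n}$ and $U\sim\Haar(m)$, the sampler is a polynomial-time randomized algorithm. So $\widetilde P_x(U)$ is the acceptance probability of a polynomial-time machine, namely ``run the sampler on $U$ and accept iff its output equals $x$''. Stockmeyer's algorithm then yields, in $\mathrm{BPP}^{\mathrm{NP}}$, an estimate $\hat P$ with $|\hat P-\widetilde P_x(U)|\le \widetilde P_x(U)/f(n)$ except with exponentially small probability. Hence
\[
|\hat P-P_x(U)|\le (1+1/f)\,|\widetilde P_x(U)-P_x(U)|+P_x(U)/f .
\]

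\textbf{Averaging over $x$.} This step needs care because $\mathcal{Y}_{m,n}$ is not uniform, so I would use the relative error $|\widetilde P_x-P_x|/P_x^{(\mathrm{avg})}$. Its expectation over $x\sim\mathcal{Y}_{m,n}$ is
\[
\sum_x P_x^{(\mathrm{avg})}\,\frac{|\widetilde P_x(U)-P_x(U)|}{P_x^{(\mathrm{avg})}}=\sum_x|\widetilde P_x(U)-P_x(U)|\le 2\xi .
\]
The weighting therefore cancels exactly, which is why the ensemble was defined by Haar-averaged weights. Likewise,
\[
\E_{x,U}\!\left[P_x(U)/P_x^{(\mathrm{avg})}\right]=\sum_x \E_U[P_x(U)]=\sum_x P_x^{(\mathrm{avg})}=1 .
\]
By Markov's inequality, the total relative error exceeds $\frac{1}{\delta}(4\xi+1/f)$ with probability at most $\delta$, after absorbing constants (e.g.\ $(1+1/f)\le 2$) and the Stockmeyer failure.

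\textbf{Independence of sampler error and $x$.} This is the main obstacle. The averaging above is valid only if $x$ is independent of $U$ from the sampler's point of view; otherwise the sampler could concentrate its error on the queried outcome. For fixed $U$ the bound is an average over $x$ drawn independently of $U$, which is exactly the APE ensemble, so this holds. If the sampler's guarantee were worst-case over $U$ but adversarial in $x$, I would instead hide $x$. Parity outcomes are permutation-covariant: $P_{\pi x}(U)=P_x(P_\pi U)$, and $P_\pi U$ is again Haar. So I would query the sampler on $P_\pi U$ with $\pi$ uniformly random, making the queried string uniformly distributed within its weight class and independent of the sampler's view. Since $P_x^{(\mathrm{avg})}$ depends only on $|x|$, the Markov bound above goes through unchanged. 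The conclusion is an APE solver with the imprecision of Eq.~\eqref{eq: epsilon in Stockmeyer}.
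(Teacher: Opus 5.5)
Your argument is correct and is essentially the paper's proof. The paper likewise fixes $U$, uses the fact that $\mathcal{Y}_{m,n}$ has mass function $P_x^{(\mathrm{avg})}$ so that the weighted relative error sums to the total variation distance, and converts Stockmeyer's multiplicative estimate into an additive one via Markov's inequality. The only difference is bookkeeping: the paper applies two Markov bounds, one to $|P_x-\widetilde P_x|/P_x^{(\mathrm{avg})}$ and one to $\widetilde P_x/P_x^{(\mathrm{avg})}$, with $k_1=k_2=2/\delta$ and a union bound, whereas you apply a single Markov bound to the combined error. Your permutation-hiding step is unnecessary, as the paper itself remarks: APE already draws $x$ independently of $U$ and the sampler never sees $x$, so no reverse-embedding is required.
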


The proof of Theorem~\ref{thm: revised stockmeyers} is provided in Appendix~\ref{appendix: section: stockmeyer}. 
Although the proof closely parallels that of~\cite{bouland2026complexity}, we include it here to keep the analysis self-contained.
The same result also applies to the PBS setting upon replacing the threshold probabilities and outcome ensemble with their parity counterparts.

From Eq.~\eqref{eq: epsilon in Stockmeyer} in Theorem~\ref{thm: revised stockmeyers}, for any $\varepsilon, \delta = \poly(n)^{-1}$, one can choose $\xi=\poly(n)^{-1}$ by taking $f(n)$ to be a sufficiently large polynomial.
Hence, by Theorem~\ref{thm: revised stockmeyers} and Toda's theorem $\rm{PH} \subseteq {\rm P^{\# P}}$~\cite{toda1991pp}, improving the imprecision bound in Theorem~\ref{thm: threshold average case main} to $\varepsilon = \poly(n)^{-1}$ would imply approximate-sampling hardness for TBS within total variation distance $\xi = \poly(n)^{-1}$, unless the polynomial hierarchy collapses to $\mathrm{BPP}^{\mathrm{NP}}$.
An analogous conclusion holds for PBS: strengthening Theorem~\ref{thm: parity average case main} to inverse-polynomial imprecision would, through the same reduction, establish approximate-sampling hardness for PBS.

Therefore, we arrive at the following conclusion.

\begin{theorem}\label{thm: hardness of APE implies sampling hardness}
    Suppose {\rm ATE} (\,{\rm APE}) in Definition~\ref{def: thresholdBS} (Definition~\ref{def: parityBS}) is proven to be $\#\mathrm{P}$-hard for $\varepsilon,\delta = \poly(n)^{-1}$ under any finite $\mathrm{PH}$ reduction.
    Then no classical algorithm can simulate TBS (PBS) within $\poly(n)^{-1}$ total variation distance, unless $\mathrm{PH}$ collapses to a finite level. 
\end{theorem}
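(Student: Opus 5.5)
We argue by contraposition, combining Theorem~\ref{thm: revised stockmeyers} (respectively Corollary~\ref{cor: companion stockmeyer PBS}) with Toda's theorem. We treat TBS; the PBS case is identical after replacing $T_x$, $\mathcal D_{m,n}$ and ATE by $P_x$, $\mathcal Y_{m,n}$ and APE.

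Suppose ATE is $\#\mathrm{P}$-hard under some finite-level PH reduction for imprecision $\varepsilon=1/p_1(n)$ and failure probability $\delta=1/p_2(n)$, for polynomials $p_1,p_2$. Suppose also, for contradiction, that there is a polynomial-time classical sampler achieving total variation distance $\xi$ on every input $U$. We choose $\xi=\delta/(8p_1(n))$ and $f(n)=2p_1(n)p_2(n)$. Both are inverse-polynomial, so $\xi$ lies in the range the hypothesis excludes. Substituting into Eq.~\eqref{eq: epsilon in Stockmeyer} gives
\[
\frac{1}{\delta}\left(4\xi+\frac{1}{f(n)}\right)=\frac{1}{2p_1(n)}+\frac{1}{2p_1(n)}=\varepsilon .
\]
Theorem~\ref{thm: revised stockmeyers} then produces a $\mathrm{BPP}^{\mathrm{NP}}$ procedure that solves ATE at exactly these parameters. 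The classical sampler runs inside the BPP machine, and Stockmeyer approximate counting is delegated to the NP oracle.

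Next we compose with the assumed hardness reduction. By hypothesis, $\mathrm{P}^{\#\mathrm{P}}$ is contained in a finite level $\Sigma_j$ of PH relativized to an ATE oracle. Replacing that oracle with the $\mathrm{BPP}^{\mathrm{NP}}$ procedure places $\mathrm{P}^{\#\mathrm{P}}$ inside a finite level of PH. This uses $\mathrm{BPP}\subseteq\Sigma_2\cap\Pi_2$ and its relativized version (Sipser--Gács--Lautemann), which give $\mathrm{BPP}^{\mathrm{NP}}\subseteq\Sigma_3$. We then apply Toda's theorem, $\mathrm{PH}\subseteq\mathrm{P}^{\#\mathrm{P}}$, to conclude that PH collapses to a finite level, which is the desired contradiction.

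The main subtlety is the oracle composition. The ATE guarantee is only an average-case guarantee with failure probability $\delta$, so the hardness reduction must tolerate any oracle that meets the ATE specification. We take this to be part of what ``$\#\mathrm{P}$-hard under a PH reduction'' means, as it is in the proof of Theorem~\ref{thm: threshold average case quantitative}. The Stockmeyer-based solver meets that specification, so the substitution is legitimate. A second point is that the randomness of the BPP solver sits inside the PH machine. We handle it with the standard amplification plus Sipser--Lautemann argument at each oracle call: by majority voting we amplify the success probability so that errors across polynomially many adaptive queries are negligible, and only then derandomize into alternations. That step is routine but is where care is needed.
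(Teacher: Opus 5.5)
Your proposal is correct and follows essentially the same route as the paper. You apply the generalized Stockmeyer reduction (Theorem~\ref{thm: revised stockmeyers}, resp.\ Corollary~\ref{cor: companion stockmeyer PBS}) to obtain a $\mathrm{BPP}^{\mathrm{NP}}$ solver for ATE (APE) at inverse-polynomial $\varepsilon,\delta$, substitute it into the assumed finite-level PH reduction, and invoke Toda's theorem to collapse PH. Your explicit choice $\xi=\delta/(8p_1(n))$, $f(n)=2p_1(n)p_2(n)$ makes concrete the paper's remark that $f$ can be taken to be a sufficiently large polynomial, and your Sipser--G\'acs--Lautemann placement of $\mathrm{BPP}^{\mathrm{NP}}$ in $\Sigma_3$ spells out the paper's one-line collapse $\mathrm{PH}\subseteq\mathrm{BPP}^{\Sigma_k^{\mathrm{P}}}$.
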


This identifies the improvement of the allowable imprecision bound in the \#P-hardness results for ATE and APE as the remaining key challenge toward establishing full classical hardness of TBS and PBS.

\section{Alternative approach: leading-coefficient extraction}\label{section: alternative approach}

This section presents an alternative approach that can improve the imprecision bound of our average-case hardness results.
Throughout our main proofs, the worst-case instance is introduced via a $\Theta(n)$-dimensional block perturbation, which, in turn, makes the relevant Fourier coefficient a polynomial of degree $\Theta(n)$.
Polynomial interpolation followed by evaluation at the endpoint therefore incurs a precision loss of $\exp(\Theta(n\log n))$.

Here, we develop an alternative reduction strategy, inspired by~\cite{bouland2025exponential}, in which the worst-case instance is introduced through a more ``dilute'' lower-dimensional block perturbation.
As a result, the relevant Fourier coefficient becomes a lower-degree polynomial whose leading coefficient equals the target worst-case \#P-hard quantity multiplied by another Fourier coefficient.
One can therefore extract this leading coefficient directly, rather than polynomial interpolation followed by endpoint evaluation.
Under an anti-concentration assumption for the remaining Fourier coefficient, the approach yields an exponential improvement in the allowable additive imprecision.

We first state the result for TBS, which is the principal setting of this work.
The precise threshold Fourier-coefficient anti-concentration conjecture, including the randomized partial-rotation rule, is stated in Appendix~\ref{appendix: leading coefficient extraction}.

\begin{theorem}[Average-case hardness for TBS under anti-concentration]\label{thm:main-leading}
Fix any constant $0 < \lambda < 1$, and let $m=\alpha n$ for a fixed $\alpha>2$.
Assume that the threshold Fourier coefficient is anti-concentrated with respect to $T_{x}^{(\mathrm{avg})}$ over $x\sim\mathcal{D}_{m,n}$, $U\sim\Haar(m)$.
Then, the {\rm ATE} problem in Definition~\ref{def: thresholdBS} is $\#\mathrm{P}$-hard under a $\rm{BPP}$ reduction, with additive imprecision $\varepsilon=\exp(-O(n^{\lambda}))$ and failure probability $\delta=\poly(n)^{-1}$.
\end{theorem}

Note that the above imprecision bound is comparable to the bound achieved in~\cite{bouland2025exponential} for BosonSampling in the dilute regime, under a similar type of anti-concentration assumption that remains open.
A precise statement of the anti-concentration assumption and a detailed proof of the conditional result are provided in Appendix~\ref{appendix: leading coefficient extraction}.

\begin{proof}[Proof Sketch of Theorem~\ref{thm:main-leading}]
The proof proceeds as in the TBS reduction developed in Sec.~\ref{Section: threshold BS}, except that endpoint extrapolation by polynomial interpolation is replaced with leading-coefficient extraction.
Given a worst-case instance $X_0$ of size $k_0=O(n^{\lambda/2})$, sample a typical threshold outcome $x\in\mathcal T_{m,n}$ with $k=|x|$, and choose the partial-rotation decomposition
\begin{align}
    \supp(x)=Q\sqcup L,
    \qquad
    |Q|=2k-n,
    \qquad
    |L|=n-k,
\end{align}
with the rotated set $R=L\sqcup E$ as in Sec.~\ref{subsec: threshold fourier}.
Choose a subset $H\subseteq Q$ with $|H|=k_0$, set
\begin{align}
    n':=n-k_0,
    \qquad
    Q':=Q\setminus H,
\end{align}
and let $x'\in\mathcal T_{m,n'}$ be the residual threshold outcome obtained by removing the clicks in $H$ from $x$.
Then
\begin{align}
    2n'-2|x'|
    =
    2n-2|x|,
\end{align}
so the original and residual threshold probabilities use the same top Fourier frequency under the same rotated set $R$.

For a worst-case matrix $X_0\in\{0,1\}^{k_0\times k_0}$, introduce a sparse perturbation $A_{X_0}\in\mathbb C^{n\times n}$ supported only on the rows indexed by $H$ and on $k_0$ selected input columns.
For the Haar-random block $A=U_{W,[n]}$ appearing in the TBS Fourier coefficient, set $A(t)=A+tA_{X_0}$ and let $U(t)$ be a unitary completion of $A(t)$.
Define $N:=2n-2k$.
The single-permanent-collapse lemma in Appendix~\ref{appendix: leading coefficient extraction} then gives
\begin{align}
    [t^{2k_0}]\,\tau_{2N}(x,U(t),R)
    =
    m^{-k_0}\Per(X_0)^2
    \tau_{2N}(x',U,R).
\label{eq:main-tbs-leading-collapse}
\end{align}
Hence, to compute $\Per(X_0)^2$, we first estimate $\tau_{2N}(x,U(t),R)$ from average-case estimates of $T_x(V_R(\theta)U(t))$ via Fourier-coefficient extraction, and then estimate the coefficient of $t^{2k_0}$ via polynomial-coefficient extraction.
We separately estimate the denominator $\tau_{2N}(x',U,R)$ from the residual threshold instance $x'$.
The threshold residual-law lemma shows that the actual residual instance is dominated by the corresponding typical $n'$-photon TBS ensemble $\mathcal{D}_{m,n'}$.
Moreover, the Haar-average weights satisfy ${T_x^{(\mathrm{avg})}}/{T_{x'}^{(\mathrm{avg})}} = \alpha^{-k_0}e^{o(1)}$ for the typical $x \sim \mathcal{D}_{m,n}$ used in the reduction.
Together with the threshold Fourier-coefficient anti-concentration conjecture, these estimates imply that the ratio of the extracted leading coefficient and the residual Fourier coefficient estimates $m^{-k_0}\Per(X_0)^2$ with overall error amplification at most $\exp(O(k_0\log n))$.
Consequently, $\varepsilon=\exp(-O(n^\lambda))$ suffices to compute $\Per(X_0)^2$ exactly.
\end{proof}

For completeness, the same leading-coefficient method also gives the following analogous result for PBS under the parity Fourier-coefficient anti-concentration conjecture stated in Appendix~\ref{appendix: leading coefficient extraction}.

\begin{theorem}[Average-case hardness for PBS under anti-concentration]\label{thm:main-leading-parity}
Fix any constant $0 < \lambda < 1$, and let $m=\alpha n$ for a fixed $\alpha>2$.
Assume that the Fourier coefficient of PBS is anti-concentrated with respect to $P_{x}^{(\mathrm{avg})}$ over $x\sim\mathcal{Y}_{m,n}$ and $U\sim\Haar(m)$.
Then, the {\rm APE} problem in Definition~\ref{def: parityBS} is $\#\mathrm{P}$-hard under a $\rm{BPP}$ reduction, with additive imprecision $\varepsilon=\exp(-O(n^{\lambda}))$ and failure probability $\delta=\poly(n)^{-1}$.
\end{theorem}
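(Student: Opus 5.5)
The plan is to mirror the proof sketch of Theorem~\ref{thm:main-leading}, using the parity top-coefficient row filter of Lemma~\ref{lem:rowfilter} in place of the threshold partial rotation.

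\textbf{Setup and residual instance.} Given $X_0\in\{0,1\}^{k_0\times k_0}$ with $k_0=O(n^{\lambda/2})$, sample $x\sim\mathcal Y_{m,n}$. By Lemma~\ref{lemma: most parity outcomes have many odd parities}, $k=|x|$ is close to $\alpha n/(\alpha+2)$ and $m-n-2k=\Omega(n)$. Form $R=O\sqcup E$ with a pairing $\{o_j,e_j\}$, and choose $k_0$ hard pairs $H=\{o_1,\dots,o_{k_0}\}$ with partners $\{e_1,\dots,e_{k_0}\}$. Define the residual instance with $n'=n-k_0$ photons, the parity string $x'$ obtained by clearing the bits in $H$, and the residual rotated set $R'=R\setminus\{o_j,e_j\}_{j\le k_0}$. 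Since $|x'|=k-k_0\equiv n'\pmod 2$, $x'$ is admissible.

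\textbf{Sparse perturbation and leading coefficient.} Perturb $A=U_{R,[n]}$ by $tA_{X_0}$, where $A_{X_0}$ places $X_0$ on rows $H$ and $k_0$ chosen input columns (say the first $k_0$), and is zero elsewhere. Then $c_{2n}(x,U(t),R)$ is a polynomial in $t$ of degree at most $2k_0$, because each permanent contains at most $k_0$ rows that can pick up a $t$-entry in the hard columns.

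The key step is a collapse lemma:
\[
[t^{2k_0}]\,c_{2n}(x,U(t),R)=\gamma\,\Per(X_0)^2\,c_{2n}(x',U_{\mathrm{res}},R').
\]
Here $\gamma$ is a known scale factor, and $U_{\mathrm{res}}$ is a residual Haar-like unitary on the remaining rows and columns. The argument runs as follows:
\begin{itemize}
\item Maximal $t$-degree requires that the $k_0$ hard columns are covered exactly by $t$-rows. Each such row comes from $V_R^\pm$ acting on hard rows $o_j$, which contribute $X_0/2$ on rows $o_j$ and $\pm iX_0/2$ on rows $e_j$.
\item The parity condition forces $s_{o_j}$ odd. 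Since the top-degree term allows only one $t$-row per hard column, we get $s_{o_j}=1$ and $s_{e_j}=0$, or possibly $s_{o_j}=1$ plus $e_j$-contributions, which parity excludes because $s_{e_j}$ must be even.
\item Laplace expansion then factors each permanent as $\Per(X_0)$ times a residual permanent.
\item The phases cancel between the $+$ and $-$ factors, as in Lemma~\ref{lem:large-block-permanent-encoding}.
\end{itemize}
I would also check whether the untouched non-hard columns' dependence on rows $H$ drops out at top degree. I expect the scaling ($m^{-k_0}$-type, from normalizing $A_{X_0}$ so that the perturbation stays TV-close via Lemma~\ref{lem:sparse-TV} with $t=O(1)$ range) to be routine.

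\textbf{Estimation and amplification.} Estimate $c_{2n}(x,U(t),R)$ at $O(k_0)$ points $t$ on a circle or interval of constant radius using Theorem~\ref{thm:probabilistic-robust-fourier-extraction}. Then extract the $t^{2k_0}$ coefficient by robust coefficient extraction, for instance Theorem~\ref{thm:probabilistic-robust-fourier-extraction} again with $t=re^{i\phi}$ treated as a trigonometric polynomial in $\phi$. This costs $e^{O(n)}$ from the $\theta$-extraction and $e^{O(k_0\log n)}$ from the $t$-extraction.

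Separately estimate the denominator $c_{2n'}(x',U_{\mathrm{res}},R')$ via oracle queries on the $n'$-photon instance. This requires a residual-law lemma: the distribution of $(x',U_{\mathrm{res}})$ is dominated, up to $\poly$ factors, by $\mathcal Y_{m,n'}\times\Haar$. It also requires the Haar-weight ratio $P_x^{(\mathrm{avg})}/P_{x'}^{(\mathrm{avg})}=\alpha^{-\Theta(k_0)}e^{o(1)}$, which follows from Eq.~\eqref{eq:main-parity-average} by Stirling. The parity anti-concentration assumption says $|c_{2n'}|\ge P_{x'}^{(\mathrm{avg})}e^{-O(n)}/\poly$ with high probability. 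Dividing then gives $\gamma\Per(X_0)^2$ with relative error $\varepsilon\,e^{O(n)+O(k_0\log n)}$.

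Here lies a subtlety. The $e^{O(n)}$ from $\theta$-extraction would defeat $\varepsilon=\exp(-O(n^\lambda))$. I would therefore need to argue, as the conjecture in the appendix presumably phrases it, that anti-concentration is stated relative to the Fourier coefficient's natural scale $P^{(\mathrm{avg})}$ with the extraction overhead absorbed. Alternatively, I would use a rotated set of size $O(k_0)$ only, so that the trigonometric degree is $O(k_0)$. This second option seems the intended fix: rotate only the hard pairs plus $O(k_0)$ others.

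I expect the main obstacles to be exactly this degree bookkeeping and the residual-law domination. Once they are settled, Berlekamp--Welch is no longer needed, the reduction is in BPP, and rounding recovers $\Per(X_0)^2$.
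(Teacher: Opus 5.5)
Your collapse identity is false in your setup. You keep the hard pairs $\{o_j,e_j\}_{j\le k_0}$ inside the rotated set and take $c_{2n}(x,U(t),R)$. That filter forces all $n$ photons into $R$, but it does not control how many sit in the hard pairs.

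Top $t$-degree only requires the $k_0$ hard columns to be covered by $t$-parts of hard-pair rows. Any further copies of $o_j$ or $e_j$ rows simply take their $U$-part, which for $A+tA_{X_0}$ is a generic Haar row rather than zero. The vanishing argument of Lemma~\ref{lem:large-block-permanent-encoding} relied on $A_\star$ being zero off the hard columns, and that no longer holds. Parity does not rescue you either, since $s_{e_j}=2$ or $s_{o_j}=3$ is allowed.

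Already for $k_0=1$, a pattern with $s_{o_1}=1$, $s_{e_1}=2$ contributes a generically nonzero term to $[t^2]$. That term involves $U_{o_1},U_{e_1}$ and only $n-3$ photons in $R'$. For $k_0\ge 2$, taking two $t$-rows from one pair produces permanents of $X_0$ with a duplicated row. So $[t^{2k_0}]c_{2n}$ is not $\gamma\Per(X_0)^2$ times a residual coefficient.

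The missing idea (Lemma~\ref{lem:collapse}) is to leave the hard pairs \emph{unrotated}. Rotate only $R'=R\setminus J$ and extract the frequency-$2n'$ coefficient $c_{2n'}(x,U(t),R')$. The filter then forces $n_{R'}(S)\ge n'$, i.e.\ at most $k_0$ photons outside $R'$. Since each mode of $J_O$ needs an odd, hence positive, occupation, this gives $s_{o_j}=1$, $s_{e_j}=0$, and nothing else lies outside $R'$. The hard rows then carry no phases. Placing $X_0$ in the last $k_0$ columns, so the residual uses input columns $[n']$, makes the residual literally $c_{2n'}(x',U,R')$ for the same $U$, with $\gamma=\kappa^{2k_0}$.

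Your fixes for the $e^{\Theta(n)}$ cost of Theorem~\ref{thm:probabilistic-robust-fourier-extraction} also fail. The anti-concentration conjecture lower-bounds $|c_{2n'}(x',U,R')|$ by $P_{x'}^{(\mathrm{avg})}/\poly(n)$; it cannot absorb an amplification of $\varepsilon$. Shrinking the rotated set does not shrink the degree either. In the parity setting the trigonometric degree is $2\max_S n_{R_s}(S)=2(n-|O\setminus R_s|)\ge 2(n-k)=\Theta(n)$, because modes can be multiply occupied.

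The paper instead uses the plain DFT estimator of Lemma~\ref{thm: fourier coeff ext} on $4n+1$ equally spaced angles. It has \emph{no} error amplification when every query is accurate. This is exactly where $\delta=\poly(n)^{-1}$ (concretely $O(n^{-1-\lambda})$) is used: it allows a union bound over all $O(nk_0)$ oracle calls. The $t^{2k_0}$ coefficient is then obtained by finite differences at $2k_0+1$ real points in $[0,\Delta]$ with $\Delta=\Theta(\delta/(nk_0))$ (Lemma~\ref{thm: polynomial coefficient extraction}), losing only $e^{O(k_0\log n)}$. This is also what makes the reduction BPP. Your pipeline, built on the $\mathrm{P^{NP}}$ robust extractors, is not BPP.

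Finally, drop the circle option for $t$. Because of the conjugated permanent, the coefficient is a polynomial in $t$ and $\bar t$. On $|t|=r$ the leading term is $\propto t^{k_0}\bar t^{k_0}=r^{2k_0}$, which is independent of the angle, so Fourier analysis in the angle cannot isolate it.
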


\begin{proof}[Proof Sketch]
The proof uses the sparse parity perturbation developed in Appendix~\ref{appendix: leading coefficient extraction}.
After selecting $k_0$ odd/even pairs and setting $n'=n-k_0$, the coefficient of $t^{2k_0}$ in the residual parity Fourier coefficient factorizes as
\begin{align}
    [t^{2k_0}]
    c_{2n'}(x,U(t),R')
    =
    m^{-k_0}
    \Per(X_0)^2
    c_{2n'}(x',U,R').
\end{align}
The remaining argument applies the same elementary Fourier- and polynomial-coefficient extraction procedures, the parity residual-law lemma, and the parity Fourier-coefficient anti-concentration conjecture.
The resulting error amplification has the same scaling $\exp(O(k_0\log n))$, and therefore $\varepsilon=\exp(-O(n^\lambda))$ suffices.
\end{proof}


\section{Conclusion}\label{section: conclusion}

In this work, we established average-case hardness results for BosonSampling with threshold and parity measurements in the $m > 2n$ linear-mode regime, via a worst-to-average-case reduction based on Fourier-coefficient extraction.
We showed that the corresponding average-case output-probability estimation problems for TBS and PBS are \#P-hard up to an additive imprecision of $\exp(-3n\log n-O(n))$.
This scale matches the best known unconditional bound for standard PNR BosonSampling in the saturated regime~\cite{bouland2026complexity}.
Together with the generalized Stockmeyer reduction we established, these results identify the remaining imprecision gap: improving the average-case imprecision bound to inverse-polynomial scale would yield the standard approximate-sampling hardness consequences for TBS and PBS.
Finally, assuming anti-concentration of the relevant Fourier coefficient, we develop an alternative approach that strengthens the average-case hardness results for TBS and PBS by improving the tolerable additive imprecision to $\exp(-O(n^\lambda))$ for any fixed $0<\lambda<1$.

A crucial open problem is to improve the additive imprecision level for \#P-hardness from $\exp(-O(n\log n))$ to $\poly(n)^{-1}$, which remains open even for the standard BosonSampling setting.
As discussed in Sec.~\ref{section: stockmeyer}, such an improvement would close the gap between the presented average-case hardness results and the full approximate-sampling hardness result.
The leading-coefficient extraction approach in Theorem~\ref{thm:main-leading} and Theorem~\ref{thm:main-leading-parity} provides a viable route that could exponentially improve the tolerable additive imprecision from $\exp(-O(n\log n))$ to $\exp(-O(n^{\lambda}))$.
However, these results require model-specific anti-concentration guarantees for the threshold and parity Fourier coefficients, respectively, which remain crucial open problems.

Realistic noise in linear optical systems, such as photon loss and partial distinguishability of photons, can render BosonSampling classically simulable, as evidenced by numerous prior studies~\cite{rahimi2016sufficient, renema2018efficient,  moylett2019classically, shchesnovich2019noise,  garcia2019simulating, oszmaniec2018classical,  brod2020classical, oh2021classical, oh2023classical, oh2025classical, renema2018classical, renema2020simulability, van2024efficient, villalonga2021efficient, bulmer2022boundary, liu2023simulating, oh2024classical, umanskii2024classical, shi2022effect, bulmer2024simulating}. 
Hence, it would be interesting to develop classical simulation algorithms for threshold and parity BosonSampling under these types of noise, or to identify noise thresholds under which their hardness evidence persists, analogous to~\cite{aaronson2016bosonsampling, go2025quantum, go2025sufficient, go2026hardness}.

Shallow-depth circuits also offer a promising route for mitigating the accumulation of realistic noise.
To date, the classical complexity of shallow-depth BosonSampling has been studied extensively, with numerous works identifying conditions under which it becomes classically simulable~\cite{vidal2004efficient, garcia2019simulating, deshpande2018dynamical, qi2022efficient, oh2022classical, kolarovszki2023simulating, oh2025classical}, as well as regimes exhibiting evidence of classical hardness~\cite{go2024exploring, go2026computational}.
Extending these simulability and hardness results to threshold and parity measurements is a natural direction for future work.

\bibliographystyle{unsrt}
\bibliography{Reference.bib}

\clearpage

\onecolumngrid

\appendix

\section{Click-count concentration}\label{appendix: section: click count concentration}

We present an explicit proof of Lemma~\ref{lemma: most threshold outcomes have many clicks}, which states that the number of clicks concentrates near $\alpha n /(\alpha + 1)$ for $x \sim \mathcal{D}_{m,n}$ when $m = \alpha n$.

\begin{proof}[Proof of Lemma~\ref{lemma: most threshold outcomes have many clicks}]
Let $K:=|x|$. 
By Eq.~\eqref{eq:main-threshold-average}, for any $1\le k\le n$, we have
\begin{align}
\Pr[K=k]
=\binom{m}{k}\frac{\binom{n-1}{k-1}}{\binom{m+n-1}{n}}.
\end{align}
Hence, up to the global normalization factor, the probability mass at click count $k$ is given by
\begin{align}
B_k:=\binom{m}{k}\binom{n-1}{k-1}.
\end{align}
For $1\le k\le n-1$, a direct calculation gives
\begin{align}
\frac{B_{k+1}}{B_k}
=
\frac{m-k}{k+1}\frac{n-k}{k}.
\label{eq:threshold-ratio}
\end{align}
Writing $k=\gamma n$ and $m=\alpha n$, define
\begin{align}
    g_\alpha(\gamma):=\frac{(\alpha-\gamma)(1-\gamma)}{\gamma^2}.
\end{align}
Then Eq.~\eqref{eq:threshold-ratio} satisfies
\begin{align}
\frac{B_{k+1}}{B_k}
=\frac{g_\alpha(k/n)}{1+1/k}
\leq g_\alpha(k/n).
\end{align}
Since  $ g_\alpha(\gamma)-1 = \frac{\alpha-(\alpha+1)\gamma}{\gamma^2}$, the unique solution to $g_\alpha(\gamma)=1$ in $(0,1)$ is $\gamma_\star={\alpha}/({\alpha+1})$.
Also, we have  $g_\alpha'(\gamma) =\frac{(\alpha+1)\gamma-2\alpha}{\gamma^3}<0 $ for every $\gamma\in(0,1)$, since $\alpha>1$. 
Hence, $g_\alpha$ is strictly decreasing on $(0,1)$.

We first prove the upper-tail bound. 
Let $\gamma_1:=\gamma_\star+\frac{\eta}{2}$, $\gamma_0 := \gamma_\star +\eta$, and 
\begin{align}\label{eq:tbs-def-q-alpha-eta}
    q_{\alpha,\eta}:=g_\alpha(\gamma_1).
\end{align}
Since $\gamma_1>\gamma_\star$, we have $q_{\alpha,\eta} <1$, and for all $k\ge \gamma_1 n$, Eq.~\eqref{eq:threshold-ratio} gives $\frac{B_{k+1}}{B_k}\le q_{\alpha,\eta}$.
Let $k_1:=\lceil \gamma_1n\rceil$ and $k_0:=\lceil \gamma_0n\rceil$. 
Then we have $B_{k_0}\le q_{\alpha,\eta}^{k_0-k_1}B_{k_1}$, and therefore
\begin{align}
\Pr[K\ge \gamma_0 n]
&\le
\frac{\sum_{k=k_0}^{n}B_k}{\sum_{k=1}^{n}B_k}
\le
\frac{B_{k_0}}{(1-q_{\alpha,\eta})B_{k_1}}
\le
\frac{q_{\alpha,\eta}^{k_0-k_1}}{1-q_{\alpha,\eta}}
\le
\frac{1}{q_{\alpha,\eta}(1-q_{\alpha,\eta})}
\exp\left(-\frac{\eta}{2}|\log q_{\alpha,\eta}|n\right),
\label{eq:tbs-upper-tail}
\end{align}
where the last inequality uses $k_0-k_1\ge \eta n/2-1$.

We next prove the lower-tail bound. 
For $2\le k\le n$, another direct calculation gives
\begin{align}
\frac{B_{k-1}}{B_k}
=
\frac{k}{m-k+1}\frac{k-1}{n-k+1}
\le
\frac{(k/n)^2}{(\alpha-k/n)(1-k/n)}.
\label{eq:threshold-backward-ratio}
\end{align}
Let $\gamma_1':=\gamma_\star-\frac{\eta}{2}  $, $   \gamma_0':=\gamma_\star-\eta $, and 
\begin{align}\label{eq:tbs-def-r-alpha-eta}
    r_{\alpha,\eta}:=\frac{1}{g_\alpha(\gamma_1')}.
\end{align}
Since $\gamma_1'<\gamma_\star$, we have $r_{\alpha,\eta} <1$, and for every $k\le \gamma_1'n$, Eq.~\eqref{eq:threshold-backward-ratio} gives $\frac{B_{k-1}}{B_k}\le r_{\alpha,\eta}$.
Let $k_1':=\lfloor \gamma_1'n\rfloor$ and $k_0':=\lfloor \gamma_0'n\rfloor$. 
Then we have $B_{k_0'}\le r_{\alpha,\eta}^{k_1'-k_0'}B_{k_1'}$, and hence
\begin{align}
\Pr[K\le \gamma_0'n]
&\le
\frac{\sum_{k=1}^{k_0'}B_k}{\sum_{k=1}^{n}B_k}
\le
\frac{B_{k_0'}}{(1-r_{\alpha,\eta})B_{k_1'}}
\le
\frac{r_{\alpha,\eta}^{k_1'-k_0'}}{1-r_{\alpha,\eta}}
\le
\frac{1}{r_{\alpha,\eta}(1-r_{\alpha,\eta})}
\exp\left(-\frac{\eta}{2}|\log r_{\alpha,\eta}|n\right),
\label{eq:tbs-lower-tail}
\end{align}
where we used $k_1'-k_0'\ge \eta n/2-1$.

Finally, using the union bound on Eq.~\eqref{eq:tbs-upper-tail} and Eq.~\eqref{eq:tbs-lower-tail}, we obtain
\begin{align}
    \Pr_{x\sim \mathcal{D}_{m,n}}
    \left[ \left| |x| - \frac{\alpha}{\alpha+1}n \right| \geq \eta n\right]
    \le C_{\alpha,\eta}e^{-c_{\alpha,\eta}n}  ,
\end{align}
for the constants given by 
\begin{align}
C_{\alpha,\eta} = 2 \max \left\{ \frac{1}{q_{\alpha,\eta}(1-q_{\alpha,\eta})} ,  \frac{1}{r_{\alpha,\eta}(1-r_{\alpha,\eta})}\right\},
\quad
c_{\alpha,\eta} = \frac{\eta}{2}\,\min\left\{ |\log q_{\alpha,\eta} | , | \log r_{\alpha,\eta}| \right\} ,
\end{align}
with $q_{\alpha,\eta} \in (0,1)$ and $r_{\alpha,\eta} \in (0,1)$ defined in Eq.~\eqref{eq:tbs-def-q-alpha-eta} and Eq.~\eqref{eq:tbs-def-r-alpha-eta}, respectively. 
This concludes the proof.

\end{proof}

\section{Upper bound on Haar-averaged threshold output probabilities}\label{subsec: threshold Haar upper}

\begin{lemma}[Upper bound on the Haar-averaged threshold output probability]\label{lem:typical-threshold-Haar-weight-upper}
Let $m=\alpha n$ with $\alpha>1$ and $\alpha\le \poly(n)$. 
For every $x\in\mathcal{T}_{m,n}$,
\begin{align}
    T_x^{(\mathrm{avg})}
    \le
    \exp(-n\log\alpha+O(n)).
\label{eq:tbs-Haar-upper}
\end{align}
\end{lemma}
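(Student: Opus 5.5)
We bound the closed form in Eq.~\eqref{eq:main-threshold-average} directly:
\begin{align}
T_x^{(\mathrm{avg})}=\frac{\binom{n-1}{|x|-1}}{\binom{m+n-1}{n}} .
\end{align}
The numerator is controlled crudely by $\binom{n-1}{|x|-1}\le 2^{n-1}=e^{O(n)}$, uniformly over all $1\le |x|\le n$. The whole task is therefore to show that the denominator satisfies
\begin{align}
\binom{m+n-1}{n}\ge \exp(n\log\alpha-O(n)),
\end{align}
where the constant in $O(n)$ must not depend on $\alpha$, since $\alpha$ may grow polynomially.

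For the denominator, we expand
\begin{align}
\binom{m+n-1}{n}=\prod_{j=0}^{n-1}\frac{m+n-1-j}{n-j}\ge \frac{m^n}{n!},
\end{align}
which holds because each numerator factor $m+n-1-j\ge m$ for $0\le j\le n-1$. Using $n!\le n^n=e^{n\log n}$ then gives
\begin{align}
\binom{m+n-1}{n}\ge \frac{m^n}{n^n}=\alpha^n=e^{n\log\alpha}.
\end{align}
Alternatively, Stirling's formula gives $n!\le e\,n^{n+1/2}e^{-n}$, which leads to the same bound with an $e^{O(n)}$ slack. No assumption $\alpha\le\poly(n)$ is actually needed for this step; it only ensures $\log\alpha$ is meaningful on the $n\log n$ scale used elsewhere.

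Combining the two estimates yields
\begin{align}
T_x^{(\mathrm{avg})}\le 2^{n-1}\alpha^{-n}=\exp(-n\log\alpha+O(n)),
\end{align}
uniformly in $x\in\mathcal{T}_{m,n}$. The only point requiring care is that the $O(n)$ term is independent of $\alpha$ and of $|x|$, and both elementary bounds above have this property. Hence no step is a genuine obstacle; the main thing to verify is the factor-by-factor inequality $m+n-1-j\ge m$, which holds throughout the range $0\le j\le n-1$.
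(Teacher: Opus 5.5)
Your proof is correct, but it handles the denominator differently from the paper. The paper writes $\binom{m+n-1}{n}=\frac{\alpha}{\alpha+1}\binom{(\alpha+1)n}{n}$ and applies the entropy lower bound $\binom{a}{b}\ge\frac{1}{a+1}e^{aH(b/a)}$. This yields the exponent $(\alpha+1)\log(\alpha+1)-\alpha\log\alpha=\log\alpha+(\alpha+1)\log(1+1/\alpha)\ge\log\alpha$, but carries the prefactor $\frac{\alpha}{\alpha+1}\cdot\frac{1}{(\alpha+1)n+1}$. The hypothesis $\alpha\le\poly(n)$ is invoked exactly to absorb $\log((\alpha+1)n+1)$ into the $O(n)$ term.

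Your factor-by-factor estimate $\binom{m+n-1}{n}\ge m^n/n!\ge\alpha^n$ has no such prefactor. You therefore get the explicit bound $T_x^{(\mathrm{avg})}\le 2^{n-1}\alpha^{-n}$, which is slightly stronger than the bound the paper actually writes down. As you note, it also holds without any growth restriction on $\alpha$. (Your Stirling aside in fact gains a factor of order $e^{n}/\sqrt{n}$ rather than losing anything, though nothing hinges on it.)

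What the entropy route buys is mainly uniformity of presentation. It exhibits the sharper exponent, including the extra $n(\alpha+1)\log(1+1/\alpha)$ that the paper then discards. It also parallels the companion parity estimate in Lemma~\ref{lem:typical-parity-Haar-weight-upper}, where the numerator $\binom{m+t-1}{t}$ also depends on $\alpha$ and the same entropy bounds are applied to both numerator and denominator. For the threshold lemma as stated, your argument is the simpler and cleaner one.
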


\begin{proof}
Let $k=|x|$. 
By Eq.~\eqref{eq:main-threshold-average},
\begin{align}
    T_x^{(\mathrm{avg})}
    =
    \frac{\binom{n-1}{k-1}}{\binom{m+n-1}{n}}.
\end{align}
First, the numerator satisfies
\begin{align}
    \binom{n-1}{k-1}\le 2^n.
\label{eq:tbs-numerator-upper}
\end{align}
For the denominator, using $m=\alpha n$ gives
\begin{align}
\binom{m+n-1}{n}
&=
\frac{m}{m+n}\binom{m+n}{n} \\
&=
\frac{\alpha}{\alpha+1}\binom{(\alpha+1)n}{n} \\
&\ge
\frac{1}{(\alpha+1)n+1}
\exp\left(n\left((\alpha+1)\log(\alpha+1)-\alpha\log\alpha\right)\right)  ,
\end{align}
where we use the standard binomial coefficient bounds in terms of entropy~\cite{polyanskiy2025information}:
\begin{align}
\frac{1}{a+1}
\exp\left(
a H\left(\frac{b}{a}\right)
\right) \leq \binom{a}{b}
\le
\exp\left(
a H\left(\frac{b}{a}\right)
\right)    ,
\label{eq:entropy-upper-binomial}
\end{align}
for the binary entropy function $H(p):=-p\log p-(1-p)\log(1-p).$
Now using 
\begin{align}
(\alpha+1)\log(\alpha+1)-\alpha\log\alpha
=
\log\alpha+(\alpha+1)\log\left(1+\frac{1}{\alpha}\right)
\ge
\log\alpha  ,
\end{align}
we obtain
\begin{align}
\binom{m+n-1}{n}
\ge
\frac{\alpha}{\alpha+1}\frac{1}{(\alpha+1)n+1}e^{n\log\alpha}.
\label{eq:tbs-denominator-lower}
\end{align}
Combining Eq.~\eqref{eq:tbs-numerator-upper} and Eq.~\eqref{eq:tbs-denominator-lower}, it holds that 
\begin{align}
T_x^{(\mathrm{avg})}
&\le
2^n\frac{\alpha+1}{\alpha}((\alpha+1)n+1)e^{-n\log\alpha}
\le
\exp(-n\log\alpha+O(n)),
\end{align}
where the last step uses $\alpha\le\poly(n)$, so that the polynomial prefactor is absorbed into the $O(n)$ term.
\end{proof}

\section{Odd-parity concentration}\label{appendix: section: odd parity concentration}

We provide an explicit proof of Lemma~\ref{lemma: most parity outcomes have many odd parities}, showing that, for $x\sim\mathcal{Y}_{m,n}$ with $m=\alpha n$, the number of odd parities is concentrated near $\alpha n/(\alpha+2)$.

\begin{proof}[Proof of Lemma~\ref{lemma: most parity outcomes have many odd parities}]

Let $K = |x|$.
By definition of $\mathcal{Y}_{m,n}$, for any $k \leq n$ with $ k \equiv n \; (\text{mod  2}) $, we have
\begin{align}
    \Pr(K=k)
    =
    \binom{m}{k}P_{x}^{(\mathrm{avg})} 
    =
    \frac{
    \binom{m}{k}
    \binom{m+\frac{n-k}{2}-1}{\frac{n-k}{2}}
    }{
    \binom{m+n-1}{n}
    }.
\end{align}
Rewrite $k = n - 2t$ for $t\in\{0,1,\dots,\lfloor n/2\rfloor\}$, and define
\begin{align}
    B_t:=\binom{m}{n-2t}\binom{m+t-1}{t}
    \quad \Rightarrow  \quad
    \Pr(K=n-2t)=\frac{B_t}{\binom{m+n-1}{n}}.
\end{align}

We first examine the lower-tail bound. 
A direct computation gives, for $t \geq 1$, 
\begin{align}\label{eq: fraction of A_t+1 and A_t}
    \frac{B_{t+1}}{B_t}
=
\frac{(n-2t)(n-2t-1)(m+t)}{(t+1)(m-n+2t+1)(m-n+2t+2)}
\le
    \frac{(n-2t)^2(m+t)}{t(m-n+2t)^2}.
\end{align}
As $m=\alpha n$, the right-hand side can be written as $f_\alpha\!\left(t/n\right)$ for 
\begin{align}\label{eq: def of f_alpha}
f_\alpha(\beta):=
\frac{(1-2\beta)^2(\alpha+\beta)}{\beta(\alpha-1+2\beta)^2}   ,
\end{align}
which further implies
\begin{align}\label{eq: f_alpha - 1}
f_\alpha(\beta)-1
=
-\frac{\alpha\bigl((\alpha+2)\beta-1\bigr)}
{\beta(\alpha-1+2\beta)^2}.
\end{align}
By the above, it is clear that $f_\alpha(\beta)<1 $ whenever $\beta> \beta_\star := \frac{1}{\alpha+2}$.
Now set $\beta_1:=\beta_\star+\frac{\eta}{4}$ and $
\beta_0:=\beta_\star+\frac{\eta}{2}$.
Because $\eta < \alpha/(\alpha + 2)$, we have $\beta_1<\beta_0<1/2$, and the event
\begin{align}
K = n - 2t \le \left(\frac{\alpha}{\alpha+2}-\eta\right)n
\end{align}
is equivalent to $t\ge \beta_0 n$.
From Eq.~\eqref{eq: def of f_alpha}, define
\begin{align}\label{eq: def of q_alpha-eta}
q_{\alpha,\eta}
:=
\max_{\beta\in[\beta_1,\,1/2]} f_\alpha(\beta).
\end{align}
Because $f_\alpha(\beta)<1$ for all $\beta>\beta_\star=1/(\alpha+2)$, we have $0<q_{\alpha,\eta}<1$. 
Also, for every integer $t\ge \beta_1 n$, we have $\frac{B_{t+1}}{B_t}\le q_{\alpha,\eta}$.
Let $t_1:=\left\lceil \beta_1 n\right\rceil$ and $t_0:=\left\lceil \beta_0 n\right\rceil$.
When $(1/\alpha + \eta/2)n \geq 2$ as we assumed in Lemma~\ref{lemma: most parity outcomes have many odd parities}, we have $t_1\geq 1$ (required in Eq.~\eqref{eq: fraction of A_t+1 and A_t}), thus $B_{t_0}\le q_{\alpha,\eta}^{\,t_0-t_1}B_{t_1}$.
Summing the tail terms above $t = t_0$ gives
\begin{align}
\sum_{t=t_0}^{\lfloor n/2\rfloor} B_t
\le
\frac{B_{t_0}}{1-q_{\alpha,\eta}}
\le
\frac{q_{\alpha,\eta}^{\,t_0-t_1}}{1-q_{\alpha,\eta}}\,B_{t_1}.
\end{align}
Since $B_{t_1} \leq \sum_{s = 0}^{\lfloor n/2\rfloor} B_s = \binom{m+n-1}{n} $, it follows that
\begin{align}\label{eq: lower bound of |x| parity}
\Pr\!\left[
K\le \left(\frac{\alpha}{\alpha+2}-\eta\right)n
\right]
\leq \frac{\sum_{t=t_0}^{\lfloor n/2\rfloor} B_t}{\binom{m+n-1}{n}}
\leq
\frac{q_{\alpha,\eta}^{\,t_0-t_1}}{1-q_{\alpha,\eta}}
\le
\frac{1}{q_{\alpha,\eta}(1-q_{\alpha,\eta})}
\exp\!\left(
-\frac{\eta}{4}\,|\log q_{\alpha,\eta}|\,n
\right) ,
\end{align}
using $t_0-t_1\ge (\eta/4)n-1$ and $0<q_{\alpha,\eta}<1$.


We similarly examine the upper-tail bound. 
A direct calculation gives, for $t < n/2$, 
\begin{align}
\frac{B_{t-1}}{B_t}
=
\frac{t(m-n+2t)(m-n+2t-1)}{(n-2t+2)(n-2t+1)(m+t-1)}
\le
\frac{t(m-n+2t)^2}{(n-2t)^2(m+t)} 
=
f_{\alpha}\!\left(\frac{t}{n}\right)^{-1}.
\end{align}
for $f_\alpha$ defined in Eq.~\eqref{eq: def of f_alpha}. 
From Eq.~\eqref{eq: f_alpha - 1}, we have
\begin{align}
f_\alpha(\beta)^{-1}-1
=
\frac{\alpha\bigl((\alpha+2)\beta-1\bigr)}
{(1-2\beta)^2(\alpha+\beta)} ,
\end{align}
such that $f_\alpha(\beta)^{-1} < 1 $ whenever $\beta < 1/({\alpha+2}) = \beta_\star$.
We newly set $\beta_1':=\beta_\star - \frac{\eta}{4} $ and $\beta_0':=\beta_\star - \frac{\eta}{2} $, where by the constraint $\eta < 2/(\alpha + 2)$ we have $0 < \beta_0' < \beta_1'$.
Then, the event
\begin{align}
K = n - 2t \geq \left(\frac{\alpha}{\alpha+2} + \eta\right) n
\end{align}
is equivalent to $t \leq  \beta_0' n$. 
From Eq.~\eqref{eq: def of f_alpha}, let
\begin{align}\label{eq: def of r_alpha-eta}
r_{\alpha,\eta}
:=
\max_{\beta\in[0,\,\beta_1']} f_\alpha(\beta) ^{-1} ,
\end{align}
such that $0 < r_{\alpha,\eta}<1$. 
Also, for every integer $t \leq  \beta_1' n$, it follows that $\frac{B_{t - 1}}{B_t}\le r_{\alpha,\eta}$.
Now let $t_1':=\left\lfloor \beta_1' n\right\rfloor$ and $t_0':=\left\lfloor \beta_0' n\right\rfloor $, such that $B_{t_0'}\le r_{\alpha,\eta}^{\,t_1'-t_0'}B_{t_1'}$.
This implies 
\begin{align}
\Pr\!\left[
K \geq \left(\frac{\alpha}{\alpha+2} + \eta\right)n
\right]
\leq 
\frac{\sum_{t=0}^{t_0'} B_t}{\binom{m+n-1}{n}} 
\leq
\frac{1}{1-r_{\alpha,\eta}}\frac{B_{t_0'}}{\binom{m+n-1}{n}} 
\leq
\frac{r_{\alpha,\eta}^{\,t_1'-t_0'}}{1-r_{\alpha,\eta}}\frac{B_{t_1'}} {\binom{m+n-1}{n}} 
\leq
\frac{r_{\alpha,\eta}^{\,t_1'-t_0'}}{1-r_{\alpha,\eta}} ,
\end{align}
and using $t_1'-t_0'\ge (\eta/4)n-1$ and $0<r_{\alpha,\eta}<1$, we have
\begin{align}\label{eq: upper bound of |x| parity}
\Pr\!\left[
K \geq \left(\frac{\alpha}{\alpha+2} + \eta\right)n
\right] 
\le
\frac{1}{r_{\alpha,\eta}(1-r_{\alpha,\eta})}
\exp\!\left(
-\frac{\eta}{4}\,|\log r_{\alpha,\eta}|\,n
\right) .
\end{align}

Combining Eq.~\eqref{eq: lower bound of |x| parity} and Eq.~\eqref{eq: upper bound of |x| parity} using union bounds, it holds that 
\begin{align}
    \Pr_{x\sim \mathcal{Y}_{m,n}}
    \left[ \left| |x| - \frac{\alpha}{\alpha+2}n \right| \geq \eta n\right]
    \le C_{\alpha,\eta}e^{-c_{\alpha,\eta}n}  .
\end{align}
The constants are given by 
\begin{align}
C_{\alpha,\eta} = 2 \max \left\{ \frac{1}{q_{\alpha,\eta}(1-q_{\alpha,\eta})} ,  \frac{1}{r_{\alpha,\eta}(1-r_{\alpha,\eta})}\right\},
\quad
c_{\alpha,\eta} = \frac{\eta}{4}\,\min\left\{ |\log q_{\alpha,\eta} | , | \log r_{\alpha,\eta}| \right\} ,
\end{align}
for $q_{\alpha,\eta}$ and $r_{\alpha,\eta}$ in Eq.~\eqref{eq: def of q_alpha-eta} and Eq.~\eqref{eq: def of r_alpha-eta}, respectively.

\end{proof}

\section{Proof of Lemma~\ref{lem:Q-nkk-lower-bound}}\label{appendix: section: single permanent into coefficient - PBS}

We lower-bound $Q_{n,k,k_0}$ by restricting the summation in Eq.~\eqref{eq:Q-nkk-definition} to a single family of occupation patterns.
Choose $a_j = 0$ for every $j = 1,\ldots,k-k_0$, such that the constraint gives
\begin{align}
\sum_{j=1}^{k-k_0} b_j = \frac{n-k}{2}.
\end{align}
We then take
\begin{align}
b_j=
\begin{cases}
1, & j=1,\ldots,\frac{n-k}{2},\\
0, & j= \frac{n-k}{2} +1,\ldots,k-k_0.
\end{cases}
\end{align}
which is valid provided that $(n-k)/2 \le k-k_0$.
Specifically, when $k$ satisfies Eq.~\eqref{eq: k lwer bound}, we have 
\begin{align}\label{eq: upper bound for M}
\frac{n-k}{2}
\le
\left(
\frac{1}{\alpha+2}+\frac{\eta}{2}
\right)n,
\end{align}
whereas
\begin{align}
k-k_0
\ge
\left(
\frac{\alpha}{\alpha+2}-\eta
\right)n - o(n) .
\end{align}
Since $\alpha>2$, choosing 
\begin{align}
    \eta \leq \frac{1}{6} < \frac{2}{3}\frac{\alpha - 1}{\alpha + 2}
\end{align}
ensures $(n-k)/2 \le k-k_0$ for all sufficiently large $n$.

Provided $(n-k)/2 \le k-k_0$, the corresponding factorial term for the above pattern is
\begin{align}
\prod_{j=1}^{k-k_0}
\frac{1}{(2a_j+1)!(2b_j)!} 
= \prod_{j=1}^{(n-k)/2}\frac{1}{1!\,2!}
\prod_{j=(n-k)/2+1}^{k-k_0}\frac{1}{1!\,0!} 
=
2^{-(n-k)/2}.
\end{align}
Since all terms in $Q_{n,k,k_0}$ are nonnegative, we have
\begin{align}
Q_{n,k,k_0}
\ge
2^{-(n-k)/2}.
\end{align}
Finally, using Eq.~\eqref{eq: upper bound for M} gives
\begin{align}
Q_{n,k,k_0}
\ge
\exp\left[
-\left(
\frac{1}{\alpha+2}+\frac{\eta}{2}
\right)(\log 2)n
\right],
\end{align}
proving the claim.

\section{Upper bound on Haar-averaged parity output probabilities}\label{appendix: section: upper bound of avg parity}

\begin{lemma}
\label{lem:typical-parity-Haar-weight-upper}
Let $m=\alpha n$ with $\alpha>2$. Let $x\in \mathcal{P}_{m,n}$ be a parity outcome with
$k=|x|$, and suppose that
\begin{align}
\left|
k-\frac{\alpha}{\alpha+2}n
\right|
\le
\eta n ,
\label{eq:typical-k-assumption-Haar-weight}
\end{align}
where $0<\eta< 2 / (\alpha + 2)$ is a constant. 
Then there exists a constant $C_\eta>0$  such that
\begin{align}
P_x^{(\mathrm{avg})}
\le
\exp\left(
-(1-\eta)n\log\alpha
+
C_\eta n
+
\log((\alpha+1)n+1)
\right).
\label{eq:typical-parity-Haar-weight-upper-explicit}
\end{align}
In particular, as far as $\alpha \leq \poly(n)$, we have
\begin{align}
P_x^{(\mathrm{avg})}
\le
\exp\left(
-(1-\eta)n\log\alpha
+
O(n)
\right).
\label{eq:typical-parity-Haar-weight-upper-simple}
\end{align}
\end{lemma}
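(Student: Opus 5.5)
We will bound the numerator and denominator of Eq.~\eqref{eq:main-parity-average} separately, mirroring the proof of Lemma~\ref{lem:typical-threshold-Haar-weight-upper}.

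\textbf{Denominator.} We reuse verbatim the bound already derived there, namely Eq.~\eqref{eq:tbs-denominator-lower}:
\begin{align}
\binom{m+n-1}{n}\ge \frac{\alpha}{\alpha+1}\frac{1}{(\alpha+1)n+1}e^{n\log\alpha}.
\end{align}
This supplies the $-n\log\alpha$ and the $\log((\alpha+1)n+1)$ terms; the factor $(\alpha+1)/\alpha\le 2$ is absorbed into $C_\eta n$.

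\textbf{Numerator.} Write $t:=(n-k)/2$, so that the numerator is $\binom{m+t-1}{t}$. The typicality assumption \eqref{eq:typical-k-assumption-Haar-weight} gives
\begin{align}
t\le \beta_+ n,\qquad \beta_+:=\frac{1}{\alpha+2}+\frac{\eta}{2}.
\end{align}
Using $\binom{a}{t}\le (ea/t)^t$ with $a=m+t-1\le(\alpha+1)n$, we obtain
\begin{align}
\log\binom{m+t-1}{t}\le t\log\frac{(\alpha+1)n}{t}+t .
\end{align}
The function $t\mapsto t\log(cn/t)$ is increasing for $t\le cn/e$. Hence, when $\beta_+\le(\alpha+1)/e$ (which holds because $\beta_+<1/2$), it suffices to evaluate the bound at $t=\beta_+n$:
\begin{align}
\log\binom{m+t-1}{t}\le \beta_+ n\log\frac{\alpha+1}{\beta_+}+n
\le \beta_+ n\log\alpha+O(n),
\end{align}
where we used $\log(\alpha+1)\le\log\alpha+1$ and $\log(1/\beta_+)\le\log(\alpha+4)=\log\alpha+O(1)$.

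\textbf{The main obstacle.} The delicate step is making the coefficient of $n\log\alpha$ come out as exactly $-(1-\eta)$, with every other contribution being $\alpha$-independent $O(n)$, so that the result remains meaningful for $\alpha\le\poly(n)$. The term $\log(1/\beta_+)$ contains a $\log(\alpha+2)$ contribution, which is not $O(1)$ when $\alpha$ grows. We therefore carry it explicitly as $\beta_+ n\log(\alpha+2)$ rather than absorbing it.

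The total $\alpha$-dependence of the numerator is then $2\beta_+ n\log\alpha + O(n)$, with
\begin{align}
2\beta_+=\frac{2}{\alpha+2}+\eta .
\end{align}
This gives $P_x^{(\mathrm{avg})}\le\exp(-(1-\eta)n\log\alpha+\frac{2n\log\alpha}{\alpha+2}+O(n))$, so we still need to control the extra term
\begin{align}
\frac{2n\log\alpha}{\alpha+2}.
\end{align}
Since $\log\alpha/(\alpha+2)$ is bounded by an absolute constant for all $\alpha>2$, this term is at most $O(n)$ and can be absorbed into $C_\eta n$. The resulting exponent is at most $-(1-\eta)n\log\alpha+C_\eta n$.

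\textbf{Combining.} The numerator bound, the denominator bound, and the $\log((\alpha+1)n+1)$ term from the denominator together yield Eq.~\eqref{eq:typical-parity-Haar-weight-upper-explicit}. For $\alpha\le\poly(n)$, the logarithmic term is $O(\log n)\subseteq O(n)$, which gives the simplified form Eq.~\eqref{eq:typical-parity-Haar-weight-upper-simple}.
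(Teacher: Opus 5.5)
Your proof is correct. It has the same skeleton as the paper's proof: bound the numerator above and the denominator below, use monotonicity in $t$ to evaluate at $t=\beta_+n$, and absorb $\frac{2}{\alpha+2}\log\alpha$ into an absolute constant. The binomial estimates differ.

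The paper applies the entropy bounds $\binom{a}{b}\le e^{aH(b/a)}$ and $\binom{a}{b}\ge\frac{1}{a+1}e^{aH(b/a)}$ to numerator and denominator together, merging them into the single function $F_\alpha(\beta)$. It then uses $F_\alpha'>0$ and concavity ($F_\alpha'$ decreasing) to get $F_\alpha(\beta_0)\le-\chi_\alpha+\eta\log(\alpha+1)$, where $-\chi_\alpha=F_\alpha(1/(\alpha+2))$ is the exponent at the typical odd count. Finally it lower bounds $\chi_\alpha\ge\log\alpha-1-W(2/e)$; the $W(2/e)$ term is exactly the bound on $\frac{2}{\alpha+2}\log\alpha$ that you also need.

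You instead use the cruder $\binom{a}{t}\le(ea/t)^t$ and reuse the threshold denominator bound. This avoids the calculus on $F_\alpha$ and is shorter. The paper's route tracks the sharp rate $\chi_\alpha$ at the center, which the lemma as stated does not need.

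Two small points.
\begin{itemize}
\item The display claiming $\log\binom{m+t-1}{t}\le\beta_+n\log\alpha+O(n)$ does not follow from the two inequalities you cite. Those give $2\beta_+n\log\alpha+O(n)$, as you note a paragraph later, so delete that line.
\item The ``obstacle'' is milder than you present it. The hypothesis $\eta<2/(\alpha+2)$ gives $2\beta_+\log\alpha<\frac{4\log\alpha}{\alpha+2}\le 2W(2/e)<1$ for all $\alpha>2$. So the whole numerator is $e^{O(n)}$ with absolute constants, and the bound even holds with $-n\log\alpha$ in place of $-(1-\eta)n\log\alpha$.
\end{itemize}
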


\begin{proof}
For a parity outcome $x$ with $k=|x|$, write $t:=({n-k})/{2}$ as in the proof of Lemma~\ref{lemma: most parity outcomes have many odd parities}.
By the formula for the Haar-averaged parity output probability, we have
\begin{align}
P_x^{(\mathrm{avg})}
=
\frac{
\binom{m+t-1}{t}
}{
\binom{m+n-1}{n}
}.
\label{eq:Haar-parity-weight-start}
\end{align}
Since $m=\alpha n$, the assumption Eq.~\eqref{eq:typical-k-assumption-Haar-weight} implies
\begin{align}\label{eq:upperbound of r}
t
=
\frac{n-k}{2}
\le
\frac{n}{\alpha+2}+\frac{\eta n}{2}.
\end{align}
Keeping the notation used in the proof of Lemma~\ref{lemma: most parity outcomes have many odd parities}, let us denote  $\beta:=\frac{t}{n}$ and $\beta_0:=\frac{1}{\alpha+2}+\frac{\eta}{2} $, such that Eq.~\eqref{eq:upperbound of r} implies $\beta\le \beta_0 $.

First, since $m+t=(\alpha+\beta)n$, the standard binomial coefficient bounds~\cite{polyanskiy2025information} in Eq.~\eqref{eq:entropy-upper-binomial}
gives
\begin{align}
\binom{m+t-1}{t}
\le
\binom{m+t}{t}
\le
\exp\left(
(\alpha+\beta)n
H\left(
\frac{\beta}{\alpha+\beta}
\right)
\right)
=
\exp\left(
n\left[
(\alpha+\beta)\log(\alpha+\beta)
-\beta\log\beta
-\alpha\log\alpha
\right]
\right).
\label{eq:numerator-Haar-weight-bound}
\end{align}
Second,
\begin{align}
\binom{m+n-1}{n}
=
\frac{m}{m+n}
\binom{m+n}{n}
=
\frac{\alpha}{\alpha+1}
\binom{(\alpha+1)n}{n}.
\end{align}
Therefore, similarly using the binomial coefficient bound in Eq.~\eqref{eq:entropy-upper-binomial}, we have 
\begin{align}
\binom{m+n-1}{n}
&\ge
\frac{\alpha}{\alpha+1}
\frac{1}{(\alpha+1)n+1}
\exp\left(
(\alpha+1)n
H\left(
\frac{1}{\alpha+1}
\right)
\right)
\nonumber\\
&=
\frac{\alpha}{\alpha+1}
\frac{1}{(\alpha+1)n+1}
\exp\left(
n\left[
(\alpha+1)\log(\alpha+1)
-\alpha\log\alpha
\right]
\right).
\label{eq:denominator-Haar-weight-bound}
\end{align}
Combining Eqs.~\eqref{eq:Haar-parity-weight-start},
\eqref{eq:numerator-Haar-weight-bound}, and
\eqref{eq:denominator-Haar-weight-bound}, we obtain
\begin{align}
P_x^{(\mathrm{avg})}
\le
\frac{\alpha+1}{\alpha}
\left((\alpha+1)n+1\right)
\exp\left(
nF_\alpha(\beta)
\right),
\label{eq:Haar-weight-F-alpha}
\end{align}
where, for convenience, we define
\begin{align}
F_\alpha(\beta)
:=
(\alpha+\beta)\log(\alpha+\beta)
-\beta\log\beta
-(\alpha+1)\log(\alpha+1).
\label{eq:F-alpha-definition}
\end{align}
Here, taking the derivative gives 
\begin{align}\label{eq: derivative of F-alpha}
F_\alpha'(\beta)
=
\log\left(\frac{\alpha+\beta}{\beta}\right)>0  ,
\end{align}
implying that $F_\alpha$ is increasing.
Since $\beta\le \beta_0 $, we have $F_\alpha(\beta)\le F_\alpha(\beta_0)$, leading to 
\begin{align}
P_x^{(\mathrm{avg})}
\le
\frac{\alpha+1}{\alpha}
\left((\alpha+1)n+1\right)
\exp\left(
nF_\alpha(\beta_0)
\right).
\label{eq:Haar-weight-rho-plus}
\end{align}

Now let $\beta_\star := {1}/(\alpha+2) = \beta_0 - \eta/2$. 
At $\beta_\star$, a direct simplification gives $-F_\alpha(\beta_\star) = \chi_\alpha$ for 
\begin{align}
\chi_\alpha
:=
\alpha\log(\alpha+2)
-
\frac{\alpha(\alpha+1)}{\alpha+2}\log(\alpha+1).
\label{eq:chi-alpha-definition}
\end{align}
By Eq.~\eqref{eq: derivative of F-alpha}, for any $\beta\in[\beta_\star,\beta_0]$, we have 
\begin{align}
F_\alpha'(\beta)
\le
F_\alpha'(\beta_\star)
=
\log\left(
\frac{
\alpha+\frac{1}{\alpha+2}
}{
\frac{1}{\alpha+2}
}
\right)
=
2\log(\alpha+1).
\end{align}
Since the derivative $F_\alpha'(\beta)$ is decreasing, this leads to 
\begin{align}
F_\alpha(\beta_0)
\le
F_\alpha(\beta_\star)
+
\frac{\eta}{2}\cdot 2\log(\alpha+1)
=
-\chi_\alpha+\eta\log(\alpha+1).
\label{eq:F-alpha-rho-plus-bound}
\end{align}
Substituting Eq.~\eqref{eq:F-alpha-rho-plus-bound} into
Eq.~\eqref{eq:Haar-weight-rho-plus} gives
\begin{align}
P_x^{(\mathrm{avg})}
\le
\frac{\alpha+1}{\alpha}
\left((\alpha+1)n+1\right)
\exp\left(
-\left[
\chi_\alpha-\eta\log(\alpha+1)
\right]n
\right).
\label{eq:Haar-weight-chi-bound}
\end{align}

It remains to lower bound $\chi_\alpha$ in terms of $\log\alpha$. Rewriting
Eq.~\eqref{eq:chi-alpha-definition},
\begin{align}
\chi_\alpha
&=
\alpha\log\alpha
+
\alpha\log\left(1+\frac{2}{\alpha}\right)
-
\frac{\alpha(\alpha+1)}{\alpha+2}
\left[
\log\alpha+\log\left(1+\frac{1}{\alpha}\right)
\right]
\\
&=
\frac{\alpha}{\alpha+2}\log\alpha
+
\alpha\log\left(1+\frac{2}{\alpha}\right)
-
\frac{\alpha(\alpha+1)}{\alpha+2}
\log\left(1+\frac{1}{\alpha}\right).
\label{eq:chi-alpha-rewrite}
\end{align}
The second term in the right-hand side of Eq.~\eqref{eq:chi-alpha-rewrite} is nonnegative. 
Also, the last term is bounded by
\begin{align}
\frac{\alpha(\alpha+1)}{\alpha+2}
\log\left(1+\frac{1}{\alpha}\right)
\le
\frac{\alpha(\alpha+1)}{\alpha+2}\cdot \frac{1}{\alpha}
=
\frac{\alpha+1}{\alpha+2}
\le 1.
\end{align}
Hence $\chi_\alpha$ is lower bounded by
\begin{align}
\chi_\alpha
\ge
\frac{\alpha}{\alpha+2}\log\alpha-1
=
\log\alpha-\frac{2}{\alpha+2}\log\alpha - 1.
\label{eq:chi-alpha-lower-first}
\end{align}
We also have 
\begin{align}
\frac{2}{\alpha+2}\log\alpha \leq W(2/e)  ,
\end{align}
for $W$ denoting the Lambert W function, which gives
\begin{align}
\chi_\alpha\ge \log\alpha-1 - W(2/e) .
\label{eq:chi-alpha-log-alpha-lower}
\end{align}
Furthermore, observe that 
\begin{align}
\eta\log(\alpha+1)
\le
\eta\log\alpha+\eta\log 2.
\label{eq:eta-log-alpha-plus-one}
\end{align}
Combining Eqs.~\eqref{eq:chi-alpha-log-alpha-lower} and
\eqref{eq:eta-log-alpha-plus-one}, we get
\begin{align}
\chi_\alpha-\eta\log(\alpha+1)
\ge
(1-\eta)\log\alpha-(1 + W(2/e) +\eta\log 2).
\label{eq:chi-eta-final-lower}
\end{align}
Substituting Eq.~\eqref{eq:chi-eta-final-lower} into
Eq.~\eqref{eq:Haar-weight-chi-bound} gives
\begin{align}
P_x^{(\mathrm{avg})}
&\le
\frac{\alpha+1}{\alpha}
\left((\alpha+1)n+1\right)
\exp\left(
-(1-\eta)n\log\alpha
+
(1 + W(2/e) +\eta\log 2)n
\right)
\nonumber\\
&\le
\exp\left(
-(1-\eta)n\log\alpha
+
(1 + W(2/e) +\eta\log 2)n
+
\log((\alpha+1)n+1)
+
\log 2
\right),
\label{eq:Haar-weight-final-explicit}
\end{align}
where in the last line we used $(\alpha+1)/\alpha\le 2$ for $\alpha>1$.

This proves Eq.~\eqref{eq:typical-parity-Haar-weight-upper-explicit}, for example with $C_\eta = 3+\eta\log 2$. 
If $\log((\alpha+1)n)=O(n)$, then the final logarithmic prefactor in Eq.~\eqref{eq:Haar-weight-final-explicit} is absorbed into $O(n)$, giving Eq.~\eqref{eq:typical-parity-Haar-weight-upper-simple}.
\end{proof}



\section{Robust Fourier coefficient extraction}\label{appendix: section: robust Fourier extraction}

We introduce the robust Fourier-coefficient extraction algorithm stated in Theorem~\ref{thm:probabilistic-robust-fourier-extraction} of the main text.

\begin{proof}[Proof of Theorem~\ref{thm:probabilistic-robust-fourier-extraction}]
Let
\begin{align}
\rho_L:=
\frac{L-2d-1}{2L}.
\label{eq:rho-F-def}
\end{align}
Using $L=C_Fd$ for $C_F=\left\lceil\max\left\{6,3/(1-8p_{\mathrm{fail}})\right\}\right\rceil$ and $d>1$, we have $2d+1<3d$ and
\begin{align}
    \frac{p_{\mathrm{fail}}}{\rho_L}
    <
    \frac{2L}{L - 3d}p_{\mathrm{fail}}
    =
    \frac{p_{\mathrm{fail}}}{\frac{1}{2} - \frac{3}{2C_F}} 
    \leq 
    \frac{p_{\mathrm{fail}}}{\frac{1}{2} - \frac{3}{2}\frac{1 - 8 p_{\mathrm{fail}} }{3} } 
    = 
    \frac{1}{4} .
\end{align}

We now use the convention $z := e^{i\theta}$, and define $w_j:=z_j^{d}y_j$ for $z_j:=e^{i\theta_j}$. 
Since $|z_j|=1$, Eq.~\eqref{eq:fourier-input-failure-prob} implies
\begin{align}
\Prob\left[
|w_j-q(z_j)|>\epsilon
\right]\le p_{\mathrm{fail}}   ,
\end{align}
for $q(z)=z^{d}f(\theta)$. 
Let $B_j$ be the indicator of the event $|w_j-q(z_j)|>\epsilon$, and set
\begin{align}
B:=\frac{1}{L}\sum_{j=0}^{L-1}B_j.
\end{align}
Then, $\E[B]\le p_{\mathrm{fail}}$, and Markov's inequality gives 
\begin{align}
\Prob[B>\rho_L]\le \frac{p_{\mathrm{fail}}}{\rho_L}<\frac14.
\label{eq:fourier-Markov}
\end{align}
On the event $B\le \rho_L$, the inlier set $G_F:=\{j:|w_j-q(z_j)|\le\epsilon\}$ has size at least
\begin{align}
|G_F|
\ge
L-\rho_L L
=
\frac{L+2d+1}{2}.
\label{eq:fourier-inlier-count}
\end{align}

Using an NP oracle, we can construct a polynomial $\widetilde q(z)$ of degree at most $2d$ and a set $\widetilde G_F\subseteq\{0,\ldots, L-1\}$ such that
\begin{align}
|\widetilde G_F|
\ge
\frac{L+2d+1}{2}   , 
\end{align}
while for every $j\in \widetilde G_F$,
\begin{align}
|w_j-\widetilde q(z_j)|\le \epsilon   .
\end{align}
Note that finding such a pair $(\widetilde q,\widetilde G_F)$ is an NP problem, since a proposed witness can be verified efficiently.
Moreover, such a witness surely exists because the true polynomial $q$ and the true inlier set $G_F$ satisfy the same conditions.


By construction, the intersection $G_F\cap \widetilde G_F$ has size at least $2d+1$.
For every $j$ in this intersection, triangle inequality gives
\begin{align}
|q(z_j)-\widetilde q(z_j)| \leq |q(z_j) - w_j| + |w_j-\widetilde q(z_j)|\le 2\epsilon.
\end{align}
Moreover, $C_F\geq6$ and $d>1$ imply $L=C_Fd\geq6d\geq4d+2=2(2d+1)$, so the sampling condition in Lemma~\ref{lem:roots-conditioning} holds when the lemma is applied to a degree-$2d$ polynomial.
Applying Lemma~\ref{lem:roots-conditioning} with $d$ replaced by $2d$ therefore yields
\begin{align}
\left|
[z^{2d}]\left(q(z)-\widetilde q(z)\right)
\right|
\le
2\epsilon
\left(\frac{eL}{2d}\right)^{2d}
\le
2\epsilon\left(\frac{eC_F}{2}\right)^{2d}.
\end{align}
Since the coefficient of $z^{2d}$ in $q$ is $c_{d}$, the coefficient of $z^{2d}$ in $\widetilde q$ yields the desired estimate. 
The only probabilistic failure comes from the event in Eq.~\eqref{eq:fourier-Markov}, whose probability is smaller than $1/4$.
This proves the claim.

\end{proof}


We conclude this section by presenting a key lemma used in the proof of Theorem~\ref{thm:probabilistic-robust-fourier-extraction}.

\begin{lemma}
\label{lem:roots-conditioning}
Let $d\ge 1$, let $L\ge 2(d+1)$, and let ${u}=e^{2\pi i/L}$. 
Let $I\subseteq\{0,1,\ldots,L-1\}$ be any subset with $|I|=d+1$.
If a polynomial $h(z)\in\mathbb C[z]$ has degree at most $d$, then
\begin{align}
\left|[z^d]h(z)\right|
\le
\max_{j\in I}|h({u}^j)|
\left(\frac{eL}{d}\right)^d,
\label{eq:roots-conditioning}
\end{align}
where $[z^d]\,h(z)$ denotes the coefficient of $z^d$ in the power-series expansion of $h(z)$.
\end{lemma}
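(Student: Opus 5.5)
Write $I=\{j_0,\ldots,j_d\}$ and set $z_k:=u^{j_k}$; these are $d+1$ distinct $L$th roots of unity. I would prove Eq.~\eqref{eq:roots-conditioning} by Lagrange interpolation on these nodes. Since $\deg h\le d$, the polynomial is determined by its values there:
\begin{align}
h(z)=\sum_{k=0}^{d}h(z_k)\prod_{l\neq k}\frac{z-z_l}{z_k-z_l}.
\end{align}
Each basis polynomial has leading coefficient $\prod_{l\neq k}(z_k-z_l)^{-1}$. Hence
\begin{align}
\left|[z^d]h(z)\right|\le \max_{k}|h(z_k)|\sum_{k=0}^{d}\prod_{l\neq k}\frac{1}{|z_k-z_l|},
\end{align}
and it remains to show that this sum is at most $(eL/d)^d$.

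The key step is a lower bound on the products of node distances. For two $L$th roots of unity whose indices differ cyclically by $r$, we have $|u^{a}-u^{b}|=2\sin(\pi r/L)$. Fix $k$. The other $d$ nodes occupy distinct nonzero cyclic offsets from $j_k$, and at most two nodes share each cyclic distance $r\in\{1,\ldots,\lfloor L/2\rfloor\}$. Since $\sin$ is increasing on $[0,\pi/2]$, the product is smallest when the nodes are packed as close as possible. This gives
\begin{align}
\prod_{l\neq k}|z_k-z_l|\ge \prod_{i=1}^{d}2\sin\!\left(\frac{\pi\lceil i/2\rceil}{L}\right).
\end{align}
The hypothesis $L\ge 2(d+1)$ guarantees $\lceil d/2\rceil\le L/2$, so all these arguments lie in the monotone range. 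Applying $\sin x\ge 2x/\pi$ on $[0,\pi/2]$, each factor is at least $4\lceil i/2\rceil/L$. Therefore
\begin{align}
\prod_{l\neq k}|z_k-z_l|\ge \left(\frac{4}{L}\right)^{d}\prod_{i=1}^{d}\left\lceil \frac{i}{2}\right\rceil\ge \left(\frac{4}{L}\right)^{d}\lfloor d/2\rfloor!\,\lceil d/2\rceil!.
\end{align}

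Summing over the $d+1$ choices of $k$ then bounds the coefficient sum by
\begin{align}
(d+1)\left(\frac{L}{4}\right)^{d}\frac{1}{\lfloor d/2\rfloor!\lceil d/2\rceil!}.
\end{align}
Since $\binom{d}{\lfloor d/2\rfloor}\le 2^d$, we have $\lfloor d/2\rfloor!\lceil d/2\rceil!\ge d!/2^d$, so the sum is at most $(d+1)(L/2)^d/d!$. Stirling's bound $d!\ge (d/e)^d\sqrt{2\pi d}$ turns this into $(eL/(2d))^d(d+1)/\sqrt{2\pi d}$. Finally, $(d+1)/\sqrt{2\pi d}\le 2^d$ for all $d\ge1$, which yields $(eL/d)^d$.

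The step I expect to need the most care is the extremal packing claim: that the product of distances from a fixed node is minimized by the alternating nearest-offset configuration. I would justify it by a sorting argument. List the other nodes in increasing cyclic distance from $z_k$. The $i$th smallest distance is at least $2\sin(\pi\lceil i/2\rceil/L)$, because at most $2(r-1)$ distinct other nodes can lie at cyclic distance $<r$, with two available sides per distance. Monotonicity of $\sin$ on $[0,\pi/2]$ then gives the bound termwise. The factor-of-two slack in the final inequality leaves enough room that no sharper constants are needed.
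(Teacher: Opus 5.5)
Your proof is correct and follows essentially the same route as the paper: Lagrange interpolation on the $d+1$ roots of unity, a counting bound on the $i$th smallest cyclic distance, the chord bound $2\sin(\pi r/L)\ge 4r/L$, and then $d!\ge (d/e)^d$ together with $d+1\le 2^d$. Your distance bound $\lceil i/2\rceil$ is sharper than the paper's $i/2$, but after your relaxation $\lfloor d/2\rfloor!\,\lceil d/2\rceil!\ge d!/2^d$ both give $\prod_{l\neq k}|z_k-z_l|\ge 2^d d!/L^d$; your counting argument handles the packing step correctly, and the extra $\sqrt{2\pi d}$ factor from Stirling is not needed.
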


\begin{proof}
Since $|I|=d+1$ and $\deg(h)\le d$, the values of $h$ on the points
$\{{u}^j:j\in I\}$ determine $h$ uniquely. 
Using the Lagrange representation, we have 
\begin{align}
h(z)
=
\sum_{j\in I}
h({u}^j)
\prod_{\substack{\ell\in I\\ \ell\ne j}}
\frac{z-{u}^\ell}{{u}^j-{u}^\ell}.
\label{eq:lagrange-roots-unity}
\end{align}
The coefficient of $z^d$ in the $j$-th Lagrange basis polynomial is 
\begin{align}
\left[z^d\right]
\prod_{\substack{\ell\in I\\ \ell\ne j}}
\frac{z-{u}^\ell}{{u}^j-{u}^\ell}
=
\frac{1}{
\prod_{\substack{\ell\in I\\ \ell\ne j}}
({u}^j-{u}^\ell)
}  ,
\label{eq:leading-coeff-lagrange-basis}
\end{align}
which implies 
\begin{align}
[z^d]h(z)
=
\sum_{j\in I}
\frac{h({u}^j)}
{
\prod_{\substack{\ell\in I\\ \ell\ne j}}
({u}^j-{u}^\ell)
}   ,
\label{eq:leading-coeff-lagrange}
\end{align}
and taking absolute values gives
\begin{align}
\left|[z^d]h(z)\right|
\le
\max_{j\in I}|h({u}^j)|
\sum_{j\in I}
\frac{1}{
\prod_{\substack{\ell\in I\\ \ell\ne j}}
|{u}^j-{u}^\ell|
}.
\label{eq:leading-coeff-bound-start}
\end{align}

It remains to lower bound the product of $|{u}^j-{u}^\ell|$. 
Fix $j\in I$. For $\ell\ne j$, define the circular grid distance between
$j$ and $\ell$ by
\begin{align}
\varrho(j,\ell):=\min\{|j-\ell|,L-|j-\ell|\}.
\end{align}
By definition, we have $1\le \varrho(j,\ell)\le \frac{L}{2}$.
Order the $d$ possible grid distances $\varrho(j,\ell)$, with $\ell\in I\setminus\{j\}$, as $\varrho_1\le \varrho_2\le \cdots\le \varrho_d$, such that $\varrho_1 \geq 1$ and $\varrho_d \leq L/2$.

Indeed, around the point $j$, there are at most two grid points at each
positive circular distance: one clockwise and one counterclockwise.
Therefore, the number of grid points at circular distance strictly smaller than $r/2$ is at most
\begin{align}
2\left\lfloor \frac{r-1}{2}\right\rfloor
\le r-1   ,
\end{align}
and thus it is impossible for the $r$-th closest point to have distance strictly
smaller than $r/2$. 
Hence, we have the bound $\varrho_r \ge {r}/{2}$ for every $r = 1,\ldots,d$.

Now, for a fixed $\ell \neq j$, 
\begin{align}
|{u}^j-{u}^\ell|
=
\left|1-e^{2\pi i\varrho(j,l)/L}\right|
=
2\sin\left(\frac{\pi \varrho(j,l)}{L}\right).
\end{align}
Moreover, since $0\le \varrho/L\le 1/2$, using the relation $\sin(\pi x)\ge 2x$ for $0\le x\le 1/2$, we have 
\begin{align}
|{u}^j-{u}^\ell|
\ge
\frac{4\varrho(j,l)}{L}  .
\label{eq:chord-lower-clear}
\end{align}
Therefore, the product of $|{u}^j-{u}^\ell|$ over all $\ell\in I\setminus\{j\}$ satisfies 
\begin{align}
\prod_{\substack{\ell\in I\\ \ell\ne j}}
|{u}^j-{u}^\ell|
\geq 
\prod_{\substack{\ell\in I\\ \ell\ne j}}\frac{4\varrho(j,l)}{L} 
\ge
\prod_{r=1}^{d}\frac{2r}{L}
=
\frac{2^d d!}{L^d}   ,
\label{eq:distance-product-lower-clear}
\end{align}
and equivalently, 
\begin{align}
\frac{1}
{\prod_{\substack{\ell\in I\\ \ell\ne j}}|{u}^j-{u}^\ell|}
\le
\frac{L^d}{2^d d!}.
\end{align}
Substituting this into Eq.~\eqref{eq:leading-coeff-bound-start} gives
\begin{align}
\left|[z^d]h(z)\right|
&\le
\max_{j\in I}|h({u}^j)|
(d+1)\frac{L^d}{2^d d!}.
\end{align}
Using $d!\ge (d/e)^d$ and $d+1\le 2^d$ for $d\ge 1$, we obtain
\begin{align}
(d+1)\frac{L^d}{2^d d!}
\le
\left(\frac{eL}{d}\right)^d  ,
\end{align}
which finally gives the desired bound
\begin{align}
\left|[z^d]h(z)\right|
\le
\max_{j\in I}|h({u}^j)|
\left(\frac{eL}{d}\right)^d   ,
\end{align}
proving the claim. 
\end{proof}

\section{Generalized Stockmeyer's reduction}\label{appendix: section: stockmeyer}

In this appendix, we generalize Stockmeyer's reduction described in~\cite{bouland2026complexity}, particularly to our TBS setting; the corresponding PBS result follows by the same argument.
While the proof closely parallels that of~\cite{bouland2026complexity}, we present it here for completeness and to make the analysis self-contained.

Throughout, we prove that the existence of an approximate classical sampler for TBS implies approximate estimation of threshold output probabilities for most $U$ and $x$, where $U \sim \Haar(m)$ and $x \sim \mathcal{D}_{m,n}$.
Note that, unlike the standard BosonSampling case treated in~\cite{bouland2026complexity}, no reverse-embedding lemma is required here, since the inputs to ATE in Definition~\ref{def: thresholdBS} are already $U$ and $x$.

We begin by formally defining a classical sampler for TBS, which follows the definition commonly used in the literature~\cite{aaronson2011computational, bouland2019complexity, movassagh2023hardness, bouland2022noise, bouland2026complexity, bouland2025exponential, go2026computational}.

\begin{definition}[Classical sampler for TBS]
\label{def:threshold-classical-sampler}
Let $T_x(U)$ denote the threshold output probability distribution.
We define a classical sampler for TBS as a probabilistic algorithm that, on input a unitary matrix $U \in \mathrm{U}(m)$ and an error parameter $\xi \ge 0$, outputs $x \in \mathcal{T}_{m,n}$ from a probability distribution $\widetilde{T}_x(U)$ satisfying
\begin{align}
\frac{1}{2}\sum_{x \in \mathcal{T}_{m,n}} |T_x(U) - \widetilde{T}_x(U)| \le \xi  ,
\end{align}
within $\poly(n,\xi^{-1})$ time.
\end{definition}

We show that, if a classical sampler as in Definition~\ref{def:threshold-classical-sampler} exists, then the ATE problem in Definition~\ref{def: thresholdBS} can be solved in $\rm{BPP^{NP}}$, provided that the total variation distance bound $\xi$ is sufficiently small compared with the error parameters of ATE.

\begin{theorem}[Stockmeyer reduction for TBS]
\label{thm:threshold-stockmeyer}
If there exists a classical sampler as in Definition~\ref{def:threshold-classical-sampler}, then there exists a $\mathrm{BPP}^{\mathrm{NP}}$ procedure that solves $\rm{ATE}$ problem to additive imprecision
\begin{align}
\varepsilon
=
\frac{1}{\delta}
\left(
4\xi + \frac{1}{ f(n)}
\right)  ,
\end{align}
for desired $f(n) = \poly(n)$, with failure probability at most $\delta$.
\end{theorem}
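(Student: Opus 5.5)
The plan follows the standard Stockmeyer argument of Aaronson--Arkhipov and Bouland et al., adapted to the threshold outcome ensemble $\mathcal{D}_{m,n}$. We write $q_x:=T_x(U)$ and $\widetilde q_x:=\widetilde{T}_x(U)$, and take the sampler's randomness to be a uniformly random string $r\in\{0,1\}^{\poly(n)}$ together with a deterministic map $r\mapsto x$. For fixed $(U,x)$ we then have $\widetilde q_x=\Pr_r[\mathcal{A}(U,r)=x]$, which is the acceptance fraction of a polynomial-size circuit on input $r$. Stockmeyer's approximate counting therefore gives, in $\mathrm{BPP}^{\mathrm{NP}}$, an estimate $\widehat q_x$ with $|\widehat q_x-\widetilde q_x|\le \widetilde q_x/g(n)$ for any desired $g(n)=\poly(n)$, succeeding with probability $1-2^{-\poly(n)}$. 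The final error will be split by the triangle inequality:
\begin{align}
|\widehat q_x-q_x|\le |\widehat q_x-\widetilde q_x|+|\widetilde q_x-q_x|\le \frac{q_x+|\widetilde q_x-q_x|}{g(n)}+|\widetilde q_x-q_x|.
\end{align}

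\textbf{Step 1: bound the sampler's deviation.} We control $|\widetilde q_x-q_x|$ on average with respect to the ensemble. Define $\Delta_x:=|q_x-\widetilde q_x|/T_x^{(\mathrm{avg})}$. The expectation to compute is
\begin{align}
\E_{x\sim\mathcal{D}_{m,n}}[\Delta_x]=\sum_{x}T_x^{(\mathrm{avg})}\cdot\frac{|q_x-\widetilde q_x|}{T_x^{(\mathrm{avg})}}=\sum_x|q_x-\widetilde q_x|\le 2\xi,
\end{align}
and this holds for every fixed $U$. This identity is the point of weighting by $\mathcal{D}_{m,n}$, and it makes any hiding/reverse-embedding step unnecessary. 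Markov's inequality then gives $\Pr_{x}[\Delta_x>4\xi/\delta]\le \delta/2$.

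\textbf{Step 2: bound the relative Stockmeyer error.} We need $q_x/T_x^{(\mathrm{avg})}$ to be small, which we again handle by Markov. Since $\E_{U}[q_x]=T_x^{(\mathrm{avg})}$ for every $x$, we get $\E_{x,U}[q_x/T_x^{(\mathrm{avg})}]=1$. Hence $\Pr_{x,U}[q_x>(2/\delta)T_x^{(\mathrm{avg})}]\le\delta/2$.

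\textbf{Step 3: combine.} Outside the union of the two bad events, which has total probability at most $\delta$ plus the exponentially small Stockmeyer failure, we have
\begin{align}
|\widehat q_x-q_x|\le T_x^{(\mathrm{avg})}\left[\frac{4\xi}{\delta}+\frac{1}{g(n)}\left(\frac{2}{\delta}+\frac{4\xi}{\delta}\right)\right].
\end{align}
Choosing $g(n)=3f(n)$, and using $\xi\le 1$, the second term is at most $1/(\delta f(n))$. This gives $\varepsilon=\delta^{-1}(4\xi+1/f(n))$. The exponentially small Stockmeyer failure can be absorbed by slightly adjusting constants or by taking $g$ marginally larger.

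\textbf{Main obstacle.} The step needing the most care is the Stockmeyer step. We must ensure that approximate counting applies to a sampler that runs in $\poly(n,\xi^{-1})$ time, which holds because $\xi^{-1}$ is polynomial in the regime of interest. We must also handle the multiplicative-versus-additive conversion so that the bound comes out in units of $T_x^{(\mathrm{avg})}$; this is exactly why the Step 2 Markov bound is needed. For PBS, Corollary~\ref{cor: companion stockmeyer PBS} follows verbatim, replacing $\mathcal{D}_{m,n}$ and $T$ by $\mathcal{Y}_{m,n}$ and $P$, since $\E_U[P_x(U)]=P_x^{(\mathrm{avg})}$ is also the outcome mass.
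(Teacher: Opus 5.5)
Your argument is essentially the paper's proof: both use Markov on $|T_x-\widetilde T_x|/T_x^{(\mathrm{avg})}$, exploiting that $\mathcal{D}_{m,n}$ weights outcomes by $T_x^{(\mathrm{avg})}$. Both then apply Stockmeyer for a multiplicative estimate of $\widetilde T_x$, use a second Markov bound to make that error additive in units of $T_x^{(\mathrm{avg})}$, and finish with a union bound at level $\delta/2$ each.

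The one difference is the second Markov step. The paper bounds the sampler's probability directly, using
\begin{align}
\E_{x\sim\mathcal{D}_{m,n}}\bigl[\widetilde T_x(U)/T_x^{(\mathrm{avg})}\bigr]=\sum_x\widetilde T_x(U)=1
\end{align}
for every fixed $U$. You instead bound $q_x$ via $\E_U[T_x(U)]=T_x^{(\mathrm{avg})}$ and then pay the cross term from $\widetilde q_x\le q_x+|\widetilde q_x-q_x|$.

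That cross term is where your constant slips. With $g=3f$ and $\xi\le 1$, the second term is $(2+4\xi)/(3\delta f)\le 2/(\delta f)$, not $1/(\delta f)$. Take $g=6f$ instead. Alternatively, bound $\widetilde q_x$ directly as the paper does; this gives the stated $\varepsilon$ exactly, with no need for $\xi\le 1$ and no averaging over $U$.

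Your handling of the exponentially small Stockmeyer failure probability, which the paper ignores, is fine. Run one Markov bound at level $\delta/2-2^{-\poly(n)}$ and absorb the resulting tiny change in the threshold into the slack in $g$.
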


Before proving Theorem~\ref{thm:threshold-stockmeyer}, we describe how Theorem~\ref{thm:threshold-stockmeyer} implies the TBS statement in Theorem~\ref{thm: hardness of APE implies sampling hardness}.
Suppose that the ATE problem in Definition~\ref{def: thresholdBS} is shown to be \#P-hard for $\varepsilon, \delta = \poly(n)^{-1}$ under a finite-PH reduction (say, $\Sigma_{k}^{\mathrm{P}}$).
Then, if a classical sampler as in Definition~\ref{def:threshold-classical-sampler} exists, Theorem~\ref{thm:threshold-stockmeyer} implies that ATE is also solvable in $\mathrm{BPP}^{\mathrm{NP}}$.
Combined with Toda's theorem $\rm{PH} \subseteq {\rm P^{\# P}}$~\cite{toda1991pp}, this yields $\rm{PH} \subseteq \rm{BPP}^{\Sigma_{k}^{\mathrm{P}}}$, implying that PH collapses to a finite level.
In other words, unless PH collapses, no classical sampler satisfying Definition~\ref{def:threshold-classical-sampler} can exist.

\begin{proof}[Proof of Theorem~\ref{thm:threshold-stockmeyer}]
Fix $U \in \mathrm{U}(m)$.
Suppose there exists a classical sampler as in Definition~\ref{def:threshold-classical-sampler}, whose output probability distribution $\widetilde{T}_x(U)$ over $x\in \mathcal{T}_{m,n}$ satisfies
\begin{align}
\sum_{x \in \mathcal{T}_{m,n}} |T_x(U) - \widetilde{T}_x(U)| \le 2\xi  .
\end{align}
We then have
\begin{align}
\mathbb{E}_{x \sim \mathcal{D}_{m,n}}
\left[
\frac{|T_x(U) - \widetilde{T}_x(U)|}{T_x^{(\mathrm{avg})}}
\right]
=
\sum_{x \in \mathcal{T}_{m,n}} |T_x(U) - \widetilde{T}_x(U)|
\le 2\xi.
\end{align}
which, by Markov's inequality, for any $k_1 > 0$, implies
\begin{align}
\Pr_{x \sim \mathcal{D}_{m,n}}
\left[
|T_x(U) - \widetilde{T}_x(U)| \ge 2\xi k_1\, T_x^{(\mathrm{avg})}
\right]
\le \frac{1}{k_1}.
\label{eq:threshold-markov-1}
\end{align}

Now given a classical sampler with output probability distribution $\widetilde{T}_x(U)$, for any $x \in \mathcal{T}_{m,n}$, we apply Stockmeyer's approximate counting algorithm~\cite{stockmeyer1985approximation} with relative accuracy $1/(2f(n))$.
This gives an estimate $\overline{T}_x(U)$ satisfying
\begin{align}
|\widetilde{T}_x(U) - \overline T_x(U)|
\le
\frac{1}{2 f(n) }\, \widetilde{T}_x(U)  ,
\label{eq:threshold-stockmeyer}
\end{align}
for a desired polynomial $f(n) = \poly(n)$, in complexity class $\mathrm{BPP}^{\mathrm{NP}}$.
To convert this multiplicative estimate into an additive one, observe that
\begin{align}
\mathbb{E}_{x \sim \mathcal{D}_{m,n}}
\left[
\frac{\widetilde{T}_x(U) }{T_x^{(\mathrm{avg})}}
\right]
=
\sum_{x \in \mathcal{T}_{m,n}} \widetilde{T}_x(U)
=
1.
\end{align}
Therefore, again by Markov's inequality, for any $k_2 > 0$, we get
\begin{align}
\Pr_{x \sim \mathcal{D}_{m,n}}
\left[
\widetilde{T}_x(U) \ge k_2\, T_x^{(\mathrm{avg})}
\right]
\le \frac{1}{k_2}.
\label{eq:threshold-markov-2}
\end{align}
Combining Eq.~\eqref{eq:threshold-stockmeyer} and Eq.~\eqref{eq:threshold-markov-2}, we obtain
\begin{align}
\Pr_{x \sim \mathcal{D}_{m,n}}
\left[
| \widetilde{T}_x(U) - \overline T_x(U) |
\ge
\frac{k_2}{2 f(n) }\, T_x^{(\mathrm{avg})}
\right]
\le \frac{1}{k_2}.
\label{eq:threshold-stockmeyer-additive}
\end{align}

Finally, by the triangle inequality,
\begin{align}
|T_x(U) - \overline T_x(U)|
\le
|T_x(U) - \widetilde{T}_x(U)| + |\widetilde{T}_x(U) - \overline T_x(U)|.
\end{align}
Hence, by the union bound together with
Eq.~\eqref{eq:threshold-markov-1} and Eq.~\eqref{eq:threshold-stockmeyer-additive},
\begin{align}
\Pr_{x \sim \mathcal{D}_{m,n}}
\left[
|T_x(U) - \overline T_x(U)|
\ge
\left(
2\xi k_1 + \frac{k_2}{2 f(n)}
\right) T_x^{(\mathrm{avg})}
\right]
\le
\frac{1}{k_1} + \frac{1}{k_2}.
\end{align}
Setting $k_1 = k_2 = 2/\delta$ yields
\begin{align}
\Pr_{x \sim \mathcal{D}_{m,n}}
\left[
|T_x(U) - \overline T_x(U)|
\ge
\frac{1}{\delta}
\left(
4\xi + \frac{1}{f(n)}
\right)T_x^{(\mathrm{avg})}
\right]
\le \delta.
\end{align}
Since the above bound holds for every fixed $U$, it holds that
\begin{align}
\Pr_{\substack{U \sim \Haar(m) \\ x \sim \mathcal{D}_{m,n} }}
\left[
|T_x(U) - \overline T_x(U)|
\ge
\frac{1}{\delta}
\left(
4\xi + \frac{1}{f(n)}
\right)T_x^{(\mathrm{avg})}
\right]
\le \delta,
\end{align}
which proves the claim.
\end{proof}

We now define the analogous computational problem for PBS and state the corresponding result.

\begin{definition}[Classical sampler for PBS]
\label{def:parity-classical-sampler}
Let $P_x(U)$ denote the parity output probability distribution.
We define a classical sampler for PBS as a probabilistic algorithm that, on input a unitary matrix $U \in \mathrm{U}(m)$ and an error parameter $\xi \ge 0$, outputs $x \in \mathcal{P}_{m,n}$ from a probability distribution $\widetilde{P}_x(U)$ satisfying
\begin{align}
\frac{1}{2}\sum_{x \in \mathcal{P}_{m,n}} |P_x(U) - \widetilde{P}_x(U)| \le \xi  ,
\end{align}
within $\poly(n,\xi^{-1})$ time. 
\end{definition}

\begin{corollary}[Stockmeyer reduction for PBS]
\label{thm:parity-stockmeyer}
If there exists a classical sampler as in Definition~\ref{def:parity-classical-sampler}, then there exists a $\mathrm{BPP}^{\mathrm{NP}}$ procedure that solves $\rm{APE}$ problem to additive imprecision
\begin{align}
\varepsilon
=
\frac{1}{\delta}
\left(
4\xi + \frac{1}{ f(n)}
\right)  ,
\end{align}
for desired $f(n) = \poly(n)$, with failure probability at most $\delta$.
\end{corollary}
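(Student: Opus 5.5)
The plan is to mirror the proof of Theorem~\ref{thm:threshold-stockmeyer} verbatim, replacing $T_x(U)$, $\widetilde T_x(U)$, $T_x^{(\mathrm{avg})}$, $\mathcal{T}_{m,n}$ and $\mathcal{D}_{m,n}$ by $P_x(U)$, $\widetilde P_x(U)$, $P_x^{(\mathrm{avg})}$, $\mathcal{P}_{m,n}$ and $\mathcal{Y}_{m,n}$. The only structural fact the threshold argument used is that the outcome ensemble has probability mass function equal to the Haar-averaged output probability. By Definition~\ref{def: random parity outcome ensemble}, this holds for $\mathcal{Y}_{m,n}$ as well. Consequently, for any nonnegative function $g$ on $\mathcal{P}_{m,n}$,
\begin{align}
\mathbb{E}_{x\sim\mathcal{Y}_{m,n}}\left[\frac{g(x)}{P_x^{(\mathrm{avg})}}\right]=\sum_{x\in\mathcal{P}_{m,n}} g(x).
\end{align}
I will check that $P_x^{(\mathrm{avg})}>0$ for every $x\in\mathcal{P}_{m,n}$. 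This is immediate from Eq.~\eqref{eq:main-parity-average}, since $|x|\le n$ and $|x|\equiv n \pmod 2$ make the binomial well defined and positive.

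Fix $U$, and let $\widetilde P_x(U)$ be the sampler's distribution, so that $\sum_x |P_x(U)-\widetilde P_x(U)|\le 2\xi$. I then proceed in three steps.

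First, applying the identity with $g(x)=|P_x(U)-\widetilde P_x(U)|$ gives an expected normalized error of at most $2\xi$. Markov's inequality then yields, except with probability $1/k_1$ over $x$,
\begin{align}
|P_x(U)-\widetilde P_x(U)|<2\xi k_1 P_x^{(\mathrm{avg})}.
\end{align}

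Second, since the sampler runs in polynomial time, $\widetilde P_x(U)$ is the acceptance probability of a polynomial-time randomized procedure: run the sampler on $U$ and accept if its output equals $x$. Stockmeyer's approximate counting therefore gives, in $\mathrm{BPP}^{\mathrm{NP}}$, an estimate $\overline P_x(U)$ within relative error $1/(2f(n))$.

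Third, applying the identity with $g=\widetilde P_x(U)$, which sums to $1$, and using Markov again, we get $\widetilde P_x(U)<k_2P_x^{(\mathrm{avg})}$ except with probability $1/k_2$. This converts the multiplicative Stockmeyer guarantee into an additive error of $k_2P_x^{(\mathrm{avg})}/(2f(n))$.

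Combining the three steps by the triangle inequality and a union bound, with $k_1=k_2=2/\delta$, gives failure probability at most $\delta$ at imprecision $\frac{1}{\delta}(4\xi+1/f(n))P_x^{(\mathrm{avg})}$. Because this bound holds for every fixed $U$, averaging over $U\sim\Haar(m)$ gives the APE guarantee of Definition~\ref{def: parityBS}.

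There is no real obstacle here. The one point deserving care is the input convention: APE supplies the pair $(x,U)$ directly, so no reverse-embedding or hiding step is needed, exactly as noted for TBS. I also need that membership of $x$ in $\mathcal{P}_{m,n}$ and the value of $P_x^{(\mathrm{avg})}$ are efficiently checkable and computable. Both are clear from Eqs.~\eqref{eq:main-Pmn} and \eqref{eq:main-parity-average}, although the final bound does not actually require computing $P_x^{(\mathrm{avg})}$.
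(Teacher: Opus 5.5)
Your proposal is correct and takes the same approach as the paper, which proves the corollary by substituting $P_x$, $\widetilde P_x$, $P_x^{(\mathrm{avg})}$, $\mathcal{Y}_{m,n}$, $\mathcal{P}_{m,n}$ for their threshold counterparts in the proof of Theorem~\ref{thm:threshold-stockmeyer}. The fact you single out, that $\mathcal{Y}_{m,n}$ has probability mass function $P_x^{(\mathrm{avg})}>0$, is exactly what makes both Markov steps, and hence the substitution, valid.
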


\begin{proof}
The proof follows from that of Theorem~\ref{thm:threshold-stockmeyer} by the substitutions
$T_x \mapsto P_x$, $\widetilde{T}_x \mapsto \widetilde{P}_x$, $T_x^{(\mathrm{avg})} \mapsto P_x^{(\mathrm{avg})}$, $\mathcal{D}_{m,n} \mapsto \mathcal{Y}_{m,n}$, $\mathcal{T}_{m,n} \mapsto \mathcal{P}_{m,n}$, and $\rm{ATE} \mapsto \rm{APE}$.
\end{proof}

Likewise, suppose that the APE problem in Definition~\ref{def: parityBS} is shown to be \#P-hard for $\varepsilon, \delta = \poly(n)^{-1}$ under a finite-PH reduction.
Then Corollary~\ref{thm:parity-stockmeyer}, together with Toda's theorem, implies that the existence of a classical sampler satisfying Definition~\ref{def:parity-classical-sampler} would collapse PH to a finite level.
This gives the PBS statement in Theorem~\ref{thm: hardness of APE implies sampling hardness}.

\section{Shift-and-scale stability of rectangular Haar-random submatrices}
\label{appendix: section: translation invariance}

In this appendix, we prove the shift-and-scale stability of rectangular Haar-random unitary submatrices. 
The proof is a rectangular analog of the corresponding argument for square Haar submatrices, as presented in~\cite{bouland2026complexity}.

Specifically, in Appendix~\ref{appendix: section: invariance: density}, we derive the density of rectangular Haar submatrices, denoted by $Z$ hereafter, with respect to the complex Lebesgue measure. 
In Appendix~\ref{appendix: section: invariance: perturbed density}, we analyze how this density transforms under the affine map $Z\mapsto (1-t)Z+tH$ for fixed $H$. 
In Appendix~\ref{appendix: section: invariance: boundary}, we establish the boundary-tail bound required for the shift-and-scale stability argument. 
In Appendix~\ref{appendix: section: invariance: general}, we prove the general shift-and-scale lemma.
Finally, in Appendix~\ref{appendix: section: invariance: main text}, we prove the specific shift-and-scale lemmas used in the main text.

\subsection{Density of the rectangular Haar submatrices}\label{appendix: section: invariance: density}

We first recall the density of a rectangular truncation of a Haar unitary (see~\cite{jiang2009approximation} for details).


\begin{lemma}[Density of rectangular Haar submatrices~\cite{jiang2009approximation}]
\label{lem:rect-density-appD}
Let $Z = U_{[p],[q]}\in\mathbb C^{p\times q}$ be the top-left $p\times q$ block of $U \sim \Haar(m)$, with $p+q <  m$.
Then with respect to Lebesgue measure on $\mathbb C^{p\times q}$, $Z$ has density
\begin{align}\label{eq:unscaled-density-appD}
f(Z)
=
C\,
\det(I_q-Z^\dagger Z)^{m-p-q}
\mathbf 1{\{Z^\dagger Z\prec I_q\}}  ,
\end{align}
for a normalizing constant $C >0$.
\end{lemma}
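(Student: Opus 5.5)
The plan is to derive the density of $Z=U_{[p],[q]}$ in two stages. First we get the law of the first $q$ columns of $U$, which is uniform on the Stiefel manifold. Then we integrate out the bottom $m-p$ rows of those columns.

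\emph{Step 1: a Gaussian representation.} The first $q$ columns $V=U_{[m],[q]}$ have the law of $G(G^\dagger G)^{-1/2}$, where $G\in\mathbb C^{m\times q}$ has i.i.d.\ standard complex Gaussian entries. Split $G$ into its top $p$ rows $G_1$ and its bottom $m-p$ rows $G_2$. Then
\begin{align}
Z=G_1(G_1^\dagger G_1+G_2^\dagger G_2)^{-1/2}.
\end{align}
Since $m-p>q$, the Wishart matrix $W_2=G_2^\dagger G_2$ is almost surely positive definite. It has density proportional to $\det(W_2)^{m-p-q}e^{-\mathrm{tr}W_2}$ on $q\times q$ positive matrices, and it is independent of $G_1$.

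\emph{Step 2: change of variables.} Condition on $S:=G^\dagger G=G_1^\dagger G_1+W_2$ and write $Z=G_1S^{-1/2}$. A cleaner route is to change variables from $(G_1,W_2)$ to $(Z,S)$:
\begin{align}
G_1=ZS^{1/2},\qquad W_2=S^{1/2}(I_q-Z^\dagger Z)S^{1/2}.
\end{align}
The constraint $W_2\succ0$ becomes exactly $Z^\dagger Z\prec I_q$, which gives the indicator. Now compute each factor.
\begin{itemize}
\item The Jacobian of $G_1\mapsto Z$ at fixed $S$ is $\det(S)^{p}$, because right multiplication by $S^{1/2}$ on $p$ rows of complex entries contributes $|\det S^{1/2}|^{2p}$.
\item The map $W_2\mapsto S$ at fixed $Z$ is a translation, so its Jacobian is $1$.
\item The joint density becomes $e^{-\mathrm{tr}S}\,\det(S)^{m-p-q}\det(I_q-Z^\dagger Z)^{m-p-q}\det(S)^{p}$, up to a constant.
\end{itemize}
This factorizes into a function of $S$ times $\det(I_q-Z^\dagger Z)^{m-p-q}\mathbf 1\{Z^\dagger Z\prec I_q\}$. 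Integrating out $S$ gives a finite constant (a Wishart normalization with exponent $m-q$), and what remains is the claimed density $f(Z)$ with normalizing constant $C$.

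\emph{Expected obstacle.} The delicate step is justifying the Jacobian bookkeeping for the map $(G_1,W_2)\mapsto(Z,S)$ on the cone of positive matrices. I would verify it by noting that the map is a bijection on the relevant domains, with inverse given explicitly above. As an independent check, I would confirm the exponent $m-p-q$ against the known square case $p=q$, where it reduces to $\det(I-Z^\dagger Z)^{m-2q}$. Alternatively, one can simply cite the truncation result of \cite{jiang2009approximation}, as the statement does.
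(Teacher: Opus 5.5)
Your derivation is correct, but it takes a different route from the paper. The paper gives no argument of its own and simply imports the formula from \cite{jiang2009approximation}.

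You derive it directly from the Gaussian representation $Z=G_1(G_1^\dagger G_1+G_2^\dagger G_2)^{-1/2}$ and the change of variables $(G_1,W_2)\mapsto(Z,S)$. The resulting joint density $e^{-\mathrm{tr}S}\det(S)^{m-q}\det(I_q-Z^\dagger Z)^{m-p-q}\mathbf{1}\{Z^\dagger Z\prec I_q\}$ on the product domain $\{Z^\dagger Z\prec I_q\}\times\{S\succ 0\}$ is right. The scalar case $p=q=1$ is a quick sanity check: it gives $(1-|z|^2)^{m-2}$ on the unit disk, the known law of a single Haar entry.

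Compared with the citation, your route buys several things:
\begin{itemize}
\item It makes the appendix self-contained.
\item It shows exactly where the hypothesis enters: you only need $m-p\ge q$ for $G_2^\dagger G_2$ to have a Wishart density, so the formula even holds for $p+q\le m$.
\item It gives $C$ explicitly through Wishart normalizations.
\item It shows, as a byproduct, that $Z$ is independent of $G^\dagger G$.
\item It uses the same Gaussian representation the paper later invokes in its proof of the boundary-tail lemma.
\end{itemize}

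One piece of the Jacobian bookkeeping needs restating. The map $W_2\mapsto S$ is not a translation at fixed $Z$: there, $W_2=S^{1/2}(I_q-Z^\dagger Z)S^{1/2}$ is a nonlinear congruence in $S$. It is a translation at fixed $G_1$.

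The clean factorization has two steps.
\begin{itemize}
\item First, $(G_1,W_2)\mapsto(G_1,S)$, a translation in $W_2$ with unit Jacobian.
\item Second, $(G_1,S)\mapsto(Z,S)$, which is linear in $G_1$ at fixed $S$ and gives $dG_1=\det(S)^{p}\,dZ$. Strictly, $\det(S)^{p}$ is the Jacobian of the inverse map $Z\mapsto ZS^{1/2}$, not of $G_1\mapsto Z$.
\end{itemize}
Both steps are block-triangular, so the total factor is $\det(S)^{p}$. That is exactly the factor you used, and your conclusion stands.
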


Due to the permutation stability of $U \sim \Haar(m)$, Lemma~\ref{lem:rect-density-appD} holds for $U_{R,[q]}$ with arbitrarily chosen rows $R$ (with $|R| = p$) provided that $m - p - q >0$, as required in our hardness analysis.

\subsection{Shift-and-scaled density}\label{appendix: section: invariance: perturbed density}

We now derive the shift-and-scaled density, introduced by the affine map $Z \mapsto (1-t)Z + tH$ for a fixed $H$.

\begin{lemma}[Shift-and-scaled density]
\label{lem:shift-scale-density}
Let $\mu$ be the distribution of $Z = U_{[p],[q]}\in\mathbb C^{p\times q}$ for $U \sim \Haar(m)$, and  let $\mu_t$ be the distribution of $(1-t)Z + tH$ for a fixed $H \in \mathbb{C}^{p \times q}$.
If $f$ is the density of $\mu$, then the density $f_t$ of $\mu_t$ is given by
\begin{align}
f_t(B)
=
(1-t)^{-2pq}
f\left(\frac{B-tH}{1-t}\right).
\label{eq:shift-scale-density}
\end{align}
\end{lemma}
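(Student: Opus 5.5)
This is a change-of-variables computation for the affine map $\phi_t(Z) := (1-t)Z + tH$ on $\mathbb C^{p\times q}\cong\mathbb R^{2pq}$. We assume $0\le t<1$, so that $\phi_t$ is an invertible affine bijection with inverse
\begin{align}
\phi_t^{-1}(B)=\frac{B-tH}{1-t}.
\end{align}
The plan is to compute $\mu_t(\mathcal B)$ for an arbitrary Borel set $\mathcal B\subseteq\mathbb C^{p\times q}$ via the pushforward and then read off the density.

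The first step is to write
\begin{align}
\mu_t(\mathcal B)=\mu\big(\phi_t^{-1}(\mathcal B)\big)=\int_{\phi_t^{-1}(\mathcal B)} f(Z)\,dZ .
\end{align}
Here $dZ$ denotes Lebesgue measure on $\mathbb C^{p\times q}$, that is, the product over all entries of $d\,\mathrm{Re}\,Z_{ij}\,d\,\mathrm{Im}\,Z_{ij}$.

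The second step substitutes $Z=\phi_t^{-1}(B)$. The linear part of $\phi_t^{-1}$ is multiplication by the real scalar $(1-t)^{-1}$ on a real vector space of dimension $2pq$. Each complex entry contributes two real coordinates, so the Jacobian determinant is $(1-t)^{-2pq}$. The translation by $-tH/(1-t)$ does not affect the Jacobian. This gives
\begin{align}
\mu_t(\mathcal B)=\int_{\mathcal B}(1-t)^{-2pq} f\!\left(\frac{B-tH}{1-t}\right)dB ,
\end{align}
and since $\mathcal B$ is arbitrary, the Radon--Nikodym density of $\mu_t$ is exactly Eq.~\eqref{eq:shift-scale-density}.

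There is no real obstacle. The only points needing care are counting real dimensions correctly, so that the exponent is $2pq$ rather than $pq$, and noting that the indicator in Lemma~\ref{lem:rect-density-appD} transforms along with $f$. As a consequence, $f_t$ is supported on the image set $\{B : (B-tH)^\dagger(B-tH)\prec(1-t)^2 I_q\}$. This support shift is what the later boundary-tail analysis must control.
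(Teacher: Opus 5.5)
Your proof is correct and matches the paper's argument: invert the affine map, note that scaling by $(1-t)^{-1}$ on $\mathbb{C}^{p\times q}\cong\mathbb{R}^{2pq}$ contributes the Jacobian factor $(1-t)^{-2pq}$, and observe that the translation contributes nothing. Your explicit restriction to $0\le t<1$ and your remark that the support moves to $\{B:(B-tH)^\dagger(B-tH)\prec(1-t)^2I_q\}$ are sound additions.
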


\begin{proof}
The map $T_t(Z) := (1-t)Z+tH $ has inverse
\begin{align}
T_t^{-1}(B)=\frac{B-tH}{1-t}.
\end{align}
Regarding $\mathbb{C}^{p\times q}$ as the real vector space
$\mathbb{R}^{2pq}$, multiplication by $(1-t)^{-1}$ scales each of the $2pq$ real coordinates by $(1-t)^{-1}$, and therefore contributes a factor of $(1-t)^{-2pq}$ to the Jacobian determinant.
Also, the translation by $-tH$ has a unit Jacobian.
Hence, the pushforward density is given by 
\begin{align}
f_t(B)
=
(1-t)^{-2pq}
f\left(T_t^{-1}(B)\right)
=
(1-t)^{-2pq}
f\left(\frac{B-tH}{1-t}\right).
\end{align}
\end{proof}

\subsection{Interior event and boundary tail}\label{appendix: section: invariance: boundary}

Note that, the density of Haar-submatrices in Eq.~\eqref{eq:unscaled-density-appD} is supported on $Z^\dagger Z \prec I_q$, implying that all singular values of $Z$ are strictly smaller than $1$.
Since our proof of shift-and-scale stability requires a strict margin from this boundary, we introduce the following definition.

\begin{definition}[Interior event]\label{def: interior event}
For a fixed $\zeta\in(0,1)$, we define the interior event as
\begin{align}
\mathcal E_\zeta
:=
\left\{
Z\in\mathbb C^{p\times q}:
\|Z\|_{\mathrm{op}}\le 1-\zeta
\right\}.
\end{align}
\end{definition}

We now present a lemma establishing that, whenever $\zeta$ lies below a certain threshold, the complement event $\mathcal{E}_{\zeta}^c$ occurs with exponentially small probability.

\begin{lemma}[Boundary tail bound]\label{lem:boundary-tail-appD}
Let $Z=U_{[p],[q]} \in \mathbb{C}^{p\times q}$ be the top-left $p\times q$ block of $U \sim \Haar(m)$.
Assume that there exist fixed constants $c,C_{\mathrm{dim}}>0$ such that $p\leq C_{\mathrm{dim}}q$ and $m-p-q\geq cq$.
Then there exist constants $\zeta^*,c'>0$, depending only on $c$ and $C_{\mathrm{dim}}$, such that, for every fixed $\zeta\in(0,\zeta^*)$,
\begin{align}
\mathbb P[\|Z\|_{\mathrm{op}}>1-\zeta]
\le
e^{-c' (m-p-q)}.
\end{align}
\end{lemma}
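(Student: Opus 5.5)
We will prove Lemma~\ref{lem:boundary-tail-appD} by reducing the operator-norm tail of the truncated Haar block to an extremal-eigenvalue estimate for a Jacobi (matrix-variate beta) ensemble. Since $\|Z\|_{\mathrm{op}}=\|Z^\dagger\|_{\mathrm{op}}$ and the roles of $p$ and $q$ can be swapped by taking adjoints, we may work with whichever Gram matrix is smaller. Write $r:=\min\{p,q\}$. A Haar unitary truncation $Z=U_{[p],[q]}$ can be realized as follows. Take a $m\times q$ matrix $G$ with i.i.d.\ standard complex Gaussian entries. Its polar part $G(G^\dagger G)^{-1/2}$ has the law of the first $q$ columns of $U$. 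Writing $G=\binom{G_1}{G_2}$ with $G_1\in\mathbb C^{p\times q}$ and $G_2\in\mathbb C^{(m-p)\times q}$, we get $Z^\dagger Z=(A+B)^{-1/2}A(A+B)^{-1/2}$, where $A=G_1^\dagger G_1$ and $B=G_2^\dagger G_2$ are independent complex Wishart matrices with $p$ and $m-p$ degrees of freedom. Hence $\|Z\|_{\mathrm{op}}^2=\lambda_{\max}\big((A+B)^{-1}A\big)$, and
\begin{align}
\|Z\|_{\mathrm{op}}^2\le 1-\zeta' \iff A\preceq \tfrac{1-\zeta'}{\zeta'}\,B ,
\end{align}
so it suffices to show $\lambda_{\max}(A)\le K\,\lambda_{\min}(B)$ for a suitable constant $K$, except with probability $e^{-\Omega(m-p-q)}$.

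The key steps, in order, are as follows.

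First, we bound $\lambda_{\max}(A)$. We have $\|G_1\|_{\mathrm{op}}\le \sqrt p+\sqrt q+s$ with probability $1-e^{-s^2}$, by Gaussian concentration for the $1$-Lipschitz operator norm together with the Gordon-type bound $\E\|G_1\|\le\sqrt p+\sqrt q$.

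Second, we bound $\lambda_{\min}(B)$. We have $\sigma_{\min}(G_2)\ge\sqrt{m-p}-\sqrt q-s$ with probability $1-e^{-s^2}$, by Gordon's inequality for the smallest singular value plus Lipschitz concentration. This is where the margin enters: $\sqrt{m-p}-\sqrt q\ge \frac{m-p-q}{2\sqrt{m-p}}$, and this quantity is $\Omega(\sqrt{m-p})$ precisely because $m-p-q\ge cq$ and $p\le C_{\mathrm{dim}}q$ make all of $m-p$, $q$, $p$, and $m-p-q$ comparable up to constants depending on $(c,C_{\mathrm{dim}})$. Caveat: if $m-p\gg q$ these are not all comparable, but then $\sqrt{m-p}-\sqrt q\ge (1-\sqrt{1/(1+c)})\sqrt{m-p}$ anyway, so the bound survives.

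Third, we choose $s=\gamma\sqrt{m-p-q}$ with a small constant $\gamma$. This gives $\sigma_{\min}(G_2)\ge c_1\sqrt{m-p}$ and $\|G_1\|\le c_2\sqrt{m-p}$, where $c_1,c_2$ depend only on $c$ and $C_{\mathrm{dim}}$, and the failure probability is $2e^{-\gamma^2(m-p-q)}$. Then $A\preceq c_2^2(m-p)I\preceq (c_2/c_1)^2B$, which yields $\|Z\|^2_{\mathrm{op}}\le (c_2/c_1)^2/(1+(c_2/c_1)^2)=:1-2\zeta^*$ roughly. Any $\zeta<\zeta^*$ then works with $c'=\gamma^2/2$, after absorbing the constant prefactor.

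We expect the main obstacle to be the second step, the lower bound on $\lambda_{\min}(B)$. The estimate must be stated with explicit constants so that the margin $m-p-q$, rather than merely $q$, controls the exponent. A further subtlety is that when $p\ll q$ one should pass to $ZZ^\dagger$, where the relevant Wishart has $m-q$ degrees of freedom in dimension $p$; the argument is identical with $p$ and $q$ exchanged.

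As a backup, we could instead use the explicit joint eigenvalue density $\prod_i\lambda_i^{|p-q|}(1-\lambda_i)^{m-p-q}\Delta(\lambda)^2$ from Lemma~\ref{lem:rect-density-appD}. Integrating out all but the largest eigenvalue bounds $\mathbb P[\lambda_{\max}>1-\zeta]$ by $\binom{\cdot}{\cdot}$-type factors times $\zeta^{m-p-q+1}$. The combinatorial prefactor is at most $e^{O(q)}$, which the factor $\zeta^{cq}$ defeats once $\zeta$ is small enough.
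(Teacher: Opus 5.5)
Your proposal is correct and follows essentially the same route as the paper. Both use the Gaussian realization $Z=X_1(X_1^\dagger X_1+X_2^\dagger X_2)^{-1/2}$, an upper bound on $\|X_1\|_{\mathrm{op}}$ and a lower bound on $s_{\min}(X_2)$ at deviation scale $\Theta(\sqrt{m-p-q})$, and the resulting bound $\|Z\|_{\mathrm{op}}^2\le K/(1+K)$. The paper phrases your comparison $A\preceq KB$ as a Rayleigh-quotient bound and takes $u=(\sqrt{m-p}-\sqrt q)/2$. Your passage to $ZZ^\dagger$ when $p\ll q$ is unnecessary, since the $q\times q$ argument never uses $p\ge q$.
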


As similarly argued, Lemma~\ref{lem:boundary-tail-appD} holds for $U_{R, [n]}$ with arbitrarily chosen mode set $R$, due to the permutation invariance of $U \sim \Haar(m)$.

\begin{proof}[Proof of Lemma~\ref{lem:boundary-tail-appD}]
We use the Gaussian representation of the first $q$ columns of a Haar unitary.
Let
\begin{align}
X=
\begin{pmatrix}
X_1\\
X_2
\end{pmatrix}
\in\mathbb C^{m\times q},
\end{align}
where $X_1 \sim \mathcal{N}(0,1)_{\mathbb{C}}^{p \times q}$ and $X_2 \sim \mathcal{N}(0,1)_{\mathbb{C}}^{(m-p) \times q}$ are i.i.d. complex Gaussian matrices.
Then the first $q$ columns of a Haar unitary have the same distribution as~\cite{mezzadri2006generate}
\begin{align}
Q=X(X^\dagger X)^{-1/2} ,
\end{align}
and accordingly, its top $p\times q$ submatrix is distributed as
\begin{align}
Z
=
X_1(X_1^\dagger X_1+X_2^\dagger X_2)^{-1/2}.
\end{align}

To proceed, for any nonzero vector $v\in\mathbb C^q$, write
\begin{align}
w:=
(X_1^\dagger X_1+X_2^\dagger X_2)^{-1/2}v  ,
\end{align}
such that $Zv=X_1w$.
Moreover, observe that
\begin{align}
\|v\|_2^2
=
w^\dagger(X_1^\dagger X_1+X_2^\dagger X_2)w 
=
\|X_1w\|_2^2+\|X_2w\|_2^2  ,
\end{align}
and thus we have 
\begin{align}
\frac{\|Zv\|_2^2}{\|v\|_2^2}
=
\frac{\|X_1w\|_2^2}
{\|X_1w\|_2^2+\|X_2w\|_2^2}.
\end{align}
By definition, taking the supremum over $v\ne0$, equivalently over $w\ne0$, gives
\begin{align}
\label{eq:rayleigh-quotient-bound}
\|Z\|_{\mathrm{op}}^2
=
\sup_{w\ne0}
\frac{\|X_1w\|_2^2}
{\|X_1w\|_2^2+\|X_2w\|_2^2}.
\end{align}

Next, we use standard Gaussian singular-value bounds in~\cite{vershynin2012introduction}. 
Concretely, with probability at least $1- e^{-C u^2}$ for a constant $C > 0$, we simultaneously have
\begin{align}
\|X_1\|_{\mathrm{op}}
\le
\sqrt p+\sqrt q+u,
\qquad
s_{\min}(X_2)
\ge
\sqrt{m-p}-\sqrt q-u  ,
\end{align}
where $s_{\min}(\cdot)$ denotes the smallest singular value of the matrix argument.
Conditioned on these events, for every $w\ne0$, we have
\begin{align}
\|X_1w\|_2^2
\le
(\sqrt p+\sqrt q+u)^2\|w\|_2^2, \qquad
\|X_2w\|_2^2
\ge
(\sqrt{m-p}-\sqrt q-u)^2\|w\|_2^2  ,
\end{align}
where we assumed $\sqrt{m-p}-\sqrt q > u$ here. 
Substituting these inequalities into Eq.~\eqref{eq:rayleigh-quotient-bound} implies
\begin{align}
\label{eq:A-bound-L1-L2}
\|Z\|_{\mathrm{op}}^2
\leq 
\frac{\|X_1w\|_2^2}{\|X_1w\|_2^2 + (\sqrt{m-p}-\sqrt q-u)^2\|w\|_2^2} 
\le
\frac{(\sqrt p+\sqrt q+u)^2}{(\sqrt p+\sqrt q+u)^2+ (\sqrt{m-p}-\sqrt q-u)^2} .
\end{align}
We now choose
\begin{align}
    u := \frac{\sqrt{m-p} - \sqrt{q}}{2}  .
\end{align}
Using $p\leq C_{\mathrm{dim}}q$ and $m-p-q\geq cq$, we have
\begin{align}
\frac{\sqrt{m-p}-\sqrt q}{2\sqrt p+\sqrt{m-p}+\sqrt q}
=
\frac{\sqrt{(m-p)/q}-1}{2\sqrt{p/q}+\sqrt{(m-p)/q}+1}
\geq
\frac{\sqrt{1+c}-1}{2\sqrt{C_{\mathrm{dim}}}+\sqrt{1+c}+1}
=:
\gamma_{c,C_{\mathrm{dim}}}>0.
\end{align}
Combining this with Eq.~\eqref{eq:A-bound-L1-L2}, define
\begin{align}\label{eq: def of zeta}
\zeta^*
:=
1-\frac{1}{\sqrt{1+\gamma_{c,C_{\mathrm{dim}}}^2}}
>0.
\end{align}
Then, on the same singular-value event,
\begin{align}
\|Z\|_{\mathrm{op}}
\leq
1-\zeta^*.
\end{align}
Consequently, the same bound holds with $1-\zeta$ on the right-hand side for every fixed $\zeta\in(0,\zeta^*)$.

Given $m - p - q \geq cq$ for a constant $c > 0$, let $m-p = rq$ for $r \geq 1 + c$. 
Then
\begin{align}
    \frac{\sqrt{m-p} - \sqrt{q}}{\sqrt{m - p - q} } = \frac{\sqrt{r} - 1}{\sqrt{r-1}} = \sqrt{\frac{\sqrt{r} - 1}{\sqrt{r} + 1}}   ,
\end{align}
which is increasing with $r$ and thus minimized when $r = 1 + c$. 
Therefore, we have 
\begin{align}
    \sqrt{m-p} - \sqrt{q} \geq \frac{\sqrt{1+c} - 1}{\sqrt{c}} \sqrt{m - p - q} 
    = \frac{\sqrt{c}}{\sqrt{1+c} + 1} \sqrt{m - p - q} , 
\end{align}
implying that 
\begin{align}
    \exp(-C u^2) \leq \exp(- \frac{C}{4} \left(\frac{\sqrt{c}}{\sqrt{1+c} + 1}\right)^2 (m - p - q) ) = \exp(-c' (m-p-q)) , 
\end{align}
for 
\begin{align}\label{eq: def of c_zeta}
    c' := \frac{C}{4}\left(\frac{\sqrt{c}}{\sqrt{1+c} + 1}\right)^2 > 0  .
\end{align}
Combining all the arguments, for every fixed $\zeta\in(0,\zeta^*)$ we obtain
\begin{align}
\mathbb P[\|Z\|_{\mathrm{op}}>1-\zeta]
\le
e^{-c' (m - p - q)},
\end{align}
where $\zeta^*$ and $c'$ are defined in Eq.~\eqref{eq: def of zeta} and Eq.~\eqref{eq: def of c_zeta}, respectively.

\end{proof}

\subsection{General shift-and-scale stability lemma}\label{appendix: section: invariance: general}

We now prove the core lemma for the shift-and-scale stability required in the main hardness argument. 

\begin{lemma}[General rectangular shift-and-scale stability]
\label{lem:general-translation-appD}
Let $\mu$ be the distribution of $Z = U_{[p],[q]}$ for $U \sim \Haar(m)$, and let $\mu_t$ be the distribution of $(1-t)Z + t H$ for a fixed $H\in\mathbb C^{p\times q}$. 
Assume that there exist fixed constants $c,C_{\mathrm{dim}}>0$ such that $p\leq C_{\mathrm{dim}}q$ and $m-p-q\geq cq$.
Let $\zeta \in (0, \zeta^*/2)$ be any constant for $\zeta^*$ satisfying the boundary tail bound in Lemma~\ref{lem:boundary-tail-appD}.
Then there exist constants $c', c_\zeta, C_\zeta>0$ such that, whenever
\begin{align}
0\le t\le
\min\left\{
\frac{\zeta}{4(1+\|H\|_{\mathrm{op}})},
\frac{c_\zeta}{
pq+(m-p-q)\sqrt q(\sqrt q+\|H\|_{\mathrm{F}})
}
\right\},
\label{eq:shift-scale-t-bound}
\end{align}
the total variation distance between $\mu$ and $\mu_t$ is bounded by
\begin{align}
\label{eq:general-translation-bound-unscaled}
\|\mu_t-\mu\|_{\mathrm{TVD}}
\le
C_\zeta \left[
pq+(m-p-q)\sqrt q(\sqrt q+\|H\|_{\mathrm{F}})
\right]t
+
e^{-c' (m-p-q)}   .
\end{align}
\end{lemma}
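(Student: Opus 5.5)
We bound the total variation distance by splitting on the interior event $\mathcal E_\zeta$ of Definition~\ref{def: interior event}. Writing $f$ for the density in Lemma~\ref{lem:rect-density-appD} and $f_t$ for the pushforward density in Lemma~\ref{lem:shift-scale-density}, we use
\begin{align}
\|\mu_t-\mu\|_{\mathrm{TVD}}
\le
\int_{\mathcal E_{2\zeta}} |f_t(B)-f(B)|\,dB
+\mu(\mathcal E_{2\zeta}^c)+\mu_t(\mathcal E_{2\zeta}^c).
\end{align}
Since $2\zeta<\zeta^*$, Lemma~\ref{lem:boundary-tail-appD} gives $\mu(\mathcal E_{2\zeta}^c)\le e^{-c'(m-p-q)}$. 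For $\mu_t$, we note that $\|(1-t)Z+tH\|_{\mathrm{op}}\le \|Z\|_{\mathrm{op}}+t(\|Z\|_{\mathrm{op}}+\|H\|_{\mathrm{op}})$. Under the first constraint in Eq.~\eqref{eq:shift-scale-t-bound}, the perturbation is at most $\zeta/4$. Hence $(1-t)Z+tH\notin\mathcal E_{2\zeta}$ forces $Z\notin\mathcal E_{\zeta}$, and applying Lemma~\ref{lem:boundary-tail-appD} again gives the same exponential tail. (Any constant tail factor is absorbed into $c'$ by a slightly smaller choice.)

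On $\mathcal E_{2\zeta}$ we compare densities pointwise. Write $Z'=(B-tH)/(1-t)$. By the same operator-norm argument, $Z'\in\mathcal E_{\zeta}$, so both $f(B)$ and $f(Z')$ are nonzero and given by the determinant formula. Then
\begin{align}
\log\frac{f_t(B)}{f(B)}
=
-2pq\log(1-t)
+(m-p-q)\left[\log\det(I_q-Z'^\dagger Z')-\log\det(I_q-B^\dagger B)\right].
\end{align}
The first term is $O(pq\,t)$. For the second, we interpolate along $s\mapsto B_s$ from $B$ to $Z'$ and differentiate $\log\det(I-B_s^\dagger B_s)$. The derivative is $-\mathrm{tr}[(I-B_s^\dagger B_s)^{-1}(\dot B_s^\dagger B_s+B_s^\dagger\dot B_s)]$. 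Because $\|B_s\|_{\mathrm{op}}\le 1-\zeta$, the resolvent has norm at most $\zeta^{-1}$. Using $|\mathrm{tr}(XY)|\le \|X\|_{\mathrm{F}}\|Y\|_{\mathrm{F}}$ and $\|B_s\|_{\mathrm{F}}\le\sqrt q$, the derivative is bounded by $O(\zeta^{-1}\sqrt q\,\|Z'-B\|_{\mathrm{F}})$. Moreover $Z'-B=t(B-H)/(1-t)$, so $\|Z'-B\|_{\mathrm{F}}\le 2t(\sqrt q+\|H\|_{\mathrm F})$, because $\|B\|_{\mathrm F}\le\sqrt{q}\|B\|_{\mathrm{op}}\le\sqrt q$. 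Altogether,
\begin{align}
\left|\log\frac{f_t(B)}{f(B)}\right|\le C'_\zeta\left[pq+(m-p-q)\sqrt q(\sqrt q+\|H\|_{\mathrm F})\right]t=:\Lambda t .
\end{align}

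The second constraint in Eq.~\eqref{eq:shift-scale-t-bound} makes $\Lambda t\le 1$, so $|e^{x}-1|\le 2|x|$ applies. This gives $\int_{\mathcal E_{2\zeta}}|f_t-f|\le 2\Lambda t\int f\le 2\Lambda t$, which yields Eq.~\eqref{eq:general-translation-bound-unscaled} with $C_\zeta=2C'_\zeta$. The main obstacle is the log-determinant comparison: it requires a uniform resolvent bound along the whole interpolation path. That bound depends on both endpoints staying inside $\mathcal E_\zeta$, which is exactly why we restrict to $\mathcal E_{2\zeta}$ and impose the first constraint on $t$. The Frobenius estimate of the trace is what produces the $\sqrt q(\sqrt q+\|H\|_{\mathrm F})$ dependence rather than a cruder $q\|H\|_{\mathrm{op}}$-type factor.
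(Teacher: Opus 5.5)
\textbf{Gap in the tail bound for $\mu_t$.} Your interior estimate is correct and is essentially the paper's: your segment from $B$ to $Z'=(B-tH)/(1-t)$ is exactly the paper's path $W_s=(B-sH)/(1-s)$, reparametrized. The problem is the bound on $\mu_t(\mathcal E_{2\zeta}^c)$, where the implication runs backwards.

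From $\|(1-t)Z+tH\|_{\mathrm{op}}\le\|Z\|_{\mathrm{op}}+t(1+\|H\|_{\mathrm{op}})\le\|Z\|_{\mathrm{op}}+\zeta/4$, the event $\|(1-t)Z+tH\|_{\mathrm{op}}>1-2\zeta$ only gives $\|Z\|_{\mathrm{op}}>1-9\zeta/4$. That is, $Z\notin\mathcal E_{9\zeta/4}$, which is a \emph{larger} event than $\mathcal E_\zeta^c$, not a smaller one. So ``forces $Z\notin\mathcal E_\zeta$'' is false. Lemma~\ref{lem:boundary-tail-appD} controls $\mu(\mathcal E_{9\zeta/4}^c)$ only when $9\zeta/4<\zeta^*$. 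The hypothesis $\zeta<\zeta^*/2$ does not imply this, so your proof does not cover the full stated range of $\zeta$ (for $t$ near the upper end of its allowed range).

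\textbf{The repair.} Split at $\mathcal E_\zeta$ instead of $\mathcal E_{2\zeta}$, as the paper does.

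\emph{Interior.} For $B\in\mathcal E_\zeta$, the first constraint on $t$ still keeps the whole segment inside $\mathcal E_{\zeta/2}$. Hence $\|(I_q-W_s^\dagger W_s)^{-1}\|_{\mathrm{op}}\le(\zeta-\zeta^2/4)^{-1}$, and your log-determinant estimate goes through verbatim.

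\emph{Exterior.} Now the perturbation argument points the right way. $\|(1-t)Z+tH\|_{\mathrm{op}}>1-\zeta$ forces $\|Z\|_{\mathrm{op}}>1-5\zeta/4>1-2\zeta$, so $\mu_t(\mathcal E_\zeta^c)\le\mu(\mathcal E_{2\zeta}^c)$. Trivially $\mu(\mathcal E_\zeta^c)\le\mu(\mathcal E_{2\zeta}^c)$ as well. Both are $e^{-\Omega(m-p-q)}$ precisely because $2\zeta<\zeta^*$.

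In short, the factor-two margin in the hypothesis is meant to absorb the outward drift of $\mu_t$ in the exterior tail, not to shrink the interior region.
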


\begin{proof}

By Lemma~\ref{lem:shift-scale-density}, the density of $\mu_t$ is
\begin{align}
f_t(B)
=
(1-t)^{-2pq}
f\left(\frac{B-tH}{1-t}\right).
\end{align}
Fix $\zeta\in(0,\zeta^*/2)$ as in the statement of the lemma, and define the likelihood ratio only on the interior event $\mathcal E_\zeta$ by
\begin{align}
R_t(B)
:=
\frac{f_t(B)}{f(B)},
\qquad
B\in\mathcal E_\zeta.
\end{align}
Since the affine transformation can place $\mu_t$-mass outside the support of $\mu$, we do not identify the total variation distance with an expectation of $|R_t-1|$ over the entire space.
Instead, we split the integral into the interior region and its complement:
\begin{align}
\|\mu_t-\mu\|_{\mathrm{TVD}}
=
\frac12\int_{\mathbb C^{p\times q}} |f_t(B)-f(B)|\,dB
\leq
\frac12\int_{\mathcal E_\zeta}f(B)|R_t(B)-1|\,dB
+
\frac12\mu(\mathcal E_\zeta^c)
+
\frac12\mu_t(\mathcal E_\zeta^c).
\label{eq: exp of likelihood}
\end{align}
The likelihood-ratio calculation below is used only for the first term, while the last two terms are controlled by the boundary-tail bound.

Now define
\begin{align}
\Phi(W)
:=
\log\det\!\left(I_q-W^\dagger W\right).
\end{align}
From the density formula in Eq.~\eqref{eq:unscaled-density-appD}, on the region where the likelihood ratio is well-defined, we have
\begin{align}\label{eq: log likelihood}
\log R_t(B)
=
-2pq\log(1-t)
+
(m-p-q)
\left[
\Phi\left(\frac{B-tH}{1-t}\right)-\Phi(B)
\right].
\end{align}
Here, for $t\le 1/2$,
\begin{align}
|-2pq\log(1-t)|\le 4pq\,t.
\label{eq:jacobian-term-bound}
\end{align}
Let us define the path
\begin{align}
W_s:=\frac{B-sH}{1-s} ,
\qquad
0\le s\le t   ,
\end{align}
so that $W_0=B$ and $W_t=(B-tH)/(1-t)$. 
Then we have 
\begin{align}\label{eq: dPhi integral}
\Phi\left(\frac{B-tH}{1-t}\right)-\Phi(B)
=
\int_0^t \frac{d}{ds}\Phi(W_s)\,ds.
\end{align}
Accordingly, the problem reduces to bounding the derivative of $\Phi(W_s)$.
To simplify the analysis, let 
\begin{align}
    M_s:=I_q-W_s^\dagger W_s   ,
\end{align}
such that $\Phi(W_s)=\log\det M_s$.

Now assume $B \in\mathcal E_\zeta$ for $\mathcal{E}_\zeta$ in Definition~\ref{def: interior event}, such that $\|B\|_{\mathrm{op}}\le 1-\zeta$. 
If $t$ is bounded by 
\begin{align}
t\le \frac{\zeta}{4(1+\|H\|_{\mathrm{op}})},
\label{eq:path-stays-interior}
\end{align}
then for every $0\le s\le t$,
\begin{align}
\|W_s\|_{\mathrm{op}}
\le
\frac{\|B\|_{\mathrm{op}}+s\|H\|_{\mathrm{op}}}{1-s}
\le
1-\frac{\zeta}{2}.
\label{eq:Ws-interior}
\end{align}
Therefore, $\| W_s^\dagger W_s\|_{\mathrm{op}} = \|W_s\|_{\mathrm{op}}^2 \leq (1 - \zeta/2)^2 $, and thus we have 
\begin{align}
M_s = I_q-W_s^\dagger W_s
\succeq
c_\zeta I_q
\end{align}
for a constant
\begin{align}\label{eq: def of c_zeta 2}
    c_\zeta = \zeta -  \zeta^2/4 > 0   ,
\end{align}
which also implies
\begin{align}\label{eq: def of C_zeta}
\left\|M_s^{-1}\right\|_{\mathrm{op}}
\le 
\frac{1}{\zeta -  \zeta^2/4} 
=: 
C_\zeta = c_\zeta^{-1}.
\end{align}

Provided that $M_s$ is invertible, the standard matrix-calculus identity gives~\cite{magnus2019matrix, higham2008functions}
\begin{align}
\frac{d}{ds}\Phi(W_s)
=
\mathrm{Tr}\!\left(M_s^{-1}\frac{dM_s}{ds}\right)  .
\end{align}
Using
\begin{align}
\frac{dM_s}{ds}
= -\frac{d}{ds}(W_s^\dagger W_s)
= -\left(\frac{d}{ds}W_s\right)^\dagger W_s - W_s^\dagger \left( \frac{d}{ds}W_s \right)   ,
\end{align}
we obtain
\begin{align}
\left|\frac{d}{ds}\Phi(W_s)\right|
&\leq 
\left| \mathrm{Tr}\!\left[
M_s^{-1}
\left( \left(\frac{d}{ds}W_s\right)^\dagger W_s + W_s^\dagger \left( \frac{d}{ds}W_s \right)  \right)
\right]\right|  \\
&\leq
\left| \mathrm{Tr}\!\left[
M_s^{-1}\left(\frac{d}{ds}W_s\right)^\dagger W_s  \right]  \right| 
+ \left| \mathrm{Tr}\!\left[ M_s^{-1}W_s^\dagger \left( \frac{d}{ds}W_s \right)  \right] \right| \\
&\leq
\left\| \frac{d}{ds}W_s  \right\|_{\mathrm F} \left(  \| W_s M_s^{-1} \|_{\mathrm F}  + \| M_s^{-1}W_s^\dagger \|_{\mathrm F} \right) \\
&\leq 
2\left\| \frac{d}{ds}W_s  \right\|_{\mathrm F} \left\| W_s  \right\|_{\mathrm F} \| M_s^{-1} \|_{\mathrm{op}}  \\
&\leq
2 C_\zeta  \left\| \frac{d}{ds}W_s  \right\|_{\mathrm F} \left\| W_s  \right\|_{\mathrm F} ,
\end{align}
where we used Cauchy--Schwarz inequality, sub-additivity of the Frobenius norm, and also used the identity:
\begin{align}
    \| A B \|_{\mathrm F}^2 = \Tr(B^{\dag}A^{\dag}AB)  \leq 
    \| B \|_{\mathrm{op}}^2 \Tr(A^{\dag}A) 
    =
    \| B \|_{\mathrm{op}}^2 \| A \|_{\mathrm F}^2  ,
\end{align}
for any complex matrices $A,B$.
On the event $B\in \mathcal E_\zeta$ and under Eq.~\eqref{eq:path-stays-interior},
\begin{align}
\|W_s\|_{\mathrm{F}}\le \sqrt q\|W_s\|_{\mathrm{op}}\le \sqrt q,
\end{align}
by using Eq.~\eqref{eq:Ws-interior}.
Also, for $s\le t\le 1/2$, we have 
\begin{align}
\left\|\frac{dW_s}{ds}\right\|_{\mathrm F}
=
\left\| \frac{B-H}{(1-s)^2} \right\|_{\mathrm F}
\le
4(\|B\|_{\mathrm F} + \|H\|_{\mathrm F})
\le
4(\sqrt{q}\|B\|_{\mathrm{op}} + \|H\|_{\mathrm F})
\le
4(\sqrt q+\|H\|_{\mathrm F})  .
\end{align}
Hence,
\begin{align}
\left|
\frac{d}{ds}\Phi(W_s)
\right|
\le
8C_\zeta\sqrt q(\sqrt q+\|H\|_{\mathrm{F}})  ,
\label{eq:Phi-derivative-bound-final}
\end{align}
and as given in Eq.~\eqref{eq: dPhi integral}, integrating over $s\in[0,t]$ gives
\begin{align}
\left|
\Phi\left(\frac{B-tH}{1-t}\right)-\Phi(B)
\right|
\le
8C_\zeta\sqrt q(\sqrt q+\|H\|_{\mathrm{F}})t  .
\label{eq:Phi-difference-shift-scale}
\end{align}
Combining Eq.~\eqref{eq: log likelihood},
Eq.~\eqref{eq:jacobian-term-bound}, and
Eq.~\eqref{eq:Phi-difference-shift-scale}, we obtain, for $B\in \mathcal E_\zeta$,
\begin{align}
|\log R_t(B)|
\le
8 C_\zeta
\left[
pq+(m-p-q)\sqrt q(\sqrt q+\|H\|_{\mathrm{F}})
\right]t   ,
\label{eq:log-ratio-shift-scale-final}
\end{align}
where we used $C_\zeta \geq 4/3$.
If $t$ also satisfies 
\begin{align}\label{eq: bound of t 2}
 t\le
\frac{c_\zeta / 8}{ 
pq+(m-p-q)\sqrt q(\sqrt q+\|H\|_{\mathrm{F}})
},
\end{align}
then the right-hand side of Eq.~\eqref{eq:log-ratio-shift-scale-final} is smaller than $1$.
Hence
\begin{align}
|R_t(B)-1|
=
|e^{\log R_t(B)}-1|
\le
2|\log R_t(B)|
\le
16C_\zeta
\left[
pq+(m-p-q)\sqrt q(\sqrt q+\|H\|_{\mathrm{F}})
\right]t   ,
\label{eq:R-minus-one-shift-scale}
\end{align}
because $|e^x-1|\le 2|x|$ for $|x| \le 1$.

By Eq.~\eqref{eq:R-minus-one-shift-scale}, the interior contribution in Eq.~\eqref{eq: exp of likelihood} satisfies
\begin{align}\label{eq: splitting exp of log likelihood}
\frac12\int_{\mathcal E_\zeta}f(B)|R_t(B)-1|\,dB
\leq
8C_\zeta
\left[
pq+(m-p-q)\sqrt q(\sqrt q+\|H\|_{\mathrm F})
\right]t.
\end{align}

We next bound the two exterior terms in Eq.~\eqref{eq: exp of likelihood}.
Lemma~\ref{lem:boundary-tail-appD} gives
\begin{align}
\mu(\mathcal E_\zeta^c)
\leq
e^{-c_0(m-p-q)}
\end{align}
for a constant $c_0>0$.
Since $\zeta<\zeta^*/2$, set $\zeta_+:=2\zeta<\zeta^*$.
The first upper bound on $t$ in Eq.~\eqref{eq:shift-scale-t-bound} gives $t(1+\|H\|_{\mathrm{op}})\leq\zeta/4$.
Therefore, for every $B\in\mathcal E_{\zeta_+}$,
\begin{align}
\|(1-t)B+tH\|_{\mathrm{op}}
\leq
(1-t)(1-2\zeta)+t\|H\|_{\mathrm{op}}
\leq
1-2\zeta+t(1+\|H\|_{\mathrm{op}})
\leq
1-\zeta.
\end{align}
Thus $(1-t)B+tH\in\mathcal E_\zeta$, and taking the contrapositive yields
\begin{align}
\mu_t(\mathcal E_\zeta^c)
=
\mu\!\left(\{B:(1-t)B+tH\in\mathcal E_\zeta^c\}\right)
\leq
\mu(\mathcal E_{2\zeta}^c)
\leq
e^{-c_+(m-p-q)}
\end{align}
for a constant $c_+>0$.
After updating constants, we obtain
\begin{align}\label{eq: def of c'}
\frac12\mu(\mathcal E_\zeta^c)
+
\frac12\mu_t(\mathcal E_\zeta^c)
\leq
e^{-c'(m-p-q)}.
\end{align}

Substituting Eqs.~\eqref{eq: splitting exp of log likelihood} and~\eqref{eq: def of c'} into Eq.~\eqref{eq: exp of likelihood} yields
\begin{align}
\|\mu_t-\mu\|_{\mathrm{TVD}}
\le
C_\zeta
\left[
pq+(m-p-q)\sqrt q(\sqrt q+\|H\|_{\mathrm{F}})
\right]t 
+
e^{-c' (m-p-q)},
\end{align}
whenever $t$ satisfies 
\begin{align}
0\le t\le
\min\left\{
\frac{\zeta}{4(1+\|H\|_{\mathrm{op}})},
\frac{c_\zeta}{
pq+(m-p-q)\sqrt q(\sqrt q+\|H\|_{\mathrm{F}})
}
\right\},
\end{align}
where, for notational simplicity, we have redefined the constants $C_\zeta \rightarrow C_\zeta/16$ and $c_\zeta \rightarrow 8c_\zeta$.
This completes the proof.

\end{proof}

\subsection{Applications of Lemma~\ref{lem:general-translation-appD}: Proof of shift-and-scale stability lemmas in the main text}\label{appendix: section: invariance: main text}

By Lemma~\ref{lem:general-translation-appD}, the shift-and-scale stability lemmas in the main text, Lemma~\ref{lem:tbs-TV} and Lemma~\ref{lem:sparse-TV}, follow straightforwardly.

We first prove Lemma~\ref{lem:tbs-TV}. 
Apply Lemma~\ref{lem:general-translation-appD} with $p=q=n$ and $H=A_\star$, where $A_\star$ is constructed in Lemma~\ref{lem:tbs-block-permanent-encoding} and satisfies 
\begin{align}
    \|A_\star\|_{\mathrm{F}} \leq C\kappa n 
\end{align}
for an absolute constant $C>0$. 
Substituting $p=q=n$ and this Frobenius-norm bound into Lemma~\ref{lem:general-translation-appD} gives
\begin{align}
    n^2+(m-2n)\sqrt n(\sqrt n+\|A_\star\|_{\mathrm{F}})
    \le C'mn(1+\kappa\sqrt n)  ,
\end{align}
for another constant $C'>0$. 
This yields Eq.~\eqref{eq:tbs-t-bound} and Eq.~\eqref{eq:tbs-TV-bound} in Lemma~\ref{lem:tbs-TV} after redefining constants.

We next prove Lemma~\ref{lem:sparse-TV}. 
Apply Lemma~\ref{lem:general-translation-appD} with $p=2k$, $q=n$, and $H=A_\star$, where $A_\star$ is the matrix constructed in Lemma~\ref{lem:large-block-permanent-encoding} and satisfies
\begin{align}
    \|A_\star\|_{\mathrm F}\le 2\kappa\sqrt{nk}.
\end{align}
Substituting this bound into Lemma~\ref{lem:general-translation-appD} yields
\begin{align}
    2kn+(m-n-2k)\sqrt n(\sqrt n+\|A_\star\|_{\mathrm F})
    \le
    C mn(1+\kappa\sqrt{k})   ,
\end{align}
for some absolute constant $C>0$.
The claim in Lemma~\ref{lem:sparse-TV} then follows by absorbing constants.

\section{Random conditional Haar completion}
\label{appendix: section: conditional-haar-completion}

This appendix shows how to lift a distance between distributions on unitary submatrices, as characterized in Appendix~\ref{appendix: section: translation invariance}, to the corresponding distance between distributions on full unitary matrices.

Keeping the notation used in Appendix~\ref{appendix: section: translation invariance}, let $Z := U_{[p],[q]} \in \mathbb{C}^{p \times q}$ be a fixed $p \times q$ submatrix of $U \in \mathrm{U}(m)$, and assume $m - p - q > 0$, which aligns well with the regime relevant to our main analysis.
Let $\pi: \mathrm{U}(m)\to \mathbb{C}^{p\times q}$ denote the projection $\pi(U):= Z$, such that $\mu := \pi_{\#}\Haar(m)$ denotes the distribution of the $p \times q$ block of a Haar-random unitary. 
Also, let $\mu_t$ denote the distribution of the perturbed Haar submatrices introduced in Appendix~\ref{appendix: section: translation invariance}, namely, the distribution of $(1-t)Z + tH$ when $Z$ is distributed according to $\mu$.

We employ the notion of interior event defined in Definition~\ref{def: interior event}, that is, 
\begin{align}
\mathcal{E}_\zeta := \{Z\in\mathbb{C}^{p\times q}:\|Z\|_{\mathrm{op}}\le 1-\zeta\},
\end{align}
for a constant $\zeta \in (0,1)$.
We also use the notion 
\begin{align}\label{eq: interior - strict}
\mathcal{E} := \{Z\in\mathbb{C}^{p\times q}:\|Z\|_{\mathrm{op}} < 1\} .
\end{align}

In Appendix~\ref{appendix: section: completion: basic}, we construct the basic unitary-completion procedure.
In Appendix~\ref{appendix: section: completion: exact}, we show that this procedure exactly follows the conditional Haar distribution. 
In Appendix~\ref{appendix: section: completion: tvd}, we prove that unitary completion does not increase the total variation distance between the original block distributions. 
Finally, in Appendix~\ref{appendix: section: completion: full}, we construct the full unitary-completion procedure.

\subsection{The completion procedure}\label{appendix: subsec: completion procedure}\label{appendix: section: completion: basic}

We now describe a randomized polynomial-time procedure $\mathcal{C}(Z)$ that, on input $Z\in \mathcal{E}$ in Eq.~\eqref{eq: interior - strict}, outputs a unitary matrix $U$ whose top-left $p \times q$ submatrix is exactly $Z$ (i.e., $\pi(U) = Z$).

The first $q$ columns of the completed unitary $U$ are constructed as follows.
Set
\begin{align}
M_{Z}:=(I_q-Z^\dagger Z)^{1/2}.
\end{align}
Since $\|Z\|_{\mathrm{op}}\le 1-\zeta$, the matrix $I_q-Z^\dagger Z$ is positive
definite, thus $M_{Z}$ is well-defined.

Next, we sample $W \in \mathbb{C}^{(m-p)\times q}$ uniformly from the complex Stiefel manifold $V_{q}(\mathbb{C}^{m-p})$. 
This can be achieved, for example, by drawing a Haar-random unitary from $\mathrm{U}(m-p)$ and taking its first $q$ columns~\cite{muirhead2009aspects, mezzadri2006generate, tropp2012comparison, chikuse2003statistics}. 
Define
\begin{align}\label{eq: how C(A) construct Q}
Y:= WM_{Z},
\qquad
Q :=
\begin{pmatrix}
Z\\
Y
\end{pmatrix}
\in \mathbb{C}^{m\times q}.
\end{align}
By construction, 
\begin{align}
Q^\dagger Q
=
Z^\dagger Z + M_{Z}^\dagger W^\dagger W M_{Z}
=
Z^\dagger Z + M_{Z}^2
=
I_q,
\end{align}
so $Q \in V_{q}(\mathbb{C}^{m})$ is composed of $q$ orthonormal columns.

For the remaining $m - q$ columns, we fix an arbitrary but deterministic orthonormal basis $O_Q \in\mathbb{C}^{m\times(m-q)}$ over $\mathrm{col}(Q)^{\bot}$, the subspace orthogonal to all columns of $Q$, for example by applying Gram--Schmidt orthonormalization.
Then we sample $H \sim \mathrm{Haar}(m-q)$ independently, and finally output
\begin{align}
U = [\,Q\ \ O_QH\,].
\label{eq:conditional-completion-output}
\end{align}
By construction, $U^{\dag}U = I_{m}$, which implies $U \in \mathrm{U}(m)$ and $\pi(U)= Z$ as required.

\subsection{The completion follows the exact Haar distribution}\label{appendix: section: completion: exact}

The following lemma claims that the above procedure in Sec.~\ref{appendix: subsec: completion procedure} is exactly the right way to sample a Haar-random unitary conditioned on its $[p]\times[q]$ block.

\begin{lemma}[Conditional Haar completion]
\label{lem:conditional-haar-completion}
For every $Z \in \mathcal{E}$, the output of $\mathcal{C}(Z)$ has the same
distribution as a Haar-random unitary conditioned on the event $\pi(U) = Z$.
Equivalently, if $Z$ is sampled from the Haar block distribution $\mu$, then the sampled $U$ from $\mathcal{C}(Z)$ is exactly Haar-random in $\mathrm{U}(m)$.
\end{lemma}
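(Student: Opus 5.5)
The plan is to prove the equivalent second formulation first: if $Z\sim\mu$ and $U=\mathcal C(Z)$, then $U\sim\Haar(m)$. The conditional statement then follows by disintegration. Both the map $U\mapsto\pi(U)$ and the kernel $Z\mapsto\mathcal C(Z)$ give joint laws of $(Z,U)$ with $U$-marginal $\Haar(m)$ (once we have shown that) and $Z=\pi(U)$. Conditional distributions are determined $\mu$-a.e. by the joint law, so $\mathcal C(Z)$ is a version of the Haar conditional law given $\pi(U)=Z$. Since $\mathcal C$ depends continuously on $Z\in\mathcal E$ (for a measurable choice of $O_Q$), this gives equality for every $Z\in\mathcal E$ under the natural continuous version. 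Note that $\mu(\mathcal E)=1$ by Lemma~\ref{lem:rect-density-appD}, since $m-p-q>0$.

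\textbf{Step 1: the first $q$ columns.} Let $U\sim\Haar(m)$ and $Q_U:=U_{[m],[q]}$, which is uniform on the Stiefel manifold $V_q(\mathbb C^m)$. Write $Q_U=\binom{Z}{Y}$ with $Z=\pi(U)$. On $\{Z\in\mathcal E\}$, which has full measure, set $W:=Y M_Z^{-1}$. Then $W^\dagger W=M_Z^{-1}(I_q-Z^\dagger Z)M_Z^{-1}=I_q$, so $W\in V_q(\mathbb C^{m-p})$, and $Y=WM_Z$. I will show that $W$ is independent of $Z$ and uniform on $V_q(\mathbb C^{m-p})$.

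Take any fixed $V\in\mathrm U(m-p)$. Left multiplication by $I_p\oplus V$ preserves the Haar law of $U$, so $(Z,Y)\overset{d}{=}(Z,VY)$, and hence $(Z,W)\overset{d}{=}(Z,VW)$. Therefore, for $\mu$-a.e. $Z$, the conditional law of $W$ given $Z$ is a probability measure on $V_q(\mathbb C^{m-p})$ invariant under $\mathrm U(m-p)$. This group acts transitively on the Stiefel manifold, so the uniform measure is its unique invariant probability measure. (To pass from ``for each fixed $V$'' to a single null set, use a countable dense subgroup together with continuity.) It follows that the law of $(Z,Q_U)$ coincides with that of $\bigl(Z,\binom{Z}{WM_Z}\bigr)$ with $Z\sim\mu$ and $W$ uniform and independent of $Z$. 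This is precisely the law produced by the first stage of $\mathcal C$.

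\textbf{Step 2: the remaining columns.} Condition on the first $q$ columns $Q$. I will show that the remaining block $U_{[m],[q+1:m]}$ of a Haar unitary has the law of $O_QH$ with $H\sim\Haar(m-q)$. Its columns form an orthonormal basis of $\mathrm{col}(Q)^\perp$, so it equals $O_QH'$ for a unique unitary $H'=O_Q^\dagger U_{[m],[q+1:m]}$. Right multiplication by $I_q\oplus G$, for $G\in\mathrm U(m-q)$, preserves Haar measure, fixes $Q$, and sends $H'\mapsto H'G$. Hence the conditional law of $H'$ given $Q$ is right-invariant, and by uniqueness of Haar measure it equals $\Haar(m-q)$. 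Combining Steps 1 and 2, the joint law of all of $U$ matches the output of $\mathcal C$ applied to $Z\sim\mu$.

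The main obstacle is the rigor of the conditional-invariance arguments: moving from distributional identities for each fixed group element to an a.e. invariant conditional law, and the measurability of $Z\mapsto M_Z^{-1}$ and of $Q\mapsto O_Q$. I would handle these by working with regular conditional distributions on Polish spaces and restricting to $\mathcal E$, where everything is continuous. Gram--Schmidt applied to a fixed spanning set gives a $Q$ that is measurable almost everywhere.
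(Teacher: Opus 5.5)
Your proposal is correct and follows the same route as the paper's proof. Left-invariance under $I_p\oplus G$ forces the normalized lower block $W=YM_Z^{-1}$ to be uniform on the Stiefel manifold, and right-invariance under $I_q\oplus G$ forces the coefficient matrix of the remaining columns relative to $O_Q$ to be Haar on $\mathrm{U}(m-q)$. Your extra care with regular conditional distributions, and with passing from $\mu$-a.e.\ $Z$ to every $Z\in\mathcal{E}$, fills in points the paper leaves implicit. For that passage, use weak continuity in $Z$ of the output law of $\mathcal{C}(Z)$ rather than continuity of the map $\mathcal{C}$ itself; the law of $O_QH$ does not depend on which basis $O_Q$ is chosen.
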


\begin{proof}
We prove the latter statement, namely, that if $U$ is sampled according to $\mathcal C(Z)$ with $Z$ drawn from $\mu$, then $U$ is Haar-random.
The former statement then follows immediately, since $Z$ drawn from $\mu$ lies in $\mathcal{E}$ with probability $1$.

To proceed, consider Haar-random $U_0\sim\Haar(m)$, and let $Q_0$ be its first $q$ columns. 
Writing
\begin{align}
Q_0=
\begin{pmatrix}
Z_0\\
Y_0
\end{pmatrix} ,
\end{align}
by definition, $Z_0\in\mathbb{C}^{p\times q}$ has distribution $\mu$. 
Also, since $Q_0^\dagger Q_0=I_q$, the bottom block $Y_0 \in\mathbb{C}^{(m-p)\times q}$ satisfies $Y_0^\dagger Y_0=I_q-Z_0^\dagger Z_0$. 
Hence, after fixing $Z_0 = Z$, every possible lower block has the form
\begin{align}
Y = W(I_q-Z^\dagger Z)^{1/2},
\end{align}
for $W\in V_q(\mathbb{C}^{m-p})$, as also argued in the proof of Lemma~\ref{lem:rect-density-appD}.


Here, left multiplication of $U_0$ by a block unitary of the form
\begin{align}
\begin{pmatrix}
I_p & 0\\
0 & G
\end{pmatrix},
\qquad
G\in \mathrm{U}(m-p),
\end{align}
leaves $Z$ unchanged while mapping $Y$ to $GY$. 
Since Haar measure is left-invariant, the conditional distribution of $Y$ is unchanged under all such rotations $Y \mapsto GY$.
Accordingly, for $U_0 \sim \Haar(m)$, it follows that the conditional distribution of $W$ must be uniform on $V_q(\mathbb{C}^{m-p})$~\cite{chikuse2003statistics, mardia1977uniform}, which is precisely the distribution from which $\mathcal C(Z)$ chooses $W$ in Eq.~\eqref{eq: how C(A) construct Q}.
Therefore, when $Z$ is drawn from $\mu$, the first $q$ columns
\begin{align}
Q =
\begin{pmatrix}
Z\\
W(I_q- Z^\dagger Z)^{1/2}
\end{pmatrix}
\end{align}
generated by $\mathcal{C}(Z)$ have the same distribution as the first $q$ columns of a Haar-random unitary. 

Next, for a fixed $Q$, choose one orthonormal basis $O_{Q} \in \mathbb{C}^{m \times (m - q)}$ over $\mathrm{col}(Q)^{\perp}$.
Then, $O_Q^\dagger O_Q=I_{m-q}$ and $Q^\dagger O_Q =0$, implying that $[Q\ O_Q]\in \mathrm{U}(m)$. 
Now consider another way to complete $Q$ to a unitary $[Q\  \widetilde O_Q]$ for $\widetilde O_Q \in\mathbb{C}^{m\times(m-q)}$ being another orthonormal basis
over $\mathrm{col}(Q)^{\perp}$. 
Since $O_Q$ and $\widetilde O_Q$ are two orthonormal bases of the same subspace, there is a unique unitary matrix $H \in \mathrm{U}(m-q)$ such that $\widetilde O_Q = O_QH$ (or equivalently $H := O_Q^\dagger \widetilde O_Q$). 
Hence, every unitary whose first $q$ columns are $Q$ can be written uniquely as
\begin{align}
[\,Q\ \ O_QH\,],
\qquad
H\in \mathrm{U}(m-q).
\label{eq:completion-form}
\end{align}

Now consider a Haar-random unitary conditioned on its first $q$ columns equal to $Q$. 
Then, Eq.~\eqref{eq:completion-form} induces a probability distribution on $H\in U(m-q)$, which must, in fact, be right-invariant.
To make this explicit, consider right-multiplying the full unitary by the block unitary
\begin{align}
\begin{pmatrix}
I_q & 0\\
0 & G
\end{pmatrix},
\qquad
G\in \mathrm{U}(m-q),
\end{align}
which leaves the first $q$ columns $Q$ unchanged while mapping $O_QH \mapsto O_QHG$. 
Hence, the right invariance of the full unitary induces right invariance of $H$, implying that $H$ is Haar distributed.
Therefore, conditioned on the first $q$ columns $Q$, the remaining columns of a Haar-random unitary can be generated by fixing an arbitrary orthonormal basis $O_{Q} \in \mathbb{C}^{m \times (m - q)}$, sampling $H \sim \Haar(m-q)$, and outputting $[Q\ \ O_QH]$.
This coincides exactly with the completion rule used by $\mathcal C(Z)$ in  Eq.~\eqref{eq:conditional-completion-output}.

To summarize, applying $\mathcal C(Z)$ to $Z$ drawn from $\mu$ reproduces the distribution of a Haar-random unitary. 
Equivalently, for any fixed $Z \in \mathcal{E}$, the output of $\mathcal C(Z)$ has the same distribution as a Haar-random unitary conditioned on $\pi(U)=Z$.

\end{proof}

\subsection{Completion does not increase total variation distance}\label{appendix: section: completion: tvd}

The following lemma is the general statement required in the proof of Lemma~\ref{lem:tbs-completion} and Lemma~\ref{lem:completion}: 
if two block distributions are close in total variation distance, then the induced distributions on their unitary completions are no farther apart in total variation distance.

\begin{lemma}
\label{lem:completion-postprocessing}
Let $\varsigma$ and $\varsigma'$ be two distributions supported on $\mathcal{E}_{\zeta}$.
Let $\upsilon$ and $\upsilon'$ denote the induced distributions on $\mathrm{U}(m)$ obtained by pushing forward $\varsigma$ and $\varsigma'$, respectively, through the same randomized completion procedure $\mathcal{C}$.
Then
\begin{align}
\|\upsilon-\upsilon'\|_{\mathrm{TVD}}
\le
\|\varsigma-\varsigma'\|_{\mathrm{TVD}}.
\end{align}
\end{lemma}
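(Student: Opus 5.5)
This is the data-processing inequality for total variation under a Markov kernel; I will prove it with a coupling argument. The randomized procedure $\mathcal C$ is a Markov kernel $K(Z,\cdot)$ from $\mathcal E_\zeta$ to $\mathrm{U}(m)$. Its randomness consists of $W$ uniform on $V_q(\mathbb{C}^{m-p})$ and $H\sim\Haar(m-q)$, both drawn independently of $Z$. The output $U=[\,Q\ \ O_QH\,]$ is a deterministic function of $(Z,W,H)$, where $Q$ is as in Eq.~\eqref{eq: how C(A) construct Q}. So we may write $\mathcal C(Z)=F(Z,\omega)$, with $\omega=(W,H)$ drawn from a fixed product measure $\rho$ that does not depend on $Z$.

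The first step is measurability, so that the pushforwards are well defined. For $Z\in\mathcal E_\zeta$, the map $Z\mapsto M_Z=(I_q-Z^\dagger Z)^{1/2}$ is continuous, since $I_q-Z^\dagger Z\succeq(1-(1-\zeta)^2)I_q$. The matrix $O_Q$ is obtained by a fixed deterministic Gram--Schmidt rule, so it is Borel measurable in $Q$. It may fail to be continuous, but measurability is enough. Hence $F$ is jointly measurable, and $\upsilon=F_\#(\varsigma\otimes\rho)$ and $\upsilon'=F_\#(\varsigma'\otimes\rho)$.

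The second step is the coupling. I take an optimal coupling $\gamma$ of $(\varsigma,\varsigma')$, so that $\Pr_\gamma[Z\neq Z']=\|\varsigma-\varsigma'\|_{\mathrm{TVD}}$; such a coupling exists on the Polish space $\mathbb{C}^{p\times q}$. I then draw a single $\omega\sim\rho$ independently of $(Z,Z')$ and set $U=F(Z,\omega)$ and $U'=F(Z',\omega)$. This gives a coupling of $(\upsilon,\upsilon')$ in which $U\neq U'$ can only happen when $Z\neq Z'$. Therefore
\begin{align}
\|\upsilon-\upsilon'\|_{\mathrm{TVD}}\le\Pr[U\neq U']\le\Pr[Z\neq Z']=\|\varsigma-\varsigma'\|_{\mathrm{TVD}}.
\end{align}
Alternatively, one can argue directly without couplings. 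For any measurable set $A\subseteq\mathrm{U}(m)$, we have $\upsilon(A)-\upsilon'(A)=\int K(Z,A)\,d(\varsigma-\varsigma')(Z)$. Since $0\le K(\cdot,A)\le1$, this quantity is at most $\sup_B|\varsigma(B)-\varsigma'(B)|$.

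The only real subtlety is the measurability of the deterministic basis choice $O_Q$ and of the square root. The restriction of both distributions to $\mathcal E_\zeta$ is what guarantees that $\mathcal C$ is defined everywhere on their support, as the paper's interior-event assumption intends. Beyond that, the argument is routine.
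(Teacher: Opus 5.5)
Your proof is correct, but your main argument takes a different route from the paper's. You write $\mathcal C(Z)=F(Z,\omega)$ with a seed $\omega$ independent of $Z$, take a maximal coupling of $(\varsigma,\varsigma')$, and feed both inputs the same seed. Then $U\neq U'$ forces $Z\neq Z'$, and the coupling inequality finishes the proof.

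The paper instead fixes the seed $\omega$ and pulls an arbitrary event $B\subseteq\mathrm{U}(m)$ back to the block-space event $E_{\omega,B}=\{Z:\mathcal C(Z;\omega)\in B\}$. It bounds $|\varsigma(E_{\omega,B})-\varsigma'(E_{\omega,B})|$ by the block TVD, averages over $\omega$, and only then takes the supremum over $B$. Your ``alternative'' kernel argument, $\upsilon(A)-\upsilon'(A)=\int K(Z,A)\,d(\varsigma-\varsigma')(Z)$, is exactly this proof after Fubini, since $K(Z,A)=\Pr_\omega[\mathcal C(Z;\omega)\in A]$.

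The two routes trade off as follows. The paper's argument is slightly more elementary: it needs only measurability of $F$ and Fubini, and never an optimal coupling. The coupling's existence relies on the diagonal being measurable, which is fine here in the standard Borel space $\mathbb{C}^{p\times q}$. Your coupling route gives a bit more: an explicit coupling of $\upsilon$ and $\upsilon'$ whose outputs agree whenever the input blocks agree, which is convenient if one later wants to track joint failure events.

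Your explicit check that $(Z,\omega)\mapsto\mathcal C(Z;\omega)$ is jointly measurable, including the Gram--Schmidt choice of $O_Q$, fills in a point the paper does not verify. The paper uses it implicitly when writing $\upsilon(B)=\E_\omega[\Pr_{Z\sim\varsigma}[\mathcal C(Z;\omega)\in B]]$.
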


\begin{proof}
Let us write the random seed of the completion procedure $\mathcal{C}$ as $\omega$, such that $\mathcal C(Z;\omega)$ is deterministic once $\omega$ is fixed.

Now fix any event $B \subseteq \mathrm{U}(m)$. 
For each fixed value of $\omega$, define 
\begin{align}
E_{\omega,B}
:=
\{Z:\mathcal C(Z;\omega)\in B\}.
\end{align}
Then, for a fixed $\omega$, we have 
\begin{align}
\Pr_{Z\sim\varsigma}[\mathcal C(Z;\omega)\in B]
=
\varsigma(E_{\omega,B}),
\qquad 
\Pr_{Z\sim\varsigma'}[\mathcal C(Z;\omega)\in B]
=
\varsigma'(E_{\omega,B}).
\end{align}
By the definition of total variation distance,
\begin{align}
|\varsigma(E_{\omega,B})-\varsigma'(E_{\omega,B})|
\le
\|\varsigma-\varsigma'\|_{\mathrm{TVD}}.
\end{align}

Next, note that 
\begin{align}
    \upsilon (B) = \Pr_{Z,\omega}[\mathcal{C}(Z;\omega) \in B] = \E_{\omega}\left[ \Pr_{Z \sim \varsigma }[ \mathcal{C}(Z;\omega) \in B ] \right] ,
\end{align}
and the same identity holds for $\upsilon'(B)$ upon replacing $\varsigma$ with $\varsigma'$.
Accordingly, we have 
\begin{align}
|\upsilon(B)-\upsilon'(B)|
&=
\left|
\mathbb E_{\omega}
\left[
\varsigma(E_{\omega,B})-\varsigma'(E_{\omega,B})
\right]
\right|\\
&\le
\mathbb E_{\omega}
\left[
|\varsigma(E_{\omega,B})-\varsigma'(E_{\omega,B})|
\right]\\
&\le
\|\varsigma-\varsigma'\|_{\mathrm{TVD}}.
\end{align}
Since this holds for every event $B\subseteq \mathrm{U}(m)$, taking the supremum over
$B$ gives the desired bound 
\begin{align}
\|\upsilon-\upsilon'\|_{\mathrm{TVD}}
\le
\|\varsigma-\varsigma'\|_{\mathrm{TVD}} ,
\end{align}
concluding the proof. 
\end{proof}

\subsection{Applying the completion lemma to the perturbed block distribution}\label{appendix: section: completion: full}

We now define an extended completion procedure $\widetilde{\mathcal{C}}_{\zeta}$ on all of $Z \in \mathbb{C}^{p\times q}$ as follows:
\begin{itemize}
\item[(a)] if $Z\in \mathcal{E}_{\zeta}$, output $\mathcal{C}(Z)$;
\item[(b)] if $Z\notin \mathcal{E}_{\zeta}$, output a fixed arbitrary unitary $U_0 \in \mathrm{U}(m)$.
\end{itemize}

Then, applying the corresponding completion procedure to $\mu_t$ yields a distribution that is close to the Haar measure $\Haar(m)$, as shown in the following lemma.

\begin{lemma}
\label{lem:lifting-to-unitaries}
Let $\mu := \pi_{\#}\Haar(m)$ denote the Haar block distribution, and let $\mu_t$ denote the perturbed Haar block distribution from $\mu$.
Define
\begin{align}
\mathcal{H}_t:=(\widetilde{\mathcal{C}}_{\zeta})_{\#}\mu_t
\end{align}
to be the distribution on $\mathrm{U}(m)$ obtained by pushing forward $\mu_t$ through the randomized completion procedure $\widetilde{\mathcal{C}}_{\zeta}$ depicted above.
Then
\begin{align}
\|\mathcal{H}_t-\Haar(m)\|_{\mathrm{TVD}}
\le
\|\mu_t-\mu\|_{\mathrm{TVD}}
+
\mu(\mathcal{E}_{\zeta}^c).
\end{align}
Moreover, if $A \sim \mu_t$ and $U = \widetilde C_\zeta(A)$, then
\begin{align}\label{eq: app completion fail}
\Pr[\pi(U)\neq A]\le \mu_t (\mathcal E_\zeta^c).
\end{align}

\end{lemma}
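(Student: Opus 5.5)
The plan is to compare $\mathcal H_t$ with an intermediate distribution: the pushforward of the \emph{unperturbed} block law $\mu$ through the same extended procedure $\widetilde{\mathcal C}_\zeta$. Write $\mathcal H_0:=(\widetilde{\mathcal C}_\zeta)_{\#}\mu$. The triangle inequality gives
\begin{align}
\|\mathcal H_t-\Haar(m)\|_{\mathrm{TVD}}\le \|\mathcal H_t-\mathcal H_0\|_{\mathrm{TVD}}+\|\mathcal H_0-\Haar(m)\|_{\mathrm{TVD}},
\end{align}
and each term is then bounded separately.

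For the first term, I would apply the data-processing argument from the proof of Lemma~\ref{lem:completion-postprocessing} directly to $\widetilde{\mathcal C}_\zeta$. That proof never used the support assumption on $\mathcal E_\zeta$. It only used that, for each fixed random seed $\omega$, the map $Z\mapsto \mathcal C(Z;\omega)$ is a deterministic measurable map, so that preimages of events are events. The extended procedure $\widetilde{\mathcal C}_\zeta(\cdot;\omega)$ is also deterministic once $\omega$ is fixed: it is $\mathcal C(Z;\omega)$ on $\mathcal E_\zeta$ and the constant $U_0$ on $\mathcal E_\zeta^c$, and both pieces are measurable. Hence the same computation gives $\|\mathcal H_t-\mathcal H_0\|_{\mathrm{TVD}}\le\|\mu_t-\mu\|_{\mathrm{TVD}}$.

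For the second term, I would couple $\mathcal H_0$ with Haar through the same $Z\sim\mu$ and the same seed. Let $U'=\mathcal C(Z;\omega)$, which is exactly Haar by Lemma~\ref{lem:conditional-haar-completion}, since $Z\in\mathcal E$ almost surely. Let $U=\widetilde{\mathcal C}_\zeta(Z;\omega)$. These agree whenever $Z\in\mathcal E_\zeta$, so $\|\mathcal H_0-\Haar(m)\|_{\mathrm{TVD}}\le\Pr[U\ne U']\le\mu(\mathcal E_\zeta^c)$.

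For the failure bound Eq.~\eqref{eq: app completion fail}, note that if $A\in\mathcal E_\zeta$, then $\mathcal C(A)$ has $\pi(\mathcal C(A))=A$ by construction. So a mismatch $\pi(U)\neq A$ can only occur on $A\in\mathcal E_\zeta^c$, and its probability is at most $\mu_t(\mathcal E_\zeta^c)$.

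The main subtlety is the first step: Lemma~\ref{lem:completion-postprocessing} as stated assumes both distributions are supported on $\mathcal E_\zeta$, while $\mu$ and $\mu_t$ are not. I therefore rerun its seedwise argument for $\widetilde{\mathcal C}_\zeta$ instead of citing the lemma, checking only measurability of each seedwise map. Measurability holds because $\mathcal E_\zeta$ is closed and $\mathcal C(\cdot;\omega)$ is continuous on $\mathcal E$: the square root, $W M_Z$, and a Gram--Schmidt basis choice can be made measurable.
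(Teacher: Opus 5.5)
Your proposal is correct and follows the paper's proof essentially step for step: the same intermediate distribution $(\widetilde{\mathcal C}_\zeta)_{\#}\mu$ with the triangle inequality, the seedwise data-processing bound for the first term, the coupling with the exact completion $\mathcal C$ (via Lemma~\ref{lem:conditional-haar-completion}) for the second term, and the same containment argument for the failure bound. You also point out that Lemma~\ref{lem:completion-postprocessing} is stated only for $\mathcal C$ and for distributions supported on $\mathcal E_\zeta$, so its seedwise argument should be rerun for $\widetilde{\mathcal C}_\zeta$; this needs only measurability, not continuity, of each seedwise map, and it correctly tightens the paper's direct citation of that lemma.
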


\begin{proof}
Apply $\widetilde{\mathcal{C}}_{\zeta}$ to both $\mu_t$ and $\mu$.
By Lemma~\ref{lem:completion-postprocessing}, we obtain
\begin{align}
\|(\widetilde{\mathcal{C}}_{\zeta})_{\#}\mu_t-(\widetilde{\mathcal{C}}_{\zeta})_{\#}\mu\|_{\mathrm{TVD}}
\le
\|\mu_t-\mu\|_{\mathrm{TVD}}.
\end{align}
Now compare $(\widetilde{\mathcal{C}}_{\zeta})_{\#}\mu$ with the exact Haar measure $\Haar(m)$.
On the event $\mathcal{E}_{\zeta}$, $\widetilde{\mathcal{C}}_{\zeta}$ agrees with the true conditional Haar completion by Lemma~\ref{lem:conditional-haar-completion}, so it produces exactly a Haar-random unitary.
The only discrepancy comes from the complement event $\mathcal{E}_{\zeta}^c$, which yields the bound
\begin{align}
\|(\widetilde{\mathcal{C}}_{\zeta})_{\#}\mu-\Haar(m)\|_{\mathrm{TVD}}
\le
\Pr_{Z \sim \mu}[\widetilde{\mathcal{C}}_{\zeta}(Z) \neq \mathcal{C}(Z)]
\leq
\Pr_{Z \sim \mu}[Z \in  \mathcal{E}_{\zeta}^c]
\leq
\mu(\mathcal{E}_{\zeta}^c).
\end{align}
The triangle inequality gives
\begin{align}
\|\mathcal{H}_t-\Haar(m)\|_{\mathrm{TVD}}
&=
\|(\widetilde{\mathcal{C}}_{\zeta})_{\#}\mu_t-\Haar(m)\|_{\mathrm{TVD}}\\
&\le
\|(\widetilde{\mathcal{C}}_{\zeta})_{\#}\mu_t-(\widetilde{\mathcal{C}}_{\zeta})_{\#}\mu\|_{\mathrm{TVD}}
+
\|(\widetilde{\mathcal{C}}_{\zeta})_{\#}\mu-\Haar(m)\|_{\mathrm{TVD}}\\
&\le
\|\mu_t-\mu\|_{\mathrm{TVD}}
+
\mu(\mathcal{E}_{\zeta}^c).
\end{align}

It remains to prove the bound in Eq.~\eqref{eq: app completion fail}.
If $A \sim \mu_t$ satisfies $A\in \mathcal E_\zeta$, then the completion procedure is exact and satisfies $\pi(\widetilde C_\zeta(A))=A$.
Therefore, the failure event $\{\pi(U)\neq A\}$ is contained in $\{A\in \mathcal E_\zeta^c\}$, leading to
\begin{align}
\Pr[\pi(U)\neq A]\le \Pr[A\in \mathcal E_\zeta^c]= \mu_t( \mathcal E_\zeta^c).
\end{align}
This proves Eq.~\eqref{eq: app completion fail}.

\end{proof}

For a suitable choice of the constant $\zeta$, both $\mu(\mathcal{E}_\zeta^c)$ and $\mu_t(\mathcal{E}_\zeta^c)$ are bounded by $\exp(-\Omega(m-p-q))$, as shown in Lemma~\ref{lem:boundary-tail-appD} and Lemma~\ref{lem:general-translation-appD}, where the latter holds for $t$ satisfying Eq.~\eqref{eq:shift-scale-t-bound}.
Combining these bounds with Lemma~\ref{lem:lifting-to-unitaries} yields the desired statement.

\begin{lemma}\label{lem:lifting-to-unitaries fin}

Let $\mu$ be the distribution of the block $Z = U_{[p],[q]} \in \mathbb{C}^{p \times q}$ for $U \sim \Haar(m)$, and let $\mu_t$ be the distribution of the block $(1-t)Z + tH \in \mathbb{C}^{p \times q}$.
Assume that there exist fixed constants $c,C_{\mathrm{dim}}>0$ such that $p\leq C_{\mathrm{dim}}q$ and $m-p-q\geq cq$, and assume that $t$ satisfies Eq.~\eqref{eq:shift-scale-t-bound}.
There exists a randomized polynomial-time procedure $\widetilde{\mathcal{C}}$ that, with probability at least $1 - \exp({-\Omega(m-p-q)})$, completes the input matrix to a unitary containing it as the prescribed submatrix. 
Moreover, the corresponding pushforward measure $\mathcal{H}_t = \widetilde{\mathcal{C}}_{\#}\mu_t$ satisfies the total variation distance bound
\begin{align}
\|\mathcal{H}_t-\Haar(m)\|_{\mathrm{TVD}}
\le
\|\mu_t-\mu\|_{\mathrm{TVD}}
+
e^{-\tilde c(m-p - q)} ,
\end{align}
for a constant $\tilde c > 0$.
\end{lemma}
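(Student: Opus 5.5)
The procedure is $\widetilde{\mathcal C}:=\widetilde{\mathcal C}_\zeta$ from Appendix~\ref{appendix: section: completion: full}, for a constant $\zeta\in(0,\zeta^*/2)$, where $\zeta^*$ is the threshold from Lemma~\ref{lem:boundary-tail-appD}. The proof then reduces to Lemma~\ref{lem:lifting-to-unitaries} plus bounds on the two boundary masses $\mu(\mathcal E_\zeta^c)$ and $\mu_t(\mathcal E_\zeta^c)$.

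\emph{Polynomial time.} On input $Z$, the procedure computes $\|Z\|_{\mathrm{op}}$. If $\|Z\|_{\mathrm{op}}\le 1-\zeta$, it forms $M_Z=(I_q-Z^\dagger Z)^{1/2}$. It then draws $W$ from the first $q$ columns of a Haar unitary in $\mathrm U(m-p)$. It fixes a basis $O_Q$ of $\mathrm{col}(Q)^\perp$ by Gram--Schmidt and draws $H\sim\Haar(m-q)$. Each step is a standard matrix computation or Haar sampling, so the procedure runs in $\poly(m)$ time up to the usual numerical-precision caveat. Otherwise it outputs the fixed $U_0$.

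\emph{Completion-failure probability.} Equation~\eqref{eq: app completion fail} gives $\Pr[\pi(U)\neq A]\le\mu_t(\mathcal E_\zeta^c)$. The hypothesis on $t$ is Eq.~\eqref{eq:shift-scale-t-bound}, so $t(1+\|H\|_{\mathrm{op}})\le\zeta/4$. As in the proof of Lemma~\ref{lem:general-translation-appD}, for every $B\in\mathcal E_{2\zeta}$ this gives
\begin{align*}
\|(1-t)B+tH\|_{\mathrm{op}}\le 1-2\zeta+t(1+\|H\|_{\mathrm{op}})\le 1-\zeta .
\end{align*}
Hence $\mu_t(\mathcal E_\zeta^c)\le\mu(\mathcal E_{2\zeta}^c)$. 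Since $2\zeta<\zeta^*$, Lemma~\ref{lem:boundary-tail-appD} applies under the standing assumptions $p\le C_{\mathrm{dim}}q$ and $m-p-q\ge cq$. It gives $\mu(\mathcal E_{2\zeta}^c)\le e^{-c'(m-p-q)}$, which is the claimed success probability $1-e^{-\Omega(m-p-q)}$.

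\emph{TVD bound.} Lemma~\ref{lem:lifting-to-unitaries} gives
\begin{align*}
\|\mathcal H_t-\Haar(m)\|_{\mathrm{TVD}}\le\|\mu_t-\mu\|_{\mathrm{TVD}}+\mu(\mathcal E_\zeta^c).
\end{align*}
Lemma~\ref{lem:boundary-tail-appD} with $\zeta<\zeta^*$ bounds $\mu(\mathcal E_\zeta^c)\le e^{-c'(m-p-q)}$. Taking $\tilde c$ as the smaller of the two exponents finishes the proof.

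The main point needing care is that Lemma~\ref{lem:lifting-to-unitaries} uses Lemma~\ref{lem:completion-postprocessing}, which is stated for distributions supported on $\mathcal E_\zeta$, while $\mu$ and $\mu_t$ are not. I would resolve this by noting that the proof of Lemma~\ref{lem:completion-postprocessing} never uses the support assumption. It only uses that $Z\mapsto\widetilde{\mathcal C}_\zeta(Z;\omega)$ is a measurable map for each fixed seed $\omega$, and the extended procedure $\widetilde{\mathcal C}_\zeta$ is defined on all of $\mathbb C^{p\times q}$. So the data-processing inequality holds for arbitrary input distributions.

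A second, minor point is that the constants depend only on $c$ and $C_{\mathrm{dim}}$. This holds in both applications in the paper: for TBS, $p=q=n$ with $m-2n=\Omega(n)$; for PBS, $p=2k\le 2n$ and $q=n$ with $m-n-2k=\Omega(n)$.
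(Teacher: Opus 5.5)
Your proposal is correct and follows essentially the same route as the paper: the paper also takes the procedure $\widetilde{\mathcal C}_\zeta$ and invokes Lemma~\ref{lem:lifting-to-unitaries} for the TVD bound. It likewise bounds $\mu(\mathcal E_\zeta^c)$ and $\mu_t(\mathcal E_\zeta^c)$ by $e^{-\Omega(m-p-q)}$, using Lemma~\ref{lem:boundary-tail-appD} together with the $\mathcal E_{2\zeta}\to\mathcal E_\zeta$ inclusion from the proof of Lemma~\ref{lem:general-translation-appD}. Your further observation is also correct and worth keeping: the data-processing argument of Lemma~\ref{lem:completion-postprocessing} never uses its support hypothesis, so it legitimately applies to the unrestricted $\mu,\mu_t$ as Lemma~\ref{lem:lifting-to-unitaries} requires — a point the paper leaves implicit.
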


\section{Alternative approach for average-case hardness: leading coefficient extraction}
\label{appendix: leading coefficient extraction}

Inspired by the recent average-case hardness proposal~\cite{bouland2025exponential}, we present an alternative route to average-case hardness for TBS, namely the $\#\mathrm{P}$-hardness of ATE in Definition~\ref{def: thresholdBS} under a threshold Fourier-coefficient anti-concentration conjecture.
Compared with the main proof of Theorem~\ref{thm: threshold average case main}, this approach reduces the interpolation degree from $2n$ to $2k_0$, where $k_0=o(n)$ is the size of the worst-case permanent instance.
This leads to a substantially stronger additive-precision requirement.
The trade-off is that the proof requires an anti-concentration assumption on the threshold Fourier coefficient, as described in Conjecture~\ref{ass:tbs-tfca}.
We subsequently present the analogous PBS result using the same leading-coefficient method.

The main idea is as follows.
Instead of encoding $\Per(X_0)^2$ into the value of a Fourier coefficient at the endpoint $t=1$, we perturb a Haar-random block by a ``dilute'' matrix $A_{X_0}$ supported only on a $k_0\times k_0$ block, and encode $\Per(X_0)^2$ into the leading coefficient of the resulting $2k_0$-degree polynomial in the perturbation parameter $t$.
We then estimate this leading coefficient via an elementary coefficient-extraction procedure, which incurs an imprecision blowup of at most $\exp(O(k_0\log n))$, smaller than the $\exp(O(n\log n))$ blowup appearing in the main proofs.

\subsection{Threshold sparse perturbation and single-permanent collapse}
\label{subsec: appendix tbs leading collapse}

Fix photon number $n$ and mode number $m=\alpha n$ for a fixed $\alpha>2$.
Let $x\in\mathcal T_{m,n}$ be a threshold outcome with weight $|x|=k>n/2$, and let $O:=\supp(x)$.
Let $k_0<2k-n$ be the size of the worst-case permanent instance.

For the residual construction below, first choose a subset $H\subseteq O$ uniformly at random among all $k_0$-subsets of $O$.
Next, choose a subset 
\begin{align}
Q'\subseteq O\setminus H 
\end{align}
uniformly at random with $|Q'|=2k-n-k_0$, and define
\begin{align}
    Q:=H\sqcup Q',
    \qquad
    L:=O\setminus Q  ,
\label{eq:tbs-leading-QL}
\end{align}
so that $|Q|=2k-n$ and $|L|=n-k$.
Choose a set $E\subseteq[m]\setminus O$ uniformly at random with $|E|=n-k$, choose a uniformly random pairing between $L$ and $E$, and write
\begin{align}
    R:=L\sqcup E.
\label{eq:tbs-leading-R}
\end{align}
Finally, define
\begin{align}
    W:=Q\sqcup L\sqcup E,
    \qquad
    W':=Q'\sqcup L\sqcup E=W\setminus H  ,
\label{eq:tbs-leading-W}
\end{align}
implying that $|W|=n$.

Now set $n':=n-k_0$, and define the residual threshold outcome $x'\in\mathcal T_{m,n'}$ by
\begin{align}
    x'_i=
    \begin{cases}
    1, & i\in O\setminus H,\\
    0, & \text{otherwise}.
    \end{cases}
\label{eq:tbs-leading-xprime}
\end{align}
Then $|x'|=k-k_0$, $|Q'| = 2|x'|-n'$, and $ |L| = n'-|x'|$.
In particular, the same sets $Q'$, $L$, $E$, and $R$ define the partial-rotation construction for the $n'$-photon threshold outcome $x'$.
Moreover, defining $N:=2n-2k$, we have the consistency relation
\begin{align}
    2n'-2|x'|
    =
    2(n-k_0)-2(k-k_0)
    =
    2n-2k
    =
    N.
\label{eq:tbs-leading-frequency-consistency}
\end{align}
Thus, the original and residual threshold probabilities have the same top Fourier frequency $2N$ under the same rotated set $R$.

We now introduce the sparse perturbation that carries the worst-case instance.
Let $X_0\in\{0,1\}^{k_0\times k_0}$ be arbitrary, and fix an ordering of the rows in $H$.
For a normalization parameter $\kappa>0$, define the sparse matrix $A_{X_0}\in\mathbb C^{n\times n}$, with rows indexed by $W$, by
\begin{align}
    (A_{X_0})_{H,\{n'+1,\ldots,n\}}
    =
    \kappa X_0,
\label{eq:tbs-leading-sparse-A}
\end{align}
with all other entries equal to zero.
Let $A:=U_{W,[n]}$ for $U\sim\Haar(m)$, and define
\begin{align}
    A(t):=A+tA_{X_0}.
\label{eq:tbs-leading-translated-A}
\end{align}
Whenever $U(t)\in{\rm U}(m)$ is a unitary satisfying
\begin{align}
    U(t)_{W,[n]}=A(t),
\end{align}
we can evaluate the threshold Fourier coefficient $\tau_{2N}(x,U(t),R)$.

\begin{lemma}[Threshold single-permanent collapse for sparse translation]
\label{lem:tbs-leading-collapse}
Let $U(t)\in{\rm U}(m)$ be any unitary satisfying $(U(t))_{W,[n]}=A(t)$, with $A(t)$ defined in Eq.~\eqref{eq:tbs-leading-translated-A}.
Then $\tau_{2N}(x,U(t),R)$ is a polynomial in $t$ of degree at most $2k_0$, and its leading coefficient satisfies
\begin{align}
    [t^{2k_0}]\,\tau_{2N}(x,U(t),R)
    =
    \kappa^{2k_0}\Per(X_0)^2
    \tau_{2N}(x',U,R).
\label{eq:tbs-leading-collapse-identity}
\end{align}
Here, the coefficient on the right-hand side is the $n'$-photon threshold Fourier coefficient associated with the residual data $(x',Q',L,E,R)$.
\end{lemma}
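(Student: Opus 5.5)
The plan is to start from the row-filter formula of Lemma~\ref{lem:tbs-rowfilter}, which holds for any unitary and in particular for $U(t)$, and then use the two facts that make the sparse perturbation tractable. First, the perturbation $tA_{X_0}$ is supported only on the rows indexed by $H\subseteq Q$. Second, these rows are unrotated and, in every surviving term, singly occupied. By Eq.~\eqref{eq:tbs-block-equals-fourier}, $\tau_{2N}(x,U(t),R)$ equals $\tau^{\mathrm{blk}}_{2N}(x,A(t),R)$. It is therefore a sum over $S$ with $\supp(S)=O$ and $s_i=1$ for all $i\in Q$ of
\[
\frac{1}{\prod_i s_i!}\,\Per\!\big(((\Pi_Q+V_R^+)\widetilde A(t))_{\nu(S),[n]}\big)\,\Per\!\big(((\Pi_Q+V_R^-)\widetilde A(t))_{\nu(S),[n]}\big)^*.
\]

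\textbf{Degree bound.} Since $\Pi_Q$ acts as the identity on the rows of $H$ and $V_R^\pm$ do not touch them, the row of $(\Pi_Q+V_R^\pm)\widetilde A(t)$ indexed by $h\in H$ equals $U_{h,[n]}+t\,(A_{X_0})_{h,[n]}$. All other rows are independent of $t$. Each $h\in H$ appears exactly once in $\nu(S)$. Row multilinearity of the permanent then shows that each permanent is a polynomial in $t$ of degree at most $k_0$. Because $t$ is real, the complex conjugate of the second permanent is also a polynomial of degree at most $k_0$ in $t$. Hence each term, and thus $\tau_{2N}$, has degree at most $2k_0$.

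\textbf{Leading coefficient of each term.} The coefficient of $t^{k_0}$ in each permanent is obtained by replacing every $H$-row by its perturbation part. That part is $\kappa X_0$ on the columns $\{n'+1,\ldots,n\}$ and zero elsewhere. The $k_0$ $H$-rows are then supported on exactly $k_0$ columns, so a Laplace (block) expansion of the permanent factorizes it. The coefficient becomes $\kappa^{k_0}\Per(X_0)$ times the permanent of the remaining $n'$ rows $\nu(S)\setminus H$, restricted to columns $[n']$.

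\textbf{Matching with the residual instance.} Next I match the remaining data with the residual instance $(x',Q',L,E,R)$. The map $S\mapsto S'$ deletes the single photon in each mode of $H$. It is a bijection between the admissible $S$ and the $n'$-photon patterns with $\supp(S')=O\setminus H=\supp(x')$ and $s'_i=1$ for $i\in Q'$. It preserves $\prod_i s_i!$, since the deleted modes contribute $1!$ each. On the rows $\nu(S')$, which avoid $H$, the operator $\Pi_Q+V_R^\pm$ acts exactly as $\Pi_{Q'}+V_R^\pm$. Restricting to the first $n'$ input columns gives the submatrix $((\Pi_{Q'}+V_R^\pm)U)_{\nu(S'),[n']}$. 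This is precisely the object appearing in Lemma~\ref{lem:tbs-rowfilter} for the $n'$-photon experiment with input in modes $[n']$. Here the frequency consistency Eq.~\eqref{eq:tbs-leading-frequency-consistency} ensures that the top frequency is again $2N$.

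\textbf{Assembling.} Multiplying the two leading coefficients gives $\kappa^{2k_0}\Per(X_0)\,\Per(X_0)^*=\kappa^{2k_0}\Per(X_0)^2$, using that $\kappa$ and $X_0$ are real. Summing over $S'$ yields Eq.~\eqref{eq:tbs-leading-collapse-identity}.

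\textbf{Expected obstacle.} I expect the main difficulty to be bookkeeping rather than any new idea. The delicate points are the index conventions: rows of $A_{X_0}$ are indexed by $W$ while permanents use the multiset $\nu(S)$; the perturbation is placed on the \emph{last} $k_0$ columns so that the residual uses the \emph{first} $n'$ input modes; and the residual coefficient must be shown to depend only on $U$ and not on $U(t)$, which holds because all non-$H$ rows of $A(t)$ equal those of $A$. I would check these conventions carefully against the definitions of $\tau_{2N}$ and $\mathcal T_{m,n'}$.
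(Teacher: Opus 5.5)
Your proposal is correct and follows essentially the same argument as the paper. Both start from the row-filter formula of Lemma~\ref{lem:tbs-rowfilter}, delete the single photons on $H$ via the bijection $S\mapsto S'$, extract the $t^{k_0}$ coefficient of each permanent by a Laplace expansion along the $H$-rows (only the columns $\{n'+1,\ldots,n\}$ survive), and identify the residual sum with $\tau_{2N}(x',U,R)$ using Lemma~\ref{lem:tbs-rowfilter} for the $n'$-photon instance and the frequency-consistency relation. The only difference is cosmetic: you pass through the block extension $\tau^{\mathrm{blk}}_{2N}$, which also makes explicit that the result does not depend on the chosen completion $U(t)$, whereas the paper applies the row-filter formula to $U(t)$ directly.
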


\begin{proof}
By Lemma~\ref{lem:tbs-rowfilter}, the top threshold Fourier coefficient can be written as
\begin{align}
\tau_{2N}(x,U(t),R)
&=
\sum_{\substack{S\in\mathcal S_{m,n}\\ \supp(S)=O\\ s_i=1\;\forall i\in Q}}
\frac{1}{\prod_i s_i!}
\Per\left(\left((\Pi_Q+V_R^+)U(t)\right)_{\nu(S),[n]}\right)
\Per\left(\left((\Pi_Q+V_R^-)U(t)\right)_{\nu(S),[n]}\right)^*.
\label{eq:tbs-leading-collapse-start}
\end{align}
Since $H\subseteq Q$, every occupation pattern contributing to Eq.~\eqref{eq:tbs-leading-collapse-start} satisfies $s_i=1$ for all $i \in H$.
Deleting these $k_0$ fixed single photons gives a unique occupation pattern $S'\in\mathcal S_{m,n'}$ satisfying
\begin{align}
    \supp(S')=O\setminus H,
    \qquad
    s'_i=1
    \quad
    \forall i\in Q'.
\label{eq:tbs-leading-residual-occupation}
\end{align}
Conversely, every $S'$ satisfying Eq.~\eqref{eq:tbs-leading-residual-occupation} determines a unique $S$ contributing to Eq.~\eqref{eq:tbs-leading-collapse-start} by adding one photon to each mode in $H$.
The factorial factor is unchanged because the deleted occupations are all equal to one.

The perturbation $A_{X_0}$ is supported only on the rows indexed by $H$ and the columns $\{n'+1,\ldots,n\}$.
Moreover, because $H\subseteq Q$, the rows indexed by $H$ are unaffected by $V_R^\pm$ and are selected through the projector $\Pi_Q$.
Therefore, each permanent in Eq.~\eqref{eq:tbs-leading-collapse-start} has degree at most $k_0$ in $t$.
Expanding the permanent along the $k_0$ rows indexed by $H$ gives
\begin{align}
    \Per(M)
    =
    \sum_{\substack{I\subseteq[n]\\ |I|=k_0}}
    \Per(M_{H,I})
    \Per(M_{H^c,[n]\setminus I}).
\label{eq:tbs-leading-laplace}
\end{align}
To obtain the coefficient of $t^{k_0}$, the perturbation must be selected from every row in $H$.
Since the perturbation is supported only on the last $k_0$ columns, only the term $I=\{n'+1,\ldots,n\}$ contributes.
Thus, for each sign $\pm$ and each pair $(S,S')$ related by Eq.~\eqref{eq:tbs-leading-residual-occupation},
\begin{align}
&[t^{k_0}]
\Per\left(\left((\Pi_Q+V_R^\pm)U(t)\right)_{\nu(S),[n]}\right)
=
\kappa^{k_0}\Per(X_0)
\Per\left(\left((\Pi_{Q'}+V_R^\pm)U\right)_{\nu(S'),[n']}\right).
\label{eq:tbs-leading-one-permanent-collapse}
\end{align}
The coefficient of $t^{2k_0}$ in the product of the two permanents in Eq.~\eqref{eq:tbs-leading-collapse-start} is therefore the product of the two leading coefficients in Eq.~\eqref{eq:tbs-leading-one-permanent-collapse}.
Summing over $S'$, we obtain
\begin{align}
[t^{2k_0}]\,\tau_{2N}(x,U(t),R)
&=
\kappa^{2k_0}\Per(X_0)^2
\sum_{\substack{S'\in\mathcal S_{m,n'}\\ \supp(S')=O\setminus H\\ s'_i=1\;\forall i\in Q'}}
\frac{1}{\prod_i s'_i!}
\Per\left(\left((\Pi_{Q'}+V_R^+)U\right)_{\nu(S'),[n']}\right)
\Per\left(\left((\Pi_{Q'}+V_R^-)U\right)_{\nu(S'),[n']}\right)^*.
\end{align}
By Eq.~\eqref{eq:tbs-leading-frequency-consistency} and Lemma~\ref{lem:tbs-rowfilter} applied to the $n'$-photon residual instance, the summation is precisely $\tau_{2N}(x',U,R)$.
This proves Eq.~\eqref{eq:tbs-leading-collapse-identity}.
\end{proof}

\subsection{Threshold sparse translation and conditional completion}
\label{subsec: appendix tbs sparse translation}

The perturbation in Eq.~\eqref{eq:tbs-leading-translated-A} is a pure sparse translation rather than the affine path $(1-t)A+tA_\star$ used in the main proof.
The density argument in Appendix~\ref{appendix: section: translation invariance} nevertheless gives an analogous total variation distance bound for the square $n\times n$ threshold block.

\begin{lemma}[Threshold sparse translation and completion]
\label{lem:tbs-leading-translation}
Let $\mu$ be the distribution of $A=U_{W,[n]}$ for $U\sim\Haar(m)$, and let $\mu_t$ be the distribution of $A(t)=A+tA_{X_0}$, with $A_{X_0}$ defined in Eq.~\eqref{eq:tbs-leading-sparse-A} and $\kappa=m^{-1/2}$.
Suppose $m-2n=\Omega(n)$ and $m=\alpha n$.
Then there exist constants $c_0,c_1,c_2>0$ such that, whenever
\begin{align}
    0\leq t\leq\frac{c_0}{\sqrt\alpha\,n k_0},
\label{eq:tbs-leading-t-range}
\end{align}
we have
\begin{align}
    \|\mu_t-\mu\|_{\mathrm{TVD}}
    \leq
    c_1\sqrt\alpha\,n k_0t
    +
    e^{-c_2(m-2n)}.
\label{eq:tbs-leading-tv}
\end{align}
Moreover, there exists a randomized polynomial-time unitary-completion procedure that succeeds with probability at least $1-\exp(-\Omega(m-2n))$, whose output distribution $\mathcal H_t$ satisfies
\begin{align}
    \|\mathcal H_t-\Haar(m)\|_{\mathrm{TVD}}
    \leq
    c_1\sqrt\alpha\,n k_0t
    +
    e^{-\Omega(m-2n)}.
\label{eq:tbs-leading-completion}
\end{align}
\end{lemma}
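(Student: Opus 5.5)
The plan is to rerun the density argument of Lemma~\ref{lem:general-translation-appD} with $p=q=n$ and the affine map $Z\mapsto(1-t)Z+tH$ replaced by the pure translation $Z\mapsto Z+tA_{X_0}$. The key simplification is that the Jacobian term disappears, since a translation has unit Jacobian. The sparsity of $A_{X_0}$ then replaces the factor $pq+(m-p-q)\sqrt q(\sqrt q+\|H\|_{\mathrm F})$ by a much smaller quantity. The completion statement then follows verbatim from Lemma~\ref{lem:lifting-to-unitaries} and Lemma~\ref{lem:lifting-to-unitaries fin}.

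\textbf{Density and interior split.} By Lemma~\ref{lem:rect-density-appD} and permutation invariance, $A$ has density $f(B)=C\det(I_n-B^\dagger B)^{m-2n}$ on $\{B^\dagger B\prec I_n\}$. The translated law then has density $f_t(B)=f(B-tA_{X_0})$. I would split the TVD exactly as in Eq.~\eqref{eq: exp of likelihood}: an interior integral over $\mathcal E_\zeta$, plus the tails $\mu(\mathcal E_\zeta^c)$ and $\mu_t(\mathcal E_\zeta^c)$.

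\textbf{Bounding the tails.} Lemma~\ref{lem:boundary-tail-appD} applies with $C_{\dim}=1$ and $c=\alpha-2$, giving $e^{-\Omega(m-2n)}$ for $\mu(\mathcal E_\zeta^c)$. For $\mu_t$, the only inputs needed are that $\|A_{X_0}\|_{\mathrm{op}}\le\kappa\|X_0\|_{\mathrm F}\le\kappa k_0$ and that the $t$-range ensures $t\|A_{X_0}\|_{\mathrm{op}}\le c_0k_0/(\sqrt\alpha nk_0\sqrt m)\ll\zeta$. Then $B\in\mathcal E_{2\zeta}$ implies $B+tA_{X_0}\in\mathcal E_\zeta$, so $\mu_t(\mathcal E_\zeta^c)\le\mu(\mathcal E_{2\zeta}^c)$.

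\textbf{Interior log-likelihood.} On the interior, $\log R_t(B)=(m-2n)[\Phi(B-tA_{X_0})-\Phi(B)]$ with $\Phi=\log\det(I-W^\dagger W)$. Along the path $W_s=B-sA_{X_0}$, the same trace calculation gives
\[
\Bigl|\tfrac{d}{ds}\Phi(W_s)\Bigr|\le 2C_\zeta\|A_{X_0}\|_{\mathrm F}\,\|W_s M_s^{-1/2}\cdots\|.
\]
The main obstacle lies here. The crude bound $\|W_s\|_{\mathrm F}\le\sqrt n$ would give $(m-2n)\sqrt n\,\kappa k_0\,t\approx\sqrt{n/\alpha}\,k_0\sqrt n\,t$, which is of order $nk_0t/\sqrt\alpha$ and is therefore already within the claimed $c_1\sqrt\alpha\,nk_0t$ scale.

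To be safe, I would instead use the sparsity more carefully. Since $dW_s/ds=-A_{X_0}$ is supported on $k_0$ rows and $k_0$ columns,
\[
\bigl|\mathrm{Tr}[M_s^{-1}A_{X_0}^\dagger W_s]\bigr|\le\|A_{X_0}\|_{\mathrm F}\,\|(W_sM_s^{-1})_{H,\{n'+1..n\}}\|_{\mathrm F}\le C_\zeta\,\kappa k_0\cdot\sqrt{k_0}.
\]
This uses that a $k_0\times k_0$ block of an operator with norm at most $C_\zeta$ has Frobenius norm at most $C_\zeta\sqrt{k_0}$. Multiplying by $m-2n\le m$ with $\kappa=m^{-1/2}$ yields $|\log R_t|\lesssim\sqrt m\,k_0^{3/2}t$, and either bound is $\le c\sqrt\alpha\,nk_0t$ since $k_0\le n$. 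On the given $t$-range this is at most $1$, so $|R_t-1|\le2|\log R_t|$. Integrating against $f$ gives the linear term in Eq.~\eqref{eq:tbs-leading-tv}.

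\textbf{Completion.} Finally, for Eq.~\eqref{eq:tbs-leading-completion} I would apply the extended procedure $\widetilde{\mathcal C}_\zeta$ of Lemma~\ref{lem:lifting-to-unitaries}. This gives $\|\mathcal H_t-\Haar(m)\|\le\|\mu_t-\mu\|+\mu(\mathcal E_\zeta^c)$, and the completion succeeds except with probability $\mu_t(\mathcal E_\zeta^c)=e^{-\Omega(m-2n)}$ by the tail argument above.
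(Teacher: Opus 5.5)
Your proposal is correct and follows the paper's proof essentially step for step: the unit-Jacobian translated density $f(B-tA_{X_0})$, the interior/exterior split over $\mathcal E_\zeta$ with $\mu_t(\mathcal E_\zeta^c)\le\mu(\mathcal E_{2\zeta}^c)$, the $\log\det$ derivative bound along $W_s=B-sA_{X_0}$, and Lemma~\ref{lem:lifting-to-unitaries} for the completion. The paper simply uses the crude bound $\|W_s\|_{\mathrm F}\le\sqrt n$, so your $\sqrt{k_0}$ sparsity refinement is valid but unnecessary. There is also a harmless arithmetic slip: $(m-2n)\sqrt n\,\kappa k_0t\le m\sqrt n\,m^{-1/2}k_0t=\sqrt\alpha\,nk_0t$, not $nk_0t/\sqrt\alpha$, and this is exactly the claimed scale.
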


\begin{proof}
Let $f$ denote the density of the square Haar block and let $f_t(B)=f(B-tA_{X_0})$ denote the density of its translated distribution.
The assumptions of Lemma~\ref{lem:boundary-tail-appD} hold with $p=q=n$.
Fix $\zeta\in(0,\zeta^*/2)$ and define $R_t(B):=f_t(B)/f(B)$ only for $B\in\mathcal E_\zeta$.
As in the proof of Lemma~\ref{lem:general-translation-appD},
\begin{align}
\|\mu_t-\mu\|_{\mathrm{TVD}}
&\leq
\frac12\int_{\mathcal E_\zeta}f(B)|R_t(B)-1|\,dB
+
\frac12\mu(\mathcal E_\zeta^c)
+
\frac12\mu_t(\mathcal E_\zeta^c).
\label{eq:tbs-leading-support-split}
\end{align}

On the interior event, let $W_s:=B-sA_{X_0}$ for $0\leq s\leq t$.
After decreasing $c_0$ if necessary, the range in Eq.~\eqref{eq:tbs-leading-t-range} ensures $t\|A_{X_0}\|_{\mathrm{op}}\leq\zeta/2$, and hence $\|W_s\|_{\mathrm{op}}\leq1-\zeta/2$.
For $\Phi(W):=\log\det(I_n-W^\dagger W)$, the same derivative calculation as in Appendix~\ref{appendix: section: translation invariance}, now with $dW_s/ds=-A_{X_0}$, gives
\begin{align}
\left|\frac{d}{ds}\Phi(W_s)\right|
\leq
C\sqrt n\,\|A_{X_0}\|_{\mathrm F}.
\end{align}
Since the pure translation has no Jacobian term,
\begin{align}
\log R_t(B)
=
(m-2n)\bigl[\Phi(B-tA_{X_0})-\Phi(B)\bigr],
\end{align}
and therefore the interior term in Eq.~\eqref{eq:tbs-leading-support-split} is bounded by
\begin{align}
C'(m-2n)\sqrt n\,\|A_{X_0}\|_{\mathrm F}t.
\end{align}

For the exterior terms, Lemma~\ref{lem:boundary-tail-appD} gives $\mu(\mathcal E_\zeta^c)\leq e^{-c(m-2n)}$.
Moreover, if $B\in\mathcal E_{2\zeta}$ and $t\|A_{X_0}\|_{\mathrm{op}}\leq\zeta$, then $B+tA_{X_0}\in\mathcal E_\zeta$.
Thus $\mu_t(\mathcal E_\zeta^c)\leq\mu(\mathcal E_{2\zeta}^c)\leq e^{-c'(m-2n)}$.
Combining these bounds with Eq.~\eqref{eq:tbs-leading-support-split} yields
\begin{align}
    \|\mu_t-\mu\|_{\mathrm{TVD}}
    \leq
    C'(m-2n)\sqrt n\,\|A_{X_0}\|_{\mathrm F}t
    +
    e^{-c_2(m-2n)}.
\label{eq:tbs-leading-tv-intermediate}
\end{align}

By construction, $A_{X_0}$ has at most $k_0^2$ nonzero entries, each of magnitude at most $\kappa=m^{-1/2}$, and hence
\begin{align}
    \|A_{X_0}\|_{\mathrm F}
    \leq
    \kappa k_0
    =
    \frac{k_0}{\sqrt{\alpha n}}.
\end{align}
Since $m-2n\leq m=\alpha n$, Eq.~\eqref{eq:tbs-leading-tv-intermediate} implies Eq.~\eqref{eq:tbs-leading-tv} after absorbing constants.
Finally, Eq.~\eqref{eq:tbs-leading-completion} follows from Lemma~\ref{lem:lifting-to-unitaries} and the conditional-completion procedure in Appendix~\ref{appendix: section: conditional-haar-completion}; the dimension condition $p\leq Cq$ holds here with $p=q=n$.
\end{proof}

\subsection{Coefficient extraction algorithms}
\label{subsec: appendix coefficient extraction}

We list the two elementary extraction procedures used in the reduction.
The key difference from the robust algorithms in Sec.~\ref{section: shared framework} is that the following procedures do not require an NP oracle but do require that all queried values satisfy the stated additive-error bound.
This is sufficient for the present conditional argument because we will take a union bound over all queried points.

\begin{lemma}[Fourier coefficient extraction]
\label{thm: fourier coeff ext}
Let $f(\theta)=\sum_{\ell=-D}^{D}a_\ell e^{i\ell\theta}$ be a trigonometric polynomial of degree at most $D$.
Fix $r\in\{-D,\ldots,D\}$, and let $L\geq D+|r|+1$.
For $j=0,\ldots,L-1$, set $\theta_j=2\pi j/L$.
If $|y_j-f(\theta_j)|\leq\epsilon$ for every $j$, then the estimator
\begin{align}
    \widehat a_r
    :=
    \frac{1}{L}\sum_{j=0}^{L-1}y_je^{-ir\theta_j}
\end{align}
satisfies
\begin{align}
    |\widehat a_r-a_r|\leq\epsilon.
\label{eq: error bound for coef ext}
\end{align}
\end{lemma}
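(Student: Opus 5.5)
The plan is the standard discrete-orthogonality argument for the $L$-point DFT, followed by a triangle inequality. Split the estimator as
\begin{align}
\widehat a_r-a_r
=
\left(\frac{1}{L}\sum_{j=0}^{L-1}f(\theta_j)e^{-ir\theta_j}-a_r\right)
+
\frac{1}{L}\sum_{j=0}^{L-1}\bigl(y_j-f(\theta_j)\bigr)e^{-ir\theta_j}.
\end{align}
The first bracket is the exact-data (aliasing) error, and the second is the noise contribution.

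\emph{Exact-data term.} Substituting the expansion $f(\theta)=\sum_{\ell=-D}^{D}a_\ell e^{i\ell\theta}$ gives
\begin{align}
\frac{1}{L}\sum_{j=0}^{L-1}f(\theta_j)e^{-ir\theta_j}
=
\sum_{\ell=-D}^{D}a_\ell\,\frac{1}{L}\sum_{j=0}^{L-1}e^{2\pi i(\ell-r)j/L}.
\end{align}
The inner geometric sum equals $1$ when $L\mid(\ell-r)$ and $0$ otherwise. For $|\ell|\le D$ we have $|\ell-r|\le D+|r|\le L-1$. So the only multiple of $L$ in this range is $0$, i.e.\ $\ell=r$. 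Hence the exact-data term vanishes identically, and no other frequency aliases onto $r$. This is the only point where the hypothesis $L\ge D+|r|+1$ is used, and it is the one step requiring care.

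\emph{Noise term.} Since $|e^{-ir\theta_j}|=1$ and $|y_j-f(\theta_j)|\le\epsilon$ for every $j$, the triangle inequality gives
\begin{align}
\left|\frac{1}{L}\sum_{j=0}^{L-1}\bigl(y_j-f(\theta_j)\bigr)e^{-ir\theta_j}\right|
\le
\frac{1}{L}\sum_{j=0}^{L-1}\epsilon
=
\epsilon.
\end{align}
Combining the two terms yields $|\widehat a_r-a_r|\le\epsilon$, which is Eq.~\eqref{eq: error bound for coef ext}. There is no genuine obstacle here. Note that this deterministic estimator needs no NP oracle, in contrast with Theorem~\ref{thm:probabilistic-robust-fourier-extraction}, and that its error bound carries no amplification factor.
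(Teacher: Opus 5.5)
Your proof is correct and follows essentially the same route as the paper. You split off the noise terms $y_j-f(\theta_j)$, then use discrete orthogonality together with $|\ell-r|\le D+|r|<L$ to show that only $\ell=r$ survives, and finally bound the remaining noise average by $\epsilon$ via the triangle inequality.
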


\begin{proof}
Write $\epsilon_j:=y_j-f(\theta_j)$.
We have the orthogonality relation
\begin{align}
    \frac{1}{L}\sum_{j=0}^{L-1}e^{i(\ell-r)\theta_j}
    =
    \begin{cases}
    1, & \ell\equiv r\;({\rm mod}\;L),\\
    0, & \text{otherwise}.
    \end{cases}
\end{align}
Since $|\ell-r|\leq D+|r|<L$ for $-D\leq\ell\leq D$, the only allowed $\ell$ satisfying $\ell\equiv r\;({\rm mod}\;L)$ is $\ell=r$.
Therefore,
\begin{align}
    \frac{1}{L}\sum_{j=0}^{L-1}f(\theta_j)e^{-ir\theta_j}
    =
    a_r,
\end{align}
and accordingly,
\begin{align}
    \widehat a_r-a_r
    =
    \frac{1}{L}\sum_{j=0}^{L-1}\epsilon_je^{-ir\theta_j}.
\end{align}
Then Eq.~\eqref{eq: error bound for coef ext} follows from $|\epsilon_j|\leq\epsilon$.
\end{proof}

\begin{lemma}[Polynomial coefficient extraction]
\label{thm: polynomial coefficient extraction}
Let $p(t)=\sum_{j=0}^{d}p_jt^j\in\mathbb C[t]$ be a polynomial of degree at most $d$, and let $0<\Delta<1$.
Set $h:=\Delta/d$ and $t_i:=ih$ for $i=0,\ldots,d$.
Suppose $|y_i-p(t_i)|\leq\epsilon$ for every $i=0,\ldots,d$.
Then the estimator
\begin{align}
    \widehat p_d
    :=
    \frac{1}{d!}\left(\frac{d}{\Delta}\right)^d
    \sum_{i=0}^{d}(-1)^{d-i}\binom{d}{i}y_i
\label{eq: appendix finite difference estimator}
\end{align}
satisfies
\begin{align}
    |\widehat p_d-p_d|
    \leq
    \epsilon\exp(d\log\Delta^{-1}+O(d)).
\label{eq: appendix finite difference bound}
\end{align}
\end{lemma}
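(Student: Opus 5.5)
The plan is to exploit the fact that the proposed estimator is exactly the $d$-th forward difference of the data, rescaled. The standard identity for forward differences on the uniform grid $t_i=ih$ is
\begin{align}
\Delta_h^d p(0):=\sum_{i=0}^{d}(-1)^{d-i}\binom{d}{i}p(ih)=d!\,h^d\,p_d ,
\end{align}
valid for any polynomial of degree at most $d$. To verify it, note that the $d$-th difference operator annihilates every monomial $t^j$ with $j<d$, because each application of $\Delta_h$ lowers the degree by one. On the monomial $t^d$ it returns the constant $d!\,h^d$, which follows by induction, since $\Delta_h t^d = d h\,t^{d-1}+(\text{lower order})$. Because $h=\Delta/d$, we have $(d/\Delta)^d/d! = 1/(d!\,h^d)$. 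Hence the estimator evaluated on the exact values $p(t_i)$ returns $p_d$ exactly.

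By linearity, the error splits as
\begin{align}
\widehat p_d-p_d=\frac{1}{d!\,h^d}\sum_{i=0}^{d}(-1)^{d-i}\binom{d}{i}\bigl(y_i-p(t_i)\bigr).
\end{align}
Bounding this by the triangle inequality with $|y_i-p(t_i)|\le\epsilon$ and $\sum_i\binom{d}{i}=2^d$ gives
\begin{align}
|\widehat p_d-p_d|\le\epsilon\,\frac{2^d d^d}{d!\,\Delta^d}.
\end{align}

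It remains to convert this into the stated exponential form. Stirling's bound $d!\ge (d/e)^d$ gives $d^d/d!\le e^d$. The prefactor is therefore at most $(2e)^d\Delta^{-d}=\exp(d\log\Delta^{-1}+d\log(2e))$, which is the claimed $\exp(d\log\Delta^{-1}+O(d))$.

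No step here is a genuine obstacle. The only point needing care is the finite-difference identity on $t^d$, and in particular the exact constant $d!\,h^d$ together with the requirement that the grid start at $t_0=0$ with equal spacing $h$. If preferred, this can be checked through the generating-operator form $\Delta_h=e^{hD}-1$, where $D$ is differentiation, so that $\Delta_h^d=h^dD^d+O(D^{d+1})$ applied to a polynomial of degree at most $d$.
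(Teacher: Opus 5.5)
Your proposal is correct and follows essentially the same route as the paper: the paper verifies exactness by noting that the $t^d$-coefficient of the Lagrange basis polynomial at $t_i=ih$ is $(-1)^{d-i}/(h^d\, i!\,(d-i)!)$, so the estimator is the top coefficient of the interpolant of the data, which is the same identity as your forward-difference formula $\sum_{i}(-1)^{d-i}\binom{d}{i}p(ih)=d!\,h^d p_d$. The paper then bounds the error exactly as you do, via linearity, $\sum_i\binom{d}{i}=2^d$, and Stirling's inequality, giving $\epsilon(2d)^d/(d!\,\Delta^d)\le\epsilon\exp(d\log\Delta^{-1}+O(d))$.
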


\begin{proof}
The coefficient of $t^d$ in the Lagrange basis polynomial at $t_i=ih$ is
\begin{align}
    [t^d]
    \prod_{\substack{0\leq j\leq d\\ j\neq i}}
    \frac{t-t_j}{t_i-t_j}
    =
    \frac{(-1)^{d-i}}{h^di!(d-i)!}.
\end{align}
Thus Eq.~\eqref{eq: appendix finite difference estimator} is exactly the top coefficient of the polynomial interpolating the noisy data points.
Applying the same identity to the error values $y_i-p(t_i)$ gives
\begin{align}
    |\widehat p_d-p_d|
    &\leq
    \epsilon\sum_{i=0}^{d}\frac{1}{h^di!(d-i)!}
    =
    \frac{\epsilon}{h^dd!}\sum_{i=0}^{d}\binom{d}{i}
    =
    \epsilon\frac{2^d}{d!h^d}
    \nonumber\\
    &=
    \epsilon\frac{(2d)^d}{d!\Delta^d}
    \leq
    \epsilon\exp(d\log\Delta^{-1}+O(d)),
\end{align}
where the last step uses Stirling's inequality for $d!$.
\end{proof}

\subsection{Threshold residual law and Haar-average weight ratio}
\label{subsec: appendix tbs residual law}

We next establish the two facts required in the leading-coefficient reduction.
First, deleting $k_0$ clicked modes from the sampled $x \sim \mathcal{D}_{m,n}$ produces a residual law that is asymptotically close to the $n'$-photon threshold outcome ensemble $\mathcal{D}_{m,n'}$. 
Second, the ratio of the corresponding Haar-average threshold probabilities has the right scale required in the error analysis.

For $1\leq n_0 \leq n$, define
\begin{align}
    \mu_{n_0}^{\rm T}
    :=
    \frac{mn_0}{m+{n_0}}  ,
\label{eq:tbs-leading-mu}
\end{align}
which is in fact the center of the click-count distribution for $y\sim\mathcal D_{m,{n_0}}$.
We introduce this notation to track the center as the photon number $n_0$ varies.

\begin{lemma}[Moderate click-count concentration]
\label{lem:tbs-leading-moderate}
Fix constants $\alpha>2$ and $0<\lambda<1$, let $m=\alpha n$, and let
\begin{align}
    n-n^{\lambda/2}\leq {n_0}\leq n.
\end{align}
For every fixed $\mathcal{Q}>0$, there exists a constant $C_{\mathcal{Q}}>0$ such that
\begin{align}
\Pr_{y\sim\mathcal D_{m,{n_0}}}
\left[
\left||y|-\mu_{n_0}^{\rm T}\right|
>
C_{\mathcal{Q}}\sqrt{n\log n}
\right]
\leq
n^{-\mathcal{Q}}.
\label{eq:tbs-leading-moderate}
\end{align}
\end{lemma}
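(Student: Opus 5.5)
The plan is to reuse the ratio-method structure from the proof of Lemma~\ref{lemma: most threshold outcomes have many clicks} (Appendix~\ref{appendix: section: click count concentration}), but now at the moderate-deviation scale $\sqrt{n\log n}$ instead of the linear scale $\eta n$. We do not aim for exponential tails. For $y\sim\mathcal D_{m,n_0}$ the click count $K=|y|$ has law $\Pr[K=k]\propto B_k:=\binom{m}{k}\binom{n_0-1}{k-1}$, with ratio
\begin{align}
\frac{B_{k+1}}{B_k}=\frac{(m-k)(n_0-k)}{(k+1)k}.
\end{align}
Write $k=\gamma n_0$ and $m=\alpha' n_0$, where $\alpha':=m/n_0\in[\alpha,\alpha(1+o(1))]$. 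Then $g_{\alpha'}(\gamma)=(\alpha'-\gamma)(1-\gamma)/\gamma^2$ equals $1$ exactly at $\gamma_\star=\alpha'/(\alpha'+1)$, i.e. at $k=\mu_{n_0}^{\rm T}$. The $g_{\alpha'}'(\gamma_\star)$ computed there is a strictly negative constant $-c_\alpha$, uniformly for the range of $\alpha'$, since $\alpha'\to\alpha$.

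Step 1 (local ratio bound). For $k=\mu_{n_0}^{\rm T}+j$ with $1\le j\le \eta n_0$, Taylor-expanding $\log g_{\alpha'}$ around $\gamma_\star$ gives
\begin{align}
\log\frac{B_{k+1}}{B_k}\le -c\,\frac{j}{n}+O\!\left(\frac1n\right).
\end{align}
The $O(1/n)$ term absorbs the factor $1/(1+1/k)$ and the rounding of $\mu_{n_0}^{\rm T}$. The symmetric statement holds for the backward ratio $B_{k-1}/B_k$ below the center.

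Step 2 (summation). Telescoping from the center gives
\begin{align}
\frac{B_{\mu+j}}{B_{\mu}}\le \exp\!\left(-\frac{c j^2}{2n}+O\!\left(\frac jn\right)\right),
\end{align}
a Gaussian-type profile with variance of order $n$. Summing over $j\ge C_{\mathcal Q}\sqrt{n\log n}$ up to $\eta n$ yields at most $n\cdot n^{-cC_{\mathcal Q}^2/2}$ times $\max_k B_k\le\sum_k B_k$. The same bound holds on the lower side. Choosing $C_{\mathcal Q}$ large makes this at most $n^{-\mathcal Q}/2$.

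Step 3 (far tails). For $|k-\mu_{n_0}^{\rm T}|>\eta n_0$, we invoke Lemma~\ref{lemma: most threshold outcomes have many clicks} with photon number $n_0$ and ratio $\alpha'$. Since $\alpha'$ stays in a compact neighbourhood of $\alpha$, its constants can be taken uniform, so this gives $e^{-\Omega(n)}\le n^{-\mathcal Q}/2$.

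The main obstacle is keeping Step 1 uniform. The center $\mu_{n_0}^{\rm T}$ and $\alpha'$ both move with $n_0$, and the discrete corrections ($k+1$ versus $k$, and the non-integer center) must contribute only $O(j/n)$, which is negligible against $j^2/n$ once $j\gg1$. I would handle this by bounding $|\partial_\gamma\log g_{\alpha'}|$ from below on a fixed neighbourhood of $\gamma_\star$, uniformly in $\alpha'$, via the explicit formula $g'(\gamma)=((\alpha'+1)\gamma-2\alpha')/\gamma^3$. The mean value theorem then gives the linear-in-$j$ decay of the log-ratio.
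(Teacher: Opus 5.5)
Your proposal is correct and follows essentially the same argument as the paper: bound the consecutive ratio $B_{k+1}/B_k$ by $\exp(-c\,s/n)$ at distance $s$ from the center, telescope to the Gaussian profile $\exp(-c's^2/n)$ relative to the maximal term, and then take a union over at most $n$ terms. Your explicit treatment of the far tails via Lemma~\ref{lemma: most threshold outcomes have many clicks}, and of the uniformity in $\alpha'=m/n_0$, makes rigorous what the paper covers only with the phrase ``for $u$ in the range used here.''
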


\begin{proof}
Let $K:=|y|$.
By Eq.~\eqref{eq:main-threshold-average}, up to a global normalization factor, the probability mass at $K=r$ is
\begin{align}
    B_{{n_0},r}
    :=
    \binom mr\binom{{n_0}-1}{r-1},
    \qquad
    1\leq r\leq {n_0} ,
\end{align}
and its consecutive ratio is given by
\begin{align}
    \frac{B_{{n_0},r+1}}{B_{{n_0},r}}
    =
    \frac{(m-r)({n_0}-r)}{r(r+1)}
    <
    \frac{(m-r)({n_0}-r)}{r^2}.
\label{eq:tbs-leading-ratio}
\end{align}
The right-hand side equals $1$ when $r = \mu_{n_0}^{\rm T}$, and strictly decreases as $r$ increases from $\mu_{n_0}^{\rm T}$. 
Therefore, there exist constants $c,C>0$ such that, for every integer $s\geq C$ with $s=o(n)$,
\begin{align}
    \frac{B_{{n_0},\lceil\mu_{n_0}^{\rm T}\rceil+s+1}}
    {B_{{n_0},\lceil\mu_{n_0}^{\rm T}\rceil+s}}
    \leq
    \exp\left(-c\frac{s}{n}\right),
\end{align}
and the analogous bound holds for the reverse ratios on the lower side of the mode.
Iterating these bounds gives
\begin{align}
    B_{{n_0},\lceil\mu_{n_0}^{\rm T}\rceil+s}
    \leq
    B_{{n_0},\lceil\mu_{n_0}^{\rm T}\rceil}
    \exp\left(-c'\frac{s^2}{n}\right)
\end{align}
for a constant $c'>0$, and similarly for the lower tail.
Since the normalizing sum is at least the maximal term and contains at most $n$ terms, we obtain
\begin{align}
\Pr\left[
\left|K-\mu_{n_0}^{\rm T}\right|>u
\right]
\leq
2n\exp\left(-c'\frac{u^2}{n}\right)
\end{align}
for $u$ in the range used here.
Choosing $u=C_{\mathcal{Q}}\sqrt{n\log n}$ with a sufficiently large $C_{\mathcal{Q}}$ proves Eq.~\eqref{eq:tbs-leading-moderate}.
\end{proof}

For a threshold outcome $y\in\mathcal T_{m,N}$ with $h:=|y|>N/2$, we use the following randomized partial-rotation rule.
Choose $Q_y\subseteq\supp(y)$ uniformly with $|Q_y|=2h-N$, set $L_y:=\supp(y)\setminus Q_y$, choose $E_y\subseteq[m]\setminus\supp(y)$ uniformly with $|E_y|=N-h$, and choose a uniformly random pairing between $L_y$ and $E_y$.
We denote the resulting paired set by $R_y:=L_y\sqcup E_y$.
For $|y|\leq N/2$, we assign a fixed dummy value $\bot$ to the partial-rotation data; this event is included among the bad events in Conjecture~\ref{ass:tbs-tfca}.

Let $\mathcal R_{m,n,k_0}^{\rm T}$ denote the joint residual law obtained as follows.
Sample $x\sim\mathcal D_{m,n}$ conditioned on
\begin{align}
    \mathsf E_n^{\rm T}
    :=
    \left\{
    \left||x|-\mu_n^{\rm T}\right|
    \leq
    C_{\mathcal{Q}}\sqrt{n\log n}
    \right\},
\label{eq:tbs-leading-typical-event}
\end{align}
choose mode-index sets $H,Q',Q,L,E,R$ as in Eqs.~\eqref{eq:tbs-leading-QL}--\eqref{eq:tbs-leading-R}, and output the residual outcome $x'$ together with the residual partial-rotation data $(Q',L,E,R)$.

We now show that the residual law is asymptotically close to the threshold outcome ensemble $\mathcal{D}_{m,n'}$.

\begin{lemma}[Threshold residual-law is close to threshold outcome ensemble]
\label{lem:tbs-leading-residual}
Fix constants $\alpha>2$ and $0<\lambda<1$.
Let $m=\alpha n$, let $k_0\leq n^{\lambda/2}$, and set $n'=n-k_0$.
For every event $\mathcal B$ depending on a threshold outcome and its randomized partial-rotation data, it holds that
\begin{align}
\Pr_{(x',Q',L,E,R)\sim\mathcal R_{m,n,k_0}^{\rm T}}
\left[
(x',Q',L,E,R)\in\mathcal B
\right]
&\leq
 e^{o(1)}
\Pr_{\substack{y\sim\mathcal D_{m,n'}\\
(Q_y,L_y,E_y,R_y)}}
\left[
(y,Q_y,L_y,E_y,R_y)\in\mathcal B
\right]
+
 n^{-\mathcal{Q}},
\label{eq:tbs-leading-residual-domination}
\end{align}
for every constant $\mathcal{Q}>0$, where the partial-rotation data on the right-hand side are sampled according to the randomized rule above.
\end{lemma}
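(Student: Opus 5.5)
The plan is to compute the residual law explicitly and compare it pointwise with the reference law. The key observation is that both sides factor as a law on the outcome times the same conditional law on the rotation data.

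\textbf{Step 1: the data factor identically.} Fix a residual outcome $y$ with $|y|=h'$. Conditional on $x'=y$, the residual data $(Q',L,E,R)$ are produced as follows. First $Q'$ is a uniform subset of $\supp(y)=O\setminus H$ of size $2k-n-k_0=2h'-n'$. Then $L=\supp(y)\setminus Q'$. Next $E$ is a uniform subset of $[m]\setminus O$ of size $n-k=n'-h'$, and the pairing is uniform.

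The only discrepancy with the reference rule is that the reference samples $E_y$ from $[m]\setminus\supp(y)$, which also contains the $k_0$ modes of $H$. I will handle this by conditioning. The reference rule, conditioned on $E_y\cap H=\emptyset$, reproduces exactly the residual conditional law, since $H$ is determined by $(x,y)$. Hence, for every event $\mathcal B$,
\[
\Pr[\mathcal B\mid x']\le \Pr[\mathcal B\mid y]/\Pr[E_y\cap H=\emptyset].
\]
The denominator is at least $\bigl(1-(n'-h')/(m-h'-k_0)\bigr)^{k_0}$, which is $e^{-O(k_0)}$ but not $e^{o(1)}$.

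To avoid this loss, I will instead compare against the marginal in which $H$ is averaged. Let $H$ be a uniform $k_0$-subset of the zero set of $y$. Then $E$ is uniform among subsets disjoint from a random $H$, and by exchangeability this is uniform in $[m]\setminus\supp(y)$ up to a correlation with $H$ that is invisible once $x$ is marginalized. So I will make precise that, given $x'=y$, the set $H$ is uniform among $k_0$-subsets of $[m]\setminus\supp(y)$. This follows from the symmetry of $\mathcal D_{m,n}$ under mode permutations, and the weights depend only on $|x|$. The pair $(H,E)$ is then a uniform disjoint pair, so $E$ is marginally uniform, and the data law matches exactly.

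\textbf{Step 2: the outcome law ratio.} By permutation invariance, $\Pr[x'=y]$ depends only on $h'=|y|$. It equals
\[
\binom{m-h'}{k_0}\,\frac{T^{(\mathrm{avg})}_{n}(h'+k_0)}{\binom{h'+k_0}{k_0}}\;\frac{\mathbf 1[\mathsf E^{\rm T}_n]}{\Pr[\mathsf E^{\rm T}_n]},
\]
where $T^{(\mathrm{avg})}_n(h)=\binom{n-1}{h-1}/\binom{m+n-1}{n}$. The reference weight is $\binom{n'-1}{h'-1}/\binom{m+n'-1}{n'}$.

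I will compute the ratio directly in closed form. All factors are products of $k_0$ terms, each of the form $1+O(\text{deviation}/n)$. On $\mathsf E_n^{\rm T}$ I substitute $h'=\mu^{\rm T}_{n'}+O(\sqrt{n\log n})$. The per-factor ratios then become $(m-h)(n-h)/h^2\cdot(1+O(\sqrt{\log n/n}))$, and this equals $1+O(\sqrt{\log n/n})$ precisely because $\mu^{\rm T}$ solves $(m-\mu)(n-\mu)=\mu^2$, which is the click-count balance of Lemma~\ref{lemma: most threshold outcomes have many clicks}. Raising to the power $k_0\le n^{\lambda/2}$ gives $\exp(O(k_0\sqrt{\log n/n}))=e^{o(1)}$ as long as $\lambda<1$.

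\textbf{Step 3: conclude.} Lemma~\ref{lem:tbs-leading-moderate} gives $\Pr[\mathsf E_n^{\rm T}]\ge1-n^{-\mathcal Q}$, so the normalization costs only $e^{o(1)}$. The shift of the center $\mu^{\rm T}_{n'}-\mu^{\rm T}_n+k_0=O(k_0)$ is absorbed into the window. Outcomes $y$ whose $h'$ falls outside the window correspond to $x\notin\mathsf E_n^{\rm T}$, which has probability zero under the conditioned law. Summing the pointwise bound over $\mathcal B$ then yields the stated inequality, with the additive $n^{-\mathcal Q}$ serving as slack.

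The main obstacle is Step 2. The difficulty is showing that the $k_0$-fold product of ratios concentrates at $1$ to within $o(1/k_0)$ per factor, and not merely within a constant. This relies on the exact cancellation at the center $\mu^{\rm T}$ together with the $\sqrt{n\log n}$ window.
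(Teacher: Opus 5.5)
Your proposal is correct and follows essentially the same route as the paper's proof. The paper likewise bounds $\mathcal R_{m,n,k_0}^{\rm T}(y)/\mathcal D_{m,n'}(y)$ by $q_{\rm T}^{-1}$ times a $k_0$-fold product whose factors are $1+O(\sqrt{\log n/n}+k_0/n)$ on the moderate window, using the leading term $F(\ell)=\frac{(m-\ell)nn'}{\ell^2(m+n')}$ with $F(\mu_{n'}^{\rm T})=n/n'$. It then uses the same symmetry argument: after averaging over the uniformly distributed $H$, the residual data $(Q',L,E,R)$ follow the randomized partial-rotation rule exactly.

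The differences are cosmetic. Your per-factor leading term $(m-h)(n-h)/h^2$ is not literally the factor in the product, but it agrees with $F(\ell)$ to within $1+O(\sqrt{\log n/n}+k_0/n)$ on the window, so your bound still holds. As you note, the additive $n^{-\mathcal Q}$ is pure slack, because the conditioned law is supported on the window.
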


\begin{proof}
Let $q_{\rm T}:=\Pr_{x\sim\mathcal D_{m,n}}[\mathsf E_n^{\rm T}]$.
By Lemma~\ref{lem:tbs-leading-moderate}, $q_{\rm T}=1-n^{-\Omega(1)}$, and hence $q_{\rm T}^{-1}=e^{o(1)}$.

Fix a residual threshold string $y\in\mathcal T_{m,n'}$ and write $\ell:=|y|$.
To obtain $y$, the original string $x$ must be obtained by adding $k_0$ clicks outside $\supp(y)$, and there are $\binom{m-\ell}{k_0}$ possibilities.
Each such string has weight $\ell+k_0$ and therefore has $\mathcal D_{m,n}$-probability
\begin{align}
    \frac{\binom{n-1}{\ell+k_0-1}}
    {\binom{m+n-1}{n}}.
\end{align}
Given such an $x$, the set $H$ is chosen uniformly among the $\binom{\ell+k_0}{k_0}$ subsets of $\supp(x)$ of size $k_0$.
Thus
\begin{align}
\mathcal R_{m,n,k_0}^{\rm T}(y)
\leq
q_{\rm T}^{-1}
\frac{\binom{m-\ell}{k_0}}
{\binom{\ell+k_0}{k_0}}
\frac{\binom{n-1}{\ell+k_0-1}}
{\binom{m+n-1}{n}}.
\label{eq:tbs-leading-residual-law}
\end{align}
On the other hand, the random threshold outcome for $y$ has the probability weight
\begin{align}
    \mathcal D_{m,n'}(y)
    =
    \frac{\binom{n'-1}{\ell-1}}
    {\binom{m+n'-1}{n'}}.
\label{eq:tbs-leading-original-law}
\end{align}
Dividing Eq.~\eqref{eq:tbs-leading-residual-law} by Eq.~\eqref{eq:tbs-leading-original-law} gives
\begin{align}
\frac{\mathcal R_{m,n,k_0}^{\rm T}(y)}
{\mathcal D_{m,n'}(y)}
&\leq
q_{\rm T}^{-1}
\frac{\binom{m-\ell}{k_0}}
{\binom{\ell+k_0}{k_0}}
\frac{\binom{n-1}{\ell+k_0-1}}
{\binom{n'-1}{\ell-1}}
\frac{\binom{m+n'-1}{n'}}
{\binom{m+n-1}{n}}
=
q_{\rm T}^{-1}
\prod_{j=1}^{k_0}
\frac{(m-\ell-j+1)(n-j)(n'+j)}
{(\ell+j)(\ell+j-1)(m+n'+j-1)}.
\label{eq:tbs-leading-residual-ratio}
\end{align}

Under the residual construction, $|x'|=|x|-k_0$.
Moreover, because $\mu_{n_0}^{\rm T}=mN/(m+N)$ has derivative bounded by a constant for ${n_0} \in[n-k_0,n]$, we have 
\begin{align}
    \mu_n^{\rm T}-k_0-\mu_{n'}^{\rm T}
    =
    O(k_0).
\end{align}
Since $k_0=o(\sqrt{n\log n})$, Eq.~\eqref{eq:tbs-leading-typical-event} implies
\begin{align}
    |\ell - \mu_{n'}^{\rm T} |
    =
    ||x| - \mu_{n}^{\rm T} + \mu_{n}^{\rm T} - k_0 - \mu_{n'}^{\rm T}|
    \leq
    O(\sqrt{n\log n}).
\label{eq:tbs-leading-residual-window}
\end{align}
For such $\ell$, we define the function
\begin{align}
    F(\ell)
    :=
    \frac{(m-\ell)nn'}{\ell^2(m+n')}.
\end{align}
Each factor in Eq.~\eqref{eq:tbs-leading-residual-ratio} admits the exact decomposition
\begin{align}
&\frac{(m-\ell-j+1)(n-j)(n'+j)}
{(\ell+j)(\ell+j-1)(m+n'+j-1)}
\nonumber\\
&=F(\ell)
\left(1-\frac{j-1}{m-\ell}\right)
\left(1-\frac{j}{n}\right)
\left(1+\frac{j}{n'}\right)
\left(1+\frac{j}{\ell}\right)^{-1}
\left(1+\frac{j-1}{\ell}\right)^{-1}
\left(1+\frac{j-1}{m+n'}\right)^{-1}.
\label{eq:tbs-leading-residual-factorization}
\end{align}
At $\ell=\mu_{n'}^{\rm T}$, we have 
\begin{align}
    F(\mu_{n'}^{\rm T})
    =
    \frac{n}{n'}
    =
    1+O\left(\frac{k_0}{n}\right) ,
\end{align}
and its derivative is given by
\begin{align}
    \frac{\partial}{\partial\ell}\log F(\ell)
    =
    -\frac{1}{m-\ell}
    -\frac{2}{\ell}
    =
    O\left(\frac{1}{n}\right)
\end{align}
inside the window in Eq.~\eqref{eq:tbs-leading-residual-window}.
Since $j\leq k_0=o(n)$ and all denominators in Eq.~\eqref{eq:tbs-leading-residual-factorization} are $\Theta(n)$, the remaining factors contribute $O(k_0/n)$ to the logarithm of each product term.
Therefore,
\begin{align}
\log
\frac{\mathcal R_{m,n,k_0}^{\rm T}(y)}
{\mathcal D_{m,n'}(y)}
&\leq
|\log q_{\rm T}|
+
O\left(
 k_0\sqrt{\frac{\log n}{n}}
 +
 \frac{k_0^2}{n}
\right)
=
 o(1),
\label{eq:tbs-leading-residual-pointwise}
\end{align}
because $k_0\leq n^{\lambda/2}$ and $0<\lambda<1$.

It remains to include the partial-rotation data.
Conditioned on the residual string $y$, the set $Q'$ is uniform among the subsets of $\supp(y)$ of size $2|y|-n'$ by construction.
The removed set $H$ is symmetric over the complement of $\supp(y)$, and, conditioned on $H$, the set $E$ is uniform among the $(n'-|y|)$-subsets disjoint from $H$.
For every fixed candidate $E_0\subseteq[m]\setminus\supp(y)$ of this size, the number of possible sets $H$ disjoint from $E_0$ is the same.
Consequently, after averaging over $H$, the marginal distribution of $E$ is uniform among all $(n'-|y|)$-subsets of $[m]\setminus\supp(y)$.
The matching is also uniform.
Thus, conditioned on $y$, the residual data $(Q',L,E,R)$ follow exactly the randomized partial-rotation rule for the $n'$-photon outcome $y$.
Combining this fact with Eq.~\eqref{eq:tbs-leading-residual-pointwise}, and absorbing the negligible complement of the moderate window by Lemma~\ref{lem:tbs-leading-moderate}, proves Eq.~\eqref{eq:tbs-leading-residual-domination}.
\end{proof}

Next, we characterize the ratio of the Haar-averaged threshold probabilities associated with $x\in\mathcal T_{m,n}$ of weight $k=|x|$ and with $x'\in\mathcal T_{m,n'}$, obtained by deleting $k_0$ clicks from $x$.

\begin{lemma}[Threshold Haar-average weight ratio]
\label{lem:tbs-leading-weight-ratio}
Let $m=\alpha n$ for a fixed $\alpha>2$, let $k_0\leq n^{\lambda/2}$ for a fixed $0 < \lambda < 1$, and set $n'=n-k_0$.
Let $x\in\mathcal T_{m,n}$ have weight $k=|x|$, and let $x'\in\mathcal T_{m,n'}$ be obtained by deleting $k_0$ clicks from $x$.
Then
\begin{align}
    \frac{T_x^{(\mathrm{avg})}}
    {T_{x'}^{(\mathrm{avg})}}
    =
    \prod_{j=1}^{k_0}
    \frac{(n-j)(n'+j)}
    {(k-j)(m+n'+j-1)}.
\label{eq:tbs-leading-weight-ratio-exact}
\end{align}
Moreover, if $\left|k-\mu_n^{\rm T}\right| \leq  C\sqrt{n\log n}$ for a constant $C>0$, then
\begin{align}
    \frac{T_x^{(\mathrm{avg})}}
    {T_{x'}^{(\mathrm{avg})}}
    =
    \alpha^{-k_0}e^{o(1)}.
\label{eq:tbs-leading-weight-ratio-asymptotic}
\end{align}
\end{lemma}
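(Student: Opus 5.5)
The plan is to start from the closed form in Eq.~\eqref{eq:main-threshold-average}, which gives
\begin{align}
\frac{T_x^{(\mathrm{avg})}}{T_{x'}^{(\mathrm{avg})}}
=
\frac{\binom{n-1}{k-1}}{\binom{n'-1}{k-k_0-1}}
\cdot
\frac{\binom{m+n'-1}{n'}}{\binom{m+n-1}{n}},
\end{align}
since $|x'|=k-k_0$ and $x'$ is an $n'$-photon outcome. I will treat the two ratios separately and write each as a telescoping product over $j=1,\ldots,k_0$.

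For the first ratio, note that $\binom{n-1}{k-1}=\frac{(n-1)!}{(k-1)!(n-k)!}$ and $\binom{n'-1}{k-k_0-1}=\frac{(n'-1)!}{(k-k_0-1)!(n-k)!}$, because $n'-k+k_0=n-k$. The $(n-k)!$ factors therefore cancel, and the ratio becomes $\frac{(n-1)!/(n'-1)!}{(k-1)!/(k-k_0-1)!}=\prod_{j=1}^{k_0}\frac{n-j}{k-j}$.

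For the second ratio, I would write $\binom{m+N-1}{N}=\frac{(m+N-1)!}{N!(m-1)!}$. The quotient is then
\begin{align}
\frac{(m+n'-1)!\,n!}{(m+n-1)!\,n'!}
=
\prod_{j=1}^{k_0}\frac{n'+j}{m+n'+j-1}.
\end{align}
Multiplying the two products gives Eq.~\eqref{eq:tbs-leading-weight-ratio-exact}.

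For the asymptotic statement, each factor can be written as
\begin{align}
\frac{n}{k}\cdot\frac{n'}{m+n'}\cdot(1+O(j/n)),
\end{align}
using $j\le k_0=o(n)$ and $k,m+n'=\Theta(n)$. The accumulated correction is $\exp(O(k_0^2/n))=e^{o(1)}$, since $k_0\le n^{\lambda/2}$ with $\lambda<1$.

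It remains to evaluate the main factor $\frac{nn'}{k(m+n')}$. At $k=\mu_n^{\rm T}=mn/(m+n)$ this equals
\begin{align}
\frac{(m+n)n'}{m(m+n')}=\frac{1}{\alpha}\bigl(1+O(k_0/n)\bigr).
\end{align}
For the window $|k-\mu_n^{\rm T}|\le C\sqrt{n\log n}$, I would bound $\log(n/k)$ perturbatively: it changes by $O(\sqrt{\log n/n})$ relative to its value at $\mu_n^{\rm T}$. Raising to the $k_0$-th power gives the total error
\begin{align}
\exp\!\left(O\!\left(k_0\sqrt{\log n/n}+k_0^2/n\right)\right)=e^{o(1)},
\end{align}
which uses $k_0\le n^{\lambda/2}$ and $\lambda<1$. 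This yields Eq.~\eqref{eq:tbs-leading-weight-ratio-asymptotic}.

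The only step needing care is confirming that these error terms vanish under the stated scaling. This is routine and mirrors the argument in the proof of Lemma~\ref{lem:tbs-leading-residual}.
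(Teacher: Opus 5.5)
Your proposal is correct and follows essentially the same route as the paper. You expand the factorials from Eq.~\eqref{eq:main-threshold-average} to get the exact product, then compare each factor with its value $\approx 1/\alpha$ at $k=\mu_n^{\rm T}$, with deviations of $O(\sqrt{\log n/n}+k_0/n)$ per factor that sum to $o(1)$ when $k_0\le n^{\lambda/2}$. Your splitting of each factor into $\frac{nn'}{k(m+n')}$ times $(1+O(j/n))$ is only a more explicit bookkeeping of the paper's perturbation argument.
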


\begin{proof}
By Eq.~\eqref{eq:main-threshold-average},
\begin{align}
    T_x^{(\mathrm{avg})}
    =
    \frac{\binom{n-1}{k-1}}
    {\binom{m+n-1}{n}},
    \qquad
    T_{x'}^{(\mathrm{avg})}
    =
    \frac{\binom{n'-1}{k-k_0-1}}
    {\binom{m+n'-1}{n'}}.
\end{align}
Taking the ratio and expanding the factorials gives Eq.~\eqref{eq:tbs-leading-weight-ratio-exact}.

For the asymptotic form, note that the factor in Eq.~\eqref{eq:tbs-leading-weight-ratio-exact} at $k=\mu_n^{\rm T}=mn/(m+n)$ and $j=0$ gives 
\begin{align}
    \frac{nn'}{\mu_n^{\rm T}(m+n'- 1)}
    =
    \frac{1}{\alpha}
    \frac{(1-k_0/n)(\alpha+1)}
    {\alpha+1-k_0/n - 1/n}
    =
    \frac{1}{\alpha}
    \left(1+O\left(\frac{k_0}{n}\right)\right).
\end{align}
Perturbing $k$ by $O(\sqrt{n\log n})$ and $j$ by at most $k_0$ changes the logarithm of each factor by at most 
\begin{align}
    O\left(
    \sqrt{\frac{\log n}{n}}
    +
    \frac{k_0}{n}
    \right).
\end{align}
Therefore, for each factor in Eq.~\eqref{eq:tbs-leading-weight-ratio-exact}, we have 
\begin{align}
    \log(\frac{(n-j)(n'+j)}
    {(k-j)(m+n'+j-1)})
    =
    -\log\alpha
    +
    O\left(
    \sqrt{\frac{\log n}{n}}
    +
    \frac{k_0}{n}
    \right).
\end{align}
Provided $k_0\leq n^{\lambda/2}$ and $0<\lambda<1$, summing over $j=1,\ldots,k_0$ proves Eq.~\eqref{eq:tbs-leading-weight-ratio-asymptotic}.
\end{proof}

\subsection{Threshold average-case hardness with improved imprecision}
\label{subsec: appendix tbs conditional hardness}

We now prove the average-case hardness of TBS using the leading-coefficient extraction approach.
To this end, we first state the threshold Fourier-coefficient anti-concentration assumption required by the reduction.
The randomized partial-rotation rule in the conjecture is the one defined in Sec.~\ref{subsec: appendix tbs residual law}.

\begin{conjecture}[Threshold Fourier-coefficient anti-concentration]
\label{ass:tbs-tfca}
Let $s$ be a photon number satisfying $m/s>2$ and $m/s=\Theta(1)$.
Draw $y\sim\mathcal D_{m,s}$ and $U\sim\Haar(m)$ independently.
Whenever $|y|>s/2$, draw $(Q_y,L_y,E_y,R_y)$ according to the randomized partial-rotation rule in Sec.~\ref{subsec: appendix tbs residual law}, and define $N_y:=2s-2|y|$.
Then
\begin{align}
\Pr\left[
|y|\leq\frac{s}{2}
\;\;{\rm{or}}\;\;
|\tau_{2N_y}(y,U,R_y)|
\leq
\frac{T_y^{(\mathrm{avg})}}{\poly(s)}
\right]
\leq
\frac{1}{\poly(s)},
\label{eq:tbs-leading-anticoncentration}
\end{align}
where the probability is over $y$, $U$, and the randomized partial-rotation data.
\end{conjecture}

Note that Conjecture~\ref{ass:tbs-tfca} is the threshold Fourier-coefficient analog of the anti-concentration conjecture used in~\cite{bouland2026complexity}.
The selection rule is included explicitly because the threshold Fourier coefficient depends on the decomposition into unrotated and rotated clicked modes.

\begin{theorem}
\label{thm: threshold average case leading}
Fix any constant $0<\lambda<1$.
Let $m=m(n)$ satisfy $m/n=\alpha+o(1)$ for a fixed $\alpha>2$.
Suppose Conjecture~\ref{ass:tbs-tfca} holds.
Then the {\rm ATE} problem in Definition~\ref{def: thresholdBS} is $\#\mathrm P$-hard under a $\rm BPP$ reduction for error bounds satisfying
\begin{align}
    \varepsilon
    =
    \exp(-O(n^\lambda)),
    \qquad
    \delta
    =
    O\left(\frac{1}{n^{1+\lambda}}\right).
\label{eq:tbs-leading-error-parameters}
\end{align}
\end{theorem}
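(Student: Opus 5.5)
Given $X_0\in\{0,1\}^{k_0\times k_0}$, I will set $n=\lceil k_0^{2/\lambda}\rceil$, so that $k_0\le n^{\lambda/2}$, and $m=\alpha n$. I then sample $x\sim\mathcal D_{m,n}$ and discard it unless it lies in the moderate window $\mathsf E_n^{\rm T}$. By Lemma~\ref{lem:tbs-leading-moderate} this discard probability is $n^{-\mathcal Q}$, and inside the window $2|x|-n=\Theta(n)>k_0$. Next I draw $H,Q',Q,L,E,R$ as in Eqs.~\eqref{eq:tbs-leading-QL}--\eqref{eq:tbs-leading-W} and form the residual $x'$. With $\kappa=m^{-1/2}$, I sample $U\sim\Haar(m)$, set $A(t)=A+tA_{X_0}$, and complete each $A(t)$ to a unitary $U(t)$ using Lemma~\ref{lem:tbs-leading-translation}.

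By Lemma~\ref{lem:tbs-leading-collapse}, $p(t):=\tau_{2N}(x,U(t),R)$ is a polynomial of degree $2k_0$ with leading coefficient $m^{-k_0}\Per(X_0)^2\tau_{2N}(x',U,R)$. I will estimate $p(t_i)$ at the $2k_0+1$ points $t_i=i\Delta/(2k_0)$, using $\Delta=c_0/(\sqrt\alpha\, n k_0 \cdot \poly(n))$. This choice makes each TVD in Eq.~\eqref{eq:tbs-leading-completion} at most $1/\poly(n)$. Each $p(t_i)$ comes from Lemma~\ref{thm: fourier coeff ext} with $L=6N+1$ queries $\mathcal O(x,V_R(\theta_j)U(t_i))$, each accurate to within $\varepsilon T_x^{(\mathrm{avg})}$, so the error of each $p(t_i)$ is at most $\varepsilon T_x^{(\mathrm{avg})}$. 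Lemma~\ref{thm: polynomial coefficient extraction} then gives the leading coefficient to within $\varepsilon T_x^{(\mathrm{avg})}\exp(2k_0\log\Delta^{-1}+O(k_0))=\varepsilon T_x^{(\mathrm{avg})}e^{O(k_0\log n)}$. Separately, I estimate $\tau_{2N}(x',U,R)$ to within $\varepsilon T_{x'}^{(\mathrm{avg})}$ by $L$ oracle calls on the $n'$-photon instance $(x',V_R(\theta_j)U)$.

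Failure probabilities add by a union bound. There are $O(k_0 n)$ queries on the first instance family, and each fails with probability at most $\delta+1/\poly(n)$, since a marginal of $x$ and $V_R(\theta)U(t)$ is close to $\mathcal D_{m,n}\times\Haar(m)$ and $V_R(\theta)$ preserves Haar measure. Taking $\delta=O(n^{-1-\lambda})$ and $k_0 n\le n^{1+\lambda/2}$ makes the total $o(1)$. The residual queries use $x'$, whose law is dominated by $e^{o(1)}\mathcal D_{m,n'}+n^{-\mathcal Q}$ by Lemma~\ref{lem:tbs-leading-residual}, so the oracle's guarantee transfers up to a constant factor. I will also need the oracle for photon number $n'$ with $m/n'=\alpha+o(1)$, which is why the theorem is stated for $m/n=\alpha+o(1)$.

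Finally, I divide the two estimates. Conjecture~\ref{ass:tbs-tfca}, applied with $s=n'$ and transferred to the residual law by Lemma~\ref{lem:tbs-leading-residual}, gives $|\tau_{2N}(x',U,R)|\ge T_{x'}^{(\mathrm{avg})}/\poly(n)$ with probability $1-1/\poly(n)$. On that event, the relative error of the denominator estimate is $\varepsilon\poly(n)$, and the numerator error divided by the true denominator is $\varepsilon\,(T_x^{(\mathrm{avg})}/T_{x'}^{(\mathrm{avg})})\poly(n)e^{O(k_0\log n)}$. Lemma~\ref{lem:tbs-leading-weight-ratio} bounds the ratio $T_x^{(\mathrm{avg})}/T_{x'}^{(\mathrm{avg})}$ by $\alpha^{-k_0}e^{o(1)}$. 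Multiplying by $m^{k_0}$ recovers $\Per(X_0)^2$ to within additive error $\varepsilon\, m^{k_0}e^{O(k_0\log n)}(1+\Per(X_0)^2)$; here $\Per(X_0)^2\le (k_0!)^2$, which is also $e^{O(k_0\log n)}$. With $k_0\log n=o(n^\lambda)$, $\varepsilon=\exp(-Cn^{\lambda})$ makes this below $1/2$, and rounding gives the exact value. Repeating boosts the success probability, and no NP oracle is used, so the reduction is in BPP.

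The main obstacle is the bookkeeping of the distributional transfers. Every query must have a marginal close to the oracle's input ensemble even though the instances are correlated: $x$ is conditioned on the window, the queries are rotated, and the residual instance reuses the same $U$. The step I expect to be most delicate is the residual queries, which need both Lemma~\ref{lem:tbs-leading-residual} (to justify querying on $(x',U)$) and the conjecture applied under the same dominated law. I would handle both by bounding each bad event's probability under $\mathcal R^{\rm T}$ through Eq.~\eqref{eq:tbs-leading-residual-domination}.
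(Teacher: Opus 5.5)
Your proposal is correct and follows the paper's proof essentially step for step. It uses the same moderate-window conditioning, the same sparse translation with $\kappa=m^{-1/2}$ and collapse lemma, exact Fourier extraction followed by finite-difference leading-coefficient extraction, the residual-law domination applied both to the denominator queries and to the transferred Conjecture~\ref{ass:tbs-tfca}, and the weight-ratio bound. The only differences are harmless parameter choices: your $\Delta$ is independent of $\delta$ (which keeps $\Delta^{-1}=\poly(n)$ even for very small $\delta$), whereas the paper takes $\Delta\propto\delta$, and you use $L=6N+1$ grid points instead of $L_F=4n+1$.
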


\begin{proof}
Let $\mathcal O$ be an oracle for ATE, available for all input sizes in the reduction.
The residual call below has photon number $n'=n-o(n)$ and the same mode number $m$, so $m/n'=\alpha+o(1)$; throughout, the oracle guarantee is understood uniformly over this asymptotically equivalent fixed-ratio family.
Concretely, on input $x\in\mathcal T_{m,n}$ and $U\in{\rm U}(m)$, it outputs an additive estimate of $T_x(U)$ satisfying
\begin{align}
\Pr_{\substack{x\sim\mathcal D_{m,n}\\ U\sim\Haar(m)}}
\left[
|\mathcal O(x,U)-T_x(U)|
>
\varepsilon T_x^{(\mathrm{avg})}
\right]
<
\delta.
\label{eq:tbs-leading-ATE-oracle}
\end{align}
We show how to compute $\Per(X_0)^2$ exactly for an arbitrary $X_0\in\{0,1\}^{k_0\times k_0}$.

Choose $n:=\left\lceil k_0^{2/\lambda}\right\rceil$ and $ m:=\alpha n$, so that $k_0\leq n^{\lambda/2}$ and $k_0=o(n)$.
Sample $x\sim\mathcal D_{m,n}$ and set $k:=|x|$.
By Lemma~\ref{lem:tbs-leading-moderate}, for every fixed $\mathcal{Q}>0$, with probability at least $1-n^{-\mathcal{Q}}$,
\begin{align}
    \left|k-\mu_n^{\rm T}\right|
    \leq
    C_{\mathcal{Q}}\sqrt{n\log n}.
\label{eq:tbs-leading-main-typical}
\end{align}
For fixed $\alpha>2$, this event implies, $k>\frac n2$, $2k-n=\Theta(n)$, $n-k=\Theta(n)$, and $2k-n>k_0$, for all sufficiently large $n$.
Conditioned on Eq.~\eqref{eq:tbs-leading-main-typical}, construct $H,Q',Q,L,E,R,W,W'$, $n'$, and $x'$ according to Eqs.~\eqref{eq:tbs-leading-QL}--\eqref{eq:tbs-leading-xprime}, including the randomized choices specified there.
Let $\mathcal R_{m,n,k_0}^{\rm T}$ denote the resulting joint residual law.

Given $X_0$, define the sparse perturbation $A_{X_0}$ by Eq.~\eqref{eq:tbs-leading-sparse-A} with $\kappa:=m^{-1/2}$.
Let $A=U_{W,[n]}$ for $U\sim\Haar(m)$, and set $A(t)=A+tA_{X_0}$.
Define $p(t) := \tau_{2N}(x,U(t),R)$ and $N:=2n-2k$, where $U(t)$ is a unitary completion of $A(t)$.
By Lemma~\ref{lem:tbs-leading-collapse}, $p(t)$ is a polynomial of degree at most $2k_0$ and
\begin{align}
    [t^{2k_0}]p(t)
    =
    \kappa^{2k_0}\Per(X_0)^2
    \tau_{2N}(x',U,R).
\label{eq:tbs-leading-main-collapse}
\end{align}

We first estimate $p(t)$ for $t$ in a small interval.
Choose
\begin{align}
    \Delta
    :=
    \frac{c_\Delta\delta}{\sqrt\alpha\,n k_0}
\label{eq:tbs-leading-Delta}
\end{align}
for a sufficiently small constant $c_\Delta>0$.
By Lemma~\ref{lem:tbs-leading-translation}, for every $t\in[0,\Delta]$, the completed unitary $U(t)$ has distribution $\mathcal H_t$ satisfying
\begin{align}
    \|\mathcal H_t-\Haar(m)\|_{\mathrm{TVD}}
    \leq
    O(\delta)+e^{-\Omega(n)}.
\label{eq:tbs-leading-Ht-close}
\end{align}
Since left multiplication by $V_R(\theta)$ preserves Haar measure, each oracle call on $V_R(\theta)U(t)$ fails with probability at most $O(\delta)+e^{-\Omega(n)}$, after conditioning on Eq.~\eqref{eq:tbs-leading-main-typical}.

Let $L_F:=4n+1$ and $\theta_j:=2\pi j/L_F$ for $j=0,\ldots,L_F-1$.
By Lemma~\ref{lem:tbs-rowfilter}, $T_x(V_R(\theta)U(t))$ has trigonometric degree at most $2N\leq2n$.
Therefore, Lemma~\ref{thm: fourier coeff ext}, applied with $D=2n$ and $r=2N$, estimates $p(t)$ to additive error $\varepsilon T_x^{(\mathrm{avg})}$ whenever all $L_F$ oracle calls succeed.
By the union bound, this Fourier extraction fails with probability $O(n\delta)+e^{-\Omega(n)}$.

We evaluate this procedure at the $2k_0+1$ equally spaced points
\begin{align}
    t_i:=\frac{i\Delta}{2k_0},
    \qquad
    i=0,1,\ldots,2k_0.
\end{align}
If all Fourier extractions succeed, Lemma~\ref{thm: polynomial coefficient extraction}, with $d=2k_0$, gives an estimate $\widetilde c_{2k_0}$ of $[t^{2k_0}]p(t)$ satisfying
\begin{align}
    \left|
    \widetilde c_{2k_0}-[t^{2k_0}]p(t)
    \right|
    \leq
    \varepsilon T_x^{(\mathrm{avg})}
    \exp(2k_0\log\Delta^{-1}+O(k_0)).
\label{eq:tbs-leading-numerator-error}
\end{align}
The failure probability of Eq.~\eqref{eq:tbs-leading-numerator-error} is $O(nk_0\delta)+e^{-\Omega(n)}$, which is $o(1)$ under Eq.~\eqref{eq:tbs-leading-error-parameters}.

We next estimate the denominator $\tau_{2N}(x',U,R)$.
The residual pair consisting of $x'$ and its inherited partial-rotation data follows $\mathcal R_{m,n,k_0}^{\rm T}$.
By Lemma~\ref{lem:tbs-leading-residual}, the ATE-oracle failure probability under this residual law is at most $e^{o(1)}\delta+n^{-\mathcal{Q}}$ for any fixed $\mathcal{Q}>0$.
Applying Lemma~\ref{thm: fourier coeff ext} to the $n'$-photon threshold instance gives an estimate $\widehat\tau$ satisfying
\begin{align}
    |\widehat\tau- \tau_{2N}(x',U,R)|
    \leq
    \varepsilon T_{x'}^{(\mathrm{avg})},
\label{eq:tbs-leading-denominator-error}
\end{align}
except with failure probability $O(n\delta+n^{1-\mathcal{Q}})$.

The same residual-law domination transfers Conjecture~\ref{ass:tbs-tfca} to the actual residual construction.
Thus, with probability $1-o(1)$, we have 
\begin{align}
    |\tau_{2N}(x',U,R)|
    \geq
    \frac{T_{x'}^{(\mathrm{avg})}}{\poly(n)}.
\label{eq:tbs-leading-anticonc-event}
\end{align}
By a union bound, Eqs.~\eqref{eq:tbs-leading-main-typical}, \eqref{eq:tbs-leading-numerator-error}, \eqref{eq:tbs-leading-denominator-error}, and \eqref{eq:tbs-leading-anticonc-event} hold simultaneously with probability at least $2/3$ for all sufficiently large $n$.

Now, for brevity, let $a := \kappa^{2k_0}\Per(X_0)^2$ and $B:= \tau_{2N}(x',U,R)$.
By Eq.~\eqref{eq:tbs-leading-main-collapse}, we have $[t^{2k_0}]p(t)=aB$.
Combining Eqs.~\eqref{eq:tbs-leading-numerator-error}, \eqref{eq:tbs-leading-denominator-error}, and \eqref{eq:tbs-leading-anticonc-event}, and using that $\varepsilon=o(\poly(n)^{-1})$, gives
\begin{align}
\left|
\frac{|\widetilde c_{2k_0}|}{|\widehat\tau|}
-a
\right|
&\leq
\frac{
|\widetilde c_{2k_0}-aB|
+a|\widehat\tau-B|
}{|B|-|\widehat\tau-B|}
\leq
\varepsilon
\left(
\frac{T_x^{(\mathrm{avg})}}
{T_{x'}^{(\mathrm{avg})}}
\right)
\poly(n)
\exp(2k_0\log\Delta^{-1}+O(k_0))
+
\varepsilon\poly(n)a.
\label{eq:tbs-leading-ratio-error}
\end{align}
By Lemma~\ref{lem:tbs-leading-weight-ratio}, we have  ${T_x^{(\mathrm{avg})}}/{T_{x'}^{(\mathrm{avg})}} = \alpha^{-k_0}e^{o(1)}$.
Moreover, Eq.~\eqref{eq:tbs-leading-Delta} gives $\Delta^{-1}=\poly(n)$ for fixed $\alpha$, while $\kappa^{2k_0}=m^{-k_0}$ and $\Per(X_0)^2\leq(k_0!)^2$.
Therefore Eq.~\eqref{eq:tbs-leading-ratio-error} implies
\begin{align}
    \left|
    \frac{|\widetilde c_{2k_0}|}{|\widehat\tau|}
    -
    \kappa^{2k_0}\Per(X_0)^2
    \right|
    \leq
    \varepsilon\exp(O(k_0\log n)).
\label{eq:tbs-leading-scaled-error}
\end{align}
Dividing by $\kappa^{2k_0}$ gives an estimate of $\Per(X_0)^2$ with additive error at most
\begin{align}
    \varepsilon\kappa^{-2k_0}
    \exp(O(k_0\log n))
    =
    \varepsilon\exp(O(k_0\log n)).
\end{align}
Since $k_0\leq n^{\lambda/2}$, choosing $\varepsilon=\exp(-O(n^\lambda))$ with a sufficiently large constant in the exponent makes this error strictly smaller than $1/2$.
Rounding therefore recovers $\Per(X_0)^2$ exactly, and repetition amplifies the success probability.
This proves the theorem.
\end{proof}

\subsection{Extension to PBS: sparse collapse and translation}
\label{subsec: appendix parity leading collapse}

We now present the analogous PBS result.
Fix photon number $n$ and mode number $m=\alpha n$.
Let $x\in\mathcal P_{m,n}$ be a parity outcome with weight $|x|=k=\Theta(n)$, let $O:=\supp(x)$, and choose a set $E\subseteq[m]\setminus O$ uniformly with $|E|=k$ together with a uniformly random pairing
\begin{align}
    R:=O\sqcup E
    =
    \{o_1,e_1\}\sqcup\cdots\sqcup\{o_k,e_k\}.
\label{eq: def of R in LC}
\end{align}
Let $k_0<k$.
Choose $J_O\subseteq O$ uniformly among the $k_0$-subsets of $O$, let $J_E\subseteq E$ be its matched subset, and define
\begin{align}
    J:=J_O\sqcup J_E,
    \qquad
    R':=R\setminus J,
    \qquad
    n':=n-k_0.
\end{align}
The residual parity outcome $x'\in\mathcal P_{m,n'}$ is defined by
\begin{align}
    x'_i=
    \begin{cases}
    1, & i\in O\setminus J_O,\\
    0, & \text{otherwise}.
    \end{cases}
\label{eq: appendix def of x prime}
\end{align}
Then $|x'|=k-k_0$ and $n'-|x'|=n-k$ is even.

Let $X_0\in\{0,1\}^{k_0\times k_0}$ be arbitrary.
For a normalization parameter $\kappa>0$, define the sparse matrix $A_{X_0}\in\mathbb C^{2k\times n}$ by
\begin{align}
    (A_{X_0})_{J_O,\{n'+1,\ldots,n\}}
    =
    \kappa X_0,
\label{eq: appendix sparse AX}
\end{align}
with all other entries equal to zero.
Let $A:=U_{R,[n]}$ for $U\sim\Haar(m)$, and define
\begin{align}
    A(t):=A+tA_{X_0}.
\label{eq: appendix translated AX}
\end{align}
Whenever a unitary $U(t)\in{\rm U}(m)$ satisfies $(U(t))_{R,[n]}=A(t)$, we can evaluate $c_{2n'}(x,U(t),R')$.

\begin{lemma}[Parity single-permanent collapse for sparse translation]
\label{lem:collapse}
Let $U(t)\in{\rm U}(m)$ be any unitary satisfying $(U(t))_{R,[n]}=A(t)$, with $A(t)$ defined in Eq.~\eqref{eq: appendix translated AX}.
Then $c_{2n'}(x,U(t),R')$ is a polynomial in $t$ of degree at most $2k_0$, and its leading coefficient satisfies
\begin{align}
    [t^{2k_0}]\,c_{2n'}(x,U(t),R')
    =
    \kappa^{2k_0}\Per(X_0)^2
    c_{2n'}(x',U,R').
\label{eq: appendix collapse identity}
\end{align}
\end{lemma}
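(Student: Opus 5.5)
The plan follows the proof of Lemma~\ref{lem:tbs-leading-collapse}. The modes in $J_O$ are left unrotated by $V_{R'}(\theta)$ and play the role that the singly occupied set $H\subseteq Q$ played there. The top frequency $2n'$ will force each of them to contain exactly one photon. After that, a Laplace expansion along the rows of $J_O$ separates $\Per(X_0)$ from a residual $n'$-photon permanent.

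\emph{Step 1 (occupation filter).} Expand $P_x(V_{R'}(\theta)U(t))$ as in Eq.~\eqref{eq:c2n-sum-int}. By the multilinearity argument of Lemma~\ref{lem:bandlimit}, the term indexed by $S$ has trigonometric degree at most $2n_{R'}(S)$, so it contributes to frequency $2n'$ only if $n_{R'}(S)\ge n'$. The parity constraint $S\bmod 2=x$ requires $s_o\ge1$ for every $o\in J_O$. Since $J_O\cap R'=\emptyset$, this gives $n_{R'}(S)\le n-k_0=n'$. Hence every contributing $S$ satisfies:
\begin{itemize}
\item $s_o=1$ for all $o\in J_O$;
\item $s_i=0$ for every $i\notin R'\cup J_O$, including $J_E$;
\item all $n'$ remaining photons lie in $R'$.
\end{itemize}
Deleting the $k_0$ photons in $J_O$ gives a bijection $S\leftrightarrow S'$ with the set of $S'\in\mathcal S_{m,n'}$ satisfying $S'\bmod 2=x'$ and $\supp(S')\subseteq R'$. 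The factorial weight is unchanged by this deletion.

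\emph{Step 2 (top-coefficient formula).} For such $S$, the rows indexed by $J_O$ are $\theta$-independent because $V_{R'}$ acts as the identity on $J$. The rows indexed by $R'$ have the form of Eq.~\eqref{eq: rows of rotated circuit}. Arguing as in Lemma~\ref{lem:tbs-rowfilter}, I expect
\begin{align*}
c_{2n'}(x,U(t),R')=\sum_{S}\frac{1}{\prod_i s_i!}\Per\!\big(((\Pi_{J_O}+V_{R'}^+)U(t))_{\nu(S),[n]}\big)\,\Per\!\big(((\Pi_{J_O}+V_{R'}^-)U(t))_{\nu(S),[n]}\big)^*.
\end{align*}
Every row involved is indexed by $J_O\cup R'\subseteq R$, so this quantity depends only on $A(t)$. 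The perturbation $tA_{X_0}$ lives only on the rows $J_O$, and these rows are left untouched by $\Pi_{J_O}+V_{R'}^\pm$. Each row of $J_O$ occurs exactly once in $\nu(S)$, so each permanent has degree at most $k_0$ in $t$. Because $t$ is real, complex conjugation does not change the degree. Therefore the product is a polynomial of degree at most $2k_0$.

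\emph{Step 3 (leading coefficient).} I would apply the Laplace expansion of Eq.~\eqref{eq:tbs-leading-laplace} along the $k_0$ rows of $J_O$. The coefficient of $t^{k_0}$ requires taking the perturbation in every one of these rows. That perturbation is supported only on the columns $\{n'+1,\dots,n\}$, so only the column set $I=\{n'+1,\dots,n\}$ survives. This yields
\begin{align*}
[t^{k_0}]\Per\!\big(((\Pi_{J_O}+V_{R'}^\pm)U(t))_{\nu(S),[n]}\big)=\kappa^{k_0}\Per(X_0)\,\Per\!\big((V_{R'}^\pm U)_{\nu(S'),[n']}\big).
\end{align*}
The complementary rows lie in $R'$, and $U(t)$ agrees with $U$ there, both before and after applying $V_{R'}^\pm$. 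Multiplying the $+$ and $-$ factors and summing over $S'$ gives $\kappa^{2k_0}\Per(X_0)^2$ times the sum in Eq.~\eqref{eq:c2n-final} for the $n'$-photon outcome $x'$ with $R'=(O\setminus J_O)\sqcup(E\setminus J_E)$. By Lemma~\ref{lem:rowfilter}, that sum equals $c_{2n'}(x',U,R')$, which is Eq.~\eqref{eq: appendix collapse identity}.

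\emph{Main obstacle.} I expect Step 1 to need the most care. The argument must check that the parity constraint, combined with $J_O$ lying outside the rotated set, forces $s_o=1$ exactly on $J_O$, and that no surviving term places photons on $J_E$ or on even modes outside $R$. It must also confirm that $x'$ together with $R'$ satisfies the hypotheses of Lemma~\ref{lem:rowfilter}, namely that $E\setminus J_E$ has the same size as $\supp(x')$ and is properly paired with it. Both points follow from the construction, which matched $J_E$ to $J_O$.
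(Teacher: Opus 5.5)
Your proposal is correct and follows essentially the same route as the paper's proof. The parity constraint on $J_O$, combined with the top-frequency requirement $n_{R'}(S)\ge n'$, forces $s_o=1$ on $J_O$ and puts all remaining photons in $R'$. A Laplace expansion along the $J_O$ rows then isolates $\kappa^{k_0}\Per(X_0)$ in each permanent, and Lemma~\ref{lem:rowfilter} identifies the residual sum as $c_{2n'}(x',U,R')$. Your intermediate formula with $\Pi_{J_O}+V_{R'}^\pm$ and your explicit degree-$2k_0$ bound are details the paper leaves implicit, but they do not change the argument.
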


\begin{proof}
Write $z=e^{i\theta}$.
By definition,
\begin{align}
    c_{2n'}(x,U(t),R')
    =
    \sum_{\substack{S\in\mathcal S_{m,n}\\ S\;{\rm mod}\;2=x}}
    \frac{1}{\prod_i s_i!}
    \frac{1}{2\pi}
    \int_0^{2\pi}
    \left|\Per\left((V_{R'}(\theta)U(t))_{\nu(S),[n]}\right)\right|^2
    z^{-2n'}d\theta.
\label{eq: appendix collapse start}
\end{align}
For a fixed $S$, let $n_{R'}(S):=\sum_{i\in R'}s_i$.
Only these rows acquire factors $z^{\pm1}$ under $V_{R'}(\theta)$.
Thus the coefficient of $z^{2n'}$ vanishes unless $n_{R'}(S)\geq n'$.
Since the total photon number is $n=n'+k_0$, this condition implies that at most $k_0$ photons lie in $J$.
On the other hand, every mode in $J_O$ has odd occupation and therefore contains at least one photon.
Since $|J_O|=k_0$, every contributing pattern satisfies
\begin{align}
    s_i=1
    \quad
    \forall i\in J_O,
    \qquad
    s_i=0
    \quad
    \forall i\in J_E,
\label{eq: appendix selected occupation}
\end{align}
and all remaining $n'$ photons lie in $R'$.
Deleting the fixed single photons in $J_O$ gives an occupation pattern $S'\in\mathcal S_{m,n'}$ satisfying $S'\;{\rm mod}\;2=x'$.

Similarly to the TBS case, we use the Laplace expansion of permanent
\begin{align}
    \Per(M)
    =
    \sum_{\substack{I\subseteq[n]\\ |I|=k_0}}
    \Per(M_{J_O,I})
    \Per(M_{J_O^c,[n]\setminus I}).
\label{eq: appendix laplace permanent}
\end{align}
Because $A_{X_0}$ is supported only on $J_O\times\{n'+1,\ldots,n\}$, extracting the coefficient of $t^{2k_0}$ selects this perturbation block once in the permanent and once in the conjugate permanent.
Therefore,
\begin{align}
[t^{2k_0}]\,c_{2n'}(x,U(t),R')
&=
\kappa^{2k_0}\Per(X_0)^2
\sum_{\substack{S'\in\mathcal S_{m,n'}\\ S'\;{\rm mod}\;2=x'}}
\frac{1}{\prod_i s'_i!}
\Per\left((V_{R'}^+U)_{\nu(S'),[n']}\right)
\Per\left((V_{R'}^-U)_{\nu(S'),[n']}\right)^*.
\end{align}
The summation is precisely $c_{2n'}(x',U,R')$ by Lemma~\ref{lem:rowfilter} applied to the $n'$-photon residual instance.
This proves Eq.~\eqref{eq: appendix collapse identity}.
\end{proof}

\begin{lemma}[Parity sparse translation and completion]
\label{lem: appendix sparse translation}
Let $\mu$ be the distribution of $A=U_{R,[n]}$ for $U\sim\Haar(m)$ and $|R|=2k$, and let $\mu_t$ be the distribution of $A(t)=A+tA_{X_0}$, with $A_{X_0}$ defined in Eq.~\eqref{eq: appendix sparse AX} and $\kappa=m^{-1/2}$.
Suppose $m-n-2k=\Omega(n)$ and $m=\alpha n$.
Then there exist constants $c_0,c_1,c_2>0$ such that, whenever
\begin{align}
    0\leq t\leq\frac{c_0}{\sqrt\alpha\,n k_0},
\end{align}
we have
\begin{align}
    \|\mu_t-\mu\|_{\mathrm{TVD}}
    \leq
    c_1\sqrt\alpha\,n k_0t
    +
    e^{-c_2(m-n-2k)}.
\label{eq: appendix sparse tv bound}
\end{align}
Moreover, there exists a randomized polynomial-time unitary-completion procedure that succeeds with probability at least $1-\exp(-\Omega(m-n-2k))$, whose output distribution $\mathcal H_t$ satisfies
\begin{align}
    \|\mathcal H_t-\Haar(m)\|_{\mathrm{TVD}}
    \leq
    c_1\sqrt\alpha\,n k_0t
    +
    e^{-\Omega(m-n-2k)}.
\label{eq: appendix sparse completion bound}
\end{align}
\end{lemma}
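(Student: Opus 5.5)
The plan is to mirror the proof of Lemma~\ref{lem:tbs-leading-translation}, now applied to the rectangular $2k\times n$ parity block. We will use the density in Lemma~\ref{lem:rect-density-appD} with $p=2k$, $q=n$, which requires $m-p-q=m-n-2k>0$. The hypotheses of Lemma~\ref{lem:boundary-tail-appD} then hold: $p=2k\le 2n$, so $C_{\mathrm{dim}}=2$, and $m-n-2k\geq cn$ by assumption. We fix $\zeta\in(0,\zeta^*/2)$, write $R_t(B):=f_t(B)/f(B)$ on $\mathcal E_\zeta$, and split the total variation distance exactly as in Eq.~\eqref{eq:tbs-leading-support-split}: an interior likelihood-ratio term plus $\tfrac12\mu(\mathcal E_\zeta^c)+\tfrac12\mu_t(\mathcal E_\zeta^c)$.

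For the interior term, we first bound the operator norm of the perturbation. Since $A_{X_0}$ is supported on a $k_0\times k_0$ block with entries in $\{0,\kappa\}$, we have $\|A_{X_0}\|_{\mathrm{op}}\le\|A_{X_0}\|_{\mathrm F}\le\kappa k_0=k_0/\sqrt{\alpha n}$. Hence $t\le c_0/(\sqrt\alpha\, nk_0)$ gives $t\|A_{X_0}\|_{\mathrm{op}}\le c_0/(\alpha n^{3/2})\le\zeta/2$ after shrinking $c_0$. The path $W_s=B-sA_{X_0}$ then stays in $\mathcal E_{\zeta/2}$, and $M_s=I_n-W_s^\dagger W_s\succeq c_\zeta I_n$. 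A pure translation has unit Jacobian, so
\begin{align}
\log R_t(B)=(m-n-2k)\left[\Phi(B-tA_{X_0})-\Phi(B)\right].
\end{align}
The derivative estimate from the proof of Lemma~\ref{lem:general-translation-appD}, with $dW_s/ds=-A_{X_0}$ and $\|W_s\|_{\mathrm F}\le\sqrt n$, gives $|\tfrac{d}{ds}\Phi(W_s)|\le 2C_\zeta\sqrt n\,\|A_{X_0}\|_{\mathrm F}$. It follows that
\begin{align}
|\log R_t(B)|\le 2C_\zeta\, m\sqrt n\,\frac{k_0}{\sqrt{\alpha n}}\,t=2C_\zeta\sqrt\alpha\, nk_0\,t,
\end{align}
using $m-n-2k\le m=\alpha n$. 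For small $c_0$ this is at most $1$, so $|R_t-1|\le2|\log R_t|$, and the interior term is $O(\sqrt\alpha\, nk_0 t)$.

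For the exterior terms, $\mu(\mathcal E_\zeta^c)\le e^{-c'(m-n-2k)}$ by Lemma~\ref{lem:boundary-tail-appD}. For $\mu_t$ we note that $B\in\mathcal E_{2\zeta}$ and $t\|A_{X_0}\|_{\mathrm{op}}\le\zeta$ together imply $B+tA_{X_0}\in\mathcal E_\zeta$. Hence $\mu_t(\mathcal E_\zeta^c)\le\mu(\mathcal E_{2\zeta}^c)\le e^{-c_+(m-n-2k)}$, since $2\zeta<\zeta^*$. Together with the interior bound this yields Eq.~\eqref{eq: appendix sparse tv bound}.

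For the completion statement, we apply Lemma~\ref{lem:lifting-to-unitaries} with the extended procedure $\widetilde{\mathcal C}_\zeta$ on the rows $R$ and columns $[n]$, which is legitimate by permutation invariance of the Haar measure. This gives $\|\mathcal H_t-\Haar(m)\|_{\mathrm{TVD}}\le\|\mu_t-\mu\|_{\mathrm{TVD}}+\mu(\mathcal E_\zeta^c)$. The completion fails with probability at most $\mu_t(\mathcal E_\zeta^c)\le e^{-\Omega(m-n-2k)}$, by the same containment argument. Inserting Eq.~\eqref{eq: appendix sparse tv bound} gives Eq.~\eqref{eq: appendix sparse completion bound}.

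The only step that needs care is checking that the constant-margin conditions ($p\le C_{\mathrm{dim}}q$ and $m-p-q\ge cq$) hold uniformly for typical $k$. This follows from the assumed $m-n-2k=\Omega(n)$ and $k\le n$.
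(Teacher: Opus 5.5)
Your proposal is correct and follows essentially the same route as the paper's proof. Both arguments use the same steps: split the total variation distance on $\mathcal E_\zeta$; bound the translation-only log-likelihood via $\|A_{X_0}\|_{\mathrm F}\le\kappa k_0$ and $m-n-2k\le\alpha n$; control the exterior terms by showing that $B\in\mathcal E_{2\zeta}$ implies $B+tA_{X_0}\in\mathcal E_\zeta$; and invoke Lemma~\ref{lem:lifting-to-unitaries} for the completion. Your explicit constant tracking, such as $t\|A_{X_0}\|_{\mathrm{op}}\le c_0/(\alpha n^{3/2})$, simply spells out details the paper leaves implicit.
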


\begin{proof}
Let $f$ be the density of the $2k\times n$ Haar block and let $f_t(B)=f(B-tA_{X_0})$.
Here $p=2k\leq2n=2q$, so the additional dimension assumption in Lemma~\ref{lem:boundary-tail-appD} is satisfied.
Fix $\zeta\in(0,\zeta^*/2)$ and define $R_t(B):=f_t(B)/f(B)$ only on $\mathcal E_\zeta$.
Then
\begin{align}
\|\mu_t-\mu\|_{\mathrm{TVD}}
&\leq
\frac12\int_{\mathcal E_\zeta}f(B)|R_t(B)-1|\,dB
+
\frac12\mu(\mathcal E_\zeta^c)
+
\frac12\mu_t(\mathcal E_\zeta^c).
\label{eq:parity-leading-support-split}
\end{align}

On $\mathcal E_\zeta$, the path $W_s=B-sA_{X_0}$ remains in the support when $t\|A_{X_0}\|_{\mathrm{op}}$ is sufficiently small.
The same derivative calculation as in the threshold sparse-translation proof gives
\begin{align}
\frac12\int_{\mathcal E_\zeta}f(B)|R_t(B)-1|\,dB
\leq
C'(m-n-2k)\sqrt n\,\|A_{X_0}\|_{\mathrm F}t.
\end{align}
The two exterior probabilities are bounded by Lemma~\ref{lem:boundary-tail-appD}: as before, $\mu(\mathcal E_\zeta^c)\leq e^{-c(m-n-2k)}$, while $B\in\mathcal E_{2\zeta}$ and $t\|A_{X_0}\|_{\mathrm{op}}\leq\zeta$ imply $B+tA_{X_0}\in\mathcal E_\zeta$, yielding $\mu_t(\mathcal E_\zeta^c)\leq e^{-c'(m-n-2k)}$.
Consequently,
\begin{align}
    \|\mu_t-\mu\|_{\mathrm{TVD}}
    \leq
    C'(m-n-2k)\sqrt n\,\|A_{X_0}\|_{\mathrm F}t
    +
    e^{-c_2(m-n-2k)}.
\end{align}

By construction,
\begin{align}
    \|A_{X_0}\|_{\mathrm F}
    \leq
    \kappa k_0
    =
    \frac{k_0}{\sqrt{\alpha n}}.
\end{align}
Since $m-n-2k\leq m=\alpha n$, Eq.~\eqref{eq: appendix sparse tv bound} follows after absorbing constants.
Equation~\eqref{eq: appendix sparse completion bound} follows from Lemma~\ref{lem:lifting-to-unitaries}; the dimension condition $p\leq Cq$ holds here with $p=2k\leq2n=2q$.
\end{proof}

\subsection{Parity residual law and analogous hardness result}
\label{subsec: appendix parity conditional hardness}

For an $N$-photon parity outcome, we similarly define
\begin{align}
    \mu_{n_0}^{\rm P}
    :=
    \frac{m{n_0}}{m+2{n_0}}.
\end{align}

\begin{lemma}[Moderate odd-parity concentration]
\label{lem:parity-leading-moderate}
Fix constants $\alpha>2$ and $0<\lambda<1$, let $m=\alpha n$, and let $n-n^{\lambda/2}\leq {n_0} \leq n$.
For every constant $\mathcal{Q}>0$, there exists a constant $C_\mathcal{Q}>0$ such that
\begin{align}
\Pr_{y\sim\mathcal Y_{m,{n_0}}}
\left[
\left||y|-\mu_{n_0}^{\rm P}\right|
>
C_{\mathcal{Q}}\sqrt{n\log n}
\right]
\leq
n^{-\mathcal{Q}}.
\label{eq:parity-leading-moderate}
\end{align}
\end{lemma}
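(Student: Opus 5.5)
The plan is to reuse the mass-ratio method from the proof of Lemma~\ref{lemma: most parity outcomes have many odd parities}, this time tracking second-order (Gaussian-scale) decay instead of the constant-fraction tails used there. Write $K:=|y|=n_0-2t$. Up to normalization, the mass at $t$ is
\begin{align}
B_t:=\binom{m}{n_0-2t}\binom{m+t-1}{t}.
\end{align}
From Eq.~\eqref{eq: fraction of A_t+1 and A_t}, the consecutive ratio is
\begin{align}
\frac{B_{t+1}}{B_t}=\frac{(n_0-2t)(n_0-2t-1)(m+t)}{(t+1)(m-n_0+2t+1)(m-n_0+2t+2)}.
\end{align}
Up to $O(1/n)$ relative corrections this equals $f_{\alpha'}(t/n_0)$, where $\alpha'=m/n_0=\alpha(1+O(n^{\lambda/2-1}))$.

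By Eq.~\eqref{eq: f_alpha - 1}, $f_{\alpha'}(\beta)=1$ exactly at $\beta_\star=1/(\alpha'+2)$. This corresponds to $K=n_0-2n_0/(\alpha'+2)=m n_0/(m+2n_0)=\mu_{n_0}^{\rm P}$. Since $f_{\alpha'}$ is smooth with derivative bounded away from zero near $\beta_\star$ (because $\alpha>2$ is fixed), there is a constant $c>0$ such that
\begin{align}
\log\frac{B_{t+1}}{B_t}\le -c\,\frac{s}{n}+O\!\left(\frac1n\right),\qquad t=t_\star+s,
\end{align}
for $C\le s=o(n)$, where $t_\star$ is the integer nearest $(n_0-\mu^{\rm P}_{n_0})/2$. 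The mirror bound holds for $B_{t-1}/B_t$ when $t=t_\star-s$, using the reciprocal form already derived in Appendix~\ref{appendix: section: odd parity concentration}.

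The main obstacle is making the $O(1/n)$ error terms uniform over $n_0\in[n-n^{\lambda/2},n]$, and handling the small-$s$ region where the $O(1/n)$ slack could dominate the linear drift $cs/n$. I would absorb the slack by only starting the decay at $s\ge C$ for a large constant $C$, so that $cs/n$ beats it. Uniformity then follows from the fact that the ratio is an explicit rational function whose dependence on $n_0$ is smooth at scale $n$.

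Telescoping from $t_\star\pm C$ gives $B_{t_\star\pm s}\le B_{\max}\exp(-c's^2/n)$ for $s$ up to a small constant times $n$. Beyond that range, I would use the exponential tail already proved in Lemma~\ref{lemma: most parity outcomes have many odd parities}, which applies at photon number $n_0$ with ratio $\alpha'$ and the same constants, since $\alpha'\to\alpha$.

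Since the normalizing sum is at least $B_{\max}$ and there are at most $n$ terms, we get
\begin{align}
\Pr\big[|t-t_\star|>u\big]\le 2n\,e^{-c'u^2/n}+C_{\alpha,\eta}e^{-c_{\alpha,\eta}n}.
\end{align}
Translating via $K=n_0-2t$ (an $O(1)$ rounding shift) and choosing $u=\tfrac12 C_{\mathcal Q}\sqrt{n\log n}$ with $C_{\mathcal Q}$ large gives a bound of $n^{-\mathcal Q}$, which is Eq.~\eqref{eq:parity-leading-moderate}.
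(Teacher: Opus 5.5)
Your proposal is correct and follows essentially the same route as the paper: the ratio $B_{n_0,t+1}/B_{n_0,t}$ crosses $1$ at $t_\star\approx n_0/(m/n_0+2)$, which corresponds to $|y|=\mu^{\rm P}_{n_0}$. A linear drift of the log-ratio past $t_\star$ telescopes into an $e^{-cs^2/n}$ profile, and bounding at most $n$ terms by the maximum with $u=C_{\mathcal Q}\sqrt{n\log n}$ finishes the proof. Your treatment of the $O(1/n)$ slack, of uniformity in $n_0$, and of the $s=\Theta(n)$ range via Lemma~\ref{lemma: most parity outcomes have many odd parities} supplies details that the paper's short proof leaves implicit. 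Here the paper only defers to the threshold analog, which is stated for $s=o(n)$ and ``$u$ in the range used here.''
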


\begin{proof}
Write $|y|={n_0}-2t$ and, up to normalization, let
\begin{align}
    B_{{n_0},t}
    :=
    \binom{m}{{n_0}-2t}
    \binom{m+t-1}{t}.
\end{align}
As described in the proof of Lemma~\ref{lemma: most parity outcomes have many odd parities}, the ratio $B_{n_0,t+1}/B_{n_0,t}$ is strictly smaller than $1$ whenever
$t>n_0/(\alpha_{n_0}+2)$, while $B_{n_0,t-1}/B_{n_0,t}$ is strictly smaller than $1$ whenever $t<n_0/(\alpha_{n_0}+2)$.
The corresponding central value is $|y| = n_0 - 2t = \mu_{n_0}^{\rm P}$.
The same ratio iteration as in Lemma~\ref{lem:tbs-leading-moderate} therefore gives
\begin{align}
\Pr\left[
\left||y|-\mu_N^{\rm P}\right|>u
\right]
\leq
2n\exp\left(-c\frac{u^2}{n}\right)
\end{align}
for a constant $c>0$.
Choosing $u=C_{\mathcal{Q}}\sqrt{n\log n}$ proves the claim.
\end{proof}

Given a parity outcome $y\in\mathcal P_{m,N}$, the randomized parity-rotation rule chooses a set $E_y\subseteq[m]\setminus\supp(y)$ uniformly with $|E_y|=|y|$ and then chooses a uniformly random matching between $\supp(y)$ and $E_y$.
The resulting paired set is denoted by $R_y$.

Let $\mathcal R_{m,n,k_0}^{\rm P}$ be the joint residual law obtained by sampling $x\sim\mathcal Y_{m,n}$ conditioned on
\begin{align}
    \mathsf E_n^{\rm P}
    :=
    \left\{
    \left||x|-\mu_n^{\rm P}\right|
    \leq
    C_{\mathcal{Q}}\sqrt{n\log n}
    \right\},
\label{eq: appendix typical event}
\end{align}
choosing the parity rotation data and the selected pairs as in Sec.~\ref{subsec: appendix parity leading collapse}, and outputting $(x',R')$.

\begin{lemma}[Parity residual-law domination]
\label{lem:residual}
Fix constants $\alpha>2$ and $0<\lambda<1$.
Let $m=\alpha n$, let $k_0\leq n^{\lambda/2}$, and set $n'=n-k_0$.
For every event $\mathcal B$ depending on a parity outcome and its randomized parity-rotation data,
\begin{align}
\Pr_{(x',R')\sim\mathcal R_{m,n,k_0}^{\rm P}}
[(x',R')\in\mathcal B]
&\leq
 e^{o(1)}
\Pr_{\substack{y\sim\mathcal Y_{m,n'}, R_y}}
[(y,R_y)\in\mathcal B]
+
 n^{-\mathcal{Q}}
\label{eq: app residual law bound}
\end{align}
for every constant $\mathcal{Q}>0$.
\end{lemma}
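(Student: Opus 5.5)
I would mirror the proof of Lemma~\ref{lem:tbs-leading-residual}. The plan has three steps. First I compute the marginal law of the residual string $x'$ under $\mathcal R^{\rm P}_{m,n,k_0}$ and show it is pointwise within a factor $e^{o(1)}$ of $\mathcal Y_{m,n'}$ on the moderate window. Second, I show the inherited rotation data $R'$, conditioned on $x'$, follow exactly the randomized parity-rotation rule. Third, I absorb the atypical strings using Lemma~\ref{lem:parity-leading-moderate}.

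For the first step, let $q_{\rm P}:=\Pr_{x\sim\mathcal Y_{m,n}}[\mathsf E^{\rm P}_n]$. By Lemma~\ref{lem:parity-leading-moderate}, $q_{\rm P}=1-n^{-\Omega(1)}$, so $q_{\rm P}^{-1}=e^{o(1)}$. Fix $y\in\mathcal P_{m,n'}$ with $\ell:=|y|$. The string $x$ is obtained from $y$ by adding $k_0$ odd modes outside $\supp(y)$, which can be done in $\binom{m-\ell}{k_0}$ ways. Each such $x$ has weight $\ell+k_0$ and satisfies $(n-|x|)/2=(n'-\ell)/2=:t$. Given $x$, the set $J_O$ is uniform among $\binom{\ell+k_0}{k_0}$ choices. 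Using Eq.~\eqref{eq:main-parity-average}, this gives
\begin{align}
\frac{\mathcal R^{\rm P}(y)}{\mathcal Y_{m,n'}(y)}\le q_{\rm P}^{-1}\frac{\binom{m-\ell}{k_0}}{\binom{\ell+k_0}{k_0}}\frac{\binom{m+n'-1}{n'}}{\binom{m+n-1}{n}},
\end{align}
since the factor $\binom{m+t-1}{t}$ cancels. The right-hand side is a product of $k_0$ factors of the form
\begin{align}
\frac{(m-\ell-j+1)(n'+j)}{(\ell+j)(m+n'+j-1)}.
\end{align}
I then check that $G(\ell):=(m-\ell)n/\big(\ell(m+n')\big)$ equals $1+O(k_0/n)$ at the residual centre. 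Under $\mathsf E^{\rm P}_n$ we have $\ell=|x|-k_0$, and $\mu^{\rm P}_n-k_0-\mu^{\rm P}_{n'}=O(k_0)$ because $\mu^{\rm P}$ has bounded derivative. So $|\ell-\mu^{\rm P}_{n'}|=O(\sqrt{n\log n})$. At $\ell=\mu^{\rm P}_{n'}=mn'/(m+2n')$ one has $m-\ell=m(m+n')/(m+2n')$, so $G=n/n'$. The derivative $\partial_\ell\log G$ is $O(1/n)$, so the window shifts each log-factor by $O(\sqrt{\log n/n})$, and the $j$-corrections add $O(k_0/n)$. Summing over $j$, the log-ratio is $O(k_0\sqrt{\log n/n}+k_0^2/n)=o(1)$, because $k_0\le n^{\lambda/2}$ with $\lambda<1$.

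The second step is the main obstacle, because the residual geometry differs from the threshold case. Here $R'=R\setminus J$, with $E'=E\setminus J_E$ and the matching inherited from $R$. Conditioned on $x$, the set $J_O$ is a uniform $k_0$-subset of $O$, the set $E$ is a uniform $k$-subset of $[m]\setminus O$, and the matching is uniform. I will argue by symmetry: $E'$ is a $(k-k_0)$-subset of $[m]\setminus O$, and every candidate $E'_0\subseteq[m]\setminus\supp(y)$ disjoint from $J_O$ is reached by the same number of $(E,\text{matching})$ completions. The modes in $J_O$ are exchangeable with the other modes outside $\supp(y)$. After averaging over $J_O$, the pair $(E',\text{matching})$ is therefore uniform among all $(\ell)$-subsets of $[m]\setminus\supp(y)$ with a uniform matching, which is exactly the rule defining $R_y$. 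This mirrors the counting used in the threshold proof, and I would verify it by writing the joint count $\binom{m-\ell-E'\text{-size}}{k_0}$ explicitly.

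For the third step, I combine the pointwise bound with this conditional equality. Summing over $(y,R_y)\in\mathcal B$ restricted to the window gives the first term of Eq.~\eqref{eq: app residual law bound}. The window is guaranteed by $\mathsf E^{\rm P}_n$ for every $y$ in the support of $\mathcal R^{\rm P}$, and any residual slack is covered by the $n^{-\mathcal Q}$ term.
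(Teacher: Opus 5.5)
Your proposal is correct and follows essentially the same route as the paper. The shared steps are: a pointwise bound on $\mathcal R^{\rm P}_{m,n,k_0}(y)/\mathcal Y_{m,n'}(y)$ as a product of $k_0$ explicit factors, controlled on the $O(\sqrt{n\log n})$ window around $\mu^{\rm P}_{n'}$; and a symmetry argument that, conditioned on $y$, the inherited even-mode set and matching follow the randomized parity-rotation rule. Your explicit completion count $\binom{m-2\ell}{k_0}$ spells out what the paper states in one line. The only cosmetic difference is normalization: the paper divides each factor by $(m-\ell)n'/\big(\ell(m+n')\big)$, which equals exactly $1$ at $\mu^{\rm P}_{n'}$, whereas your $G(\ell)$ uses $n$ and equals $n/n'$ there, costing only a harmless extra $O(k_0/n)$ per factor.
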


\begin{proof}
Let $q_{\rm P}:=\Pr_{x\sim\mathcal Y_{m,n}}[\mathsf E_n^{\rm P}]$.
By Lemma~\ref{lem:parity-leading-moderate}, $q_{\rm P}^{-1}=e^{o(1)}$.
Fix a residual parity string $y\in\mathcal P_{m,n'}$ and write $\ell:=|y|$.
To obtain $y$, the original string $x$ is obtained by adding $k_0$ ones outside $\supp(y)$, and there are $\binom{m-\ell}{k_0}$ choices.
Each such string has weight $\ell+k_0$, and
\begin{align}
    \frac{n-(\ell+k_0)}{2}
    =
    \frac{n'-\ell}{2}.
\end{align}
Given such an $x$, the selected set $J_O$ is uniform among the $\binom{\ell+k_0}{k_0}$ subsets of its support.
Consequently, we have 
\begin{align}
\mathcal R_{m,n,k_0}^{\rm P}(y)
\leq
q_{\rm P}^{-1}
\frac{\binom{m-\ell}{k_0}}
{\binom{\ell+k_0}{k_0}}
\frac{
\binom{m+\frac{n'-\ell}{2}-1}{\frac{n'-\ell}{2}}
}{
\binom{m+n-1}{n}
}.
\label{eq: app residual law}
\end{align}
On the other hand, the probability mass of $y$ under the distribution $Y_{m,n'}$ is given by
\begin{align}
    \mathcal Y_{m,n'}(y)
    =
    \frac{
    \binom{m+\frac{n'-\ell}{2}-1}{\frac{n'-\ell}{2}}
    }{
    \binom{m+n'-1}{n'}
    }.
\label{eq: app original law}
\end{align}
Thus, it holds that 
\begin{align}
\frac{\mathcal R_{m,n,k_0}^{\rm P}(y)}
{\mathcal Y_{m,n'}(y)}
&\leq
q_{\rm P}^{-1}
\frac{\binom{m-\ell}{k_0}}
{\binom{\ell+k_0}{k_0}}
\frac{\binom{m+n'-1}{n'}}
{\binom{m+n-1}{n}}
=
q_{\rm P}^{-1}
\prod_{j=1}^{k_0}
\frac{(m-\ell-j+1)(n'+j)}
{(\ell+j)(m+n'+j-1)}.
\label{eq: app ratio}
\end{align}
Following the same argument as in Lemma~\ref{lem:tbs-leading-residual}, under the residual construction, $\ell=\mu_{n'}^{\rm P}+O(\sqrt{n\log n})$ because $|x'|=|x|-k_0$, $\mu_n^{\rm P}-\mu_{n'}^{\rm P}=O(k_0)$, and $k_0=o(\sqrt{n\log n})$. 
Also, expanding the factor above gives
\begin{align}\label{eq: app decomp of product}
\frac{(m-\ell-j+1)(n'+j)}
{(\ell+j)(m+n'+j-1)}
=
\frac{(m-\ell)n'}{\ell(m+n')}
\left(1-\frac{j-1}{m-\ell}\right)
\left(1+\frac{j}{n'}\right) 
\left(1+\frac{j}{\ell}\right)^{-1}
\left(1+\frac{j-1}{m+n'}\right)^{-1}  . 
\end{align}
Since $j\le k_0=o(n)$, one can check that the leading factor of the product term is
\begin{align}
\frac{(m-\ell)n'}{\ell(m+n')}  ,
\end{align}
which is exactly one at $\ell=\mu_{n'}^{\rm P}$, and its logarithmic derivative is
\begin{align}
    -\frac{1}{m-\ell}-\frac{1}{\ell}
    =
    O\left(\frac{1}{n}\right).
\end{align}
The $j$-dependent factors contribute $O(k_0^2/n)$ after taking logarithms and summing over $j$.
Therefore,
\begin{align}
\log
\frac{\mathcal R_{m,n,k_0}^{\rm P}(y)}
{\mathcal Y_{m,n'}(y)}
\leq
|\log q_{\rm P}|
+
O\left(
 k_0\sqrt{\frac{\log n}{n}}
 +
 \frac{k_0^2}{n}
\right)
=
 o(1).
\end{align}

Conditioned on $y$, the residual even-mode set and matching are distributed according to the randomized parity-rotation rule by symmetry: averaging over the deleted odd modes and their matched even partners makes the remaining even-mode set uniform in the complement of $\supp(y)$.
Combining this fact with the pointwise domination above and Lemma~\ref{lem:parity-leading-moderate} proves Eq.~\eqref{eq: app residual law bound}.
\end{proof}

For later use, observe that the Haar-average parity weights satisfy an exact simplification.
Since $x'$ is obtained from $x$ by removing $k_0$ odd parities while reducing the photon number from $n$ to $n'=n-k_0$, we have
\begin{align}
    \frac{n-|x|}{2}
    =
    \frac{n'-|x'|}{2}.
\end{align}
Therefore, it follows that 
\begin{align}
    \frac{P_x^{(\mathrm{avg})}}
    {P_{x'}^{(\mathrm{avg})}}
    =
    \frac{\binom{m+n'-1}{n'}}
    {\binom{m+n-1}{n}}
    =
    \exp(-k_0\log(1+\alpha)+O(k_0^2/n)).
\label{eq: appendix avg ratio}
\end{align}

We now state an analogous anti-concentration conjecture for PBS Fourier coefficients. 

\begin{conjecture}[Parity Fourier-coefficient anti-concentration]
\label{ass:ptca}
Let $s$ be a photon number satisfying $m/s>2$ and $m/s=\Theta(1)$.
Draw $y\sim\mathcal Y_{m,s}$ and $U\sim\Haar(m)$ independently, and draw $R_y$ according to the randomized parity-rotation rule above.
Then
\begin{align}
\Pr\left[
|c_{2s}(y,U,R_y)|
\leq
\frac{P_y^{(\mathrm{avg})}}{\poly(s)}
\right]
\leq
\frac{1}{\poly(s)}.
\label{eq: appendix fourier anticonc}
\end{align}
\end{conjecture}

Finally, we prove the average-case hardness of PBS using the leading-coefficient extraction approach.

\begin{theorem}
\label{thm: parity average case leading}
Fix any constant $0<\lambda<1$.
Let $m=m(n)$ satisfy $m/n=\alpha+o(1)$ for a fixed $\alpha>2$.
Suppose Conjecture~\ref{ass:ptca} holds.
Then the {\rm APE} problem in Definition~\ref{def: parityBS} is $\#\mathrm P$-hard under a $\rm BPP$ reduction for error bounds satisfying
\begin{align}
    \varepsilon
    =
    \exp(-O(n^\lambda)),
    \qquad
    \delta
    =
    O\left(\frac{1}{n^{1+\lambda}}\right).
\end{align}
\end{theorem}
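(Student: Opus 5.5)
The proof mirrors that of Theorem~\ref{thm: threshold average case leading}, with the parity ingredients from Secs.~\ref{subsec: appendix parity leading collapse}--\ref{subsec: appendix parity conditional hardness} substituted for the threshold ones. Given $X_0\in\{0,1\}^{k_0\times k_0}$, we set $n:=\lceil k_0^{2/\lambda}\rceil$ and $m:=\alpha n$. We then sample $x\sim\mathcal Y_{m,n}$ and condition on the moderate-concentration event $\mathsf E_n^{\rm P}$ of Lemma~\ref{lem:parity-leading-moderate}, which holds with probability $1-n^{-\mathcal Q}$. On this event $k=|x|=\mu_n^{\rm P}+O(\sqrt{n\log n})$. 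Hence $k=\Theta(n)$ and $k>k_0$, and $m-n-2k=(\alpha-1-2\alpha/(\alpha+2))n-o(n)=\Omega(n)$ for $\alpha>2$. Next we draw $E$, the pairing, $J_O,J_E$, $R'$ and $x'$ as in Sec.~\ref{subsec: appendix parity leading collapse}, and build $A_{X_0}$ with $\kappa=m^{-1/2}$. With $A(t)=U_{R,[n]}+tA_{X_0}$ and $p(t):=c_{2n'}(x,U(t),R')$, Lemma~\ref{lem:collapse} says that $p$ has degree $2k_0$ and $[t^{2k_0}]p=\kappa^{2k_0}\Per(X_0)^2\,c_{2n'}(x',U,R')$.

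For the numerator we take $\Delta:=c_\Delta\delta/(\sqrt\alpha\,nk_0)$. By Lemma~\ref{lem: appendix sparse translation}, for every $t\in[0,\Delta]$ the completed $U(t)$ is $O(\delta)+e^{-\Omega(n)}$-close to Haar. Left multiplication by $V_{R'}(\theta)$ preserves Haar measure, so each oracle query on $(x,V_{R'}(\theta)U(t))$ fails with probability $O(\delta)+e^{-\Omega(n)}$. By Lemma~\ref{lem:bandlimit}, the map $\theta\mapsto P_x(V_{R'}(\theta)U(t))$ has degree at most $2n$. We apply Lemma~\ref{thm: fourier coeff ext} with $D=2n$, $r=2n'$ and $L_F=4n+1$ grid points, which estimates $p(t)$ to within $\varepsilon P_x^{(\mathrm{avg})}$. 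Doing this at $t_i=i\Delta/(2k_0)$ for $i=0,\dots,2k_0$ and feeding the values to Lemma~\ref{thm: polynomial coefficient extraction} estimates $[t^{2k_0}]p$ to within $\varepsilon P_x^{(\mathrm{avg})}\exp(2k_0\log\Delta^{-1}+O(k_0))$. By a union bound the failure probability is $O(nk_0\delta)+e^{-\Omega(n)}=o(1)$, using $\delta=O(n^{-1-\lambda})$ and $k_0\le n^{\lambda/2}$.

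For the denominator we estimate $c_{2n'}(x',U,R')$ by querying on $(x',V_{R'}(\theta)U)$ and applying Lemma~\ref{thm: fourier coeff ext}, which gives error $\varepsilon P_{x'}^{(\mathrm{avg})}$. The pair $(x',R')$ is not drawn from $\mathcal Y_{m,n'}$ with fresh rotation data, so we invoke Lemma~\ref{lem:residual}. It bounds the oracle-failure probability on these queries by $e^{o(1)}\delta+n^{-\mathcal Q}$ per query. It also transfers Conjecture~\ref{ass:ptca} with $s=n'$ (where $m/n'=\alpha+o(1)>2$) to the actual residual instance, so $|c_{2n'}(x',U,R')|\ge P_{x'}^{(\mathrm{avg})}/\poly(n)$ except with probability $1/\poly(n)$. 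The Haar matrix $U$ on the rows $R'\subset R$ here is the same one used for the numerator, which is harmless since the events are combined by a union bound.

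Finally we take the ratio $|\widetilde c_{2k_0}|/|\widehat\tau|$. As in Eq.~\eqref{eq:tbs-leading-ratio-error}, its error is at most
\[
\varepsilon\,\frac{P_x^{(\mathrm{avg})}}{P_{x'}^{(\mathrm{avg})}}\,\poly(n)\,e^{2k_0\log\Delta^{-1}+O(k_0)}+\varepsilon\poly(n)\kappa^{2k_0}\Per(X_0)^2 .
\]
By Eq.~\eqref{eq: appendix avg ratio} we have $P_x^{(\mathrm{avg})}/P_{x'}^{(\mathrm{avg})}\le 1$. Moreover $\Delta^{-1}=\poly(n)$ and $\Per(X_0)^2\le(k_0!)^2$. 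Dividing by $\kappa^{2k_0}=m^{-k_0}$ therefore leaves error $\varepsilon\exp(O(k_0\log n))\le\varepsilon\exp(O(n^{\lambda/2}\log n))$, which is below $1/2$ for $\varepsilon=\exp(-O(n^\lambda))$. Rounding then recovers $\Per(X_0)^2$ exactly, and repetition boosts the success probability; no NP oracle is used, so the reduction is in BPP. The main obstacle I expect is the bookkeeping in the residual step. The oracle guarantee must hold at photon number $n'$, which needs the uniformity over $m/n=\alpha+o(1)$. We must also check that the randomized even-mode set $R'$ has exactly the law required by Conjecture~\ref{ass:ptca}, so that the $e^{o(1)}$ domination of Lemma~\ref{lem:residual} absorbs every bad event.
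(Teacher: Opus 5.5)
Your proposal is correct and follows the paper's proof essentially step by step. It uses the same choice $n=\lceil k_0^{2/\lambda}\rceil$, conditioning on the moderate odd-parity window, and the sparse perturbation with $\kappa=m^{-1/2}$; it then does Fourier extraction at frequency $2n'$ followed by finite-difference extraction of the $t^{2k_0}$ coefficient, invokes residual-law domination to transfer both the oracle guarantee and the anti-concentration conjecture to $(x',R')$, and gives the same ratio error analysis ending in $\varepsilon\exp(O(k_0\log n))$. The only differences are cosmetic: you explicitly query the denominator on $(x',V_{R'}(\theta)U)$, verify the margin $m-n-2k=\Omega(n)$, and use the cruder bound $P_x^{(\mathrm{avg})}/P_{x'}^{(\mathrm{avg})}\le 1$ in place of the exact asymptotic ratio.
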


\begin{proof}
Let $\mathcal O$ be an oracle for APE, available for all input sizes in the reduction.
As in the threshold reduction, the residual call has photon number $n'=n-o(n)$ and the same mode number $m$, and the oracle guarantee is understood uniformly over the resulting $m/n'=\alpha+o(1)$ fixed-ratio family.
On input $x\in\mathcal P_{m,n}$ and $U\in{\rm U}(m)$, it outputs an additive estimate satisfying
\begin{align}
\Pr_{\substack{x\sim\mathcal Y_{m,n}\\ U\sim\Haar(m)}}
\left[
|\mathcal O(x,U)-P_x(U)|
>
\varepsilon P_x^{(\mathrm{avg})}
\right]
<
\delta.
\label{eq: appendix APE oracle}
\end{align}
We show how to compute $\Per(X_0)^2$ exactly for an arbitrary $X_0\in\{0,1\}^{k_0\times k_0}$.

Choose $n:=\left\lceil k_0^{2/\lambda}\right\rceil$ and $ m:=\alpha n$.
Sample $x\sim\mathcal Y_{m,n}$ and let $k:=|x|$.
By Lemma~\ref{lem:parity-leading-moderate}, with probability at least $1-n^{-\mathcal{Q}}$,
\begin{align}
    \left|k-\mu_n^{\rm P}\right|
    \leq
    C_{\mathcal{Q}}\sqrt{n\log n}.
\label{eq:parity-leading-main-typical}
\end{align}
For fixed $\alpha>2$, this event implies $k=\Theta(n)$, $k>k_0$, and $m-n-2k=\Omega(n)$, for all sufficiently large $n$.
Conditioned on Eq.~\eqref{eq:parity-leading-main-typical}, construct $O,E,R,J_O,J_E,J,R'$, $n'$, and $x'$ as in Sec.~\ref{subsec: appendix parity leading collapse}, using the randomized choices specified there.
Let $\mathcal R_{m,n,k_0}^{\rm P}$ be the corresponding residual law.

Define the sparse perturbation $A_{X_0}$ by Eq.~\eqref{eq: appendix sparse AX} with $\kappa=m^{-1/2}$.
Let $A=U_{R,[n]}$ for $U\sim\Haar(m)$, set $A(t)=A+tA_{X_0}$, and define $p(t) := c_{2n'}(x,U(t),R')$ for $U(t)$ being a unitary completion of $A(t)$.
By Lemma~\ref{lem:collapse}, $p(t)$ has degree at most $2k_0$ and
\begin{align}
    [t^{2k_0}]p(t)
    =
    \kappa^{2k_0}\Per(X_0)^2
    c_{2n'}(x',U,R').
\label{eq:parity-leading-main-collapse}
\end{align}

Now choose
\begin{align}
    \Delta
    :=
    \frac{c_\Delta\delta}{\sqrt\alpha\,n k_0}
\label{eq: appendix Delta choice}
\end{align}
for a sufficiently small constant $c_\Delta>0$.
By Lemma~\ref{lem: appendix sparse translation}, for every $t\in[0,\Delta]$, the unitary completion $U(t)$ has distribution $\mathcal H_t$ satisfying
\begin{align}
    \|\mathcal H_t-\Haar(m)\|_{\mathrm{TVD}}
    \leq
    O(\delta)+e^{-\Omega(n)}.
\end{align}
Thus each oracle call on $V_{R'}(\theta)U(t)$ fails with probability at most $O(\delta)+e^{-\Omega(n)}$.

Let $L_F:=4n+1$ and $\theta_j:=2\pi j/L_F$.
The parity probability $P_x(V_{R'}(\theta)U(t))$ has degree at most $2n$, so Lemma~\ref{thm: fourier coeff ext}, applied with $D=2n$ and $r=2n'$, estimates $p(t)$ to additive error $\varepsilon P_x^{(\mathrm{avg})}$ whenever all $L_F$ oracle calls succeed.
Applying this procedure at the points
\begin{align}
    t_i:=\frac{i\Delta}{2k_0},
    \qquad
    i=0,1,\ldots,2k_0,
\end{align}
and then applying Lemma~\ref{thm: polynomial coefficient extraction} gives an estimate $\widetilde c_{2k_0}$ satisfying
\begin{align}
    \left|
    \widetilde c_{2k_0}-[t^{2k_0}]p(t)
    \right|
    \leq
    \varepsilon P_x^{(\mathrm{avg})}
    \exp(2k_0\log\Delta^{-1}+O(k_0)).
\label{eq: appendix numerator coeff error}
\end{align}
The total failure probability of this numerator estimate is $O(nk_0\delta)+e^{-\Omega(n)}=o(1)$.

We also separately estimate $c_{2n'}(x',U,R')$.
By Lemma~\ref{lem:residual}, the APE-oracle failure probability under the actual residual law is at most $e^{o(1)}\delta+n^{-\mathcal{Q}}$.
Applying Lemma~\ref{thm: fourier coeff ext} to the $n'$-photon residual instance yields an estimate $\widehat c$ satisfying
\begin{align}
    |\widehat c- c_{2n'}(x',U,R') |
    \leq
    \varepsilon P_{x'}^{(\mathrm{avg})}
\label{eq: appendix denominator coeff error}
\end{align}
except with probability $O(n\delta+n^{1-\mathcal{Q}})$.
The same residual-law domination transfers Conjecture~\ref{ass:ptca}, so with probability $1-o(1)$, we have 
\begin{align}
    |c_{2n'}(x',U,R')|
    \geq
    \frac{P_{x'}^{(\mathrm{avg})}}{\poly(n)}.
\label{eq: appendix anticonc event}
\end{align}

Now, using Eq.~\eqref{eq:parity-leading-main-collapse} and the same ratio estimate as in the threshold proof, we obtain, on the joint good event,
\begin{align}
\left|
\frac{|\widetilde c_{2k_0}|}{|\widehat c|}
- \kappa^{2k_0}\Per(X_0)^2
\right|
&\leq
\varepsilon
\left(
\frac{P_x^{(\mathrm{avg})}}
{P_{x'}^{(\mathrm{avg})}}
\right)
\poly(n)
\exp(2k_0\log\Delta^{-1}+O(k_0))
+
\varepsilon\poly(n) \kappa^{2k_0}\Per(X_0)^2 .
\label{eq: appendix ratio error first}
\end{align}
By Eq.~\eqref{eq: appendix avg ratio}, $\Delta^{-1}=\poly(n)$, $\kappa^{2k_0}=m^{-k_0}$, and $\Per(X_0)^2\leq(k_0!)^2$, the right-hand side is at most $\varepsilon\exp(O(k_0\log n))$, thus dividing by $\kappa^{2k_0}$ gives an estimate of $\Per(X_0)^2$ with additive error at most $\varepsilon\exp(O(k_0\log n))$.
Since $k_0\leq n^{\lambda/2}$, choosing $\varepsilon=\exp(-O(n^\lambda))$ with a sufficiently large constant makes this error smaller than $1/2$.
Rounding and repetition recover $\Per(X_0)^2$ exactly, completing the proof.
\end{proof}

\end{document}